\def\confversion{0}
\def\ifconf{\ifnum\confversion=1}
\def\ifnotconf{\ifnum\confversion=0}

\documentclass[11 pt]{article}

\def\showauthornotes{0}
\def\showkeys{0}
\def\showdraftbox{0}

\usepackage{xspace,xcolor,enumerate}
\usepackage{amsmath,amssymb}
\usepackage{amsthm}
\usepackage[toc,page]{appendix}
\usepackage{thmtools}
\usepackage{thm-restate}
\usepackage{color,graphicx}
\usepackage{boxedminipage}

\ifnum\showkeys=1
\usepackage[color]{showkeys}
\fi

\definecolor{darkred}{rgb}{0.5,0,0}
\definecolor{darkgreen}{rgb}{0,0.35,0}
\definecolor{darkblue}{rgb}{0,0,0.55}

\usepackage[pdfstartview=FitH,pdfpagemode=UseNone,colorlinks,linkcolor=darkblue,filecolor=darkred,citecolor=darkgreen,urlcolor=darkred,pagebackref]{hyperref}

\usepackage[capitalise,nameinlink]{cleveref}
\usepackage[T1]{fontenc}
\usepackage{mathtools,dsfont,bbm}
\usepackage{mathpazo}

\setlength{\parskip}{0.05 in}
\setlength{\parindent}{4 ex} 

\ifnum\showauthornotes=1
\newcommand{\Authornote}[2]{{\sf\small\color{red}{[#1: #2]}}}
\newcommand{\Authorcomment}[2]{{\sf \small\color{gray}{[#1: #2]}}}
\newcommand{\Authorfnote}[2]{\footnote{\color{red}{#1: #2}}}
\else
\newcommand{\Authornote}[2]{}
\newcommand{\Authorcomment}[2]{}
\newcommand{\Authorfnote}[2]{}
\fi

\ifnum\showdraftbox=1
\newcommand{\draftbox}{\begin{center}
  \fbox{%
    \begin{minipage}{2in}%
      \begin{center}%
        \begin{Large}%
          \textsc{Working Draft}%
        \end{Large}\\
        Please do not distribute%
      \end{center}%
    \end{minipage}%
  }%
\end{center}
\vspace{0.2cm}}
\else
\newcommand{\draftbox}{}
\fi


\newtheorem{theorem}{Theorem}[section]

\newtheorem{definition}[theorem]{Definition}
\newtheorem{lemma}[theorem]{Lemma}
\newtheorem{remark}[theorem]{Remark}
\newtheorem{proposition}[theorem]{Proposition}
\newtheorem{corollary}[theorem]{Corollary}
\newtheorem{claim}[theorem]{Claim}
\newtheorem{fact}[theorem]{Fact}

\newtheorem{algo}[theorem]{Algorithm}

\newenvironment{algorithm}[3]
        {\noindent\begin{boxedminipage}{\textwidth}\begin{algo}[#1]\ \par
        {\begin{tabular}{r l}
        \textbf{Intput} & #2\\
        \textbf{Output} & #3
        \end{tabular}\par\enskip}}
        {\end{algo}\end{boxedminipage}}


\def\FullBox{\hbox{\vrule width 6pt height 6pt depth 0pt}}

\def\qed{\ifmmode\qquad\FullBox\else{\unskip\nobreak\hfil
\penalty50\hskip1em\null\nobreak\hfil\FullBox
\parfillskip=0pt\finalhyphendemerits=0\endgraf}\fi}

\def\qedsketch{\ifmmode\Box\else{\unskip\nobreak\hfil
\penalty50\hskip1em\null\nobreak\hfil$\Box$
\parfillskip=0pt\finalhyphendemerits=0\endgraf}\fi}



\def\to{\rightarrow}
\def\eps{\varepsilon}
\def\epsilon{\varepsilon}

\def\eps{\epsilon}

\def\phi{\varphi}
\def\cal{\mathcal}



\newcommand{\ie}{i.e.,\xspace}

\newcommand{\etal}{et al.\xspace}

\newcommand{\mper}{\,.}
\newcommand{\mcom}{\,,}

\newcommand{\R}{{\mathbb R}}
\newcommand{\E}{{\mathbb E}}

\newcommand{\B}{\{0,1\}\xspace}


\usepackage{nicefrac}

\let\nfrac=\nicefrac


\newcommand{\abs}[1]{\ensuremath{\left\lvert #1 \right\rvert}}

%
\newcommand{\norm}[1]{\ensuremath{\left\lVert #1 \right\rVert}}
\newcommand{\smallnorm}[1]{\ensuremath{\lVert #1 \rVert}}







%
\newcommand{\mydot}[2]{\ensuremath{\left\langle #1, #2 \right\rangle}}

\newcommand{\ip}[2] {\ensuremath{\langle #1 , #2 \rangle}}




%

%

%

\newcommand{\one}{{\mathbf{1}}}


\newcommand{\qary}{[q]}

\newcommand{\Esymb}{\mathbb{E}}
\newcommand{\Psymb}{\mathbb{P}}

\newcommand{\Varsymb}{\mathrm{Var}}
\DeclareMathOperator*{\ExpOp}{\Esymb}

\makeatletter
\def\Pr#1{%
    \ProbabilityRender{\Psymb}{#1}%
}

\def\Ex#1{%
    \ProbabilityRender{\Esymb}{#1}%
}

\def\Var#1{%
    \ProbabilityRender{\Varsymb}{#1}%
}

\def\ProbabilityRender#1#2{
  \@ifnextchar\bgroup%
  {\renderwithdist{#1}{#2}}
   {\singlervrender{#1}{#2}}
}
\def\singlervrender#1#2{%
   \ensuremath{\mathchoice
       {{#1}\left[ #2 \right]}
       {{#1}[ #2 ]}
       {{#1}[ #2 ]}
       {{#1}[ #2 ]}
   }
}
\def\renderwithdist#1#2#3{%
   \@ifnextchar\bgroup
   {\superfancyrender{#1}{#2}{#3}}
   {\ensuremath{\mathchoice
      {\underset{#2}{#1}\left[ #3 \right]}
      {{#1}_{#2}[ #3 ]}
      {{#1}_{#2}[ #3 ]}
      {{#1}_{#2}[ #3 ]}
     }
   }
}
\def\superfancyrender#1#2#3#4#5{
   \ensuremath{\mathchoice
      {\underset{#1}{{#1}}\left#4 #3 \right#5}
      {{#1}_{#2}#4 #3 #5}
      {{#1}_{#2}#4 #3 #5}
      {{#1}_{#2}#4 #3 #5}
   }
}
\makeatother

\def\expop{\ExpOp}


\newfont{\inhead}{eufm10 scaled\magstep1}
\newcommand{\deffont}{\sf}

\newcommand{\calD}{{\cal D}}

\newcommand{\calU}{{\cal U}}


\DeclareMathOperator\supp{Supp}


\DeclareMathOperator{\cov}{\operatorname {Cov}}

\DeclareMathOperator{\im}{\operatorname{im}}

\newcommand{\ee}{\ensuremath{\mathrm e}}






\newcommand{\problemmacro}[1]{\textsf{#1}}

\newcommand{\maxkcsp}{\problemmacro{MAX k-CSP}\xspace}

\newcommand{\maxfourxor}{\problemmacro{MAX 4-XOR}\xspace}

\newcommand{\uniquegames}{\problemmacro{Unique Games}\xspace}
\newcommand{\ugc}{\problemmacro{Unique Games Conjecture}\xspace}


\newcommand{\inparen}[1]{\left(#1\right)}             
\newcommand{\inbraces}[1]{\left\{#1\right\}}           
\newcommand{\insquare}[1]{\left[#1\right]}             



\DeclareSymbolFont{extraup}{U}{zavm}{m}{n}
\DeclareMathSymbol{\varheart}{\mathalpha}{extraup}{86}
\DeclareMathSymbol{\vardiamond}{\mathalpha}{extraup}{87}

\def\RR{\mathbb R}

\def\tand{\quad\text{and}\quad}

\def\matr#1{\mathsf{#1}}
\def\vecc#1{\boldsymbol{#1}}
\def\rand#1{\mathbf{#1}}
\def\rv#1{\rand #1}

\def\one{\mathbf 1}

\def\Cc{\mathcal C}

\def\Ee{\mathcal E}

\def\OPT{\mathsf{OPT}}
\def\SAT{\mathsf{SAT}}

\def\SDP{\mathsf{SDP}}

\def\ee{\varepsilon}

\def\Ins{\mathfrak I}
\def\assn{\eta}
\def\tree{\mathcal T}
\def\Up{\matr U}
\def\Dee{\matr D}
\def\Aye{\matr A}
\def\Bee{\matr B}
\def\Why{\matr Y}
\def\Www{\matr W}
\def\Enn{\matr N}
\def\Ess{\matr S}
\def\Ide{\matr I}

\def\Emm{\matr M}

\def\eff{\vecc f}

\def\aye{\mathfrak a}
\def\ess{\mathfrak s}
\def\tee{\mathfrak t}
\def\bee{\mathfrak b}

\DeclareMathOperator{\SwapT}{Swap}

\DeclareMathOperator{\Tr}{Tr}

\DeclareMathOperator{\Rank}{rank}
\DeclarePairedDelimiter\set{\lbrace}{\rbrace}
\DeclarePairedDelimiter\parens{\lparen}{\rparen}

\DeclarePairedDelimiter\Abs{|}{|}

%
%


\newcommand{\hrswap}[4]{\Ess_{#1,#2}^{\left(#3,#4\right)}}

\newcommand{\rswap}[3]{\Ess_{#1,#2}^{\left(#3\right)}}
\newcommand{\rfswap}[2]{\Ess_{#1, #2}}



\newcommand{\rnwalk}[3]{\Enn_{#1,#2}^{\left(#3\right)}}

\newcommand{\nwalk}[2]{\Enn_{#1,#1}^{\left(#2\right)}}





\newcommand{\vnote}{\Authornote{VA}}

\newcommand{\mnote}{\Authornote{MT}}

\usepackage[top=1in, bottom=1in, left=1.25in, right=1.25in]{geometry}
\begin{document}

\title{Approximating Constraint Satisfaction Problems on High-Dimensional Expanders}
\author{
Vedat Levi Alev \thanks{ Supported by NSERC Discovery Grant 2950-120715, NSERC
  Accelerator Supplement 2950-120719, and partially supported by NSF awards
  CCF-1254044 and CCF-1718820. {\tt vlalev@uwaterloo.ca}.} 
\and 
Fernando Granha Jeronimo\thanks{Supported in part by NSF grants CCF-1254044 and CCF-1816372.  {\tt granha@uchicago.edu}. } 
\and 
Madhur Tulsiani \thanks{Supported by NSF grants CCF-1254044 and CCF-1816372. \tt madhurt@ttic.edu} 
}

\setcounter{page}{0}

\date{}

\maketitle
\draftbox
\thispagestyle{empty}

We consider the problem of approximately solving constraint satisfaction
problems with arity $k > 2$ ($k$-CSPs) on instances satisfying certain expansion
properties, when viewed as hypergraphs.
Random instances of $k$-CSPs, which are also highly expanding,  are well-known
to be hard to approximate using known algorithmic techniques (and are widely
believed to be hard to approximate in polynomial time). 
However, we show that this is not necessarily the case for instances where the
hypergraph is a \emph{high-dimensional expander}.

We consider the spectral definition of high-dimensional expansion used by Dinur
and Kaufman [FOCS 2017] to construct certain primitives related to PCPs. 
They measure the expansion in terms of a parameter $\gamma$ which is the
analogue of the second singular value for expanding graphs. 
Extending the results by Barak, Raghavendra and Steurer [FOCS 2011] for 2-CSPs,
we show that if an instance of \maxkcsp over alphabet $[q]$ is a
high-dimensional expander with parameter $\gamma$, then it is possible to
approximate the maximum fraction of satisfiable constraints up to an additive error
$\eps$ using $q^{O(k)} \cdot (k/\eps)^{O(1)}$ levels of the sum-of-squares SDP
hierarchy, provided $\gamma \leq \eps^{O(1)} \cdot (1/(kq))^{O(k)}$.

Based on our analysis, we also suggest a notion of threshold-rank for
hypergraphs, which can be used to extend the results for approximating 2-CSPs on
low threshold-rank graphs. 
We show that if an instance of \maxkcsp has threshold rank $r$ for a threshold
$\tau = (\eps/k)^{O(1)} \cdot (1/q)^{O(k)}$, then it is possible to approximately
solve the instance up to additive error $\eps$, using 
$r \cdot q^{O(k)} \cdot (k/\eps)^{O(1)}$ levels of the sum-of-squares hierarchy.
As in the case of graphs, high-dimensional expanders (with sufficiently small
$\gamma$) have threshold rank 1 according to our definition.
%


\newpage

\ifnotconf
\pagenumbering{roman}
\tableofcontents
\clearpage
\fi

\pagenumbering{arabic}
\setcounter{page}{1}


\section{Introduction}\label{sec:intro}
%
%
We consider the problem of approximately solving constraint satisfaction
problems (CSPs) on instances satisfying certain expansion properties. 
The role of expansion in understanding the approximability of CSPs with two
variables in each constraint (2-CSPs) has been extensively studied and has led to several
results, which can also be viewed as no-go results for PCP constructions (since PCPs
are hard instances of CSPs). 
It was shown by Arora \etal \cite{AKKSTV08} (and strengthened by Makarychev and
Makarychev \cite{MakarychevM11}) that the \uniquegames problem is easily
approximable on expanding instances, thus proving that the \ugc  of
Khot \cite{Khot02:unique} cannot be true for expanding instances. 
Their results were extended to all 2-CSPs and several partitioning problems in
works by Barak, Raghavendra and Steurer \cite{BarakRS11}, Guruswami and Sinop
\cite{GuruswamiS11}, and Oveis Gharan and Trevisan \cite{OGT15} under much
weaker notions of expansion.

We consider the following question: 
\begin{center}
\emph{When are expanding instances of $k$-CSPs easy for $k > 2$?}
\end{center}
At first glance, the question does not make much sense, since random instances
of $k$-CSPs (which are also highly expanding) are known to be hard for various
models of computation (see \cite{KothariMOW17} for an excellent survey).
However, while the kind of expansion exhibited by random instances of CSPs is
useful for constructing codes, it is not sufficient for constructing primitives for
PCPs, such as locally testable codes \cite{BHR05}.
On the other hand, objects such as high-dimensional expanders, which possess a
form of ``structured multi-scale expansion'' have been useful in constructing
derandomized direct-product and direct-sum tests (which can be viewed as locally
testable distance amplification codes) \cite{DinurK17}, lattices with large
distance \cite{KaufmanM18}, list-decodable direct product codes
\cite{DinurHKNT18}, and are thought to be intimately connected with PCPs
\cite{DinurK17}. Thus, from the PCP perspective, it is more relevant to ask if
this form of expansion can be used to efficiently approximate constraint
satisfaction problems.

\paragraph{Connections to coding theory.}
Algorithmic results related to expanding CSPs are also relevant for the problem of
\emph{decoding} locally testable codes.
Consider a code $C$ constructed via $k$-local operations (such as $k$-fold
direct-sum) on a base code $C_0$ with smaller distance. Then, a codeword in $C$
is simply an instance of a CSP, where each bit places a constraint on $k$ bits
(which is $k$-XOR in case of direct sum) of the relevant codeword in
$C_0$. 
The task of decoding a noisy codeword is then equivalent to 
finding an assignment in  $C_0$, satisfying the maximum number of 
constraints for the above instance. 
Thus, algorithms for solving CSPs on expanding instances may lead to new decoding 
algorithms for codes obtained by applying local operations to a base code.
\mnote{Changed the text here a little to highlight the coding motivation.}
In fact, the list decoding algorithm for direct-product codes by Dinur \etal
\cite{DinurHKNT18} also relied on algorithmic results for expanding unique games.
Since all constructions of locally testable codes need to have at least some
weak expansion \cite{DinurK12}, it is interesting to understand what notions of
expansion are amenable to algorithmic techniques.
\vspace{-15 pt}
\paragraph{High-dimensional expanders and our results.} 
A $d$-dimensional expander is a downward-closed hypergraph (simplicial complex),
say $X$,  with edges of size at most $d+1$, such that for every hyperedge $\aye
\in X$ (with $\abs{\aye} \leq d-1$), a certain ``neighborhood graph''
$G(X_{\aye})$ is a spectral expander\footnote{While there are several
  definitions of high-dimensional expanders, we consider the one by Dinur and
  Kaufman \cite{DinurK17}, which is most closely related to spectral expansion,
  and was also the one shown to be related to PCP applications. Our results also
  work for a weaker but more technical definition by Dikstein \etal
  \cite{DiksteinDFH18}, which we defer till later.}. 
Here, the graph $G(X_{\aye})$ is defined to have the vertex set $\set{i ~|~ \aye
  \cup \{i\} \in X}$ and edge-set $\set{{i,j} ~|~ \aye \cup \{i,j\} \in X}$. 
If the (normalized) second singular value of each of the neighborhood graphs is
bounded by $\gamma$, $X$ is said to be a  $\gamma$-high-dimensional expander
($\gamma$-HDX).

Note that (the downward closure of) a random sparse $(d+1)$-uniform hypergraph,
say with $n$ vertices and $c \cdot n$ edges, is very unlikely to be a
$d$-dimensional expander. With high probability, no two hyperedges share more
than one vertex and thus for any $i \in [n]$, the neighborhood graph $G_i$ is
simply a disjoint union of cliques of size $d$, which is very far from an
expander. While random hypergraphs do not yield high-dimensional expanders, such
objects are indeed known to exists via (surprising) algebraic constructions  
\cite{LubotzkySV05a, LubotzkySV05b, KaufmanO18Constr,ConlonTZ18} and are known
to have several interesting properties and applications
~\cite{KaufmanKL16,DinurHKNT18,KaufmanM17,KaufmanO18Walk,DiksteinDFH18,DinurK17,ParzanchevskiRT2016}.

Expander graphs can simply be thought of as the one-dimensional case of the
above definition. 
The results of Barak, Raghavendra and Steurer \cite{BarakRS11} for 2-CSPs 
yield that if the constraint graph of a 2-CSP instance (with
size $n$ and alphabet size $q$) is a sufficiently good (one dimensional)
spectral expander, then one can efficiently find solutions satisfying $\OPT -
\eps$ fraction of constraints, where $\OPT$ denotes the maximum fraction of
constraints satisfiable by any assignment.
Their algorithm is based on $(q/\eps)^{O(1)}$ levels of the Sum-of-Squares (SoS)
SDP hierarchy, and the expansion requirement on the constraint graph is that the
(normalized) second singular value should be at most $(\eps/q)^{O(1)}$.
We show a similar result for $k$-CSPs when the corresponding simplicial complex
$X_{\Ins}$, which is obtained by including one hyperedge for each constraint and
taking a downward closure, is a sufficiently good $(k-1)$-dimensional expander.
\begin{theorem}[Informal]
Let $\Ins$ be an instance of \maxkcsp on $n$ variables taking values over an
alphabet of size $q$, and let $\eps > 0$. Let the simplicial complex $X_{\Ins}$
be a $\gamma$-HDX with $\gamma = \eps^{O(1)} \cdot (1/(kq))^{O(k)}$. 
Then, there is an algorithm based on $(k/\eps)^{O(1)} \cdot q^{O(k)}$ levels of the
Sum-of-Squares hierarchy, which  produces an assignment satisfying $\OPT - \eps$
fraction of the constraints.
\end{theorem}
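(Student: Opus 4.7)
My plan is to extend the global correlation rounding paradigm of Barak--Raghavendra--Steurer from $2$-CSPs on expander graphs to $k$-CSPs on high-dimensional expanders. First I would solve an SoS relaxation of $\Ins$ at level $t = (k/\eps)^{O(1)} \cdot q^{O(k)}$, obtaining a pseudo-expectation $\tildeEx{\cdot}$ and pseudo-marginals $\tilde\mu_S$ for every set $S$ of size at most $t$. The rounding is independent sampling from the singleton pseudo-marginals, i.e.\ set each variable $x_i$ independently according to $\tilde\mu_{\{i\}}$. The analysis hinges on showing that, after an appropriate conditioning step, the pseudo-marginal on a typical top-dimensional face $\aye$ of $X_{\Ins}$ is close in total variation to the product $\bigotimes_{i\in\aye}\tilde\mu_{\{i\}}$, so that this naive independent rounding nearly matches the SDP objective on each constraint in expectation.

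\textbf{Global correlation rounding.} Next I would apply the standard potential argument at the vertex level. Define a global pair-correlation measure such as $\mathbb{E}_{i,j}[\MI(x_i;x_j)]$ where $i,j$ are uniform over the vertex set of $X_{\Ins}$. Conditioning on the value of a uniformly random variable reduces the total pseudo-entropy of the remaining marginals by at least the average mutual information to the chosen vertex, so after at most $O(\log q / \delta^{2})$ rounds of such random conditioning one can guarantee that the global pair-correlation is at most $\delta$. This step consumes $O(\log q / \delta^{2})$ SoS levels and does not yet use any HDX structure beyond well-definedness of low-degree pseudo-marginals.

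\textbf{Local-to-global via HDX.} The crux of the argument is a local-to-global correlation inequality: for any SoS pseudo-distribution on the vertex set of a $\gamma$-HDX $X_{\Ins}$, the expectation over top-faces $\aye$ of $\|\tilde\mu_{\aye} - \bigotimes_{i\in\aye}\tilde\mu_{\{i\}}\|$ is bounded by a polynomial $P(\gamma,k,q,\delta)$ in the HDX parameter, the arity, the alphabet size, and the global pair-correlation $\delta$. The natural route is a peeling argument: remove one vertex at a time from $\aye$, using the $\gamma$-spectral expansion of the link graph $G(X_{\aye'})$ for each $\aye'\subset\aye$ to approximate a conditional pair-marginal by a product, paying a cost of order $\gamma^{\Omega(1)} + (\text{conditional pair-correlation in the link})$ at each step, then recursing on the restricted complex. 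Summing over the $k$ peeling steps, and using $\gamma \leq \eps^{O(1)}\cdot(1/(kq))^{O(k)}$, the telescoped error is at most $\eps$. Equivalently, one can phrase this through the spectral analysis of the swap walks on $X_{\Ins}$, whose non-trivial singular values are controlled by the HDX parameter together with $k$ and $q$.

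\textbf{Putting it together and the main obstacle.} Combining the two ingredients: after the conditioning phase the global pair-correlation is $\delta = (\eps/(kq))^{O(1)}$; the HDX inequality then bounds the average hyperedge correlation by $\eps$; and independent rounding from singleton pseudo-marginals therefore yields an assignment whose expected value is within $\eps$ of the SoS value, which is at least $\OPT$. I expect the main obstacle to be the local-to-global correlation lemma with the correct quantitative dependence on $k$ and $q$: a naive inductive peeling compounds errors multiplicatively and forces $\gamma$ to be doubly-exponentially small in $k$, whereas the stated parameters only allow $\gamma^{-1}$ of order $(kq)^{O(k)}$. Achieving this likely requires a single application of the spectral gap of a suitable higher-order walk on $X_{\Ins}$ combined with a one-shot decorrelation bound for pseudo-distributions under that walk, rather than an iterative peeling argument.
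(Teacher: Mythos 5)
Your high-level architecture (an SoS relaxation, conditioning on a random seed, independent rounding from singleton marginals, and a reduction to the expansion of swap walks) matches the paper's, but two of your steps contain genuine gaps. First, your ``global correlation rounding'' phase conditions on random \emph{vertices} and tracks only the average pair-correlation $\E_{i,j}[\MI(x_i;x_j)]$ over vertex pairs. This cannot control the distance of $\set{\rv Y_{\aye}}$ from a product for $\abs{\aye}=k>2$: a local ensemble can be exactly pairwise independent while each $k$-tuple is maximally correlated (e.g., the uniform distribution over satisfying assignments of a XOR constraint), so driving the vertex-level pair correlation to $\delta$ says nothing about the hyperedge correlation. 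The paper instead splits each $k$-face recursively into two sub-tuples along a binary splitting tree, so the quantities to be decorrelated are correlations between $\rv Y_{\ess}$ and $\rv Y_{\tee}$ for faces $\ess\in X(\ell_1)$, $\tee\in X(\ell_2)$ across a \emph{swap graph}; accordingly the potential is $\E_{i}\,\E_{\aye\sim\Pi_i}\Var{\rv Y_{\aye}}$ summed over all levels $i\le k$, and the conditioning is on randomly sampled $k$-faces (so the seed contains samples from every $\Pi_i$), not on single vertices. That setup also dissolves the ``multiplicative compounding'' you worry about: the splitting inequality accumulates only an additive error over the at most $k$ internal nodes of the tree, with no peeling recursion into links at all.

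Second, the claim that the non-trivial singular values of the swap walks are $O_{k}(\gamma)$ is the paper's main technical contribution and does not follow from link expansion in any direct way: the canonical up-down walks on $X(\ell)$ have second eigenvalue bounded below by a constant (roughly $\ell/(\ell+u)$), and it is only the conditioning on a complete swap-out that removes the laziness. The paper establishes the bound by proving the swap walks are height-independent, inverting the hypergeometric relation to express them as signed combinations of canonical walks, and then using the expanding-poset harmonic analysis of Dikstein et al.\ to match their approximate eigenvalues against the (bipartite) Kneser graph spectrum on the complete complex, where the nontrivial eigenvalues vanish as $n\to\infty$. Your proposal correctly identifies that such a one-shot spectral statement is what is needed, but supplies no argument for it; as written, the proof is incomplete at exactly the point you flag as the main obstacle, and the decorrelation phase preceding it is set up at the wrong level of the complex.
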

\begin{remark}
While the level-$t$ relaxation for \maxkcsp can be solved in time $(nq)^{O(t)}$
\cite{RW17:sos}, the rounding algorithms used by \cite{BarakRS11} and our
work do not need the full power of this relaxation.
Instead, they are captured by the ``local rounding'' framework of Guruswami and
Sinop \cite{GS12:faster} who show how to implement a local rounding algorithm
based on $t$ levels of the SoS hierarchy, in time $q^{O(t)} \cdot n^{O(1)}$
(where $q$ denotes the alphabet size).
\end{remark}
\vspace{-15 pt}
\paragraph{Our techniques.}
We start by using essentially the same argument for analyzing the SoS hierarchy
as was used by \cite{BarakRS11} (specialized to the case of expanders).
They viewed the SoS solution as giving a joint distribution on each pair of
variables forming a constraint, and proved that for sufficiently expanding
graphs, these distributions can be made close to product distributions, by
conditioning on a small number of variables (which governs the number of levels
required).
Similarly, we consider the conditions under which joint distributions on
$k$-tuples corresponding to constraints can be made close to product
distributions. Since the \cite{BarakRS11} argument shows how to split a joint
distribution into two marginals, we can use it to recursively split a set of size
$k$ into two smaller ones (one can think of all splitting operations as
forming a binary tree with $k$ leaves).

However, our arguments differ in the kind of expansion required to perform the
above splitting operations. 
In the case of the 2-CSP, one splits along the
edges of the constraint graph, and thus we only need the expansion of the
contraint graph (which is part of the assumption). 
However, in the case of $k$-CSPs, we may split a set of size 
$(\ell_1 + \ell_2)$ into disjoint sets of size $\ell_1$ and $\ell_2$.
This requires understanding the expansion of the following family of (weighted)
bipartite graphs arising from the complex $X_{\Ins}$: The vertices in the
graph are sets of variables of size $\ell_1$ and $\ell_2$ that occur in some
constraint, and the weight of an edge $\{\aye_1, \aye_2\}$ for $\aye_1 \cap
\aye_2 = \emptyset$, is proportional to the
probability that a random constraint contains $\aye_1 \sqcup \aye_2$. 
Note that this graph may be weighted even if the $k$-CSP instance $\Ins$ is
unweighted.

We view the above graphs as random walks, which we call ``swap walks'' on the
hyperedges (faces) in the complex $X$. 
While several random walks on high-dimensional expanders have been shown to have
rapid mixing \cite{KaufmanM17, KaufmanO18Walk, DinurK17, LLP17}, we need a
stronger condition. To apply the argument from \cite{BarakRS11}, we not only
need that the second singular value is bounded away from one, but require it to
be an arbitrarily small constant (as a function of $\eps$, $k$ and $q$). 
We show that this is indeed ensured by the condition that
$\aye_1 \cap \aye_2 = \emptyset$, and obtain a bound of $k^{O(k)} \cdot
\gamma$ on the second singular value. 
This bound, which constitutes much of the
technical work in the paper, is obtained by first expressing these walks in
terms of more canonical walks, and then using the beautiful machinery of
harmonic analysis on expanding posets  by Dikstein \etal \cite{DiksteinDFH18} 
to understand their spectra.

The swap walks analyzed above represent natural random walks on simplicial
complexes, and their properties may be of independent interest for other
applications. 
Just as the high-dimensional expanders are viewed as ``derandomized'' versions
of the complete complex (containing all sets of size at most $k$), one can view
the swap walks as derandomized versions of (bipartite) Kneser graphs, which have
vertex sets $\binom{[n]}{\ell_1}$ and $\binom{[n]}{\ell_2}$, and edges $(\aye,
\bee)$ iff $\aye \cap \bee = \emptyset$.
We provide a more detailed and technical overview in \cref{sec:example} after
discussing the relevant preliminaries in \cref{sec:prelims}.
\vspace{-10 pt}
\paragraph{High-dimensional threshold rank.} 
The correlation breaking method in \cite{BarakRS11} can be applied as long
as the graph has low threshold rank \ie the number of singular values above a
threshold $\tau = (\eps/q)^{O(1)}$ is bounded.
Similarly, the analysis described above can be applied, as long as all the swap
walks which arise when splitting the $k$-tuples have bounded threshold rank.
This suggests a notion of high-dimensional threshold rank for hypergraphs
(discussed in \cref{sec:trank}), which
can be defined in terms of the threshold ranks of the relevant swap walks. 
We remark that it is easy to show that dense hypergraphs (with $\Omega(n^k)$
hyperedges) have small-threshold rank according to this notion, and thus it can
be used to recover known algorithms for approximating $k$-CSPs on dense
instances \cite{FK96:focs} (as was true for threshold rank in graphs).
\vspace{-10 pt}
\paragraph{Other related work.}
While we extend the approach taken by \cite{BarakRS11} for 2-CSPs, somewhat
different approaches were considered by Guruswami and Sinop \cite{GuruswamiS11},
and Oveis-Gharan and Trevisan \cite{OGT15}.
The work by Guruswami and Sinop relied on the expansion of the label
extended graph, and used an analysis based on low-dimensional
approximations of the SDP solution.
Oveis-Gharan and Trevisan used low-threshold rank assumptions to obtain a
regularity lemma, which was then used to approximate the CSP.
For the case of $k$-CSPs, the Sherali-Adams hierarchy can be used to solve
instances with bounded treewidth \cite{WJ04:treewidth} and approximately dense
instances \cite{YZ14:dense, MR17}. 
Brandao and Harrow \cite{BrandaoH13} also extended the results by
\cite{BarakRS11} for 2-CSPs to the case of 2-local Hamiltonians. 
We show that their ideas can also be used to prove a similar extension of our
results to $k$-local Hamiltonians on high-dimensional expanders.

In case of high-dimensional expanders, in addition to canonical walks described
here, a ``non-lazy'' version of these walks (moving from $\ess$ to $\tee$ only if
$\ess \neq \tee$) was also considered by Kaufman and
Oppenheim \cite{KaufmanO18Walk}, Anari \etal \cite{ALOV18:log} and Dikstein \etal
\cite{DiksteinDFH18}. 
%
%
The swap walks studied in this paper were also considered independently in a 
very recent work of Dikstein and Dinur \cite{DD19} (under the name "complement walks").

In a recent follow-up work \cite{AJQST19}, the algorithms developed here were 
also used to obtain new unique and list decoding algorithms for direct sum and 
direct product codes, obtained by a ``lifting" a base code $C_0$ via $k$-local 
operations to amplify distance. This work also showed that the hypergraphs obtained 
by considering collections of length-$k$ walks on an expanding graph also satisfy 
(a slight variant of) splittability, and admit similar algorithms.
%


\section{Preliminaries and Notation}\label{sec:prelims}

\subsection{Linear Algebra}
Recall that for an operator $\Aye: V \to W$ between two finite-dimensional inner product spaces $V$
and $W$, the operator norm can be written as
\[
\norm{\Aye}_{\textup{op}} = \sup_{ f,  g \ne 0} \frac{\ip{\Aye  f}{ g}}{\norm{ f}\norm{ g}} \mper
\]
Also, for such an $\Aye$ the adjoint $\Aye^{\dag}: W \to V$ is defined as the (unique) operator
satisfying $\ip{\Aye f}{g} = \ip{f}{\Aye^{\dag}g}$ for all $f \in V, g \in W$.
For $\Aye: V \to W$, we take $\norm{\Aye}_{\textup{op}} = \sigma_1(\Aye) \geq \sigma_2(\Aye) \geq
\cdots \geq \sigma_r(\Aye) > 0$ to be its singular values in descending order. Note that for $\Aye: V \to
V$, $\sigma_2(\Aye)$ denotes its second largest eigenvalue in absolute value.

\subsection{High-Dimensional Expanders}\label{subsec:hdx}
A high-dimensional expander (HDX) is a particular kind of downward-closed hypergraph (simplicial
complex) satisfying an expansion requirement. 
We elaborate on these properties and define well known natural walks on HDXs below.
\subsubsection{Simplicial Complexes}
\begin{definition}
A {\deffont simplicial complex} $X$ with ground set $[n]$ is a downward-closed
collection of subsets of $[n]$ \ie for all sets $\ess \in X$ and $\tee \subseteq
\ess$, we also have $\tee \in X$. The sets in $X$ are also referred to as
{\deffont faces} of $X$.

We use the notation $X(i)$ to denote the collection of all faces $\ess$ in $X$ with
$\abs{\ess}=i$. When faces are of cardinality at most $d$, we also use the notation $X(\leq d)$ to
denote all the faces of $X$. By convention, we take $X(0) := \set{\varnothing}$.

A simplicial complex $X(\leq d)$ is said to be a {\deffont pure simplicial complex} if every face of
$X$ is contained in some face of size $d$. Note that in a pure simplicial complex $X(\leq d)$, the top slice
$X(d)$ completely determines the complex.
\end{definition}

Note that it is more common to associate a geometric representation to
simplicial complexes, with faces of cardinality $i$ being referred to as faces
of \emph{dimension} $i-1$ (and the collection being denoted by $X(i-1)$ instead
of $X(i)$). However, since we will only be treating these as hypergraphs, we
prefer to index faces by their cardinality, to improve readability of related expressions.

An important simplicial complex is the complete complex.

\begin{definition}[Complete Complex $\Delta_d(n)$]
  We denote by $\Delta_d(n)$ the complete complex with faces of size at most $d$ \ie
  $\Delta_d(n) ~:=~ \set{\ess \subseteq [n] ~|~ \abs{\ess} \leq d}$.
\end{definition}

\subsubsection{Walks and Measures on Simplicial Complexes}\label{subsec:natural_walks}
Let  $C^k$ denote the space of real valued functions on $X(k)$ \ie
$$
C^k ~:=~ \set*{ f~\vert~f \colon X(k) \to \R} \cong \R^{X(k)}.
$$
We describe natural walks on simplicial complexes considered
in~\cite{DinurK17,DiksteinDFH18,KaufmanO18Walk}, as stochastic operators, which map functions in
$C^i$ to $C^{i+1}$ and vice-versa. 

To define the stochastic operators associated with the walks, we first need to
describe a set of probability measures which serve as the stationary measures for these random
walks. For a pure simplicial complex $X(\leq d)$, we define a collection of probability measures
$(\Pi_1, \ldots \Pi_d)$, with $\Pi_i$ giving a distribution on faces in the slice $X(i)$.
\begin{definition}[Probability measures $(\Pi_1, \ldots, \Pi_d)$]
Let $X(\leq d)$ be a pure simplicial complex and let $\Pi_d$ be an arbitrary probability measure on
$X(d)$. We define a coupled array of random variables $(\ess^{(d)}, \ldots,
\ess^{(1)})$ as follows: sample $\ess^{(d)} \sim \Pi_d$ and (recursively) for
each $i \in [d]$, take $\ess^{(i-1)}$ to be a uniformly random subset of
$\ess^{(i)}$, of size $i-1$.

The distributions $\Pi_{d-1}, \ldots, \Pi_1$ are then defined to be the marginal
distributions of the random variables $\ess^{(d-1)}, \ldots, \ess^{(1)}$ as defined above.
\end{definition}
The following is immediate from the definition above.
\begin{proposition}\label{prop:volpres}
    Let $\aye \in X(\ell)$ be an arbitrary face. For all $j \ge 0$, one has
    \[ \sum_{\bee \in X(\ell + j):\atop\bee \supseteq \aye} \Pi_{\ell + j}(\bee)~=~\binom{\ell + j}{j} \cdot \Pi_{\ell}(\aye).\]  
\end{proposition}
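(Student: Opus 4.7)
The plan is to interpret both sides of the identity probabilistically, in terms of the coupled array $(\ess^{(d)}, \ldots, \ess^{(1)})$ defining $\Pi_1, \ldots, \Pi_d$. By definition, $\Pi_\ell(\aye) = \Pr[\ess^{(\ell)} = \aye]$, and the left-hand side equals $\Pr[\ess^{(\ell+j)} \supseteq \aye]$. So the claim reduces to
\[
\Pr[\ess^{(\ell+j)} \supseteq \aye] ~=~ \binom{\ell+j}{j} \cdot \Pr[\ess^{(\ell)} = \aye] \mper
\]

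The key step is to establish the following ``uniform sub-sampling'' lemma: conditional on $\ess^{(\ell+j)} = \bee$, the face $\ess^{(\ell)}$ is distributed uniformly over the $\binom{\ell+j}{\ell}$ size-$\ell$ subsets of $\bee$. I would prove this by induction on $j$. The base case $j=0$ is trivial. For the inductive step, note that $\ess^{(\ell)}$ is obtained from $\ess^{(\ell+j)}$ by first taking a uniformly random subset of size $\ell+j-1$ and then (by induction) a uniformly random size-$\ell$ subset of that. The composition of these two uniform down-samplings is again uniform over size-$\ell$ subsets of $\bee$, which can be verified by a direct counting argument: each size-$\ell$ subset $\aye \subseteq \bee$ is obtained through the same number $\binom{\ell+j-1}{\ell}$ of intermediate choices.

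Given the uniform sub-sampling lemma, the proof is immediate from the law of total probability:
\[
\Pi_\ell(\aye) ~=~ \sum_{\bee \in X(\ell+j)} \Pi_{\ell+j}(\bee) \cdot \Pr[\ess^{(\ell)} = \aye \mid \ess^{(\ell+j)} = \bee] ~=~ \frac{1}{\binom{\ell+j}{\ell}} \sum_{\bee \in X(\ell+j) :\, \bee \supseteq \aye} \Pi_{\ell+j}(\bee) \mcom
\]
where the last equality uses that the conditional probability is $1/\binom{\ell+j}{\ell}$ when $\aye \subseteq \bee$ and zero otherwise. Rearranging and using $\binom{\ell+j}{\ell} = \binom{\ell+j}{j}$ yields the claim.

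The only mild obstacle is the uniform sub-sampling lemma, but it is a standard symmetry fact about iterated uniform downward sampling and the induction goes through cleanly; no expansion or structural property of $X$ is needed beyond downward closure of the complex, which ensures that every subset $\bee \supseteq \aye$ lying in $X(\ell+j)$ automatically has all its sub-faces in $X$.
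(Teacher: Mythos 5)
Your proof is correct and is precisely the argument the paper treats as ``immediate from the definition'': interpret $\Pi_\ell(\aye)$ as $\Pr[\ess^{(\ell)}=\aye]$ in the coupled array, observe that iterated uniform down-sampling from $\bee$ lands on each size-$\ell$ subset with probability $1/\binom{\ell+j}{\ell}$, and apply the law of total probability. (One trivial slip: the number of size-$(\ell+j-1)$ intermediates of $\bee$ containing a fixed $\aye$ is $j$, not $\binom{\ell+j-1}{\ell}$; but uniformity only needs this count to be independent of $\aye$, so the argument stands.)
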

For all $k$, we define the inner product of functions $f, g \in C^k$, according to associated measure $\Pi_k$
$$
\ip{ f}{ g} = \Ex{\ess \sim \Pi_k}{ f(\ess)  g(\ess)} = \sum_{\ess \in X(k)}  f(\ess)  g(\ess) \cdot
\Pi_k(\ess) \mper
$$
We now define the up and down operators $\Up_i: C^i \to C^{i+1}$ and $\Dee_{i+1}: C^{i+1} \to C^{i}$
as
\begin{align*}
  [\Up_i g](\ess)
  &~=~ \Ex{ \ess' \in X(i), ~\ess' \subseteq \ess }{g(\ess')}
    ~=~  \frac{1}{i + 1} \cdot \sum_{x \in \ess} g(\ess \backslash \set{x}) \\
  [\Dee_{i + 1} g](\ess)
  &  ~=~ \Ex{\ess' \sim \Pi_{i+1} | \ess' \supset \ess}{g(\ess')}  
      ~=~ \frac{1}{i + 1} \cdot \sum_{x \notin \ess} g(\ess \sqcup \set{x})
    \cdot \frac{\Pi_{i + 1}(\ess \sqcup \set{x})}{\Pi_i(\ess)}
\end{align*}
An important consequence of the above definition is that $\Up_i$ and $\Dee_{i+1}$ are adjoints with
respect to the inner products of $C^i$ and $C^{i+1}$.
\begin{fact}
  $\Up_i = \Dee_{i+1}^{\dag}$, i.e., $\ip{\Up_i f}{ g} =
  \ip{f}{\Dee_{i+1} g}$ for every $f \in C^i$ and $g \in C^{i+1}$.
\end{fact}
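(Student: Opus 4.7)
The plan is to verify the adjointness relation by direct computation, unpacking both inner products according to their definitions and performing a change of summation variable on the resulting double sum.

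First I would expand the left-hand side as
\[
\ip{\Up_i f}{g} \;=\; \sum_{\ess \in X(i+1)} [\Up_i f](\ess) \cdot g(\ess) \cdot \Pi_{i+1}(\ess)
\;=\; \frac{1}{i+1} \sum_{\ess \in X(i+1)} \sum_{x \in \ess} f(\ess \setminus \{x\}) \cdot g(\ess) \cdot \Pi_{i+1}(\ess),
\]
using the formula for $\Up_i$ given just before the fact. The key step is to reindex the double sum by setting $\tee = \ess \setminus \{x\}$: every pair $(\ess, x)$ with $\ess \in X(i+1)$ and $x \in \ess$ corresponds uniquely to a pair $(\tee, x)$ with $\tee \in X(i)$, $x \notin \tee$, and $\tee \sqcup \{x\} \in X(i+1)$. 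Downward-closure of the complex guarantees that all such $\tee$ indeed lie in $X(i)$.

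After reindexing and pulling $f(\tee)$ and the factor $\Pi_i(\tee)/\Pi_i(\tee)$ out of the inner sum, the expression becomes
\[
\sum_{\tee \in X(i)} f(\tee) \cdot \Pi_i(\tee) \cdot \frac{1}{i+1} \sum_{x \notin \tee} g(\tee \sqcup \{x\}) \cdot \frac{\Pi_{i+1}(\tee \sqcup \{x\})}{\Pi_i(\tee)},
\]
and I would then recognize the inner sum as exactly $[\Dee_{i+1} g](\tee)$ by the definition of $\Dee_{i+1}$. Hence the whole expression equals $\sum_{\tee \in X(i)} f(\tee) \cdot [\Dee_{i+1} g](\tee) \cdot \Pi_i(\tee) = \ip{f}{\Dee_{i+1} g}$, as required.

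There is no real obstacle here; the only thing to be careful about is that the reindexing respects the support of the measures, which is ensured because $\Pi_{i+1}(\tee \sqcup \{x\}) = 0$ whenever $\tee \sqcup \{x\} \notin X(i+1)$, so restricting the inner sum to $X(i+1)$ versus summing over all $x \notin \tee$ gives the same value. This is implicit in the definitions but worth a one-line remark in the final write-up.
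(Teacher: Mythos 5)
Your computation is correct: expanding $\ip{\Up_i f}{g}$ via the definitions, reindexing the double sum by $\tee = \ess \setminus \{x\}$, and recognizing the inner sum as $[\Dee_{i+1} g](\tee)$ is exactly the standard verification, and your remark that terms with $\tee \sqcup \{x\} \notin X(i+1)$ vanish because $\Pi_{i+1}$ assigns them zero mass correctly handles the only subtlety. The paper states this fact without proof, as an immediate consequence of the definitions, so there is no alternative argument to compare against; your write-up simply supplies the omitted details.
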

Note that the operators can be thought of as defining random walks in a simplicial complex $X(\leq
d)$. The operator $\Up_i$ moves \emph{down} from a face $\ess \in X(i+1)$ to a face $\ess' \in X(i)$,
but lifts a function $g \in C^{i}$ \emph{up} to a function $\Up g \in C^{i+1}$. Similarly, the
operator $\Dee_{i+1}$ can be thought of as defining a random walk which moves \emph{up} from $\ess
\in X(i)$ to $\ess' \in X(i+1)$. It is easy to verify that these walks respectively map the measure $\Pi_{i+1}$
to $\Pi_{i}$, and $\Pi_{i}$ to $\Pi_{i+1}$.

\subsubsection{High-Dimensional Expansion}

We recall the notion of high-dimensional expansion (defined via local spectral expansion) considered
by~\cite{DinurK17}.
We first need a few pieces of notation.

For a complex $X(\leq d)$ and $\ess \in X(i)$ for some $i \in [d]$, we denote by $X_{\ess}$ the
{\deffont link complex}
\[
  X_\ess ~:=~ \set*{ \tee \backslash \ess \mid \ess \subseteq \tee \in X } \mper 
\]
When $\abs{\ess} \leq d-2$, we also associate a natural weighted graph $G(X_{\ess})$ to a link
$X_{\ess}$, with vertex set $X_{\ess}(1)$ and edge-set $X_{\ess}(2)$. The edge-weights are taken to
be proportional to the measure $\Pi_2$ on the complex $X_{\ess}$, which is in turn proportional to
the measure $\Pi_{\abs{\ess}+2}$ on $X$. The graph $G(X_{\ess})$ is referred to as the {\deffont
  skeleton} of $X_{\ess}$.
Dinur and Kaufman~\cite{DinurK17} define high-dimensional expansion in
terms of spectral expansion of the skeletons of the links.
\begin{definition}[$\gamma$-HDX from~\cite{DinurK17}]\label{def:hdx_dk} 
  A simplicial complex $X(\leq d)$ is said to be {\deffont $\gamma$-High Dimensional Expander ($\gamma$-HDX)}
  if for every $0 \le i \le d-2$ and for every $\ess \in X(i)$, the graph $G(X_\ess)$ satisfies
  $\sigma_2(G(X_{\ess})) \leq \gamma$, where $\sigma_2(G(X_{\ess}))$ denotes the  second singular
  value of the (normalized) adjacency matrix of $G(X_{\ess})$.
\end{definition}

\subsection{Constraint Satisfaction Problems (CSPs)}
A $k$-CSP instance $\Ins = (H, \Cc, w)$ with alphabet size $q$
consists of a $k$-uniform hypergraph, a set of constraints
\[ \Cc = \set*{ \Cc_\aye \subseteq [q]^{\aye} : \aye \in H},\]
and a non-negative weight function $w \in \RR_+^H$ on the
constraints, satisfying $\sum_{\aye \in H}w(a) = 1$.

A constraint $\Cc_\aye$ is said to be satisfied by an assignment
$\assn$ if we have $\assn|_{\aye} \in \Cc_\aye$
\ie~the restriction of $\assn$ on $\aye$ is contained in
$\Cc_\aye$. We write, $\SAT_\Ins(\assn)$ for the (weighted fraction
of the constraints) satisfied by the assignment $\assn$ \ie
\[ 
\SAT_\Ins(\assn) 
~=~ \sum_{\aye \in H} w(\aye) \cdot \one[\assn|_\aye \in \Cc_\aye] 
~=~ \Ex{\aye \sim w}{\one[\assn|_\aye \in \Cc_\aye]}
\mper
\]
We denote by $\OPT(\Ins)$ the maximum of $\SAT_\Ins(\assn)$ over all $\assn \in [q]^{V(H)}$.

Any $k$-uniform hypergraph $H$ can be associated with a pure
simplicial complex in a canonical way by just setting $X_{\Ins} = \set*{\bee
  : \exists\ \aye \in H ~\text{and}~ \aye \supseteq \bee }$ -- notice that
$X_{\Ins}(k) = H$. 
We will refer to this complex as the {\deffont constraint complex} of the instance $\Ins$.
The probability distribution $\Pi_{k}$ on
$X_{\Ins}$ will be derived from the weights function $w$ of the
constraint, i.e~
\[ \Pi_{k}(\aye) = w(\aye) \quad\forall \aye \in X_{\Ins}(k) = H.\]

\newcommand{\V}[2]{{v}_{(#1,#2)}}
\newcommand{\W}[2]{{w}_{(#1,#2)}}
\newcommand{\Vempty}{\V{\emptyset}{\emptyset}}

\subsection{Sum-of-Squares Relaxations and $t$-local PSD Ensembles}\label{ssec:sos-relax}
The Sum-of-Squares (SoS) hierarchy gives a sequence of increasingly tight semidefinite programming 
relaxations for several optimization problems, including CSPs. Since we will use relatively 
few facts about the  SoS hierarchy, already developed in the analysis of Barak, Raghavendra and 
Steurer \cite{BarakRS11}, we will adapt their notation of {\deffont $t$-local distributions} to describe 
the relaxations. For a $k$-CSP instance $\Ins = (H,\Cc, w)$ on $n$ variables, we consider the following 
semidefinite  relaxation given by $t$-levels of the SoS hierarchy, with vectors $\V{S}{\alpha}$ 
for all $S \subseteq [n]$ with $\abs{S} \leq t$, and all $\alpha \in [q]^S$. Here, for $\alpha_1 \in
\qary^{S_1}$ and $\alpha_2 \in \qary^{S_2}$, $\alpha_1 \circ \alpha_2 \in \qary^{S_1 \cup S_2}$
denotes the partial assignment obtained by concatenating $\alpha_1$ and $\alpha_2$.
\begin{table}[h]
\hrule
\vline
\begin{minipage}[t]{0.99\linewidth}
\vspace{-5 pt}
{\small
\begin{align*}
\mbox{maximize}\quad ~~
\Ex{\aye \sim w}{\sum_{\alpha \in C_{\aye}} \smallnorm{\V{\aye}{\alpha}}^2}&~=:~ \SDP(\Ins)\\
\mbox{subject to}\quad \quad ~
 \mydot{\V{S_1}{\alpha_1}}{\V{S_2}{\alpha_2}} &~=~ 0 
  & \forall~ \alpha_1|_{S_1 \cap S_2} \neq \alpha_2|_{S_1 \cap S_2}\\
 \mydot{\V{S_1}{\alpha_1}}{\V{S_2}{\alpha_2}} 
  &~=~ \mydot{\V{S_3}{\alpha_3}}{\V{S_4}{\alpha_4}}
  & \forall~ S_1 \cup S_2 = S_3 \cup S_4, 
  ~\alpha_1 \circ \alpha_2 = \alpha_3 \circ \alpha_4 \\
 \sum_{j \in [q]} \smallnorm{\V{\{i\}}{j}}^2 &~=~ 1 &\forall i \in [n]\\
 \smallnorm{\Vempty} &~=~ 1 &
\end{align*}}
\vspace{-14 pt}
\end{minipage}
\hfill\vline
\hrule
\end{table}

For any set $S$ with $|S| \leq t$, the vectors $\V{S}{\alpha}$ induce a
probability distribution $\mu_S$ over $\qary^S$ such that the assignment $\alpha \in \qary^S$ appears with probability $\smallnorm{\V{S}{\alpha}}^2$.
Moreover, these distributions are consistent on intersections \ie for $T
\subseteq S \subseteq [n]$, we have $\mu_{S|T} = \mu_T$, where $\mu_{S|T}$
denotes the restriction of the distribution $\mu_S$ to the set $T$.
We use these distributions to define a collection of random variables $\rv Y_1,
\ldots, \rv Y_n$ taking values in $\qary$, such that for any set $S$ with
$\abs{S} \leq t$, the collection of variables $\inbraces{\rv Y_i}_{i \in S}$
have a joint distribution $\mu_S$. Note that the entire collection $(\rv Y_1,
\ldots, \rv Y_n)$ \emph{may not} have a joint distribution: this property is
only true for sub-collections of size $t$. We will refer to the collection $(\rv
Y_1, \ldots, \rv Y_n)$ as a {\deffont $t$-local ensemble} of random variables.

We also have that that for any $T \subseteq [n]$ with $\abs{T} \leq t-2$, and any 
$\beta \in \qary^T$, we can define a $(t-\abs{T})$-local ensemble $(\rv Y_1', \ldots, \rv Y_n')$ by
``conditioning'' the local distributions on the event $\rv Y_T = \beta$, where $Y_T$ is shorthand
for the collection $\inbraces{\rv Y_i}_{i \in T}$. For any $S$ with $\abs{S} \leq t-\abs{T}$, we define
the distribution of $\rv Y_S'$ as $\mu_S' := \mu_{S \cup T} | \set{\rv Y_T = \beta}$.
Finally, the semidefinite program also ensures that for any such conditioning,
the conditional covariance matrix
\[
\Emm_{(S_1, \alpha_1)(S_2,\alpha_2)} ~=~ \cov\inparen{\one[\rv Y_{S_1}' = \alpha_1],
  \one[\rv Y_{S_2}' = \alpha_2]}
\]
is positive semidefinite, where $\abs{S_1}, \abs{S_2} \leq (t-\abs{T})/2$.
Here, for each pair $S_1, S_2$ the covariance is computed using the joint distribution $\mu_{S_1
  \cup S_2}'$.
The PSD-ness be easily verified by noticing that the above matrix can be written as the Gram matrix of the vectors
\[
  \W{S}{\alpha} ~:=~ \frac{1}{\smallnorm{\V{T}{\beta}}} \cdot \V{T \cup S}{\beta \circ \alpha}
  ~-~ \frac{\smallnorm{\V{T \cup S}{\beta \circ \alpha}}^2}{\smallnorm{\V{T}{\beta}}^3} \cdot \V{T}{\beta}
\]
In this paper, we will only consider $t$-local ensembles such that for every conditioning on a set
of size at most $t-2$, the conditional covariance matrix is PSD. We will refer to these as {\deffont
$t$-local PSD ensembles}.
We will also need a simple corollary of the above definitions.
\begin{fact}\label{fact:set-ensemble}
Let  $(\rv Y_1, \ldots, \rv Y_n)$ be a $t$-local PSD ensemble, and let $X$ be any simplicial complex with $X(1)=[n]$. 
Then, for all $s \leq t/2$, the collection $\inbraces{\rv Y_{\aye}}_{\aye \in X(\le s)}$ is a $(t/s)$-local PSD ensemble,
    where $X(\le s) = \bigcup_{i = 1}^s X(i)$.
\end{fact}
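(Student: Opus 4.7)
The plan is to reduce directly to the original ensemble. Each face $\aye \in X(\le s)$ has cardinality at most $s$, so any collection $\Aa \subseteq X(\le s)$ with $|\Aa| \le t/s$ involves at most $s \cdot (t/s) = t$ underlying variables. I therefore define the joint distribution of $\inbraces{\rv Y_\aye}_{\aye \in \Aa}$ as the pushforward of $\mu_{U_\Aa}$, where $U_\Aa = \bigcup_{\aye \in \Aa} \aye$, under the restriction map $\beta \mapsto (\beta|_\aye)_{\aye \in \Aa}$. Consistency across intersections of face-families reduces immediately to consistency of the original marginals $\inbraces{\mu_S}_{|S| \le t}$.

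For the PSD requirement, I would fix a conditioning family $\mathcal{T} \subseteq X(\le s)$ with $|\mathcal{T}| \le t/s - 2$ and a partial assignment $\vec\beta = (\beta_\aye)_{\aye \in \mathcal{T}}$. If the $\beta_\aye$'s disagree on some overlap, the event has probability zero under $\mu_{U_\mathcal{T}}$ and there is nothing to check; otherwise set $T = U_\mathcal{T}$ and let $\beta \in \qary^T$ be the concatenation. Then $|T| \le s(t/s - 2) = t - 2s \le t - 2$, so the original $t$-local PSD ensemble admits the conditioning $\rv Y_T = \beta$. For any face-indexed pair $(\mathcal{S}_1, \vec\alpha_1), (\mathcal{S}_2, \vec\alpha_2)$ with $|\mathcal{S}_i| \le (t/s - |\mathcal{T}|)/2$, one has $|U_{\mathcal{S}_i}| \le s |\mathcal{S}_i| \le (t - s|\mathcal{T}|)/2 \le (t - |T|)/2$, so these sets are within the scope of the original PSD hypothesis.

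When each $\vec\alpha_i$ is itself consistent on overlaps, the event $\rv Y'_{\mathcal{S}_i} = \vec\alpha_i$ coincides with $\rv Y'_{U_{\mathcal{S}_i}} = \alpha_i$ for the natural flattening $\alpha_i$, and otherwise the indicator is identically zero. Thus the conditional covariance matrix of the new face-indexed ensemble is obtained from a principal submatrix of the original conditional covariance matrix (PSD by hypothesis) by padding with zero rows and columns for the inconsistent entries, both of which operations preserve positive semidefiniteness. I do not anticipate any genuine obstacle here: the only care required is the bookkeeping that translates the face-level bounds $(|\mathcal{T}|, |\mathcal{S}_i|)$ into the variable-level bounds $(|T|, |U_{\mathcal{S}_i}|)$ matching the $t$-local PSD hypothesis, together with the straightforward case analysis for face-assignments that disagree on overlaps.
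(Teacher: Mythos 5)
Your proof is correct. The paper states this Fact without proof, treating it as an immediate consequence of the definitions, and your argument is exactly the intended verification: the union of at most $t/s$ faces of size at most $s$ has at most $t$ variables, the arithmetic $|T| \le t-2s \le t-2$ and $|U_{\mathcal{S}_i}| \le (t-|T|)/2$ places everything within the scope of the original $t$-local PSD hypothesis, and the new covariance matrix is of the form $P^\top \Emm P$ for a selection matrix $P$ (note the map $\mathcal{S} \mapsto U_{\mathcal{S}}$ need not be injective, so rows may be duplicated rather than merely selected) padded with zero rows and columns for inconsistent face-assignments — all operations preserving positive semidefiniteness.
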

For random variables $\rv Y_S$ in a $t$-local PSD ensemble, we use the notation
$\inbraces{\rv Y_S}$ to denote the distribution of $\rv Y_S$ (which exists when $\abs{S} \leq
t$). We also define $\Var{\rv Y_S}$ as $\sum_{\alpha \in {\qary^S}} \Var{\one\insquare{\rv Y_S = \alpha}}$.
%


\section{Proof Overview: Approximating \maxfourxor}\label{sec:example}
We consider a simple example of a specific $k$-CSP, which captures most of the
key ideas in our proof. Let $\Ins$ be an unweighted instance of $4$-XOR on $n$
Boolean variables. Let $H$ be a 4-uniform hypergraph on vertex set $[n]$, with a
hyperedge corresponding to each constraint \ie each $\aye =
\set{i_1, i_2, i_3, i_4} \in H$ corresponds to a constraint in $\Ins$ of the form
\[
x_{i_1} + x_{i_2} + x_{i_3} + x_{i_4} ~=~ b_{\aye} \pmod 2 \mcom
\]
for some $b_{\aye} \in \B$. 
Let $X$ denote the constraint complex for the instance $\Ins$ such that $X(1)=[n]$, $X(4)=H$ and let
$\Pi_1, \ldots, \Pi_4$ be the associated distributions (with $\Pi_4$ being
uniform on $H$). 

\vspace{-10 pt}
\paragraph{Local vs global correlation: the BRS strategy.}
We first recall the strategy used by \cite{BarakRS11}, which also suggests a natural first step for
our proof. Given a 2-CSP instance with an associated graph $G$, and a $t$-local PSD
ensemble $\rv Y_1, \ldots, \rv Y_n$ obtained from the SoS relaxation, they consider if the ``local
correlation" of the ensemble is small across the edges of $G$ (which correspond to constraints) \ie
\[
\Ex{\set{i,j} \sim G}{\norm{\inbraces{\rv Y_{i} \rv Y_{j}} - \inbraces{\rv Y_{i}}\inbraces{\rv Y_{j}} }_1} ~\leq~ \eps \mper
\]
If the local correlation is indeed small, we easily produce an assignment achieving a value
$\SDP - \eps$ in expectation, simply by rounding each variable $x_i$ independently according to the
distribution $\inbraces{\rv Y_i}$. 
On the other hand, if this is not satisfied, they show (as a special case of
their proof) that if $G$ is an expander with second eigenvalue $\lambda \leq c
\cdot (\eps^2/q^2)$, then variables also have a high ``global correlation",
between a typical pair $(i,j) \in [n]^2$. Here, $q$ is the alphabet size and $c$
is a fixed constant.
They use this to show that for $(\rv Y_1', \ldots, \rv Y_n')$ obtained by conditioning on the value
of a randomly chosen $\rv Y_{i_0}$, we have 
\[
\Ex{i}{\Var{\rv Y_i}} - \expop_{i_0, \rv Y_{i_0}}\Ex{i}{\Var{\rv Y_i'}} ~\geq~ \Omega(\eps^2/q^2) \mcom
\] 
where the expectations over $i$ and $i_0$ are both according to the stationary
distribution on the vertices of $G$.
Since the variance is bounded between 0 and 1, this essentially shows that the
local correlation must be at most $\eps$ after conditioning on a set of size
$O(q^2/\eps^2)$ (although the actual argument requires a bit more care and needs
to condition on a somewhat larger set).

\vspace{-10 pt}
\paragraph{Extension to 4-XOR.}
As in \cite{BarakRS11}, we check if the $t$-local PSD ensemble $(\rv Y_1, \ldots, \rv Y_n)$
obtained from the SDP solution satisfies
\[
\Ex{\set{i_1, i_2, i_3, i_4} \in H}{\norm{\inbraces{\rv Y_{i_1} \rv Y_{i_2} \rv Y_{i_3} \rv Y_{i_4}}
    - \inbraces{\rv Y_{i_1}}\inbraces{\rv Y_{i_2}} \inbraces{\rv Y_{i_3}} \inbraces{\rv Y_{i_4}}}_1} ~\leq~ \eps \mper
\]
As before, independently sampling each $x_i$ from $\inbraces{\rv Y_i}$ gives an expected value at least
$\SDP - \eps$ in this case. If the above inequality is not satisfied, an application of triangle
inequality gives
\[
  \Ex{\set{i_1, i_2, i_3, i_4} \in H}{
    \begin{array}{l}
    \norm{\inbraces{\rv Y_{i_1} \rv Y_{i_2} \rv Y_{i_3} \rv Y_{i_4}} - \inbraces{\rv Y_{i_1} \rv
      Y_{i_2}} \inbraces{\rv Y_{i_3} \rv Y_{i_4}}}_1 ~+~ \\[5pt]
   \norm{\inbraces{\rv Y_{i_1} \rv Y_{i_2}} - \inbraces{\rv Y_{i_1}} \inbraces{\rv Y_{i_2}}}_1 ~+~
      \norm{\inbraces{\rv Y_{i_3} \rv Y_{i_4}} - \inbraces{\rv Y_{i_3}} \inbraces{\rv Y_{i_4}}}_1
    \end{array}
  }
  ~>~ \eps \mper
\]
Symmetrizing over all orderings of
$\set{i_1, i_2, i_3, i_4}$, we can write the above as
\[
\eps_2 + 2 \cdot \eps_1 ~>~ \eps \mcom
\]
which gives $\max\inbraces{\eps_1, \eps_2} \geq \eps/3$. Here,
\begin{align*}
  \eps_1 &~:=~ \Ex{\set{i_1, i_2} \sim \Pi_2}{\norm{\inbraces{\rv Y_{i_1} \rv Y_{i_2}} -
           \inbraces{\rv Y_{i_1}} \inbraces{\rv Y_{i_2}}}_1} \mcom \quad \text{and} \\
  \eps_2 &~:=~ \Ex{\set{i_1, i_2, i_3, i_4} \sim \Pi_4}{
    \norm{\inbraces{\rv Y_{i_1} \rv Y_{i_2} \rv Y_{i_3} \rv Y_{i_4}} - \inbraces{\rv Y_{i_1} \rv
           Y_{i_2}} \inbraces{\rv Y_{i_3} \rv Y_{i_4}}}_1} \\ &~=~ \Ex{\set{i_1, i_2, i_3, i_4} \sim \Pi_4}{
    \norm{\inbraces{\rv Y_{\set{i_1, i_2}} \rv Y_{\set{i_3, i_4}}} - \inbraces{\rv Y_{\set{i_1,
                                                                i_2}}} \inbraces{\rv Y_{\set{i_3,
                                                                i_4}} }}_1} \mper
\end{align*}
As before, $\eps_1$ measures the local correlation across edges of a weighted graph $G_1$ with
vertex set $X(1) = [n]$ and edge-weights given by $\Pi_2$. Also, $\eps_2$ measures the analogous
quantity for a graph $G_2$ with vertex set $X(2)$ (pairs of variables occurring in constraints) and
edge-weights given by $\Pi_4$.

Recall that the result from \cite{BarakRS11} can be applied to \emph{any}
graph $G$ over variables in a 2-local PSD ensemble, as long as the
$\sigma_2(G)$ is small. 
Since $\inbraces{\rv Y_i}_{i \in [n]}$ and
$\inbraces{\rv Y_{\ess}}_{\ess \in X(2)}$ are both $(t/2)$-local PSD ensembles (by
\cref{fact:set-ensemble}), we will apply the result to the graph $G_1$ on the
first ensemble and $G_2$ on the second ensemble.
%
%
We consider the potential
\[
\Phi(\rv Y_1, \ldots, \rv Y_n) ~:=~ \Ex{i \sim \Pi_1}{\Var{\rv Y_i}} ~+~ \Ex{\ess \sim
  \Pi_2}{\Var{\rv Y_{\ess}}} \mper
\]
Since local correlation is large along at least one of the graphs $G_1$ and $G_2$,
using the above arguments (and the non-decreasing nature of variance under conditioning) it is easy
to show that in expectation over the choice of  $\set{i_0, j_0} \sim \Pi_2$ and $\beta \in \qary^2$
chosen from $\inbraces{\rv Y_{\set{i_0, j_0}}}$, the conditional ensemble $(\rv Y_1', \ldots, \rv Y_n')$
satisfies
\[
\Phi(\rv Y_1, \ldots, \rv Y_n) - \Ex{i_0, j_0, \beta}{\Phi(\rv Y_1', \ldots, \rv Y_n')} ~=~
\Omega(\eps^2) \mcom
\]
\emph{provided $G_1$ and $G_2$ satisfy $\sigma_2(G_1), \sigma_2(G_2) ~\leq~ c \cdot \eps^2$} for an
appropriate constant $c$.

The bound on the eigenvalue of $G_1$ follows simply from the fact that 
it is the skeleton of $X$, which is a $\gamma$-HDX.
Obtaining bounds on the eigenvalues of $G_2$ and similar higher-order graphs,
constitutes much of the technical part of this paper.
Note that for a random sparse instance of \maxfourxor, the graph $G_2$ will be a
matching with high probability (since $\{i_1, i_2\}$ in a constraint will only
be connected to $\{i_3, i_4\}$ in the same constraint). However, we show that in
case of a $\gamma$-HDX, this graph has second eigenvalue $O(\gamma)$.
We analyze these graphs in terms of modified high-dimensional random walks, which we
call ``swap walks''.

We remark that our potential and choice of a ``seed set'' of variables to
condition on, is slightly different from \cite{BarakRS11}.
To decrease the potential function above, we need that for each level $X(i)$ ($i = 1,2$ in the
example above) the seed set must contain sufficiently many independent samples
from $X(i)$ sampled according to $\Pi_i$.
This can be ensured by drawing independent samples from the top level $X(k)$
(though $X(2)$ suffices in the above example). In contrast, the seed set in
\cite{BarakRS11} consists of random samples from $\Pi_1$.
\vspace{-10 pt}
\paragraph{Analyzing Swap Walks.}
The graph $G_2$ defined above can be thought of as a random walk on $X(2)$, which starts
at a face $\ess \in X(2)$, moves up to a face (constraint) $\ess' \in X(4)$ containing it, and then
descends to a face $\tee \in X(2)$ such that $\tee \subset \ess'$ and $\ess \cap \tee
= \emptyset$ \ie the walk ``swaps out'' the elements in $\ess$ for other
elements in $\ess'$.
Several walks considered on simplicial complexes allow for the possibility of a non-trivial 
intersection, and hence have second eigenvalue lower bounded by a constant. On the other hand, 
swap walks completely avoid any laziness and thus turn out to have eigenvalues which can be made 
arbitrarily small. 
To understand the eigenvalues for this walk, we will express it in terms of other
canonical walks defined on simplicial complexes.

Recall that the up and down operators can be used to define random walks on simplicial
complexes. The up operator $\Up_i : C^i \to C^{i+1}$ defines a walk that moves \emph{down} from a
face $\ess \in X(i+1)$ to a random face $\tee \in X(i), \tee \subset \ess$ (the operator thus
``lifts'' a function in $C^i$ to a function in $C^{i+1}$). Similarly, the down operator $\Dee_i: C^i
\to C^{i-1}$ moves \emph{up} from a face $\ess \in X(i-1)$ to $\tee \in X(i), \tee \supset s$, with
probability $\Pi_{i}(\tee)/(i \cdot \Pi_{i-1}(\ess))$.
These can be used to define a canonical random walk
\[
\nwalk{2}{u} ~:=~ \Dee_3 \cdots \Dee_{u+2} \Up_{u+1} \cdots \Up_2 \mcom \qquad
\nwalk{2}{u}: C^2 \to C^2 \mcom
\]
which moves from up for $u$ steps $\ess \in X(2)$ to $\ess' \in X(u+2)$, and then descends back to
$\tee \in X(2)$. Such walks were analyzed optimally by Dinur and Kaufman \cite{DinurK17}, who proved that
$\lambda_2\inparen{\nwalk{2}{u}} =  2/(u+2) \pm O_u(\gamma)$ when $X$ is a $\gamma$-HDX. 
Thus, while this walk gives an expanding graph with vertex set $X(2)$, the second eigenvalue cannot
be made arbitrarily small for a fixed $u$ (recall that we are interested in showing that
$\sigma_2(G_2) \leq c\cdot \eps^2$).
However, note that we are only interested in $\nwalk{2}{2}$ \emph{conditioned on
  the event} that the two elements from $\ess$ are ``swapped out'' with new
elements in the final set $\tee$ \ie $\ess \cap \tee = \emptyset$. We define
\[
  \hrswap{2}{2}{u}{j}(\ess, \tee) ~:=~
  \begin{cases}
    \frac{\binom{u+2}{2}}{\binom{u}{j} \cdot \binom{2}{2-j}} \cdot \nwalk{2}{u} & \text{if}~ \abs{\tee \setminus \ess} = j
    \\[3 pt]
    0 & \text{otherwise}
  \end{cases}
  \mcom
\]
where the normalization is to ensure stochasticity of the matrix. In this notation, the graph $G_2$
corresponds to the random-walk matrix $\hrswap{2}{2}{2}{2}$. We show that while
$\sigma_2(\nwalk{2}{2}) \approx 1/2$, we have that $\sigma_2(\hrswap{2}{2}{2}{2}) = O(\gamma)$.
We first write the canonical walks in terms of the swap walks. Note that
\[
\nwalk{2}{2} ~=~ \frac16 \cdot \Ide ~+~ \frac23 \cdot \hrswap{2}{2}{2}{1} ~+~ \frac16 \cdot
\hrswap{2}{2}{2}{2} \mcom
\]
since the ``descent'' step from $\ess' \in X(4)$ containing $\ess \in X(2)$, produces a $\tee \in
X(2)$ which ``swaps out'' $0, 1$ and $2$ elements with probabilities
$\nfrac{1}{6}, \nfrac{2}{3}$ and $\nfrac{1}{6}$
respectively. Similarly, 
\[
\nwalk{2}{1} ~=~ \frac13 \cdot \Ide ~+~ \frac23 \cdot \hrswap{2}{2}{1}{1} \mper
\]
Finally, we use the fact (proved in \cref{sec:walks}) that while the canonical walks do depend on
the ``height'' $u$ (\ie $\nwalk{2}{u} \neq \nwalk{2}{u'}$) the swap walks (for a fixed number of swaps $j$)
are independent of the height to which they ascend! In particular, we have
\[
\hrswap{2}{2}{2}{1} ~=~ \hrswap{2}{2}{1}{1} \mper
\]
Using these, we can derive an expression for the swap walk $\hrswap{2}{2}{2}{2}$ as
\[
  \hrswap{2}{2}{2}{2}
  ~=~ \Ide ~+~ 6 \cdot \nwalk{2}{2} - 6 \cdot \nwalk{2}{1}
  ~=~ \Ide ~+~ 6\cdot \inparen{\Dee_3\Dee_4\Up_3\Up_2 - \Dee_3\Up_2}
\]
%
%
To understand the spectrum of operators such as the ones given by the above
expression, we use the beautiful machinery for harmonic analysis over HDXs (and
more generally over expanding posets) developed by Dikstein \etal
\cite{DiksteinDFH18}. They show how to decompose the spaces $C^k$ into
approximate eigenfunctions for operators of the form $\Dee \Up$. 
Using these decompositions and the properties of expanding posets, we can show
that distinct eigenvalues of the above operator are approximately the same (up
to $O(\gamma)$ errors) when analyzing the walks on the complete complex. 
%
%
Finally, we use the fact that swap walks in a complete complex correspond to Kneser graphs (for
which the eigenvectors and eigenvalues are well-known) to show that $\lambda_2(\hrswap{2}{2}{2}{2}) =
  O(\gamma)$.

\paragraph{Splittable CSPs and high-dimensional threshold rank.} We note that the ideas used above
can be generalized (at least) in two ways. 
In the analysis of distance from product distribution for
a 4-tuple of random variables forming a contraint, we split it in 2-tuples. In general, we can
choose to split tuples in a $k$-CSP instance along \emph{any} binary tree $\mathcal{T}$ with $k$ leaves,
with each parent node corresponding to a swap walk between tuples forming its children. 
%
Finally, the analysis from \cite{BarakRS11} also works if the each of the swap walks in some $\cal
T$ have a bounded number (say $r$) of eigenvalues above some threshold $\tau$, which provide a
notion of high-dimensional threshold rank for hypergraphs.
We refer to such an instance as a
{\deffont $(\tree, \tau, r)$-splittable}.

The arguments sketched above show that
high-dimensional expanders are $(\tree, O(\gamma), 1)$-splittable for all $\tree$.
Since the knowledge of $\tree$ is only required in our analysis and not
in the algorithm, we say that $\textrm{rank}_{\tau}(\Ins) \leq r$ (or that $\Ins$ is
$(\tau,r)$-splittable) if $\Ins$ is $(\tree, \tau, r)$-splittable for any $\tree$. 
We defer the precise statement of results for $(\tau,r)$-splittable instances
to \cref{sec:trank}. 


\section{Walks}\label{sec:walks}
It is important to note that both $\Up_i$ and $\Dee_{i + 1}$ can be
thought of as row-stochastic matrices, i.e.~we can think of them as
the probability matrices describing the movement of a walk from $X(i +
1)$ to $X(i)$; and from $X(i)$ to $X(i + 1)$ respectively. More
concretely, we will think
\[ [\Dee_{i+1}^\top e_\ess](\tee) = \Pr{\text{the walk moves \underline{up} from $\ess \in X(i)$ to $\tee \in X(i + 1)$}}\]
and similarly
\[ [\Up_i^\top e_\tee](\ess) = \Pr{\text{the walk moves \underline{down} from $\tee \in X(i + 1)$ to $\ess \in X(i)$}}.\]

By referring to the definition of the up and down operators in~\cref{sec:prelims}, it is easy to verify that
\[ [\Dee_{i + 1}^\top e_\ess](\tee) = \one[\tee \supseteq \ess] \cdot \frac{1}{i + 1} \frac{\Pi_{i + 1}(\tee)}{\Pi_i(\ess)} \tand
   [\Up_i^\top e_\tee](\ess) = \one[\ess \subseteq \tee] \cdot \frac{1}{i + 1}.\]
It is easy to see that our notion of random walk respects the probability distributions $\Pi_j$, i.e.~we have
\[ \Up_i^\top \Pi_{i + 1} = \Pi_i \tand \Dee_{i + 1}^\top \Pi_i = \Pi_{i+1},\]
\ie randomly moving up from a sample of $\Pi_j$ gives a sample of $\Pi_{j + 1}$ and similarly, moving down from
a sample of $\Pi_{j + 1}$ results in a sample of $\Pi_j$.

Instead of going up and down by one dimension, one can try going up or
down by multiple dimensions since
$\parens*{\Dee_{i + 1} \cdots \Dee_{i + \ell}}$ and $\parens*{\Up_{i + \ell} \cdots \Up_i}$
are still row-stochastic matrices. Further, the corresponding
probability vectors still have intuitive explanations in terms of the
distributions $\Pi_j$. For a face $\ess \in X(k)$, we introduce the
notation
\[ p_\ess^{(u)} = \parens*{\Dee_{k + 1} \cdots \Dee_{k + u}}^\top e_\ess\]
where we take $p_\ess^{(0)} = e_\ess$. This notation will be used to denote the 
probability distribution of the \underline{up-walk} starting from $\ess \in X(k)$ and ending in a random face $\tee \in X(k+u)$ satisfying $\tee \supseteq \ess$.

Note that the following Lemma together with \cref{prop:volpres} implies that  $p_\ess^{(u)}$ is indeed a probability distribution.
\begin{proposition}\label{prop:probform}
    For $\ess \in X(k)$ and $\aye \in X(k + u)$ one has,
    \[ p^{(u)}_\ess(\aye) = \one[\aye \supseteq \ess] \cdot \frac{1}{\binom{k+u}{u}} \cdot \frac{\Pi_{k+u}(\aye)}{\Pi_k(\ess)}.\]
\end{proposition}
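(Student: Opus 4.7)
The plan is to prove the formula by induction on $u$. The base case $u = 0$ follows from $p^{(0)}_\ess = e_\ess$ together with $\binom{k}{0} = 1$: for $\aye \in X(k)$ the proposed right-hand side reduces to $\one[\aye \supseteq \ess] \cdot \Pi_k(\aye)/\Pi_k(\ess)$, which equals $1$ precisely when $\aye = \ess$ and is zero otherwise, matching $e_\ess(\aye)$.

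For the inductive step, I would use the factorization
\[ p^{(u)}_\ess ~=~ \Dee_{k+u}^\top\, p^{(u-1)}_\ess, \]
which is immediate from the definition $p^{(u)}_\ess = (\Dee_{k+1} \cdots \Dee_{k+u})^\top e_\ess$. Evaluating at $\aye \in X(k+u)$ and substituting both the explicit formula for $[\Dee_{k+u}^\top e_\bee](\aye)$ recalled just before the proposition and the inductive hypothesis for $p^{(u-1)}_\ess(\bee)$ yields
\[
  p^{(u)}_\ess(\aye)
  ~=~ \sum_{\bee \in X(k+u-1) \atop \ess \subseteq \bee \subset \aye}
       \frac{1}{k+u} \cdot \frac{\Pi_{k+u}(\aye)}{\Pi_{k+u-1}(\bee)} \cdot
       \frac{1}{\binom{k+u-1}{u-1}} \cdot \frac{\Pi_{k+u-1}(\bee)}{\Pi_k(\ess)},
\]
so that the intermediate factors $\Pi_{k+u-1}(\bee)$ cancel term-by-term.

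It then remains to count the number of intermediate faces $\bee$ with $\ess \subseteq \bee \subset \aye$ and $|\bee| = k+u-1$. When $\aye \supseteq \ess$, this count is exactly $u$, since $\bee$ is obtained by deleting a single element of $\aye \setminus \ess$; when $\aye \not\supseteq \ess$, the sum is empty, matching the indicator on the right-hand side. Combining this count with the identity $(k+u) \cdot \binom{k+u-1}{u-1} = u \cdot \binom{k+u}{u}$ produces the stated formula. I do not anticipate any real obstacle: the argument is a routine induction that only requires unpacking the definition of the down operator and rearranging binomial coefficients, with the key cancellation of $\Pi_{k+u-1}(\bee)$ being the only noteworthy structural point.
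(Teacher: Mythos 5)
Your proposal is correct and follows essentially the same route as the paper: induction on $u$, unwinding one application of the transposed down operator, cancelling the intermediate measure $\Pi_{k+u-1}(\bee)$, counting the $u$ admissible intermediate faces, and absorbing the factors into $\binom{k+u}{u}$. The only cosmetic differences are that you step from $u-1$ to $u$ (the paper steps from $u$ to $u+1$) and parametrize the sum by the intermediate face $\bee$ rather than the deleted element $x$; in fact your bookkeeping of the normalization $\frac{1}{k+u}$ is cleaner than the paper's, which contains a small typo ($\frac{1}{k+u+2}$) in its first display.
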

\begin{proof}
    Notice that for $u = 0$, the statement holds trivially. We assume that there exists some $u \ge 0$
    that satisfies
    \[ p_\ess^{(u)}(\aye) = \one[\aye \supseteq \ess] \cdot \frac{1}{\binom{k+u}{u}}\cdot \frac{\Pi_{k + u}(\aye)}{\Pi_k(\ess)} \]
    for all $\aye \in X(k + u)$.

    For $\bee \in X(k + (u + 1))$ one has,
    \[ p_\ess^{(u + 1)}(\bee) = [\Dee_{k + u + 1}^\top p_\ess^{(u)}](\bee) = \frac{1}{k + u + 2} 
    \cdot \sum_{x \in \bee} \frac{\Pi_{k + u + 1}(\bee)}{\Pi_{k + u}(\bee \backslash \set*{x})} \cdot p_\ess^{(u)}(\bee \backslash \set*{x}). \]
    Plugging in the induction assumption, this implies
    \begin{eqnarray*}
        p_\ess^{(u + 1)}(\bee) & = & \frac{1}{(k + u + 1)} \cdot \sum_{x \in \bee} \frac{\Pi_{k + u + 1}(\bee)}{\Pi_{k + u}(\bee \backslash \set{x})} \cdot \parens*{ \one[(\bee \backslash \set{x)}) \supseteq \ess] \cdot \frac{1}{\binom{k+u}{u}} \cdot \frac{\Pi_{k + u}(\bee \backslash \set{x})}{\Pi_k(\ess)}},\\
        & = & \frac{1}{(k + u + 1)} \cdot \frac{1}{\binom{k+u}{u}} \cdot \sum_{x \in \bee} \one[ \bee \backslash \set{x} \supseteq \ess] \cdot \frac{\Pi_{k + u + 1}(\bee)}{\Pi_k(\ess)}.
    \end{eqnarray*}
    First, note that the up-walk only hits the faces that contain
    $\ess$, otherwise $\one[ \bee \backslash \set{x} \supseteq \ess] =
    0$.

    Suppose therefore $\bee \in X(k + u + 1)$ satisfies $\bee
    \supseteq \ess$. Since there are precisely $(u + 1)$
    indices whose deletion still preserves the containment of
    $\ess$, we can write
    \begin{eqnarray*}
        p_\ess^{(u + 1)}(\bee) & = & \one[\bee \supseteq \ess] \cdot \frac{u+1}{k + u + 1} \cdot \frac{1}{\binom{k+u}{u}} \frac{\Pi_{k + u + 1}(\bee)}{\Pi_k(\ess)},\\
        & = & \one[\bee \supseteq \ess] \cdot \frac{1}{\binom{k+u+1}{u+1}} \cdot \frac{\Pi_{k + u + 1}(\bee)}{\Pi_k(\ess)}.
    \end{eqnarray*}
    Thus, proving the proposition.
\end{proof}

Similarly, we introduce the notation 
$q_{\aye}^{(u)}$, as
\[ q_{\aye}^{(u)}(\ess) = \parens*{ \Up_{k + u - 1} \cdots \Up_k }^\top e_\ess, \]
i.e.~for the probability distribution of the \underline{down-walk} starting from $\aye \in X(k+u)$ and ending
in a random face of $X(k)$ contained in $\aye$.
The following can be verified using Proposition \ref{prop:probform}, and the fact that
$\parens*{ \Up_{k+u - 1} \cdots \Up_k}^\dagger = \Dee_{k+u} \cdots \Dee_{k+1}$.
\begin{corollary}
    Let $X(\le d)$ be a simplicial complex, and $k, u \ge 0$ be parameters satisfying $k + u \le d$. For $\aye \in X(k+u)$ and $\ess \in X(k)$, one has
    \[ q_{\aye}^{(u)}(\ess) = \frac{1}{\binom{k+u}{u}} \cdot \one[ \ess \subseteq \aye ].\]
\end{corollary}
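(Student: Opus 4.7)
The plan is to exploit the adjoint relationship $\Up_i^\dagger = \Dee_{i+1}$ highlighted in the lead-in, together with Proposition~\ref{prop:probform}. Intuitively, the up-walk and the down-walk between $X(k)$ and $X(k+u)$ are time reversals of one another with respect to the stationary measures $\Pi_k$ and $\Pi_{k+u}$, so the down-walk probabilities should fall out of the up-walk probabilities by a detailed-balance rescaling.

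Concretely, let $A := \Up_{k+u-1} \cdots \Up_k : C^k \to C^{k+u}$, so that $q_\aye^{(u)} = A^\top e_\aye$ (I read the $e_\ess$ in the definition as a typo for $e_\aye$). Iterating $\Up_i^\dagger = \Dee_{i+1}$ yields $A^\dagger = \Dee_{k+1}\cdots\Dee_{k+u}$. Because each $C^i$ carries the $\Pi_i$-weighted inner product while $A^\top$ is taken with respect to the unweighted one, I would first record the bookkeeping identity $A^\top = \Pi_k\, A^\dagger\, \Pi_{k+u}^{-1}$ (viewing $\Pi_i$ as the diagonal matrix of weights on $C^i$), which unpacks directly from $\langle A f, g\rangle_{\Pi_{k+u}} = \langle f, A^\dagger g\rangle_{\Pi_k}$. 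Evaluating at $e_\aye$ and coordinate $\ess$ then gives
\[
q_\aye^{(u)}(\ess) ~=~ \frac{\Pi_k(\ess)}{\Pi_{k+u}(\aye)} \cdot [\Dee_{k+1}\cdots\Dee_{k+u}\, e_\aye](\ess) ~=~ \frac{\Pi_k(\ess)}{\Pi_{k+u}(\aye)} \cdot p_\ess^{(u)}(\aye),
\]
where the second equality uses that $p_\ess^{(u)} = (\Dee_{k+1}\cdots\Dee_{k+u})^\top e_\ess$ identifies the $(\ess,\aye)$ entry of the matrix $\Dee_{k+1}\cdots\Dee_{k+u}$ with $p_\ess^{(u)}(\aye)$. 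Substituting the explicit formula for $p_\ess^{(u)}(\aye)$ from Proposition~\ref{prop:probform} then cancels both $\Pi_k(\ess)$ and $\Pi_{k+u}(\aye)$ and leaves exactly $\one[\ess\subseteq\aye]/\binom{k+u}{u}$, as claimed.

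I do not expect any real obstacle here; the only pitfall is keeping the two inner products straight, since the up- and down-walks are transposes (with respect to the unweighted product) of operators that are adjoints (with respect to the $\Pi$-weighted product), so one must track the conjugation by the diagonal measure matrices. As a sanity check, one can recover the same formula purely combinatorially: the identity $[\Up_i^\top e_\tee](\ess) = \one[\ess\subseteq\tee]/(i+1)$ says the down-walk deletes a uniformly random element of the current face at each step, so after $u$ iterations starting from $\aye$ the remaining $k$-subset is uniformly distributed over $\binom{\aye}{k}$, and each specific $\ess\subseteq\aye$ is hit with probability $1/\binom{k+u}{k}=1/\binom{k+u}{u}$.
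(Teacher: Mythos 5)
Your proof is correct and follows exactly the route the paper indicates (the paper leaves the corollary as "can be verified" from Proposition~\ref{prop:probform} together with the adjointness $(\Up_{k+u-1}\cdots\Up_k)^{\dag}=\Dee_{k+1}\cdots\Dee_{k+u}$, which are precisely the two ingredients you combine, with the weighted-versus-unweighted bookkeeping handled correctly). Your reading of $e_\ess$ as a typo for $e_\aye$ is also the intended one, and the combinatorial sanity check at the end is a nice independent confirmation.
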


In the remainder of this section, we will try to construct more
intricate walks on $X$ from $X(k)$ to $X(l)$.

\subsection{The Canonical and the Swap Walks on a Simplicial Complex}
\begin{definition}[Canonical and Swap $u$-Walks]
    Let $d \ge 0$, $X(\le d)$ be a simplicial complex, and $k, l, u \ge 0$ be parameters satisfying
    $l \le k$, $u \le l$ and $d \ge k + u$; where the constraints on these parameters are to ensure well-definedness.
    We will define the following random walks,
    \begin{itemize}
        \item \textbf{canonical $u$-walk from $X(k)$ to $X(l)$}. Let $\rnwalk{k}{l}{u}$ be the (row-stochastic) Markov operator that represents
            the following random walk: Starting from a face $\ess \in X(k)$,
            \begin{itemize}
                \item {\em (random ascent/up-walk)} randomly move up a face $\ess'' \in X(k + u)$ that contains $\ess$, where $\ess''$
                    is picked with probability
                    \[ p_\ess^{(u)}(\ess'') = [\parens*{ \Dee_{k + 1} \cdots \Dee_{k + u} }^\top e_\ess](\ess''). \]
                \item {\em  (random descent/down-walk)} go to a face $\ess' \in X(l)$ picked uniformly among all the 
                    $l$-dimensional faces that are contained in $\ess''$, \ie~the set $\ess'$ is picked with probability
                    \[ q_{\ess''}(\ess') = \one[\ess' \subseteq \ess''] \cdot \frac{1}{\binom{k + u}{l}} = [\parens*{\Up_{k + u  -1} \cdots \Up_l}^\top e_{\ess''}](\ess').\]
            \end{itemize}
            The operator $\rnwalk{k}{l}{u} \colon C^l \to C^k$ satisfies the following equation,
            \[ \rnwalk{k}{l}{u} = \Dee_{k + 1} \cdots \Dee_{k + u} \Up_{k + u - 1} \cdot \Up_{k} \cdots \Up_{l}.\]
            Notice that we have $\rnwalk{k}{k}{0} = \Ide$, and $\rnwalk{k}{l}{0} = (\Up_{k-1}\ldots \Up_{l})$ for $l < k$.
        \item \textbf{swapping walk from $X(k)$ to $X(l)$}. Let $\rfswap{k}{l}$ be the Markov operator that represents the
        following random walk: Starting from a face $\ess \in X(k)$,
        \begin{itemize}
            \item { \em (random ascent/up-walk)} randomly move up to a face $\ess'' \in X(k + l)$ that contains $\ess$, where 
                as before $\ess''$ is picked with probability
                \[ p_\ess^{(l)}(\ess'') = [\parens*{ \Dee_{k + 1} \cdots \Dee_{k + l + 1} }^\top e_\ess](\ess''). \]
                \item {\em (deterministic descent)} \underline{deterministically} go to $\ess'=\ess'' \backslash \ess \in X(l)$. 
        \end{itemize}
\end{itemize}
\end{definition}

For our applications, we will need to show that the walk $\rfswap{k}{l}$ has good spectral expansion whenever $X$ is 
a $d$-dimensional $\gamma$-expander, for $\gamma$ sufficiently small. To show this, we will relate the swapping walk
operator $\rfswap{k}{l}$ on $X$ to the canonical random walk operators $\rnwalk{k}{l}{u}$ (q.v.~\cref{lemma:formula_swapping_show}).

By the machinery of expanding posets
(q.v.~\cref{sec:asd}) it is possible to argue that the spectral
expansion of the random walk operator $\rnwalk{k}{l}{u}$ on a high
dimensional expander will be close to that of the complete complex. This will allow us to conclude using the relation
between the swapping walks and the canonical walks (q.v.~\cref{lemma:formula_swapping_show}) that the spectral expansion of the swapping walk on $X$, will be 
comparable with the spectral expansion of the swap walk on the complete complex. More precisely, we will show
\begin{lemma}[\cref{cor:complete_rec_kneser}]\label{lemma:formula_swapping_show}
    For any $d, k, l \ge 0$, and the complete simplicial simplicial complex $X(\le d)$, one has the following:
    If $k \ge l \ge 0$ and $d \ge k + l$, we have
    \[ \sigma_2 (\rfswap{k}{l} ) = O_{k,l}\parens*{\frac{1}{n}}. \]
\end{lemma}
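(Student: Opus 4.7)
The plan is to exploit the large symmetry group of the complete complex. Since $\Pi_i$ is uniform on $\binom{[n]}{i}$, unwinding the definition of $\rfswap{k}{l}$ gives the explicit matrix entry
\[
\rfswap{k}{l}(\ess, \tee) ~=~ \frac{1}{\binom{n-k}{l}} \cdot \one[\ess \cap \tee = \emptyset]
\qquad \text{for } \ess \in \binom{[n]}{k},\ \tee \in \binom{[n]}{l}.
\]
Thus $\rfswap{k}{l}$ is (up to normalization) the biadjacency of the bipartite Kneser-like graph on $\binom{[n]}{k} \sqcup \binom{[n]}{l}$ whose edges are the disjoint pairs, and this operator is $S_n$-equivariant with respect to the natural permutation actions on $C^k$ and $C^l$.

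The first step is to invoke the classical isotypic decomposition $C^i = \bigoplus_{j=0}^{i} V_i^{(n-j,j)}$ of $\R^{\binom{[n]}{i}}$ into $S_n$-irreducibles indexed by two-row partitions $(n-j, j)$, each occurring with multiplicity one (for $n \ge 2i$). By Schur's lemma, the $S_n$-equivariance of $\rfswap{k}{l}$ forces it to restrict to a scalar multiple $\mu_j$ of an isometric isomorphism $V_l^{(n-j,j)} \to V_k^{(n-j,j)}$ for each $j \in \{0,1,\ldots,l\}$, and to vanish on any $V_l^{(n-j,j)}$ with $j > k$. Consequently the singular values of $\rfswap{k}{l}$ are exactly the $|\mu_j|$.

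The second step is to compute, or at least bound, each $\mu_j$. For $j = 0$ the row-stochasticity of $\rfswap{k}{l}$ forces $\mu_0 = 1$. For $j = 1$, I would test against vectors $f_i(\tee) = \one[i \in \tee] - l/n$ (which span $V_l^{(n-1,1)}$): a direct computation yields $(\rfswap{k}{l} f_i)(\ess) = -\frac{l}{n-k}\, g_i(\ess)$ where $g_i(\ess) = \one[i \in \ess] - k/n$ spans $V_k^{(n-1,1)}$, and comparing norms under the uniform measures gives
\[
|\mu_1| ~=~ \sqrt{\frac{k l}{(n-k)(n-l)}} ~=~ O_{k,l}(1/n).
\]
For $j \ge 2$ an analogous computation with higher-degree test vectors of the form $\prod_{i \in A} (\one[i \in \cdot] - \alpha)$ with $|A| = j$ (orthogonalized against the lower isotypic components) yields $|\mu_j| = O_{k,l}(1/n^j)$; alternatively, one recognizes $\mu_j$ as a normalized Johnson/Hahn scheme eigenvalue whose explicit hypergeometric formula delivers the same bound. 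Since all non-trivial components are bounded by $|\mu_1|$, we conclude $\sigma_2(\rfswap{k}{l}) = \max_{j \ge 1}|\mu_j| = O_{k,l}(1/n)$.

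The main obstacle is the uniform bookkeeping for all $j$. The cleanest route is to observe that $\rfswap{k}{l}$ factors through the standard up/down operators between adjacent levels of $\Delta_d(n)$, each of which acts on every $V^{(n-j,j)}$ by an explicit rational function of $n, k, l, j$, so the $\mu_j$ can be read off directly from the composition. Crucially, no expanding-poset machinery is needed here because the complete complex is itself ``perfectly'' expanding; that machinery will only enter when transferring the bound from the complete complex to a general $\gamma$-HDX.
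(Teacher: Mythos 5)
Your proposal is correct and takes essentially the same route as the paper: Section 5.4 also identifies $\rfswap{k}{l}$ on $\Delta_d(n)$ with the normalized bipartite Kneser operator and diagonalizes it via the $S_n$-isotypic (Johnson-scheme) decomposition of $\R^{\binom{[n]}{k}}$ and $\R^{\binom{[n]}{l}}$ into two-row irreducibles. The only difference is in execution of your ``$j\ge 2$'' step: rather than Schur's lemma plus orthogonalized test vectors, the paper uses Filmus' standard-tableau eigenbasis $\phi_{\tau_\mu}=(\mathds{1}_{a_1}-\mathds{1}_{b_1})\cdots(\mathds{1}_{a_i}-\mathds{1}_{b_i})$, for which a one-line counting argument gives $\Www\,\phi_{\tau_\mu}=(-1)^i\binom{n-k-i}{l-i}\phi_{\tau_\mu}$ uniformly in $i$, and then normalizes by $\smallnorm{\phi_{\tau_\mu}}$ in each space to read off all singular values $\sigma_i$ (and hence the bound $\max(k/(n-k),\,l/(n-l))$) exactly.
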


Using these two, and the expanding poset machinery, we will conclude
\begin{theorem}[\cref{thm:swap-eig-bd} simplified]\label{thm:swap-eig-bd-show}
    Let $X$ be a $d$-dimensional $\gamma$ expander. If $k \ge l \ge 0$ satisfy $d \ge l + k$ we have,
    \[ \sigma_2(\rfswap{k}{l}) = O_{k,l}(\gamma)\]
    where $\rfswap{k}{l}$ is the swapping walk on $X$ from $X(k)$ to $X(l)$.
\end{theorem}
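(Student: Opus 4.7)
The plan is to reduce the bound on $X$ to the corresponding bound on the complete complex via the harmonic analysis of expanding posets. Concretely, I would proceed in three steps: (i) express $\rfswap{k}{l}$ as a linear combination of the canonical walks $\rnwalk{k}{l}{u}$ for $u = 0,1,\ldots,l$; (ii) appeal to the expanding-poset decomposition of \cite{DiksteinDFH18} to argue that each $\rnwalk{k}{l}{u}$ acts on $X$ with approximate eigenvalues matching those on $\Delta_{k+l}(n)$, with error $O_{k,l,u}(\gamma)$; and (iii) combine with \cref{lemma:formula_swapping_show}, which bounds the swap walk on the complete complex, to conclude.

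For step (i), I would generalize the computation carried out for $\hrswap{2}{2}{2}{2}$ in \cref{sec:example}. Every canonical walk decomposes according to the size of the swap as
\[
\rnwalk{k}{l}{u} \;=\; \sum_{j=0}^{\min(u,l)} \alpha_{k,l,u,j}\cdot \hrswap{k}{l}{u}{j},
\]
where $\alpha_{k,l,u,j}$ is the purely combinatorial probability of ``swapping out'' exactly $j$ elements during the descent, and $\hrswap{k}{l}{u}{j}$ is the conditional transition operator. The key observation, already anticipated in the overview (\eg $\hrswap{2}{2}{2}{1} = \hrswap{2}{2}{1}{1}$) and which I would prove in generality in \cref{sec:walks}, is that $\hrswap{k}{l}{u}{j}$ is \emph{height-independent}: once the number of swaps $j$ is fixed, the operator does not depend on $u$. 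Writing $\Sigma_j$ for this common operator, the decompositions above yield an upper-triangular linear system in the $l+1$ unknowns $\Sigma_0, \ldots, \Sigma_l$, with nonzero diagonal. Inverting it expresses the top-swap operator $\rfswap{k}{l} = \Sigma_l$ as $\sum_{u=0}^{l} \beta_{k,l,u}\cdot \rnwalk{k}{l}{u}$ for explicit constants $\beta_{k,l,u}$ depending only on $k$ and $l$.

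For step (ii), the expanding-poset decomposition yields an orthogonal splitting of $C^l$ and $C^k$ into approximate eigenspaces that are simultaneously approximate eigenspaces for every composition built from $\Up$ and $\Dee$. In each block, $\rnwalk{k}{l}{u}$ acts as scalar multiplication by the corresponding eigenvalue on the complete complex, up to an additive error of $O_{k,l,u}(\gamma)$. Substituting into the identity from step (i), the combination $\sum_u \beta_{k,l,u}\rnwalk{k}{l}{u}$ acts on $X$ identically to the same combination on $\Delta_{k+l}(n)$ on each block, modulo an $O_{k,l}(\gamma)$ error. Since on the complete complex this same combination is $\rfswap{k}{l}$, whose second singular value is $O_{k,l}(1/n)$ by \cref{lemma:formula_swapping_show}, the triangle inequality applied block by block gives $\sigma_2(\rfswap{k}{l}) \leq O_{k,l}(\gamma) + O_{k,l}(1/n) = O_{k,l}(\gamma)$ on $X$. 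The main technical obstacle is step (ii): setting up the decomposition cleanly enough that the block-wise error depends only on the combinatorial parameters $k, l, u$ and on $\gamma$, without the constants compounding uncontrollably through the composed chain of $\Up$ and $\Dee$ operators; step (i) is an elementary (if careful) combinatorial inversion once height-independence is in hand.
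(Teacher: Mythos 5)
Your overall strategy is the same as the paper's: use height independence (\cref{lemma:heigh_indep}) to invert the hypergeometric relation between canonical and swap walks (the paper does your ``upper-triangular inversion'' via binomial inversion in \cref{cor:swap_rectangular_inverse}), transfer the approximate eigenvalues to the complete complex via the EPoset machinery, and use the bipartite Kneser spectrum (\cref{cor:complete_rec_kneser}) to conclude that the non-trivial approximate eigenvalues must vanish because they are independent of $n$. For $k = l$ your step (ii) is exactly \cref{lemma:asd_general_decomposition}, and your argument coincides with the proof of \cref{theo:sq_swap_gap}.

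There is, however, a concrete gap in step (ii) when $k \neq l$. The quadratic-form machinery you are invoking applies to \emph{balanced} operators $C^k \to C^k$, i.e., products of equally many up and down operators. But $\rnwalk{k}{l}{u} = \Dee_{k+1}\cdots\Dee_{k+u}\Up_{k+u-1}\cdots\Up_{l}$ contains $u$ down operators and $k+u-l$ up operators and maps $C^l$ to $C^k$, so it is not balanced, and there is no single block decomposition on which it ``acts as scalar multiplication.'' What you would actually need is an approximate \emph{singular value} decomposition pairing the degree-$i$ component of $C^l$ with that of $C^k$; the EPoset lemmas as developed control quadratic forms $\ip{\Why f}{f}$, not bilinear forms $\ip{\Why f}{g}$ between different levels, so ``the triangle inequality applied block by block'' does not typecheck as written. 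The paper circumvents this by passing to $\Bee = \rfswap{k}{l}\,(\rfswap{k}{l})^{\dag}$, which is a balanced operator on $C^k$, bounding $\sigma_2(\Bee)$ by the quadratic-form lemma, and taking square roots (\cref{lemma:rectangular_swap}); this is also why the rectangular bound in \cref{thm:swap-eig-bd} degrades to $O_{k,l}(\sqrt{\gamma})$ rather than the $O_{k,l}(\gamma)$ your sketch promises. You should either adopt the same squaring trick (and accept the square-root loss) or genuinely develop the SVD version of the decomposition before claiming the linear-in-$\gamma$ bound for $k \neq l$.
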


To prove \cref{thm:swap-eig-bd-show} we will need to
define an intermediate random walk that we will call the $j$-swapping
$u$-walk from $X(k)$ to $X(l)$:
\begin{definition}[$j$-swapping $u$-walk from $X(k)$ to $X(l)$] Given $d, u, j, k, l \ge 0$ satisfying $l \le k$, $j \le u$, $u \le l$, and $d \ge k + u$.
Let $\hrswap{k}{l}{u}{j}$ be the Markov operator that represents the following random walk from $X(k)$ to $X(l)$ on a
$d$-dimensional simplicial complex $X$:
Starting from $\ess \in X(k)$
\begin{itemize}
    \item {\em (random ascent/up-walk) } randomly move up to a face $\ess'' \in X(k + u)$ that contains $\ess$, where $\ess''$ is
        picked with probability
        \[ p_\ess^{(u)}(\ess'') = [\parens*{ \Dee_{k + 1} \cdots \Dee_{k + u} }^\top e_\ess ](\ess'').\] 
    \item {\em (conditioned descent) } go to a face $\ess' \in X(l)$ sampled uniformly among all the subsets
        of $\ess'' \in X(k+u)$ that have intersection $j$ with $\ess'' \backslash \ess$, i.e.~$|\ess' \cap (\ess'' \backslash \ess)| = j$. 
\end{itemize}
Notice that $\rfswap{k}{l} = \hrswap{k}{l}{l}{l}$ for any $k$ and $\Ide =
\hrswap{k}{k}{u}{0}$ for any $u$.
\end{definition}
\begin{remark}
We will prove that the parameter $u$ does not effect the swapping walk
$\hrswap{k}{l}{u}{j}$ so long as $u \ge j$, i.e. for all $u, u' \ge j$ we have
$\hrswap{k}{l}{u'}{j} = \hrswap{k}{l}{u}{j}$.
Thus, we will often write $\rswap{k}{l}{j}$ for $\hrswap{k}{l}{j}{j}$.
\end{remark}

\subsection{Swap Walks are Height Independent}

Recall that the swap walk $\hrswap{k}{l}{u}{j}$ is the conditional
walk defined in terms of $\rnwalk{k}{l}{u}$ where $\ess \in X(k)$ is
connected to $\tee \in X(l)$ only if $\Abs*{ \tee \setminus \ess } =
j$. The parameter $u$ is called the \underline{height} of the walk, namely the
number of times it moves up. Since up and down operators have second
singular value bounded away from $1$, the second singular value of
$\rnwalk{k}{l}{u}$ shrinks as $u$ increases.  In other words, the
operator $\rnwalk{k}{l}{u}$ depends on the height $u$. Surprisingly,
the walk $\hrswap{k}{l}{u}{j}$ which is defined in terms of
$\rnwalk{k}{l}{u}$ does not depend on the particular choice of $u$ as
long as it is well defined. More precisely, we have the following
result.

\begin{lemma}\label{lemma:heigh_indep}
  If $X$ is a $d$-dimensional simplicial complex, $0 \le l \le k$,
    and $u, u' \in [j, d-k]$, then
  \[ \hrswap{k}{l}{u}{j} = \hrswap{k}{l}{u'}{j}.\]
\end{lemma}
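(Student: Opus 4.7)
The plan is to directly compute the transition probability $[\hrswap{k}{l}{u}{j}^{\top} e_{\ess}](\ess')$ for an arbitrary pair $\ess \in X(k)$ and $\ess' \in X(l)$, obtain a closed-form expression, and observe that the dependence on $u$ cancels out.

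First I would fix $\ess \in X(k)$ and $\ess' \in X(l)$ and note that the walk can transition from $\ess$ to $\ess'$ only if $|\ess' \setminus \ess| = j$ (equivalently $|\ess' \cap \ess| = l - j$), since the conditioned descent step requires that exactly $j$ elements of $\ess'$ lie in $\ess'' \setminus \ess$. Setting $\bee \defeq \ess \cup \ess'$, we have $\bee \in X(k+j)$ and $\bee \in X$ (downward closure is irrelevant; what matters is that $\bee$ is a face because $\ess'' \supseteq \bee$ will exist). The intermediate faces $\ess''$ that contribute to the transition are precisely those in $X(k+u)$ with $\ess'' \supseteq \bee$, since the conditioned descent picks $\ess'$ uniformly among the $\binom{u}{j}\binom{k}{l-j}$ subsets of $\ess''$ of size $l$ intersecting $\ess'' \setminus \ess$ in exactly $j$ elements.

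Next I would write
\[
[\hrswap{k}{l}{u}{j}^{\top} e_{\ess}](\ess')
~=~ \frac{1}{\binom{u}{j}\binom{k}{l-j}} \sum_{\substack{\ess'' \in X(k+u) \\ \ess'' \supseteq \bee}} p_\ess^{(u)}(\ess'') \mper
\]
Substituting the formula from \cref{prop:probform}, the summand equals $\frac{1}{\binom{k+u}{u}} \cdot \Pi_{k+u}(\ess'')/\Pi_k(\ess)$, and applying \cref{prop:volpres} to $\bee \in X(k+j)$ gives
\[
\sum_{\substack{\ess'' \in X(k+u) \\ \ess'' \supseteq \bee}} \Pi_{k+u}(\ess'')
~=~ \binom{k+u}{u-j} \cdot \Pi_{k+j}(\bee) \mper
\]
Combining these,
\[
[\hrswap{k}{l}{u}{j}^{\top} e_{\ess}](\ess')
~=~ \frac{\binom{k+u}{u-j}}{\binom{k+u}{u}\binom{u}{j}\binom{k}{l-j}} \cdot \frac{\Pi_{k+j}(\bee)}{\Pi_k(\ess)} \mper
\]

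The final step is the arithmetic check that the $u$-dependent factor simplifies to $1/\binom{k+j}{j}$: expanding the binomial coefficients,
\[
\frac{\binom{k+u}{u-j}}{\binom{k+u}{u}\binom{u}{j}} ~=~ \frac{(k+u)!}{(u-j)!\,(k+j)!} \cdot \frac{u!\,k!}{(k+u)!} \cdot \frac{j!\,(u-j)!}{u!} ~=~ \frac{k!\,j!}{(k+j)!} ~=~ \frac{1}{\binom{k+j}{j}} \mcom
\]
so the transition probability equals $\frac{1}{\binom{k+j}{j}\binom{k}{l-j}} \cdot \frac{\Pi_{k+j}(\bee)}{\Pi_k(\ess)}$, which is manifestly independent of $u$. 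Since this holds for every pair $(\ess, \ess')$, we conclude $\hrswap{k}{l}{u}{j} = \hrswap{k}{l}{u'}{j}$ whenever $u, u' \in [j, d-k]$. The main (and essentially only) obstacle is the bookkeeping of binomials, together with the correct application of \cref{prop:volpres} with offset $u-j$; the conceptual content is that after marginalizing over $\ess''$, the intermediate face disappears and only the face $\bee = \ess \cup \ess'$ of size $k+j$ remains.
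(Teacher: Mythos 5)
Your proof is correct and follows essentially the same route as the paper: it conditions on the intermediate face $\ess''$, expresses the transition probability as $\frac{1}{\binom{u}{j}\binom{k}{l-j}}\sum_{\ess''\supseteq \ess\cup\ess'} p_{\ess}^{(u)}(\ess'')$, evaluates the sum via \cref{prop:probform} and \cref{prop:volpres}, and simplifies the binomials to the same $u$-free formula $\frac{1}{\binom{k+j}{j}\binom{k}{l-j}}\cdot\frac{\Pi_{k+j}(\ess\cup\ess')}{\Pi_k(\ess)}$. The paper merely splits the first step into a separate proposition (\cref{prop:swapformul_rect}); the content is identical.
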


In order to obtain~\cref{lemma:heigh_indep}, we will need a simple
proposition:

\begin{proposition}\label{prop:swapformul_rect}
    Let $\ess \in X(k)$, $\ess' \subseteq \ess$ and $|t'|= j$. Suppose
    $\ess' \sqcup \tee' \in X(l)$. Then, we have
    \[ \hrswap{k}{l}{u}{j}(\ess, \ess' \sqcup \tee') = \frac{1}{\binom{k}{l - j} \cdot \binom{u}{j}} \cdot \sum_{\aye \in X(k + u) : \atop \aye \supseteq (\ess \sqcup \tee')} p_\ess^{(u)}(\aye).\]
\end{proposition}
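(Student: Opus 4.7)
The proof is essentially a direct unfolding of the definition of $\hrswap{k}{l}{u}{j}$, so the plan is just to track the conditions carefully and perform the counting.

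First I would write the transition probability as a sum over intermediate faces:
\[
\hrswap{k}{l}{u}{j}(\ess,\tee) \;=\; \sum_{\aye \in X(k+u)} p^{(u)}_\ess(\aye) \cdot \Pr{\text{descent hits } \tee \mid \aye, \ess},
\]
where $\tee = \ess' \sqcup \tee'$ with $\ess' \subseteq \ess$ of size $l-j$ and $\tee'$ disjoint from $\ess$ of size $j$. Since $p^{(u)}_\ess(\aye)$ is supported on $\aye \supseteq \ess$, and the conditioned descent only produces subsets of $\aye$, the summand vanishes unless $\aye \supseteq \ess \sqcup \tee'$ (the condition $\aye \supseteq \ess'$ being automatic from $\ess' \subseteq \ess \subseteq \aye$).

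Next I would compute the conditional descent probability. Fix such an $\aye$; then $|\aye \setminus \ess| = u$. The descent selects an $l$-subset $\tee \subseteq \aye$ uniformly at random subject to $|\tee \cap (\aye \setminus \ess)| = j$, which is equivalent to choosing $j$ elements from $\aye \setminus \ess$ and $l-j$ elements from $\ess$. The number of such $l$-subsets is $\binom{u}{j}\binom{k}{l-j}$, and exactly one of them equals our target $\tee = \ess' \sqcup \tee'$ (the ``$\tee'$ part'' picks out the $j$ elements in $\aye \setminus \ess$, and the ``$\ess'$ part'' picks out the $l-j$ elements in $\ess$). Hence
\[
\Pr{\text{descent hits } \tee \mid \aye, \ess} \;=\; \frac{1}{\binom{k}{l-j}\binom{u}{j}}
\]
whenever $\aye \supseteq \ess \sqcup \tee'$, and zero otherwise.

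Substituting this into the sum and pulling the constant out yields the claimed formula. There is no real obstacle here beyond keeping track of the containment conditions and the counting of valid descents; the key observation is that the condition ``$\aye$ contains the target $\tee$'' simplifies to ``$\aye \supseteq \ess \sqcup \tee'$'' once we know $\aye \supseteq \ess$. This proposition will later be the lever used to prove \cref{lemma:heigh_indep}, since the right-hand side depends on $u$ only through $p^{(u)}_\ess$ and the factor $\binom{u}{j}$, which will combine nicely via \cref{prop:probform}.
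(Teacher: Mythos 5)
Your proof is correct and follows essentially the same route as the paper's: both apply the law of total probability over the intermediate face $\aye \in X(k+u)$, observe that only $\aye \supseteq \ess \sqcup \tee'$ contribute, and compute the uniform conditional descent probability $1/\bigl(\binom{k}{l-j}\binom{u}{j}\bigr)$ by counting the valid $l$-subsets. Your write-up is, if anything, slightly more explicit about where the containment condition $\aye \supseteq \ess \sqcup \tee'$ enters the sum.
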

\begin{proof}
    The only way of picking $\ess' \sqcup \tee'$ at the descent step
    is picking some $\aye \in X(k + u)$ that contains $\ess' \sqcup
    \tee'$ in the ascent step. The probability of this happening is precisely,
    \[ p_1 = \sum_{\aye \in X(k + u) :\atop \aye \supseteq (\ess \sqcup \tee')} p^{(u)}_\ess(\aye).\]
    Suppose we are at a set $\aye = \ess \sqcup \tee$, such that $\tee
    \supseteq \tee'$ and $\ess \cap \tee = \varnothing$. Now, the
    probability of the descent step ending at $\ess' \sqcup \tee'$ is the probability
    of a randomly sampled $(l - j)$-elemented subset of $\ess$
    being $\ess'$ and the probability of a randomly sampled
    $j$-elemented subset of $\tee$ being $\tee'$. It can be verified
    that this probability is
    \[ p_2 = \frac{1}{\binom{k}{l- j} \cdot \binom{u}{j}}.\]    
    By law of total probability we establish that
    \[ \hrswap{k}{l}{u}{j}(\ess, \ess' \sqcup \tee') = p_1 \cdot p_2  = \frac{1}{\binom{k}{l - j} \cdot \binom{u}{j}} \cdot \sum_{\aye \in X(k + u) :\atop \aye \sqcup (\ess \sqcup \tee')} p_\ess^{(u)}(\aye).\]
\end{proof}

\begin{lemma}[Height Independence]\label{theo:heigh_indep}
    Let $u \in [j, d-k]$. For any $\ess \in X(k)$,  $\ess' \subseteq \ess$ and $\tee' \in X(j)$ satisfying $\ess' \sqcup \tee' \in X(l)$ we have the following,
    \[ \hrswap{k}{l}{u}{j}(\ess, \ess' \sqcup \tee') = \frac{1}{\binom{k}{l - j} \binom{k +j}{j}} \cdot \frac{\Pi_{k + j}(\ess \sqcup \tee')}{\Pi_k(\ess)}. \]
    In particular, the choice of $u \in [j, d -k]$ does not affect the swap walk.
\end{lemma}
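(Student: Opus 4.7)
The plan is to start from the formula given by \cref{prop:swapformul_rect}, substitute the explicit form of $p_\ess^{(u)}$ from \cref{prop:probform}, and then apply \cref{prop:volpres} to collapse the sum over ``ascent endpoints'' $\aye \in X(k+u)$ into a single expression involving only $\Pi_{k+j}(\ess \sqcup \tee')$. The height parameter $u$ will then drop out after binomial-coefficient bookkeeping.

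More concretely, \cref{prop:swapformul_rect} gives
\[
\hrswap{k}{l}{u}{j}(\ess,\ess'\sqcup\tee')
~=~ \frac{1}{\binom{k}{l-j}\binom{u}{j}}
\sum_{\aye \in X(k+u):\atop \aye \supseteq \ess\sqcup\tee'} p_\ess^{(u)}(\aye).
\]
Since every $\aye$ appearing in the sum contains $\ess$, \cref{prop:probform} lets me rewrite $p_\ess^{(u)}(\aye) = \frac{1}{\binom{k+u}{u}} \cdot \frac{\Pi_{k+u}(\aye)}{\Pi_k(\ess)}$, pulling the factor $1/(\binom{k+u}{u}\Pi_k(\ess))$ outside the sum. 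What remains is $\sum_{\aye \in X(k+u):\,\aye \supseteq \ess\sqcup\tee'} \Pi_{k+u}(\aye)$. Setting $\bee := \ess \sqcup \tee' \in X(k+j)$ (well-defined because $\ess \cap \tee' = \emptyset$, which follows from $\ess' \subseteq \ess$ and $\ess' \sqcup \tee' \in X(l)$ with $|\tee'| = j$, so $\tee'$ must be disjoint from $\ess'$; one needs to verify carefully that $\tee' \cap \ess = \emptyset$ — this is the one spot to be a bit careful), \cref{prop:volpres} applied with $\ell = k+j$ and gap $u-j$ yields
\[
\sum_{\aye \in X(k+u):\atop \aye \supseteq \bee} \Pi_{k+u}(\aye)
~=~ \binom{k+u}{u-j} \cdot \Pi_{k+j}(\bee).
\]

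Putting the pieces together gives
\[
\hrswap{k}{l}{u}{j}(\ess,\ess'\sqcup\tee')
~=~ \frac{1}{\binom{k}{l-j}} \cdot \frac{\binom{k+u}{u-j}}{\binom{u}{j}\binom{k+u}{u}} \cdot \frac{\Pi_{k+j}(\ess\sqcup\tee')}{\Pi_k(\ess)}.
\]
The final step is the binomial identity
\[
\frac{\binom{k+u}{u-j}}{\binom{u}{j}\binom{k+u}{u}}
~=~ \frac{(k+u)!}{(u-j)!(k+j)!} \cdot \frac{j!(u-j)!}{u!} \cdot \frac{u!\,k!}{(k+u)!}
~=~ \frac{j!\,k!}{(k+j)!}
~=~ \frac{1}{\binom{k+j}{j}},
\]
which is independent of $u$ and produces exactly the claimed formula. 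Since the right-hand side depends only on $k,j,l$ and the fixed faces $\ess,\tee'$, height independence $\hrswap{k}{l}{u}{j} = \hrswap{k}{l}{u'}{j}$ for $u,u' \in [j,d-k]$ is immediate.

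The proof is essentially a routine calculation; there is no real obstacle. The only subtlety is justifying disjointness of $\ess$ and $\tee'$ so that $\ess \sqcup \tee' \in X(k+j)$ is a legitimate face to which \cref{prop:volpres} applies — this follows from the hypothesis $\ess' \sqcup \tee' \in X(l)$ being an $l$-face (forcing $|\ess'| + |\tee'| = l$ and $\ess' \cap \tee' = \emptyset$) together with the convention that a face $\aye = \ess \sqcup \tee$ in the image of the $u$-step up-walk from $\ess$ necessarily has $\tee \cap \ess = \emptyset$, which is built into \cref{prop:swapformul_rect}.
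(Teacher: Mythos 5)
Your proposal is correct and follows essentially the same route as the paper's proof: both combine \cref{prop:swapformul_rect}, \cref{prop:probform}, and \cref{prop:volpres} and then finish with the binomial identity $\binom{k+u}{u-j}/\bigl(\binom{u}{j}\binom{k+u}{u}\bigr) = 1/\binom{k+j}{j}$ (which you verify explicitly where the paper merely states it). The extra care you take about the disjointness of $\ess$ and $\tee'$ is a valid, if minor, addition.
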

\begin{proof}   
    We have,
    \begin{eqnarray*}
        \sum_{\aye \in X(k + u) : \aye \supseteq \ess \sqcup \tee'} p_\ess^{(k + u)}(\aye) &  =  &\frac{1}{\binom{k + u}{u}} \cdot \frac{1}{\Pi_k(\ess)} \cdot \sum_{\aye \in X(k + u) : \aye \supseteq \ess \sqcup \tee} \Pi_{k + u}(\aye),\\
        & = & \frac{\binom{k + u}{u - j}}{\binom{k + u}{u}}\cdot \frac{\Pi_{k + j}(\ess \sqcup \tee')}{\Pi_k(\ess)}
    \end{eqnarray*}
    where the first equality is due to~\cref{prop:probform} and the
    second is due to~\cref{prop:volpres} together with the observation
    that $\ess \sqcup \tee' \in X(k + j)$.

    Thus, by~\cref{prop:swapformul_rect} we get,
    \[ \hrswap{k}{l}{u}{j}(\ess, \tee) = \frac{1}{\binom{u}{j} \cdot \binom{k}{l - j}}\frac{\binom{k + u}{u - j}}{\binom{k + u }{u}} \cdot \frac{\Pi_{k + j}(\ess \sqcup \tee')}{\Pi_k(\ess)}.\]
    We complete the proof by noting that,
    \[ \frac{\binom{k + u }{u - j}}{\binom{k + u }{u}} =  \frac{\binom{u}{j}}{\binom{k+j}{j}},\]
    and thus
    \[ \hrswap{k}{l}{u}{j}(\ess, \tee) = \frac{1}{\binom{k}{l - j} \cdot \binom{k+j}{j}} \cdot \frac{\Pi_{k + j}(\ess \sqcup \tee')}{\Pi_k(\ess)}\]
    which proves the formula.
\end{proof}
Since the choice of $u$ does not affect the formula, we
obtain~\cref{lemma:heigh_indep}.

\subsection{Canonical Walks in Terms of the Swap Walks}

We show that the canonical walks are given by an average of swap walks with respect to the hypergeometric distribution.

\begin{lemma}\label{lemma:formula_normal_rect}
  Let $u, l, k, d \ge 0$ be given satisfying $l \le k$ and $u \le l$. Then, we have the following formula
    for the canonical $u$-walk on any $X(\le d)$ satisfying $d \ge k + u$
    \[ \rnwalk{k}{l}{u} = \sum_{j = 0}^{u} \frac{\binom{u}{j} \binom{k}{l - j}}{\binom{k + u}{l}} 
    \cdot \rswap{k}{l}{j}.\]
\end{lemma}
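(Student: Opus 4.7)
\begin{proofsketch}
The plan is to decompose the canonical walk $\rnwalk{k}{l}{u}$ according to the size $j$ of the ``swap'' performed at the descent step, and to recognize each conditional piece as a swap walk.

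Fix $\ess \in X(k)$ and $\tee \in X(l)$. By definition, $\rnwalk{k}{l}{u}(\ess, \tee)$ is obtained by first sampling $\ess'' \in X(k+u)$ with probability $p^{(u)}_{\ess}(\ess'')$ (requiring $\ess'' \supseteq \ess$) and then sampling a uniformly random $l$-subset of $\ess''$. Conditioned on $\ess''$, the size of $\tee \cap (\ess'' \setminus \ess)$ is a well-defined quantity, and the uniform measure on $l$-subsets of the $(k+u)$-element set $\ess''$ splits naturally according to this intersection size. Specifically, sampling a uniform $l$-subset of $\ess''$ can be performed by first picking $j \in \{0, 1, \ldots, u\}$ according to the hypergeometric distribution
\[
\Pr[j] ~=~ \frac{\binom{u}{j}\binom{k}{l-j}}{\binom{k+u}{l}},
\]
and then picking $\tee$ uniformly among $l$-subsets of $\ess''$ with $|\tee \cap (\ess'' \setminus \ess)| = j$. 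This is precisely the conditioned descent used in the definition of $\hrswap{k}{l}{u}{j}$.

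First, I would make this conditioning rigorous by applying the law of total probability to write
\[
\rnwalk{k}{l}{u}(\ess, \tee) ~=~ \sum_{j=0}^{u} \frac{\binom{u}{j}\binom{k}{l-j}}{\binom{k+u}{l}} \cdot \hrswap{k}{l}{u}{j}(\ess, \tee),
\]
noting that the hypergeometric weights are independent of $\ess''$ (they depend only on the ``type'' of descent) and can therefore be pulled out of the expectation over $\ess''$. One should check that the terms with $j > \min(u, l)$ or $l - j > k$ contribute zero on both sides, which is automatic since the corresponding binomial coefficients vanish. Finally, by \cref{theo:heigh_indep} (height independence), for each $j \in \{0, \ldots, u\}$ we have $\hrswap{k}{l}{u}{j} = \hrswap{k}{l}{j}{j} = \rswap{k}{l}{j}$, and substitution yields the desired formula.

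There is no real obstacle here; the only thing to be slightly careful about is the hypergeometric bookkeeping and ensuring that the degenerate cases ($j = 0$ giving identity-like behavior when $l = k$, or the range of $j$) are correctly absorbed into the vanishing binomials. Everything else is a direct unpacking of definitions combined with the already-established height independence.
\end{proofsketch}
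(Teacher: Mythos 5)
Your proposal is correct and follows essentially the same route as the paper: decompose the descent step of the canonical walk by the intersection size $j$ with $\ess'' \setminus \ess$, observe that the conditional probabilities are hypergeometric and independent of $\ess''$, apply the law of total probability, and invoke height independence to replace $\hrswap{k}{l}{u}{j}$ by $\rswap{k}{l}{j}$. No gaps.
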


\begin{proof}
    Suppose the canonical $u$-walk starting
    from $\ess \in X(k)$ picks $\ess'' \in X(k + u)$ in the second step. Write $\Ee_j(\ess'')$ for the
    event that the random face $\ess'$ the canonical $u$-walk picks in the descent step satisfies
    \[ \Abs*{ \ess' \setminus \ess } = j. \] 
    By elementary combinatorics,
    \[ \Pr{\ess' \subseteq \ess''}{\Ee_j(\ess'') \mid \ess'' } = \frac{\binom{u}{j} \binom{k}{l - j} }{\binom{k + u}{l}} \]
    where the draw of the probability is over the subsets $\ess' \in X(l)$ of $\ess''$.
    Further, let $\tee'_j$ be the random variable that stands for the face picked in the descent step
    of the $j$-swapping $u$-walk from $X(k)$ to $X(l)$.

    By the definition of the $j$-swapping walk from $X(k)$ to $X(l)$, conditioning that the ascent step picks the same 
    $\ess'' \in X(k + u)$ we have
    \begin{equation}
        \Pr{ \tee'_j = \tee \mid \ess'' } = \Pr{\ess' = \tee \mid \ess'' \textrm{ and }\Ee_j(\ess'')}.\label{eq:swap_transfer}
    \end{equation}
    Now, by the law of total probability we have
    \begin{eqnarray*}
        \rnwalk{k}{l}{u}(\ess, \tee) = \Pr{\rand S' = \tee} & = & \sum_{j = 0}^{u} \sum_{\ess'' \in X(k + u)} \Pr{\ess''} \cdot \Pr{\Ee_j(\ess'') \mid \ess''} \cdot \Pr{\ess' = \tee \mid \rand \ess'' \textrm{ and }\Ee_j(\ess'')},\\
        & = & \sum_{j = 0}^{u} \frac{ \binom{u}{j} \binom{k}{l - j} }{\binom{k +u }{l}} \cdot
        \Ex{\ess'' \supseteq \ess}{\Pr{\ess' = \tee \mid \rand \ess'' \textrm{ and } \Ee_j(\ess'') }},\\
        & = & \sum_{j= 0}^{u} \frac{ \binom{u}{j} \binom{k + u}{l  - j}}{\binom{k  + u}{l}} \cdot
        \Ex{\ess'' \supseteq \ess}{ \Pr{\tee'_j = \tee \mid \ess''} }
    \end{eqnarray*}
    where we used Equation \eqref{eq:swap_transfer} to get the last equality. Another application of the
    law of total probability gives us 
    \[ \Ex{\ess'' \supseteq \ess}{\Pr{\tee'_j = \tee \mid \ess''}} = \Pr{\tee'_j = \tee}.\]
    This allows us to write, 
    \begin{eqnarray*}
        \rnwalk{k}{l}{u}(\ess, \tee) & = & \sum_{j = 0}^{u} \frac{ \binom{u}{j} \binom{k}{l - j}}{\binom{k+ u}{l} } \cdot \Pr{\tee'_j = \tee},\\
        & = &\sum_{j = 0}^{u} \frac{ \binom{u}{j} \binom{k}{l - j}}{\binom{k + u}{l} } \cdot \hrswap{k}{l}{u}{j}(\ess, \tee), 
    \end{eqnarray*}
    The statement follows using height independence, i.e.~\cref{lemma:heigh_indep}
\end{proof}

\subsection{Inversion: Swap Walks in Terms of Canonical Walks}

We show how the swap walks can be obtained as a signed sum of canonical 
walks. This result follows from binomial inversion which we recall
next.

\begin{fact}[Binomial Inversion, \cite{Bernstein02}]\label{fac:binomial_inversion}
    Let $\parens*{a_n}_{n \ge 0}, \parens*{b_n}_{n \ge 0}$ be arbitrary sequences. Suppose for all $n \ge 0$ we have,
    \[ b_n = \sum_{j = 0}^n \binom{n}{j}\cdot (-1)^j\cdot a_j.\]
    Then, we also have
    \[ a_n = \sum_{j = 0}^n \binom{n}{j}\cdot (-1)^j \cdot b_j.\]
\end{fact}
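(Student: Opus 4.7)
The plan is to verify the inversion formula by direct substitution and a standard binomial identity, since this is a classical result rather than one requiring a new idea.

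First I would substitute the hypothesis into the claimed identity. Starting from the right-hand side of the conclusion, I would compute
\[
\sum_{j=0}^n \binom{n}{j}(-1)^j b_j
~=~ \sum_{j=0}^n \binom{n}{j}(-1)^j \sum_{k=0}^j \binom{j}{k}(-1)^k a_k,
\]
and then swap the order of summation to group terms by $a_k$:
\[
~=~ \sum_{k=0}^n a_k \sum_{j=k}^n (-1)^{j+k}\binom{n}{j}\binom{j}{k}.
\]

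Next I would apply the trinomial revision identity $\binom{n}{j}\binom{j}{k} = \binom{n}{k}\binom{n-k}{j-k}$, pull $\binom{n}{k}$ outside the inner sum, and re-index with $i = j - k$ so that the inner sum becomes $\sum_{i=0}^{n-k}(-1)^i \binom{n-k}{i} = (1-1)^{n-k}$ by the binomial theorem. This quantity is $[n=k]$, so the only surviving term on the outer sum is $k = n$, yielding $a_n \cdot \binom{n}{n} \cdot 1 = a_n$, as desired.

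There is no real obstacle here; the result is a one-line consequence of the orthogonality relation $\sum_{j=k}^n (-1)^{j+k}\binom{n}{j}\binom{j}{k} = [n=k]$, and the proof is symmetric in $a$ and $b$ (so the reverse direction is identical). An alternative and slightly more elegant route would be to observe that the transformation $T : (a_n) \mapsto (b_n)$ with $b_n = \sum_j \binom{n}{j}(-1)^j a_j$ is given by a lower-triangular matrix $M$ with $M_{nj} = (-1)^j \binom{n}{j}$, and verify that $M^2 = I$ by the same binomial identity; this immediately gives both directions of the inversion at once.
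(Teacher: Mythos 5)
Your proof is correct. The paper itself does not prove this statement --- it is cited as a known fact from the reference \cite{Bernstein02} --- so there is no in-paper argument to compare against. Your direct computation (substitute, swap the order of summation, apply $\binom{n}{j}\binom{j}{k} = \binom{n}{k}\binom{n-k}{j-k}$, and collapse the inner sum via $\sum_{i}(-1)^i\binom{n-k}{i} = (1-1)^{n-k}$) is the standard verification and is complete; the involution observation $M^2 = I$ at the end is a clean way to package why the statement is symmetric in $(a_n)$ and $(b_n)$.
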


\begin{corollary}\label{cor:swap_rectangular_inverse}
    Let $k, l, d \ge 0$ be given parameters such that $k + l \le d$ and $k \ge l$. For any simplicial complex $X(\le d)$, one has the following formula for the $u$-swapping walk from $X(k)$
    to $X(l)$ in terms of the canonical $j$-walks:
    \[ \binom{k }{l - u} \rswap{k}{l}{u} = \sum_{j = 0}^u (-1)^{u - j} \cdot \binom{k + j}{l} \cdot \binom{u}{j} \cdot \rnwalk{k}{l}{j}.\]
\end{corollary}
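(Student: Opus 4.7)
The plan is to obtain the desired inversion by directly applying Fact \ref{fac:binomial_inversion} to the identity expressing canonical walks in terms of swap walks (Lemma \ref{lemma:formula_normal_rect}). The idea is to clear denominators in that identity and then define auxiliary sequences of operators so that the resulting expression matches the hypothesis of binomial inversion.

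Concretely, starting from Lemma \ref{lemma:formula_normal_rect} and multiplying both sides by $\binom{k+u}{l}$, I get
\[ \binom{k+u}{l} \cdot \rnwalk{k}{l}{u} ~=~ \sum_{j=0}^u \binom{u}{j} \binom{k}{l-j} \cdot \rswap{k}{l}{j}. \]
I would then introduce $b_u := \binom{k+u}{l} \cdot \rnwalk{k}{l}{u}$ and $a_j := (-1)^j \binom{k}{l-j} \cdot \rswap{k}{l}{j}$, viewed as sequences of operators indexed by $u, j \geq 0$. Since $(-1)^j a_j = \binom{k}{l-j} \rswap{k}{l}{j}$, the displayed identity rewrites as $b_u = \sum_{j=0}^u (-1)^j \binom{u}{j} a_j$, which is exactly the hypothesis of Fact \ref{fac:binomial_inversion} (applied entry-wise to the operators, or equivalently in the vector space of operators $C^l \to C^k$).

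Applying binomial inversion then yields $a_u = \sum_{j=0}^u (-1)^j \binom{u}{j} b_j$, i.e.,
\[ (-1)^u \binom{k}{l-u} \cdot \rswap{k}{l}{u} ~=~ \sum_{j=0}^u (-1)^j \binom{u}{j} \binom{k+j}{l} \cdot \rnwalk{k}{l}{j}. \]
Multiplying through by $(-1)^u$ and using $(-1)^{u+j} = (-1)^{u-j}$ produces the claimed formula. There is no substantive obstacle here: the inversion is immediate once the sequences are set up, and the only subtlety is the sign bookkeeping required to conform to the signed form of binomial inversion as stated in Fact \ref{fac:binomial_inversion} — absorbing a single $(-1)^j$ into the definition of $a_j$ lines everything up, and the final factor of $(-1)^u$ on the left is then cleared trivially.
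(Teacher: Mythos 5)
Your proposal is correct and matches the paper's proof essentially verbatim: the paper also sets $a_j = (-1)^j\binom{k}{l-j}\rswap{k}{l}{j}(\ess,\tee)$ and $b_u = \binom{k+u}{l}\rnwalk{k}{l}{u}(\ess,\tee)$ entrywise, applies \cref{fac:binomial_inversion} to the identity from \cref{lemma:formula_normal_rect}, and clears the factor of $(-1)^u$. Nothing further is needed.
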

\begin{proof}
    Fix faces $\ess \in X(k)$ and $\tee \in X(l)$ and set for all $j \in [0, u]$
    \[ a_j := \binom{k}{l - j} \cdot (-1)^j \cdot \rswap{k}{l}{j}(\ess, \tee).\]
    Notice that we have by~\cref{lemma:formula_normal_rect}
    \[\binom{k + u}{l} \cdot \rnwalk{k}{l}{u}(\ess, \tee) = \sum_{j = 0}^u \binom{u}{j}\cdot (-1)^j \cdot a_j = \sum_{j = 0}^u \binom{u}{j} \cdot \binom{k}{l-j} \cdot \cdot \rswap{k}{l}{j}(\ess, \tee). \]
    i.e.~if we set
    \[ b_u = \binom{k + u}{l} \cdot \rnwalk{k}{l}{u}(\ess, \tee),\]
    we can apply~\cref{fac:binomial_inversion} to obtain
    \begin{eqnarray*}
        \binom{k}{l - u} \cdot (-1)^u \cdot \rswap{k}{l}{u}(\ess, \tee) & = & a_u\\
        & = & \sum_{j = 0}^u \binom{u}{j}\cdot (-1)^j \cdot b_j\\
        & = & \sum_{j = 0}^u \binom{u}{j} \cdot \binom{k + j}{l}\cdot (-1)^j \cdot \rnwalk{k}{l}{j}(\ess, \tee).
    \end{eqnarray*}
    Dividing both sides of this equation by $(-1)^u$ yields the desired result.
\end{proof}


\section{Spectral Analysis of Swap Walks}\label{sec:asd}
Swap walks arise naturally in our $k$-CSPs approximation scheme on
HDXs where the running time and the quality of approximation depend on
the expansion of these walks. For this reason, we analyze the spectra
of swap walks. We show that swap walks $\rfswap{k}{k}$ of
\text{$\gamma$-HDXs} are indeed expanding for $\gamma$ sufficiently
small. More precisely, the first main result of this section is the
following.
\begin{theorem}[Swap Walk Spectral Bound]\label{theo:sq_swap_gap}
  Let $X(\le d)$ be a $\gamma$-HDX with $d \ge 2k$. Then the second
  largest singular value $\sigma_2(\rfswap{k}{k})$ of the swap
  operator $\rfswap{k}{k}$ is
  \[
  \sigma_2(\rfswap{k}{k}) ~\le~ \gamma \cdot \left(2^7 \cdot k^4 \cdot 2^{3k} \cdot k^{k}\right).
  \]
\end{theorem}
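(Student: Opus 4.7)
The plan is to reduce the swap walk $\rfswap{k}{k} = \rswap{k}{k}{k}$ to a signed combination of canonical walks $\rnwalk{k}{k}{j}$ via binomial inversion (\cref{cor:swap_rectangular_inverse} with $l=u=k$), and then use the expanding poset machinery together with the known spectrum on the complete complex (\cref{lemma:formula_swapping_show}) to control each eigenspace. Concretely, \cref{cor:swap_rectangular_inverse} gives
\[
  \rfswap{k}{k} \;=\; \rswap{k}{k}{k} \;=\; \sum_{j=0}^{k} (-1)^{k-j} \binom{k+j}{k}\binom{k}{j}\,\rnwalk{k}{k}{j},
\]
so it suffices to control the singular values of this signed combination. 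Since each coefficient is at most $\binom{2k}{k}\binom{k}{k/2} \le 2^{3k}$, any error we accumulate per eigenspace only has to be boosted by a $2^{O(k)}$ factor to yield the claimed bound.

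The key step is to use the harmonic decomposition from Dikstein \etal on the high-dimensional expander $X$: write $C^k = \bigoplus_{i=0}^{k} V_i$ where each $V_i$ is an approximate eigenspace of \emph{every} operator of the form $\Dee\cdots\Dee\,\Up\cdots\Up$, with eigenvalue equal to the value that the corresponding operator takes on the $i$-th homogeneous component of the complete complex, up to an additive error $\poly(k)\cdot\gamma$. Applied to $\rnwalk{k}{k}{j} = \Dee_{k+1}\cdots\Dee_{k+j}\Up_{k+j-1}\cdots\Up_{k}$, this means there exist scalars $\lambda_{j,i}^{\mathrm{complete}}$ (determined purely by the complete complex) such that $\rnwalk{k}{k}{j}\!\mid_{V_i}$ acts as $\lambda_{j,i}^{\mathrm{complete}} \cdot \Ide + E_{j,i}$ with $\norm{E_{j,i}}_{\mathrm{op}} \le C(k,j)\cdot \gamma$ for some explicit polynomial $C(k,j)$ in $k$ (this is where the $k^4 \cdot k^k$ factors enter: they bound the accumulated error of composing up-to-$2k$ raw operators).

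Summing the inversion formula over $j$ on each $V_i$ (for $i \ge 1$, i.e., orthogonal to the trivial eigenspace $V_0$), the scalar part becomes
\[
  \mu_i \;:=\; \sum_{j=0}^{k}(-1)^{k-j}\binom{k+j}{k}\binom{k}{j}\,\lambda_{j,i}^{\mathrm{complete}},
\]
which is exactly the $i$-th eigenvalue of $\rfswap{k}{k}$ on the \emph{complete} complex. By \cref{lemma:formula_swapping_show} (the Kneser-graph case), $|\mu_i| = O_k(1/n)$ for all $n$, and letting $n \to \infty$ or, equivalently, observing that in the ``complete limit'' of the EPoset framework this contribution vanishes for $i \ge 1$ (the Kneser graph spectrum), gives $\mu_i = 0$ on the complete complex for $i \ge 1$. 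Therefore on $V_i$, $i\ge 1$, the combined operator has norm at most $\sum_j \binom{k+j}{k}\binom{k}{j} \cdot C(k,j)\cdot \gamma \le 2^7 \cdot k^4 \cdot 2^{3k}\cdot k^k\cdot \gamma$ after consolidating the constants. Taking the maximum over $V_1,\ldots,V_k$ yields the claim.

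The main obstacle is tracking the EPoset error terms: composing canonical walks of height up to $k$ forces us to apply the Dikstein \etal approximate eigenspace decomposition repeatedly, and each application introduces errors that must be multiplied by the subsequent $\Dee$/$\Up$ operators (all of which have operator norm $1$ but act non-trivially on the error). I would therefore prove, as a preliminary lemma, an explicit bound of the form $\norm{\rnwalk{k}{k}{j}\!\mid_{V_i} - \lambda_{j,i}^{\mathrm{complete}}\Ide}_{\mathrm{op}} \le \poly(k,j)\cdot \gamma$ by induction on $j$ using the EPoset axioms, and only then assemble the signed sum. Everything else (binomial inversion, Kneser spectrum, triangle inequality over the $k+1$ terms) is bookkeeping, and the exponential-in-$k$ prefactor in the theorem absorbs the loose constants cleanly.
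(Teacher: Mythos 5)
Your proposal follows essentially the same route as the paper: binomial inversion to write $\rfswap{k}{k}$ as a signed combination of canonical walks, the Dikstein \etal EPoset decomposition to extract approximate eigenvalues depending only on $k,i,j$, comparison with the complete complex (Kneser graph) to conclude those eigenvalues vanish on the nontrivial spaces, and a triangle inequality over the $2^{O(k)}$-bounded coefficients. The only cosmetic difference is that the paper works with quadratic forms $\ip{\rfswap{k}{k}f}{f}$ plus an approximate-orthogonality lemma rather than operator-norm bounds on restrictions to the $V_i$ (which are neither exactly invariant nor exactly orthogonal), but this is precisely the bookkeeping you flag as the remaining work.
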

\cref{theo:sq_swap_gap} is enough for the analysis of our $k$-CSP
approximation scheme when $k$ is a power of two. However, to analyze
general $k$-CSPs on HDXs we need to understand the spectra of general
swap walks $\rfswap{k}{l}$ where $k$ may differ from $l$. Therefore,
we generalize the spectral analysis of $\rfswap{k}{k}$ above to
$\rfswap{k}{l}$ obtaining~\cref{thm:swap-eig-bd}, our second main
result of this section.
\begin{theorem}[Rectangular Swap Walk Spectral Bound]\label{thm:swap-eig-bd}
  Suppose $X(\le d)$ is a $\gamma$-HDX with $d \ge k + l$ and $k \le
  l$. Then the largest non-trivial singular value
  $\sigma_2(\rfswap{k}{l})$ of the swap operator $\rfswap{k}{l}$ is
  \[
  \sigma_2(\rfswap{k}{l}) ~\le~ \sqrt{\gamma \cdot \left(2^8 \cdot k^2 \ell^2 \cdot 2^{2k+4l} \cdot k^{k} \right)}.
  \]
\end{theorem}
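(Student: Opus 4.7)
The strategy is to reduce the rectangular bound to a self-adjoint setting, where machinery analogous to that behind \cref{theo:sq_swap_gap} can be applied. The starting observation is detailed balance: the explicit formula in \cref{theo:heigh_indep} yields
\[
\rfswap{k}{\ell}(\ess,\tee)\,\Pi_k(\ess) ~=~ \rfswap{\ell}{k}(\tee,\ess)\,\Pi_\ell(\tee) ~=~ \frac{\Pi_{k+\ell}(\ess\sqcup\tee)}{\binom{k+\ell}{k}},
\]
so $\rfswap{\ell}{k}$ is the $\Pi$-weighted adjoint of $\rfswap{k}{\ell}$. Consequently
\[
\sigma_2\parens*{\rfswap{k}{\ell}}^{2} ~=~ \lambda_2(M), \qquad M ~\defeq~ \rfswap{\ell}{k}\,\rfswap{k}{\ell} ~=~ \rfswap{\ell}{k}\,\rfswap{\ell}{k}^{\dag},
\]
and it suffices to bound the second eigenvalue of the self-adjoint operator $M$ acting on $C^\ell$. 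The square root appearing in the statement is introduced at this step.

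Since the hypothesis $k\le\ell$ makes \cref{cor:swap_rectangular_inverse} available for $\rfswap{\ell}{k}$, I would expand
\[
\rfswap{\ell}{k} ~=~ \sum_{j=0}^{k} (-1)^{k-j}\cdot \binom{\ell+j}{k}\cdot \binom{k}{j}\cdot \rnwalk{\ell}{k}{j},
\]
and substitute this identity together with its adjoint into $M$. The operator $M$ is then a signed double sum of compositions $\rnwalk{\ell}{k}{i}\,\rnwalk{\ell}{k}{j}^{\dag}$, each of which is a bare product of $\Up$ and $\Dee$ operators between the layers $X(k)$, $X(\ell)$ and intermediate ones. Products of this kind are precisely what the harmonic-analytic expanding-poset framework of Dikstein et al.\ is designed to analyze: on a $\gamma$-HDX they are, up to errors of order $\gamma$, block-diagonal with respect to the approximate eigenspace decomposition of $C^\ell$; on the complete complex they reduce to bipartite Johnson-/Kneser-type operators whose spectra are known in closed form (as already used in \cref{lemma:formula_swapping_show}). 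The signed inversion coefficients are engineered to cancel the dominant non-$O(\gamma)$ contributions, leaving only an $O(\gamma)$ residual on the subspace orthogonal to the constants.

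The chief obstacle is the careful book-keeping of the combinatorial weights. Two applications of \cref{cor:swap_rectangular_inverse} produce coefficients as large as $\binom{\ell+k}{k}^{2}$, and these multiply the per-summand $O(\gamma)$ eigenvalue bounds while the cancellation structure must be preserved; the surviving residuals account for the $2^{2k+4\ell}\,k^{k}$ factor in the stated estimate. Once an inequality of the form $\lambda_2(M) \le \gamma\cdot 2^{8}\,k^{2}\ell^{2}\,2^{2k+4\ell}\,k^{k}$ is established along these lines, the claimed bound on $\sigma_2(\rfswap{k}{\ell})$ follows by taking a square root.
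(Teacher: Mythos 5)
Your proposal is correct and follows essentially the same route as the paper: reduce to the self-adjoint product of the swap walk with its adjoint (the paper analyzes $\Bee=\rfswap{k}{l}\rfswap{k}{l}^{\dag}$, you analyze the adjoint-side product, which is equivalent), expand via the binomial-inversion formula of \cref{cor:swap_rectangular_inverse} into a signed double sum of canonical walks, apply the EPoset quadratic-form machinery (\cref{lemma:asd_general_decomposition}) to get approximate eigenvalues independent of $n$, use the complete-complex bipartite Kneser spectrum (\cref{cor:complete_rec_kneser}) to conclude the non-trivial approximate eigenvalues vanish, and take a square root at the end. The only work you defer --- bounding $\sum_{j,j'}\abs{\alpha_{k,l,j,j'}}\le 2^{2k+4l+2}$ --- is exactly the bookkeeping the paper carries out.
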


\subsection{Square Swap Walks $\rfswap{k}{k}$}\label{subsec:proof_strategy_swap_k_k}

We prove~\cref{theo:sq_swap_gap} by connecting the spectral structure
of $\rfswap{k}{k}$ of general $\gamma$-HDXs to the well behaved case
of complete simplicial complexes. To distinguish these two cases we
denote by $\rfswap{k}{k}^{\Delta}$ the swap $\rfswap{k}{k}$ of
complete complexes~\footnote{The precise parameters of the complete
  complex $\Delta_d(n)$ where $\rfswap{k}{k}^{\Delta}$ lives will not
  be important. We only require that $\rfswap{k}{k}^{\Delta}$ is well
  defined in the sense that $d \ge 2k$ and $n > d$.}. In fact,
$\rfswap{k}{k}^{\Delta}$ is the random walk operator of the well known
Kneser graph $K(n,k)$ (see~\cref{def:kneser}).
\begin{definition}[Kneser Graph $K(n,k)$~\cite{GodsilM15}]\label{def:kneser}
  The Kneser graph $K(n,k)$ is the graph $G=(V,E)$ where $V =
  \binom{[n]}{k}$ and $E = \{\{\ess,\tee\}~\vert~\ess \cap \tee =
  \emptyset\}$.
\end{definition}
Then at least for complete complexes we know that
$\rfswap{k}{k}^{\Delta}$ is expanding. This is a direct consequence
of~\cref{fact:kneser_singular_values}.
\begin{fact}[Kneser Graph~\cite{GodsilM15}]\label{fact:kneser_singular_values}
  The singular values~\footnote{The precise eigenvalues are also well
    known, but singular values are enough in our analysis.} of the
  Kneser graph $K(n,k)$ are
  \[
  {n-k-i \choose k-i},
  \]
  for $i = 0,\dots,k$.
\end{fact}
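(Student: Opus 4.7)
The plan is to establish the classical spectrum of the Kneser graph $K(n,k)$ by exploiting the $S_n$-action on $\binom{[n]}{k}$, which commutes with the adjacency operator $A$. First, I would decompose $C^k := \R^{\binom{[n]}{k}}$ as an orthogonal direct sum of $S_n$-irreducibles. For $j \le k$ define the inclusion map $U_{j,k}: C^j \to C^k$ by $(U_{j,k} h)(S) = \sum_{T \subseteq S, |T|=j} h(T)$, let $W_j := U_{j,k}(C^j)$, and set $V_j$ to be the orthogonal complement of $W_{j-1}$ in $W_j$ (with $W_{-1} := 0$). Standard representation theory of $S_n$ (e.g., Young's rule and the Specht module construction) then gives
\[
C^k ~=~ V_0 \oplus V_1 \oplus \cdots \oplus V_k,
\]
where each $V_j$ is irreducible and realizes the Specht module indexed by the two-row partition $(n-j, j)$; in particular the decomposition is multiplicity-free with $\dim V_j = \binom{n}{j} - \binom{n}{j-1}$.

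Because the $S_n$-action preserves disjointness of $k$-sets, $A$ is $S_n$-equivariant, so by Schur's lemma applied to the multiplicity-free decomposition above it acts as scalar multiplication by some $\lambda_j \in \R$ on each $V_j$. It therefore suffices to identify $\lambda_j$ by evaluating $A$ on one convenient non-zero test vector in $V_j$. A natural choice is a Specht-type polytabloid: pick $2j$ distinct elements $a_1, b_1, \ldots, a_j, b_j \in [n]$ and define
\[
f_j(S) ~=~ \prod_{i=1}^{j} \bigl(\mathds{1}[a_i \in S] - \mathds{1}[b_i \in S]\bigr) \qquad \text{for } S \in \binom{[n]}{k}.
\]
Expanding the product over subsets $I \subseteq [j]$ realizes $f_j = U_{j,k}(g_j)$ for an explicit $g_j$, so $f_j \in W_j$; a short antisymmetry argument swapping $a_i \leftrightarrow b_i$ inside inner products with the spanning set $\{\mathds{1}[T \subseteq \,\cdot\,]\}$ of $W_{j-1}$ shows $f_j \perp W_{j-1}$, hence $f_j \in V_j$.

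To pin down the scalar, I would evaluate $(A f_j)(S_0) = \sum_{S' \cap S_0 = \emptyset} f_j(S')$ at the distinguished point $S_0 = \{a_1, \ldots, a_j, c_1, \ldots, c_{k-j}\}$ (where the $c_\ell$ avoid all $a_i, b_i$); expanding the $2^j$ terms and summing against the disjointness constraint yields, via a binomial identity, $\lambda_j = (-1)^j \binom{n-k-j}{k-j}$. The main obstacle is precisely this final signed binomial count: tracking for each term which $k$-subsets survive the disjointness constraint and with what sign. A cleaner but less elementary route is to work inside the Bose-Mesner algebra of the Johnson scheme: every intersection adjacency $A_i$ (including $A = A_k$) is a polynomial in the Johnson graph operator $J_1$ whose eigenvalue on $V_j$ can be computed directly from the up-down commutation $U_{k-1,k}^* U_{k-1,k} - U_{k-1,k} U_{k-1,k}^*$, after which $\lambda_j$ is read off by polynomial evaluation. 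Since $A$ is a real symmetric matrix, its singular values coincide with $|\lambda_j| = \binom{n-k-j}{k-j}$, each occurring with multiplicity $\binom{n}{j} - \binom{n}{j-1}$, giving the stated formula for $i = 0, 1, \ldots, k$.
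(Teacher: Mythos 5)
Your proof is correct. Note that the paper does not actually prove this fact --- it is cited to Godsil--Meagher --- but it does prove the bipartite generalization in Section 5.4 (the spectrum of $K(n,k,l)$), and there it uses exactly the same test vectors you chose: the Filmus/Specht-type functions $\phi = \prod_{i}(\mathds{1}_{a_i} - \mathds{1}_{b_i})$. The difference is in how the eigenvalue is extracted. The paper cites Filmus for the statement that these functions form an eigenbasis of the Johnson scheme and then verifies $W\phi = (-1)^i\binom{n-k-i}{l-i}\phi$ by a case analysis at \emph{every} coordinate $\ess$ (both pair-elements in $\ess$, neither, or exactly one, with a cancellation argument in the middle case). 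You instead establish $f_j \in V_j = W_j \cap W_{j-1}^{\perp}$ directly via the filtration and the $a_{i_0}\leftrightarrow b_{i_0}$ involution, invoke Schur's lemma on the multiplicity-free decomposition $\bigoplus_j S^{(n-j,j)}$, and then evaluate at the single point $S_0=\{a_1,\dots,a_j,c_1,\dots,c_{k-j}\}$. This last step is actually easier than you suggest: since $S'$ must avoid all $a_i$ and all $c_\ell$, the only surviving $S'$ are those containing every $b_i$, each contributing $(-1)^j$, and the free $k-j$ elements range over a set of size $n-k-j$, giving $\lambda_j = (-1)^j\binom{n-k-j}{k-j}$ immediately. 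Your route buys a self-contained proof (no appeal to Filmus) at the cost of importing the multiplicity-freeness of $M^{(n-k,k)}$ from $S_n$ representation theory; the paper's route is more elementary per-step but verifies more. As a sanity check, for $K(5,2)$ your formula gives eigenvalues $3,-2,1$, matching the Petersen graph.
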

This means that $\sigma_2(\rfswap{k}{k}^{\Delta}) = O_k(1/n)$ as shown
in~\cref{claim:square_knesere_gap}.
\begin{claim}\label{claim:square_knesere_gap}
  Let $d \ge 2k$ and $\Delta_d(n)$ be the complete complex. The second
  largest singular value $\sigma_2(\rfswap{k}{k}^{\Delta})$ of the
  swap operator $\rfswap{k}{k}^{\Delta}$ on $\Delta_d(n)$ is
  \[
  \sigma_2(\rfswap{k}{k}^{\Delta}) ~=~ \frac{k}{n-k},
  \]
  provided $n \ge M_k$ where $M_k \in \mathbb{N}$ only depends on $k$.
\end{claim}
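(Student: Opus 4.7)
The plan is to identify $\rfswap{k}{k}^{\Delta}$ on $\Delta_d(n)$ with the simple random walk on the Kneser graph $K(n,k)$, and then read off the second singular value from \cref{fact:kneser_singular_values}.

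First, I would verify the identification. For $\ess \in X(k)$, the swap walk ascends to a uniformly random $\ess'' \in X(2k)$ containing $\ess$; since $X = \Delta_d(n)$ is complete, this is equivalent to choosing a uniformly random $k$-subset $\tee \subseteq [n] \setminus \ess$ and setting $\ess'' = \ess \sqcup \tee$. The deterministic descent then outputs $\ess'' \setminus \ess = \tee$. Thus $\rfswap{k}{k}^{\Delta}$ is the row-stochastic operator that sends $\ess$ to a uniformly random $k$-subset disjoint from $\ess$, which is exactly the normalized adjacency operator of the $\binom{n-k}{k}$-regular Kneser graph $K(n,k)$.

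Next, I would apply \cref{fact:kneser_singular_values}: the singular values of the (unnormalized) adjacency matrix of $K(n,k)$ are $\binom{n-k-i}{k-i}$ for $i = 0, \dots, k$. Dividing by the degree $\binom{n-k}{k}$ to pass to the random walk operator gives singular values
\[
  s_i ~=~ \frac{\binom{n-k-i}{k-i}}{\binom{n-k}{k}} ~=~ \prod_{j=0}^{i-1} \frac{k-j}{n-k-j}, \qquad i = 0, \dots, k.
\]
The trivial singular value is $s_0 = 1$, and a direct computation gives $s_1 = k/(n-k)$.

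Finally, I would check that $s_1$ is the second largest, i.e.\ that $s_i < s_1$ for $i \ge 2$, once $n$ is large enough in terms of $k$. Since $s_i/s_1 = \prod_{j=1}^{i-1}(k-j)/(n-k-j) \le (k/(n-2k))^{i-1}$, this ratio is at most $1/2$ as soon as $n \ge 2k + 2k = 4k$, so taking any $M_k \ge 4k$ suffices (the precise value being unimportant since the claim only asserts existence of some $M_k$ depending on $k$). This yields $\sigma_2(\rfswap{k}{k}^{\Delta}) = s_1 = k/(n-k)$. The only mildly delicate step is the initial identification with the Kneser walk, which hinges on the complete complex making the upward step uniform over disjoint $k$-sets; once that is in hand, the rest is a short binomial calculation.
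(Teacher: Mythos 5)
Your proof is correct and follows essentially the same route as the paper: identify $\rfswap{k}{k}^{\Delta}$ with the normalized walk operator of the Kneser graph $K(n,k)$ and read off the singular values from \cref{fact:kneser_singular_values}. You actually supply more detail than the paper's two-line proof, in particular the verification that $s_1 = k/(n-k)$ dominates $s_i$ for $i \ge 2$ once $n \ge M_k$, which is exactly where the $M_k$ condition in the claim comes from.
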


\begin{proof}
  First note that for the complete complex $\Delta_d(n)$, the operator
  $\rfswap{k}{k}^{\Delta}$ is the walk matrix of the Kneser graph
  $K(n,k)$. Since the degree of $K(n,k)$ is $\binom{n-k}{k}$, the
  result follows from~\cref{fact:kneser_singular_values}.
\end{proof}

Therefore, if we could claim that $\sigma_2(\rfswap{k}{k})$ of an
arbitrary $\gamma$-HDX is close to $\sigma_2(\rfswap{k}{k}^{\Delta})$
(provided $\gamma$ is sufficiently small), we would conclude that
general $\rfswap{k}{k}$ walks are also expanding. A priori there is no
reason why this claim should hold since a general $d$-sized
$\gamma$-HDX may have much fewer hyperedges ($O_d(n)$ versus
$\binom{n}{d}$ in the complete $\Delta_d(n)$). Fortunately, it turns
out that this claim is indeed true (up to $O_k(\gamma)$ errors).

To prove~\cref{theo:sq_swap_gap} we employ the beautiful expanding
poset (EPoset) machinery of Dikstein et
al.~\cite{DiksteinDFH18}. Before we delve into the full technical
analysis, it might be instructive to see how~\cref{theo:sq_swap_gap}
is obtained from understanding the quadratic form $\ip{\rfswap{k}{k}
  f}{f}$ where $f \in C^k$.

First we informally recall the decomposition $C^k = \sum_{i=0}^k
C^k_i$ from the EPoset machinery where $C^k_i$ can be thought of as
the space of approximate eigenfunctions of \textit{degree} $i$ of
$C^k$ (the precise definitions are deferred
to~\ref{subsec:hdx_harmonic_analysis}). In this decomposition, $C^k_0$
is defined as the space of constant functions of $C^k$.

We prove the stronger result that the $\rfswap{k}{k}$ operators
of any $\gamma$-HDX has an an approximate spectrum that only depends
on $k$ provided $\gamma$ is small enough. More precisely, we
prove~\cref{lemma:swap_quadratic_form}.
\begin{lemma}[Swap Quadratic Form]\label{lemma:swap_quadratic_form}
  Let $f = \sum_{i=0}^k f_i$ with $f_i \in C^k_i$. Suppose $X(\le d)$ is a
  $\gamma$-HDX with $d \ge 2k$. If $\gamma \le \epsilon \left(64
  k^{k+4}2^{3k+1} \right)^{-1}$, then
  \[
  \ip{\rfswap{k}{k} f}{f} ~=~ \sum_{i=0}^k \lambda_{k}(i) \cdot \ip{f_i}{f_i} ~\pm~ \epsilon,
  \]
  where $\lambda_{k}(i)$ depends only on $k$ and $i$, i.e.,
  $\lambda_{k}(i)$ is an approximate eigenvalue of $\rfswap{k}{k}$
  associated to space $C^k_i$.
\end{lemma}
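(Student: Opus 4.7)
The plan is to reduce the spectral analysis of $\rfswap{k}{k}$ on a general $\gamma$-HDX to the well-behaved complete-complex case via the expanding-poset (EPoset) harmonic decomposition of Dikstein et al. The three ingredients I will combine are: (i) a binomial-inversion identity expressing $\rfswap{k}{k}$ as a signed combination of the canonical walks $\rnwalk{k}{k}{j}$, (ii) the fact that each canonical walk is a composition of $\Up$ and $\Dee$ operators, which the EPoset machinery shows acts approximately diagonally on $C^k = \bigoplus_{i=0}^k C^k_i$, and (iii) universality of the diagonal entries, i.e., that they match the exact eigenvalues of the same operator on a complete complex and therefore depend only on $k$ and $i$.

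First I would specialize~\cref{cor:swap_rectangular_inverse} at $l = u = k$ to obtain
\[
\rfswap{k}{k} ~=~ \sum_{j=0}^{k} (-1)^{k-j}\, \binom{k+j}{k}\binom{k}{j}\, \rnwalk{k}{k}{j},
\]
whose coefficients are each at most $O(k \cdot 2^{3k})$ in absolute value. It therefore suffices to analyze each quadratic form $\ip{\rnwalk{k}{k}{j} f}{f}$ separately and show that, up to a $\gamma$-dependent error, it equals $\sum_i \mu_{k,j}(i)\ip{f_i}{f_i}$ for universal scalars $\mu_{k,j}(i)$.

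Second, each $\rnwalk{k}{k}{j} = \Dee_{k+1}\cdots \Dee_{k+j}\,\Up_{k+j-1}\cdots \Up_k$ is a composition of $2j \le 2k$ basic operators. The EPoset machinery provides two key facts: (a) the $C^k_i$ are approximately orthogonal, i.e., $\smallabs{\ip{f_i}{f_{i'}}} \le O_k(\gamma)\norm{f_i}\norm{f_{i'}}$ for $i \ne i'$; and (b) each $C^k_i$ is an approximate eigenspace of $\rnwalk{k}{k}{j}$ with some eigenvalue $\mu_{k,j}(i)$ equal to the exact eigenvalue of the same composition on a complete complex in degree $i$, satisfying $\norm{\rnwalk{k}{k}{j} f_i - \mu_{k,j}(i)\, f_i} \le \mathrm{err}(k,j)\, \gamma\, \norm{f_i}$ with $\mathrm{err}(k,j) \le k^{O(k)}$ after telescoping the per-step error across the $2j$ compositions. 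Combining (a) and (b) via bilinearity yields $\ip{\rnwalk{k}{k}{j} f}{f} = \sum_i \mu_{k,j}(i)\ip{f_i}{f_i} \pm O(k^{O(k)}\,\gamma\, \norm{f}^2)$.

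Finally, substituting into the binomial-inversion expansion and collecting by $i$ gives the lemma with
\[
\lambda_k(i) ~:=~ \sum_{j=0}^{k} (-1)^{k-j}\binom{k+j}{k}\binom{k}{j}\,\mu_{k,j}(i),
\]
which depends only on $k$ and $i$. The total error is at most $(k+1) \cdot 2^{3k} \cdot k^{O(k)} \cdot \gamma\,\norm{f}^2$, which is at most $\epsilon$ under the hypothesis $\gamma \le \epsilon (64\, k^{k+4}\, 2^{3k+1})^{-1}$ (with $\norm{f}^2$ absorbed, as I expect implicit normalization $\norm{f} \le 1$ in the application). The hard part will be obtaining the clean $\mathrm{err}(k,j) \le k^{O(k)}$ bound in Step 2: each application of $\Up$ or $\Dee$ scales $C^k_i$ by a quantity close to its ``ideal'' value while also leaking an $O(\gamma)$ component into other eigenspaces, and I need to telescope these through $2j$ compositions without the leak compounding exponentially. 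Once this bound is in hand, the rest is bookkeeping via the binomial-inversion formula and the approximate orthogonality of the $C^k_i$.
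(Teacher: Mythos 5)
Your proposal is correct and follows essentially the same route as the paper: binomial inversion (\cref{cor:swap_rectangular_inverse}) to write $\rfswap{k}{k}$ as a signed combination of canonical walks, the EPoset quadratic-form/approximate-orthogonality machinery (\cref{lemma:asd_equal_d_u_operators_general,lemma:precise_orthogonality}, packaged in the paper as \cref{lemma:asd_general_decomposition}) to diagonalize each $\rnwalk{k}{k}{j}$ on $C^k = \sum_i C^k_i$ up to $k^{O(k)}\gamma$ error, and the bound $\sum_j|\alpha_j|\le 2^{3k+1}$ on the inversion coefficients. The only cosmetic difference is that the paper controls $\ip{\Why f_i}{g}$ for arbitrary test functions $g$ (the Structure Lemma) rather than stating an approximate-eigenvector norm bound, and the $k^{k}$ factor you anticipate arises from relating $\norm{h_i}$ to $\norm{f_i}$ rather than from the telescoping itself.
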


\begin{remark}
  From~\cref{lemma:swap_quadratic_form}, it might seem that we are
  done since there exist approximate eigenvalues $\lambda_{k}(i)$ that
  only depend on $k$ and $i$. However, giving an explicit expression
  for these approximate eigenvalues is tricky. For this reason, we
  rely on the expansion of Kneser graphs as will be clear later.
\end{remark}

Towards showing~\cref{lemma:swap_quadratic_form}, we introduce the
notion of \textit{balanced} operators which in particular captures
canonical and swap walks and we show that the quadratic form
expression of~\cref{lemma:swap_quadratic_form} is a particular case of
a general result for $\ip{\Bee f}{f}$ where $\Bee$ is a general
\textit{balanced} operator. A \textit{balanced} operator in $C^k$ is
any operator that can be obtained as linear combination of
\textit{pure balanced} operators, the later being operators that are a
formal product of an equal number of up and down operators.

\begin{lemma}[General Quadratic Form]\label{lemma:asd_general_decomposition}
  Let $\epsilon \in (0,1)$ and let $\mathcal{Y} \subseteq
  \{\Why~\vert~\Why \colon C^k \to C^k\}$ be a collection of formal
  operators that are product of an equal number of up and down walks
  (i.e., pure balanced operators) not exceeding $\ell$ walks. Let
  $\Bee = \sum_{\Why \in \mathcal{Y}} \alpha^{\Why} \Why$ where
  $\alpha^{\Why} \in \mathbb{R}$ and let $f = \sum_{i=0}^k f_i$ with
  $f_i \in C^k_i$. If $\gamma \le \epsilon \left(16 k^{k+2}\ell^2
  \sum_{\Why \in \mathcal{Y}} \vert \alpha^{\Why} \vert \right)^{-1}$,
  then
  \[
  \ip{\Bee f}{f} ~=~ \sum_{i=0}^k \left(\sum_{\Why \in \mathcal{Y}} \alpha^{\Why} \lambda^{\Why}_{k}(i)\right) \cdot \ip{f_i}{f_i} ~\pm~ \epsilon,
  \]
  where $\lambda^{\Why}_{k}(i)$ depends only on the operators
  appearing in the formal expression of $\Why$, $i$ and $k$, i.e.,
  $\lambda^{\Why}_{k}(i)$ is the approximate eigenvalue of $\Why$
  associated to $C^k_i$.
\end{lemma}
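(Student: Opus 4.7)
The plan is to reduce to the case of a single pure balanced operator $\Why$ by linearity, and then invoke the expanding poset harmonic analysis of Dikstein et al.~\cite{DiksteinDFH18} to show that each such $\Why$ acts approximately block-diagonally on the decomposition $C^k = \bigoplus_{i=0}^k C^k_i$. By linearity of the inner product, it suffices to prove that for each pure balanced $\Why$ comprising at most $\ell$ up/down factors,
\[
\ip{\Why f}{f} ~=~ \sum_{i=0}^k \lambda^{\Why}_k(i) \ip{f_i}{f_i} ~\pm~ O(k^{k+2} \ell^2 \gamma) \cdot \|f\|^2.
\]
Summing with weights $\alpha^\Why$ and applying the triangle inequality converts the total error into $O(k^{k+2} \ell^2 \gamma) \cdot \sum_{\Why} |\alpha^{\Why}| \cdot \|f\|^2$, which is at most $\epsilon$ by the hypothesis on $\gamma$ (WLOG $\|f\| \le 1$; otherwise absorb $\|f\|^2$ into $\epsilon$).

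For a fixed pure balanced $\Why$, I factor it into its $\ell$ individual up and down operators. By the EPoset machinery from Section~\ref{subsec:hdx_harmonic_analysis}, each single up or down operator is approximately block-diagonal with respect to the decomposition $C^m = \bigoplus_i C^m_i$: it maps $C^m_i$ into $C^{m \pm 1}_i$ up to an $O(\gamma)$-small perturbation, and scales the image by a universal factor matching the spectrum of the same operator on the complete complex (hence depending only on $m$, $i$, and the type of the operator). Applying the $\ell$ factors of $\Why$ to $f_i \in C^k_i$ in sequence, each step preserves the degree-$i$ stratum up to an additive $O(\gamma)$ error and multiplies by a universal scalar; since $\Why$ is balanced, the trajectory returns to $C^k$, and the product of intermediate scalings yields a universal $\lambda^{\Why}_k(i)$ depending only on $k$, $i$, and the formal expression of $\Why$. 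A telescoping argument over $\ell$ sequential $O(\gamma)$ perturbations, combined with uniform bounds on the intermediate scaling factors and on norm comparisons between the strata $C^m_i$ across the $k+1$ levels, yields a bound of the form $\|\Why f_i - \lambda^{\Why}_k(i) f_i\| \le O(k^{k/2} \ell \gamma) \cdot \|f_i\|$.

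To complete the proof, I would exploit the orthogonality of the EPoset decomposition at level $k$, namely $\ip{f_i}{f_j} = 0$ for $i \ne j$. Expanding $\ip{\Why f}{f} = \sum_{i,j} \ip{\Why f_i}{f_j}$, the diagonal part contributes $\sum_i \lambda^{\Why}_k(i) \|f_i\|^2 + \sum_i \ip{\Why f_i - \lambda^{\Why}_k(i) f_i}{f_i}$, while each off-diagonal term equals $\ip{\Why f_i - \lambda^{\Why}_k(i) f_i}{f_j}$ and is controlled by Cauchy--Schwarz. Summing the $(k+1)^2$ error contributions and using $\|f_i\|, \|f_j\| \le \|f\|$ gives the claimed per-$\Why$ error bound.

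The main obstacle will be the precise error accounting inside the EPoset framework: verifying that the $\ell$ sequential $O(\gamma)$ perturbations compound additively rather than multiplicatively once the correct inner-product norms are used at each intermediate level, and tracking the degree-dependent normalizations between the $C^m_i$ spaces carefully enough to pin down the $k^{k+2}$ factor. Once this bookkeeping is settled, the reduction to pure balanced operators and the orthogonality-based cancellation of off-diagonal terms are essentially routine.
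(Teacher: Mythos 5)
Your overall architecture matches the paper's: reduce to a single pure balanced operator by linearity, show that its quadratic form against $f_i \in C^k_i$ is $\lambda^{\Why}_k(i)\ip{f_i}{f_i}$ up to an $O(\gamma)$ error (the paper's ``Pure Balanced Walks'' lemma, proved by repeatedly commuting the rightmost down operator past up operators via the EPoset relation until it annihilates $h_i \in \ker(\Dee_i)$), and then sum with weights $\alpha^{\Why}$. That part is fine, although the paper works purely with quadratic forms $\ip{\Why f_i}{f_i}$ rather than the operator-norm statement $\norm{\Why f_i - \lambda^{\Why}_k(i) f_i} \le O(\gamma)\norm{f_i}$ that you assert; the latter is stronger and you have not justified it.

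The genuine gap is your final step: you invoke ``the orthogonality of the EPoset decomposition at level $k$, namely $\ip{f_i}{f_j}=0$ for $i \ne j$.'' This is false for a general $\gamma$-HDX. The decomposition $C^k = \sum_i C^k_i$ is only \emph{approximately} orthogonal: the paper must prove $\ip{f_i}{f_j} = \pm (2k-i-j)^2\,\gamma\,\norm{h_i}\norm{h_j}$ (its Lemma on crude orthogonality, itself a consequence of the lemma on operators with an excess of down operators), and then convert this into a bound in terms of $\norm{f_i}\norm{f_j}$ via the norm-comparison lemma relating $\norm{h_i}$ to $\norm{f_i}$ through the canonical polynomials, which requires the lower bound $\beta_i \ge 1/(2\theta_{k,i})$ and is precisely where the $k^{k}$-type factor in the hypothesis on $\gamma$ comes from. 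With only approximate orthogonality, your off-diagonal terms do not reduce to $\ip{\Why f_i - \lambda^{\Why}_k(i) f_i}{f_j}$; there is an extra contribution $\lambda^{\Why}_k(i)\ip{f_i}{f_j}$ for each $i \ne j$ that must be bounded separately, and this cross-term control is a substantive, non-routine part of the proof rather than bookkeeping. Until you supply the approximate-orthogonality estimate and the $\norm{h_i}$-to-$\norm{f_i}$ comparison, the argument is incomplete.
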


\begin{remark}
  Note that our result generalizes the analysis
  of~\cite{DiksteinDFH18} for expanding posets of HDXs which
  considered the particular case $\Bee = \Dee_{k+1}\Up_k$. Moreover,
  their error term analysis treated all the parameters not depending
  on the number of vertices $n$ as constants. In this work we make the
  dependence on the parameters explicit since this dependence is
  important in understanding the limits of our $k$-CSPs approximation
  scheme on HDXs. The beautiful EPoset machinery~\cite{DiksteinDFH18}
  is instrumental in our analysis.
\end{remark}

Now, we are ready to prove~\cref{theo:sq_swap_gap}. For convenience we
restate it below.
\begin{theorem}[Swap Walk Spectral Bound (restatement of~\cref{theo:sq_swap_gap})]
  Let $X(\le d)$ be a $\gamma$-HDX with $d \ge 2k$. For every $\sigma \in
  (0,1)$, if $\gamma \le \sigma \cdot \left(64 k^{k+4}2^{3k+1}
  \right)^{-1}$, then the second largest singular value
  $\sigma_2(\rfswap{k}{k})$ of the swap operator $\rfswap{k}{k}$ is
  \[
  \sigma_2(\rfswap{k}{k}) ~\le~ \sigma.
  \]
\end{theorem}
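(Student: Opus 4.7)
The plan is to combine~\cref{lemma:swap_quadratic_form} with the explicit spectrum of Kneser graphs to identify the universal approximate eigenvalues $\lambda_{k}(i)$ and conclude that they all vanish for $i \ge 1$. First I would observe that $\rfswap{k}{k}$ is reversible with respect to $\Pi_k$: by~\cref{theo:heigh_indep} the quantity $\Pi_k(\ess)\cdot\rfswap{k}{k}(\ess,\tee)$ is proportional to $\Pi_{2k}(\ess \sqcup \tee)$ and hence symmetric in $\ess$ and $\tee$. Thus $\rfswap{k}{k}$ is self-adjoint on $C^k$ and
\[
\sigma_2(\rfswap{k}{k}) ~=~ \sup\set*{|\ip{\rfswap{k}{k} f}{f}| : f \perp \one,~\|f\| = 1}.
\]
For any such $f$, the EPoset decomposition $f = \sum_{i=0}^k f_i$ has $f_0 = 0$ since $C^k_0$ consists of constants.

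The assumed bound on $\gamma$ is precisely what is needed to apply~\cref{lemma:swap_quadratic_form} with $\epsilon := \sigma$, which yields
\[
\ip{\rfswap{k}{k} f}{f} ~=~ \sum_{i=1}^k \lambda_k(i)\cdot\ip{f_i}{f_i} ~\pm~ \sigma.
\]
The core step is then to show that $\lambda_k(i) = 0$ for every $i \ge 1$. Since the lemma promises that $\lambda_k(i)$ depends only on $k$ and $i$, I would transfer the identification to the complete complex $\Delta_{2k}(n)$, where $\rfswap{k}{k}^{\Delta}$ is the Kneser walk $K(n,k)$ with $\sigma_2 = k/(n-k)$ by~\cref{claim:square_knesere_gap}. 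Fix $i \in \{1,\ldots,k\}$ and pick any $\gamma' > 0$; choose $n$ large enough that $\Delta_{2k}(n)$ is itself a $\gamma'$-HDX satisfying the hypothesis of~\cref{lemma:swap_quadratic_form} at tolerance $\gamma'$, and select a unit test function $g \in C^k_i$ in the complete complex (such $g$ exists because $\mathbb{R}^{\binom{[n]}{k}}$ carries a nontrivial degree-$i$ isotypic component under $S_n$, which coincides with $C^k_i$ in the EPoset decomposition). Applying~\cref{lemma:swap_quadratic_form} inside $\Delta_{2k}(n)$ gives $\ip{\rfswap{k}{k}^{\Delta} g}{g} = \lambda_k(i) \pm \gamma'$, while $g \perp \one$ together with the Kneser bound gives $|\ip{\rfswap{k}{k}^{\Delta} g}{g}| \le k/(n-k)$. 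Because $\lambda_k(i)$ is independent of $n$, letting $\gamma', 1/n \to 0$ forces $\lambda_k(i) = 0$. Plugging this back yields $|\ip{\rfswap{k}{k} f}{f}| \le \sigma$ for every unit $f \perp \one$, so $\sigma_2(\rfswap{k}{k}) \le \sigma$.

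The main obstacle is the transfer step: one must verify both that the EPoset spaces $C^k_i$ are nontrivial in the complete complex $\Delta_{2k}(n)$ at every level $i \in [1,k]$, and that the error term in~\cref{lemma:swap_quadratic_form} is uniform enough that the approximation on $\Delta_{2k}(n)$ sharpens as $n \to \infty$ without perturbing the universal constant $\lambda_k(i)$. A secondary subtlety is ensuring that the $C^k_i$ subspaces are invariant (or nearly invariant) under the Kneser walk, so that the single scalar $\lambda_k(i)$ genuinely captures its action on $C^k_i$; this is consistent with~\cref{fact:kneser_singular_values}, which shows that $\sigma_2(K(n,k))$ in fact equals $k/(n-k)$ and the remaining singular values $\binom{n-k-i}{k-i}/\binom{n-k}{k}$ are even smaller, so that every degree-$i \ge 1$ component of $g$ is contracted by at most $k/(n-k)$.
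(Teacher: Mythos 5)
Your proposal is correct and follows essentially the same route as the paper: apply \cref{lemma:swap_quadratic_form} to reduce to the approximate eigenvalues $\lambda_k(i)$, identify them as zero by comparing against the Kneser walk on the complete complex $\Delta_{2k}(n)$ and letting $n\to\infty$, then conclude via the quadratic form. Your explicit verification that $\rfswap{k}{k}$ is self-adjoint with respect to $\Pi_k$ (via the symmetry of $\Pi_k(\ess)\rfswap{k}{k}(\ess,\tee)\propto\Pi_{2k}(\ess\sqcup\tee)$) is a small point the paper leaves implicit, but otherwise the arguments coincide.
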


\begin{proof}
  First we show that for $i \in [k]$ the $i$-th approximate eigenvalue
  $\lambda_k(i)$ of the swap operator $\rfswap{k}{k}$ is actually
  zero. Note that for $i \in [k]$ the space $C^k_i$ is a non-trivial
  eigenspace (i.e., $C_i^k$ is not the space of constant
  functions). Let $\rfswap{k}{k}^{\Delta}$ be the swap operator of the
  complete complex $\Delta_{d}(n)$. On one
  hand~\cref{claim:square_knesere_gap} gives
  \[
  \sigma_2(\rfswap{k}{k}^{\Delta}) ~=~ \max_{f \in C^k \colon f \perp 1, \norm{f}=1} ~\left\vert \ip{\rfswap{k}{k}^{\Delta}f}{f} \right\vert ~=~ O_k\left(\frac{1}{n}\right).
  \]
  On the other hand since $\Delta_{d}(n)$ is a $\gamma^{\Delta}$-HDX
  where $\gamma^{\Delta} = O_k(1/n)$, if $n$ is sufficiently large
  we have $\gamma^{\Delta} \le \gamma$ and
  thus~\cref{lemma:asd_general_decomposition} can be applied to give
  \[
  \sigma_2(\rfswap{k}{k}^{\Delta}) ~\ge~ \max_{f_i \in C^k_i \colon i \in [k], \norm{f_i}=1} ~\left\vert \ip{\rfswap{k}{k}^{\Delta}f_i}{f_i} \right\vert ~=~  \left\vert \lambda_{k}(i) \right\vert \cdot \ip{f_i}{f_i} ~\pm~ O_k\left(\frac{1}{n}\right).
  \]  
  Since $n$ is arbitrary and $\lambda_{k}(i)$ depends only on $k$ and
  $i$, we obtain $\lambda_{k}(i) = 0$ as claimed.  Now
  applying~\cref{lemma:asd_general_decomposition} to the swap operator
  $\rfswap{k}{k}$ of the $\gamma$-HDX $X(\le d)$ yields
  \[
  \sigma_2(\rfswap{k}{k}) ~=~ \max_{f \in C^k \colon f \perp 1, \norm{f}=1} ~\left\vert \ip{\rfswap{k}{k}f}{f} \right\vert ~\le~ \max_{i \in [k]}~\left\lvert \lambda_{k}(i) \right\vert ~+~ \sigma ~=~ \sigma,
  \]
  concluding the proof.
\end{proof}

\subsection{Expanding Posets and Balanced Operators}
We state the definitions used in our technical proofs starting with $\gamma$-EPoset from~\cite{DiksteinDFH18}.
\begin{definition}[$\gamma$-EPoset adapted from~\cite{DiksteinDFH18}]\label{def:gamma_hdx_op_def}
  A complex $X(\le d)$ with operators $\Up_{0},\dots,\Up_{d-1}$,
  $\Dee_{1},\dots,\Dee_{d}$ is said to be a
  $\gamma$-EPoset~\footnote{We tailor their general EPoset definition
    to HDXs. In fact, what they call $\gamma$-HDX we call
    $\gamma$-EPoset. Moreover, what they call $\gamma$-HD expander we
    call $\gamma$-HDX.} provided
  \begin{equation}\label{eq:gamma_hdx_approx_op}
    \norm{\matr M^+_i - \Up_{i-1}\Dee_i}_{\textup{op}} \le \gamma,
  \end{equation}
  for every $i=1,\dots,d-1$, where
  \[
  \matr M^+_i ~\coloneqq~ \frac{i+1}{i} \left(\Dee_{i+1}\Up_i - \frac{1}{i+1}\Ide\right),
  \]
  i.e., $\matr M^+_i$ is the non-lazy version of the random walk
  $\nwalk{i}{1} = \Dee_{i+1}\Up_i$.
\end{definition}

\cref{def:gamma_hdx_op_def} can be directly used as an operational
definition of high-dimension expansion as done
in~\cite{DiksteinDFH18}. To us it is important that $\gamma$-HDXs are
also $\gamma$-EPosets as established in~\cref{lemma:hdx_vs_eposet}. In
fact, these two notions are known to be closely related.

\begin{lemma}[From~\cite{DiksteinDFH18}]\label{lemma:hdx_vs_eposet}
  Let $X$ be a $d$-sized simplicial complex.
  \begin{itemize}
     \item If $X$ is a $\gamma$-HDX, then $X$ is a $\gamma$-EPoset.
     \item If $X$ is a $\gamma$-EPoset, then $X$ is a $3 d\gamma$-HDX.
  \end{itemize}
\end{lemma}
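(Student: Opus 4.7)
The plan is to prove each direction separately by translating between the global operator condition of \cref{def:gamma_hdx_op_def} and the local spectral condition of \cref{def:hdx_dk}. Both directions hinge on a single algebraic identity that decomposes the operator $\Xx_i := \matr M^+_i - \Up_{i-1}\Dee_i$ acting on $C^i$ as an average over codimension-one faces of local link operators. Concretely, unwinding the definitions of $\Up_{i-1}, \Dee_{i+1}, \Up_i, \Dee_i$ from \cref{subsec:hdx}, one checks that both $\matr M^+_i$ and $\Up_{i-1}\Dee_i$ are supported on pairs $\ess, \ess'' \in X(i)$ with $\abs{\ess \cap \ess''} = i-1$, and grouping contributions by the shared face $\tee$ yields
\[
\ip{\Xx_i f}{f} ~=~ \Ex{\tee \sim \Pi_{i-1}}{\ip{(A_\tee - P_\tee)\, f_\tee}{f_\tee}_\tee},
\]
where $A_\tee$ is the normalized adjacency operator of the skeleton $G(X_\tee)$, $P_\tee$ is the orthogonal projection onto constants with respect to the stationary measure on $X_\tee(1)$, and $f_\tee(v) := f(\tee \cup \{v\})$ is the localization of $f$. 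The role of $\Up_{i-1}\Dee_i$ in this identity is precisely to subtract the ``mean part'' of the non-lazy walk $\matr M^+_i$ at each link, which is what leaves only the deviation from the stationary projector.

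Given this identity, the forward direction ($\gamma$-HDX $\Rightarrow$ $\gamma$-EPoset) is essentially immediate: by assumption $\norm{A_\tee - P_\tee}_{\textup{op}} = \sigma_2(G(X_\tee)) \leq \gamma$ for every $\tee$, and since $\Xx_i$ is a convex combination of these local operators (each acting on orthogonal coordinates in a suitable basis), $\norm{\Xx_i}_{\textup{op}} \leq \gamma$ follows by Cauchy--Schwarz together with the volume-preservation identity \cref{prop:volpres}. I would carry out the coordinate computation carefully to make sure the cross-terms between distinct links vanish in the quadratic form, which they do because functions localized to different $\tee$ sum up correctly under $\Pi_i$.

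For the reverse direction ($\gamma$-EPoset $\Rightarrow$ $3d\gamma$-HDX) I would use downward induction on the size $j$ of the link-defining face. The base case $j = d-2$ uses the identity above in reverse: a test vector $g$ on $G(X_\tee)$ orthogonal to constants lifts to $\tilde g \in C^{d-1}$, supported on faces containing $\tee$ and normalized by $\Pi_{d-2}(\tee)^{1/2}$ so that $\norm{\tilde g} = \norm{g}_\tee$, and then $\ip{A_\tee g}{g}_\tee = \ip{\Xx_{d-1} \tilde g}{\tilde g} \leq \gamma$. For the inductive step at a face $\tee$ of size $j < d-2$, the same lift produces $\ip{\Xx_{j+1}\tilde g}{\tilde g}$ plus correction terms that arise because $\tilde g$ interacts nontrivially with neighboring links of size $j+1$; by the induction hypothesis these corrections are bounded in terms of the HDX parameters already established for deeper links, contributing an additive $O(\gamma)$ per level. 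The main obstacle, and the source of the factor $3d$, is bookkeeping this error accumulation: the measures $\Pi_j$ enter nontrivially through \cref{prop:volpres} when one relates $\norm{\tilde g}$ to $\norm{g}_\tee$ and when one re-expresses the link's quadratic form in global coordinates, and summing these losses telescopically across the $O(d)$ levels of the induction gives the $3d\gamma$ bound.
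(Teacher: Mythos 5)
The paper does not actually prove \cref{lemma:hdx_vs_eposet} --- it is imported wholesale from \cite{DiksteinDFH18} --- so there is no in-paper argument to compare against; I am assessing your proposal on its own terms. Your forward direction is essentially right. The localization identity
\[
\ip{(\matr M^+_i - \Up_{i-1}\Dee_i)f}{f} ~=~ \Ex{\tee \sim \Pi_{i-1}}{\ip{(A_\tee - P_\tee)\,f_\tee}{f_\tee}_\tee}
\]
does hold (both sides reduce, via \cref{prop:volpres}, to $\tfrac{2}{i(i+1)}\sum_{\tee}\sum_{\{u,v\}\in X_\tee(2)}\Pi_{i+1}(\tee\cup\{u,v\})f_\tee(u)f_\tee(v)$ minus the corresponding mean-squared term), and together with $\Ex{\tee}{\norm{f_\tee}_\tee^2}=\norm{f}^2$ and the self-adjointness of $\matr M^+_i - \Up_{i-1}\Dee_i$ it gives $\norm{\matr M^+_i - \Up_{i-1}\Dee_i}_{\textup{op}} \le \gamma$ immediately. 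Your aside about the links ``acting on orthogonal coordinates'' is false --- localizations to distinct $\tee$ overlap on every face --- but it is also unnecessary, since the quadratic-form decomposition is exact.

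The reverse direction, however, has a genuine gap, and your base case already exhibits it. Fix $\tee_0 \in X(j)$ and $g \perp \one$ on $X_{\tee_0}(1)$, and lift to $\tilde g \in C^{j+1}$ supported on faces containing $\tee_0$. The identity does \emph{not} isolate the $\tee_0$ term: each face $\ess = \tee_0 \cup \{v\}$ contains $j$ other $j$-subsets $\tee$, and for each such $\tee$ the localization $\tilde g_\tee$ is supported on the single vertex $u = \tee_0\setminus\tee$. The $A_\tee$ contribution of such a term vanishes (links have no self-loops), but the $P_\tee$ contribution does not; one computes
\[
\Ex{\tee\sim\Pi_j}{\ip{(A_\tee-P_\tee)\tilde g_\tee}{\tilde g_\tee}_\tee} ~=~ \Pi_j(\tee_0)\,\ip{A_{\tee_0}g}{g}_{\tee_0} ~-~ \sum_{\tee \neq \tee_0}\Pi_j(\tee)\,\pi_\tee(u)^2\, \tilde g(\tee\cup u)^2 ,
\]
and the subtracted sum is bounded only by $\sum_{\tee\neq\tee_0}\Pi_j(\tee)\pi_\tee(u)\tilde g(\tee\cup u)^2 = \tfrac{j}{j+1}\norm{\tilde g}^2$. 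This is an $\Omega(1)$ fraction of $\norm{\tilde g}^2$ unless the link atoms $\pi_\tee(u)$ happen to be small --- a property of the measures $\Pi_j$, not of $\gamma$. So the ``correction terms'' you wave at are not $O(\gamma)$ per level; moreover they come from \emph{other links of the same codimension} $j$, so the downward induction hypothesis (expansion of links of larger faces) does not feed into them. The lift is clean only at $j=0$, where $\matr M^+_1 - \Up_0\Dee_1$ literally equals $A_{G(X_\emptyset)} - P$. To recover the $3d\gamma$ bound one needs an additional idea --- e.g., showing that links of a $\gamma$-EPoset are themselves EPosets with controlled degradation and then invoking the $j=0$ case inside each link --- and that missing step is exactly where the dimension factor is earned. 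As written, your inductive step does not go through.
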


Naturally the complete complex $\Delta_d(n)$ is a $\gamma$-EPoset
since it is a $\gamma$-HDX. Moreover, in this particular case $\gamma$
vanishes as $n$ grows.
\begin{lemma}[From~\cite{DiksteinDFH18}]
  The complete complex $\Delta_d(n)$ is a $\gamma$-EPoset with $\gamma
  = O_d\left(1/n\right)$.
\end{lemma}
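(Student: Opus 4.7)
The plan is to directly compute both $\matr M^+_i$ and $\Up_{i-1}\Dee_i$ on the complete complex $\Delta_d(n)$, observe that they differ by an operator of the form $(\text{scalar}) \cdot (\Ess - \Ide)$ where $\Ess$ is a stochastic ``swap'' operator, and read off the bound from the scalar.

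First I would use the explicit formulas for $\Up_i$ and $\Dee_{i+1}$ from \cref{subsec:natural_walks}, together with the fact that on $\Delta_d(n)$ we have $\Pi_i(\ess) = 1/\binom{n}{i}$, so $\Pi_{i+1}(\ess \sqcup \{x\})/\Pi_i(\ess) = (i+1)/(n-i)$. Unpacking $\Dee_{i+1}\Up_i$ on a face $\ess \in X(i)$ and splitting the resulting double sum according to whether the inner element is the newly added one or lies in $\ess$ gives
\[
[\Dee_{i+1}\Up_i f](\ess) \;=\; \frac{1}{i+1}\,f(\ess) \;+\; \frac{1}{(i+1)(n-i)} \sum_{y \in \ess} \sum_{x \notin \ess} f\bigl((\ess \setminus \{y\}) \cup \{x\}\bigr),
\]
and hence $[\matr M^+_i f](\ess) = \frac{1}{i(n-i)} \sum_{y \in \ess}\sum_{x \notin \ess} f((\ess\setminus\{y\})\cup\{x\})$, the non-lazy swap walk. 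An entirely analogous computation for $\Up_{i-1}\Dee_i$, splitting according to whether the added element coincides with the removed one, yields
\[
[\Up_{i-1}\Dee_i f](\ess) \;=\; \frac{1}{n-i+1}\,f(\ess) \;+\; \frac{1}{i(n-i+1)} \sum_{y \in \ess}\sum_{x \notin \ess} f\bigl((\ess\setminus\{y\})\cup\{x\}\bigr).
\]

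Subtracting and collecting the coefficient $\frac{1}{i(n-i)} - \frac{1}{i(n-i+1)} = \frac{1}{i(n-i)(n-i+1)}$, the difference becomes
\[
\matr M^+_i - \Up_{i-1}\Dee_i \;=\; \frac{1}{n-i+1}\,(\Ess - \Ide),
\]
where $\Ess$ is the stochastic swap operator $[\Ess f](\ess) = \frac{1}{i(n-i)}\sum_{y\in \ess}\sum_{x\notin\ess} f((\ess\setminus\{y\})\cup\{x\})$. Since $\Ess$ is row-stochastic and self-adjoint with respect to the uniform measure, $\|\Ess\|_{\textup{op}} \le 1$, so by the triangle inequality
\[
\|\matr M^+_i - \Up_{i-1}\Dee_i\|_{\textup{op}} \;\le\; \frac{2}{n-i+1} \;\le\; \frac{2}{n-d+2} \;=\; O_d(1/n),
\]
uniformly in $1 \le i \le d-1$, which is the claim.

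I do not expect any real obstacle: the argument is a bookkeeping exercise in the definitions, and the only small care needed is (i) splitting the inner sums cleanly into the ``lazy'' and ``swap'' parts in both products, and (ii) verifying that the same swap operator $\Ess$ appears in both expressions so that the difference collapses to a scalar multiple of $\Ess - \Ide$. Once that is done, the operator-norm bound is immediate.
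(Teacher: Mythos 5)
Your computation is correct. The paper itself gives no proof of this lemma — it is stated as a citation to Dikstein \etal — so there is nothing to compare against; your argument is the natural self-contained verification. The key identities check out: on $\Delta_d(n)$ one has $\Pi_{i+1}(\ess\sqcup\{x\})/\Pi_i(\ess)=(i+1)/(n-i)$, which gives $\matr M^+_i=\Ess$ exactly (the non-lazy swap walk) and $\Up_{i-1}\Dee_i=\tfrac{1}{n-i+1}\Ide+\tfrac{n-i}{n-i+1}\Ess$, so the difference collapses to $\tfrac{1}{n-i+1}(\Ess-\Ide)$ as you claim, and the operator-norm bound $2/(n-i+1)\le 2/(n-d+2)=O_d(1/n)$ follows since $\Ess$ is symmetric and row-stochastic with respect to the uniform measure $\Pi_i$. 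The only cosmetic remark: the scalar you isolate, $\tfrac{1}{i(n-i)}-\tfrac{1}{i(n-i+1)}=\tfrac{1}{i(n-i)(n-i+1)}$, is the coefficient of the double sum, which equals $\tfrac{1}{n-i+1}$ only after multiplying back by the $i(n-i)$ normalization of $\Ess$; your final displayed identity is nevertheless correct, and the bound holds uniformly over $1\le i\le d-1$ as required by \cref{def:gamma_hdx_op_def}.
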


\subsubsection*{Harmonic Analysis on Simplicial Complexes}\label{subsec:hdx_harmonic_analysis}

The space $C^k$ defined in~\cref{subsec:natural_walks} can be
decomposed into subspaces $C^k_i$ of functions of \textit{degree} $i$
for $0 \le i \le k$ where
\[
C^k_i ~\coloneqq~ \{\Up^{k-i}  h_i~\vert~h_i \in H_i \},
\]
with $H_i ~\coloneqq~ \ker{\left(\Dee_i\right)}$, and
\[
C^{k}_{0} ~\coloneqq~ \{ f \colon X(k) \to \mathbb{R}~|~\textrm{ $ f$ is constant}\}.
\]
More precisely, we have the following.
\begin{lemma}[From~\cite{DiksteinDFH18}]\label{lemma:a_hdx_space_decomp}
  \[
  C^k ~=~ \sum_{i=0}^{k} C^{k}_i.
  \]
\end{lemma}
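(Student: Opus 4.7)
The plan is to prove the decomposition by induction on $k$. The base case $k = 0$ should be immediate: $C^0$ consists of functions on $X(0) = \{\emptyset\}$, which is one-dimensional, and by the paper's convention $C^0_0$ is precisely the space of constants, so $C^0 = C^0_0$.

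For the inductive step, I will use the fact (recorded in the Preliminaries) that $\Up_{k-1}$ and $\Dee_k$ are mutual adjoints with respect to the inner products on $C^{k-1}$ and $C^k$ induced by $\Pi_{k-1}$ and $\Pi_k$. In the finite-dimensional setting this immediately yields the orthogonal direct sum
\[
C^k ~=~ \ker(\Dee_k) \oplus \mathrm{Im}(\Up_{k-1}) ~=~ H_k \oplus \Up_{k-1}(C^{k-1}).
\]
Hence any $f \in C^k$ can be written as $f = h_k + \Up_{k-1} g$ with $h_k \in H_k$ (contributing the top piece $C^k_k$) and $g \in C^{k-1}$.

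Applying the inductive hypothesis to $g$, I get $g = \sum_{i=0}^{k-1} \Up^{k-1-i} h_i$, where $h_i \in H_i$ for $i \geq 1$ and the $i = 0$ term $\Up^{k-1} h_0$ is a constant function (an element of $C^{k-1}_0$). Composing with $\Up_{k-1}$ and interpreting $\Up^{k-i}$ as the iterated composition $\Up_{k-1} \Up_{k-2} \cdots \Up_i$, I obtain
\[
f ~=~ h_k ~+~ \Up_{k-1}\sum_{i=0}^{k-1} \Up^{k-1-i} h_i ~=~ \sum_{i=0}^{k} \Up^{k-i} h_i,
\]
which displays $f$ as a member of $\sum_{i=0}^k C^k_i$, since $\Up$ maps constants to constants so the $i=0$ summand lands in $C^k_0$.

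The proof is short because the only real ingredient is the adjoint relation $\Up_{k-1} = \Dee_k^{\dagger}$, which is given. The main things to be careful about are purely bookkeeping: interpreting the iterated operator $\Up^{k-i}$ correctly as $\Up_{k-1} \cdots \Up_i$, and handling the $i = 0$ case (where the paper separately defines $C^k_0$ to be the constants, which is consistent with the above because the up-operator preserves constants). I do not anticipate any substantial obstacle; the content of the lemma is essentially the spectral decomposition of $C^k$ relative to the self-adjoint chain of up-down operators.
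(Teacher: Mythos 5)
Your proof is correct and follows essentially the same route as the paper's: induction on $k$, the orthogonal decomposition $C^k = \ker(\Dee_k)\oplus \im(\Up_{k-1})$ coming from the adjointness $\Up_{k-1}=\Dee_k^{\dag}$, and then pushing the inductive decomposition of $C^{k-1}$ forward via $\Up_{k-1}$ (the paper phrases this as $\Up_{k-1}C^{k-1}_i = C^k_i$). Your handling of the $i=0$ and $i=k$ edge cases matches the paper's conventions, so there is nothing to add.
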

\cref{lemma:a_hdx_space_decomp} is proven in~\cref{app:harmonic_hdx}
as~\cref{lemma:hdx_space_decomp}.

For convenience set $\vec{\delta} \in \mathbb{R}^{d-1}$ such that
$\delta_i = 1/(i+1)$ for $i \in [d-1]$. It will also be convenient to
work with the following equivalent version
of~\cref{eq:gamma_hdx_approx_op}
\begin{equation}\label{eq:complete_skewed_commutation}
  \norm{\Dee_{i+1}\Up_i -(1-\delta_i) \Up_{i-1}\Dee_i - \delta_i \Ide}_{\textup{op}} ~\le~ \frac{i}{i+1} \gamma.
\end{equation}

Towards our goal of understanding quadratic forms of swap operators we
study the approximate spectrum of operators of the form $\Why
=\Why_{\ell}\dots\Why_1$ where each $\Why_i$ is either an up or down
operator, namely, $\Why$ is a generalized random walk of $\ell$
steps. We regard the expression $\Why_{\ell}\dots\Why_1$ defining
$\Why$ as a formal product.
\begin{definition}[Pure Balanced Operator]
  We call $\Why \colon C^k \to C^k$ a pure balanced operator if $\Why$
  can be defined as product $\Why_{\ell}\dots\Why_1$~\footnote{For the
    analysis it is convenient to order the indices appearing in
    $\Why_{\ell}\dots\Why_1$ in decreasing order from left to right.}
  where each $\Why_i$ is either an up or down operator. When we say
  that the spectrum of $\Why$ depends on $\Why$ we mean that it
  depends on $k$ and on the formal expression $\Why_{\ell}\dots\Why_1$
  (i.e., pattern of up and down operators).
\end{definition}
\begin{remark}
  By definition canonical walks $\rnwalk{k}{k}{u}$ are \textit{pure
    balanced} operators.
\end{remark}
Taking linear combinations of \textit{pure balanced} operators leads
to the notion of \textit{balanced} operators.
\begin{definition}[Balanced Operator]
  We call $\Bee \colon C^k \to C^k$ a balanced operator provided there
  exists a set of pure balanced operators $\mathcal{Y}$ such that
  \[
  \Bee ~=~ \sum_{\Why \in \mathcal{Y}} \alpha^{\Why} \cdot \Why,
  \]
  where $\alpha^{\Why} \in \mathbb{R}$.
\end{definition}
\begin{remark}
  \cref{cor:swap_rectangular_inverse} establishes that
  $\rswap{k}{k}{u}$ are \textit{balanced} operators. In particular,
  $\rfswap{k}{k}$ is a \textit{balanced} operator.
\end{remark}

It turns out that at a more crude level the behavior of $\Why$ is
governed by how the number of up operators compares to the number of
down operators. For this reason, it is convenient to
define~$\mathcal{U}(\Why)= \{ \Why_i~\vert~\Why_i \textup{ is an up
  operator} \}$ and~$\calD(\Why) = \{\Why_i~\vert~\Why_i \textup{ is a
  down operator} \}$ where $\Why$ is a \textit{pure balanced}
operator. When $\Why$ is clear in the context we use $\mathcal{U} =
\mathcal{U}(\Why)$ and~$\calD = \calD(\Why)$.

Henceforth we assume $h_i \in H_i = \ker\left(\Dee_i\right)$, $f_i \in
C^k_i$ and $g \in C^k$. This convention will make the statements of
the technical results of~\cref{subsec:asd_tech_result} cleaner.

\subsection{Quadratic Forms over Balanced Operators}\label{subsec:asd_tech_result}

Now we establish all the technical results leading to and including
the analysis of quadratic forms over \textit{balanced operators}. By
considering this general class of operators our analysis generalizes
the analysis given in~\cite{DiksteinDFH18}. At the same time we refine
their error terms analysis by making the dependence on the EPoset
parameters explicit. Recall that an explicit dependence on these
parameters is important in understanding the limits of our $k$-CSP
approximation scheme.

\begin{lemma}[General Quadratic Form (restatement of~\cref{lemma:asd_general_decomposition})]
  Let $\epsilon \in (0,1)$ and let $\mathcal{Y} \subseteq
  \{\Why~\vert~\Why \colon C^k \to C^k\}$ be a collection of formal
  operators that are product of an equal number of up and down walks
  (i.e., pure balanced operators) not exceeding $\ell$ walks. Let
  $\Bee = \sum_{\Why \in \mathcal{Y}} \alpha^{\Why} \Why$ where
  $\alpha^{\Why} \in \mathbb{R}$ and let $f = \sum_{i=0}^k f_i$ with
  $f_i \in C^k_i$. If $\gamma \le \epsilon \left(16 k^{k+2}\ell^2
  \sum_{\Why \in \mathcal{Y}} \vert \alpha^{\Why} \vert \right)^{-1}$,
  then
  \[
  \ip{\Bee f}{f} ~=~ \sum_{i=0}^k \left(\sum_{\Why \in \mathcal{Y}} \alpha^{\Why} \lambda^{\Why}_{k}(i)\right) \cdot \ip{f_i}{f_i} ~\pm~ \epsilon,
  \]
  where $\lambda^{\Why}_{k}(i)$ depends only on the operators
  appearing in the formal expression of $\Why$, $i$ and $k$, i.e.,
  $\lambda^{\Why}_{k}(i)$ is the approximate eigenvalue of $\Why$
  associated to $C^k_i$.
\end{lemma}
Since swap walks are \textit{balanced operators}, we will deduced the
following (as proven later).
\begin{lemma}[Swap Quadratic Form (restatement of~\cref{lemma:swap_quadratic_form})]
  Let $f = \sum_{i=0}^k f_i$ with $f_i \in C^k_i$. Suppose $X(\le d)$ is a
  $\gamma$-HDX with $d \ge 2k$. If $\gamma \le \epsilon \left(64
  k^{k+4}2^{3k+1} \right)^{-1}$, then
  \[
  \ip{\rfswap{k}{k} f}{f} ~=~ \sum_{i=0}^k \lambda_{k}(i) \cdot \ip{f_i}{f_i}~\pm~\epsilon,
  \]
  where $\lambda_{k}(i)$ depends on only on $k$ an $i$, i.e.,
  $\lambda_{k}(i)$ is an approximate eigenvalue of $\rfswap{k}{k}$
  associated to space $C^k_i$.
\end{lemma}

The next result,~\cref{lemma:asd_exact_u_operator}, (implicit
in~\cite{DiksteinDFH18}) will be key in establishing that the spectral
structure of $\gamma$-EPosets is fully determined by the parameters in
$\vec{\delta}$ provided $\gamma$ is small enough. Note that the
Eposet~\cref{def:gamma_hdx_op_def} provides a ``calculus'' for
rearranging a single pair of up and down $\Dee \Up$. The next result
treats the more general case of $\Dee\Up\cdots \Up$.
\begin{lemma}[Structure Lemma]\label{lemma:asd_exact_u_operator}
  Suppose $\lvert \calD \rvert = 1$. Let $\Why_{c} \in \calD$ be the
  unique down operator in $\Why_{\ell}\dots\Why_1$. If
  $\norm{\Aye}_{\textup{op}} \le 1$, then
  \[
  \ip{\Aye \Why_{\ell} \dots \Why_1 h_i}{g} ~=~ \begin{cases} 0 & \textup{if $\ell = 1$ or $c = 1$}\\
    Q_{c,i}(\vec{\delta}) \cdot \ip{\Aye \Up^{\ell-2} h_i}{g}~\pm~\left(c-1\right) \cdot \gamma \norm{h_i} \norm{g} & \textup{otherwise},
    \end{cases}
  \]
  where $Q_{c,i}$ is a polynomial in the variables $\vec{\delta}$
  depending on $c,i$ such that $Q_{c,i}(\vec{\delta}) \le 1$.
\end{lemma}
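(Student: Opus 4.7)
The plan is to iteratively push the lone down operator $\Why_c = \Dee$ one slot rightward in the formal product using the rearranged EPoset identity~\eqref{eq:complete_skewed_commutation},
\[ \Dee_{m+1}\Up_m \;=\; (1-\delta_m)\,\Up_{m-1}\Dee_m \;+\; \delta_m\,\Ide \;+\; \Ee_m, \qquad \norm{\Ee_m}_{\textup{op}} \le \gamma, \]
until $\Dee$ reaches position $1$ and annihilates $h_i \in \ker(\Dee_i)$. The two base cases $\ell = 1$ and $c = 1$ are immediate: the first-applied operator is then $\Dee$ acting directly on $h_i$, giving $0$.

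In the main case $\ell \ge 2$ and $c \ge 2$, do $c-1$ successive applications of the identity. Since $\lvert \calD \rvert = 1$, every other $\Why_r$ is an $\Up$, so tracking dimensions from $h_i \in C^i$ upward one finds that push $j$ (for $j = 1,\dots,c-1$) commutes $\Dee_{i+c-j}$ past $\Up_{i+c-j-1}$. Each swap produces (i) a main term with coefficient $(1-\delta_{i+c-j-1})$ in which $\Dee$ has shifted rightward by one slot, (ii) a side term with coefficient $\delta_{i+c-j-1}$ in which the $\Dee\Up$ pair collapses to $\Ide$, leaving exactly $\ell-2$ up operators whose indices form the consecutive sequence $i, i+1, \dots, i+\ell-3$ and hence compose to the same vector $\Up^{\ell-2} h_i$ independently of $j$, and (iii) an operator-norm-$\gamma$ error.

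After all $c-1$ pushes the surviving main chain contains $\Dee_i h_i = 0$. Summing the side-chain coefficients and telescoping gives
\[ Q_{c,i}(\vec\delta) \;=\; \sum_{j=1}^{c-1}\delta_{i+c-j-1}\prod_{r=1}^{j-1}(1-\delta_{i+c-r-1}) \;=\; 1-\prod_{r=1}^{c-1}(1-\delta_{i+c-r-1}) \;\in\; [0,1], \]
which is a polynomial in $\vec\delta$ depending only on $(c,i)$ (through the indices $i,\ldots,i+c-2$) and is bounded by $1$ since each $\delta_m \in [0,1]$. Applying $\Aye$ and pairing with $g$ yields the main contribution $Q_{c,i}(\vec\delta)\,\ip{\Aye\,\Up^{\ell-2} h_i}{g}$.

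For the error bound, note that $\Up$ and $\Dee$ are mutual adjoints and $\Dee_{i+1}\Up_i$ is a self-adjoint stochastic operator of top eigenvalue $1$, so $\norm{\Up_m}_{\textup{op}} = \norm{\Dee_{m+1}}_{\textup{op}} = 1$; combined with $\norm{\Aye}_{\textup{op}} \le 1$ and the fact that the accumulated scalar multipliers $\prod(1-\delta)$ stay in $[0,1]$, each of the $c-1$ error operators $\Ee_m$ contributes at most $\gamma$ in operator norm, and Cauchy--Schwarz gives total error at most $(c-1)\gamma\,\norm{h_i}\norm{g}$. The main step that requires care is the bookkeeping: verifying that every side chain really does reduce to the same $\Up^{\ell-2} h_i$ regardless of where the collapsed $\Dee\Up$ pair sat, and that the $\delta$ appearing at push $j$ depends only on $(c,i,j)$ so that $Q_{c,i}$ is well-defined as a function of $(c,i)$ alone. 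Once these are confirmed, the telescoping identity finishes the $Q_{c,i} \le 1$ bound and completes the argument.
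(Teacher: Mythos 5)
Your proof is correct and takes essentially the same route as the paper: both repeatedly apply the EPoset commutation identity to move the lone down operator toward $h_i$ until it annihilates $h_i \in \ker(\Dee_i)$, the paper phrasing this as an induction on $(\ell,c)$ with the recursion $Q_{c,i} = (1-\delta_j)Q_{c-1,i} + \delta_j$ while you unroll the iteration and record the telescoped closed form $Q_{c,i} = 1 - \prod_{m=i}^{i+c-2}(1-\delta_m)$. The bookkeeping you flag (every side chain collapsing to the same $\Up^{\ell-2}h_i$, and the error accumulating as $(c-1)\gamma\norm{h_i}\norm{g}$) matches the paper's inductive step exactly.
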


\begin{proof}
  We induct on $(\ell, c)$. If $\ell = 1$ or $c = 1$, we have $\Why_1
  h_i = \Dee_i h_i = 0$ so the result trivially holds. Otherwise, we
  have $\Why_{c}\Why_{c-1} = \Dee_{j+1}\Up_j$ where $j=i+c-2$. Then
  \begin{align*}
    \ip{\Aye \Why_{\ell} \dots \Why_{c+1}(\Why_{c}\Why_{c-1})  \Why_{c-2} \dots \Why_1 h_i}{g},
  \end{align*}
  becomes
  {\small
    \begin{align*}
      & (1-\delta_j) \cdot \ip{\Aye \Why_{\ell} \dots \Why_{c+1}\Up_{j-1}\Dee_j \Why_{c-2} \dots \Why_1h_i}{g}~+~
       \delta_j\cdot \ip{\Aye \Why_{\ell} \dots \Why_{c+1} \Why_{c-2} \dots \Why_1 h_i}{g}~\pm~\gamma \norm{h_i}\norm{g} && \textup{(\cref{eq:gamma_hdx_approx_op})}\\
      & ~=~  (1-\delta_j) \cdot \ip{\Aye \Why_1\dots \Why_{c-1} \Up_{j-1}\Dee_j \Why_{c+2}\dots\Why_{\ell}h_i}{g}~+~
       \delta_j \cdot \ip{\Aye \Up^{\ell - 2} h_i}{g}~\pm~\gamma \norm{h_i}\norm{g}\\
      & ~=~  (1 -\delta_j)\cdot Q_{c-1,i}(\vec{\delta}) \cdot \ip{\Aye \Up^{\ell - 2} h_i}{g}~\pm~(1 -\delta_j)\cdot \left(c-2\right) \gamma \norm{h_i}\norm{g}
       ~+~\delta_j \cdot \ip{\Aye \Up^{\ell - 2} h_i}{g}~\pm~\gamma \norm{h_i}\norm{g} && \textup{(I.H.)}\\
      & ~=~  Q_{c,i}(\vec{\delta}) \cdot \ip{\Aye \Up^{\ell - 2} h_i}{g}~\pm~\left(c - 1 \right) \cdot \gamma \norm{h_i}\norm{g}.    
  \end{align*}}
\end{proof}

With~\cref{lemma:asd_exact_u_operator} we are close to recover the
approximate spectrum of $\Dee_{k+1} \Up_k$
from~\cite{DiksteinDFH18}. However, in our application we will need to
analyze more general operators, namely, \textit{pure balanced} and
\textit{balanced} operators.

\begin{lemma}[Refinement of~\cite{DiksteinDFH18}]\label{lemma:eigenvalue_single_app}
  If $\norm{\Aye}_{\textup{op}} \le 1$, then
  \[
  \ip{\Aye \Dee_{k+1} \Up_k f_i}{g} ~=~ \lambda_i \cdot \ip{\Aye f_i}{g}~\pm~(k-i+1) \cdot \gamma \norm{h_i} \norm{g},
  \]
  where $\lambda_i = Q_{k-i+2,i}(\vec{\delta})$.
\end{lemma}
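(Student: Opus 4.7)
The plan is to prove \cref{lemma:eigenvalue_single_app} as an immediate corollary of the Structure Lemma (\cref{lemma:asd_exact_u_operator}), by rewriting $\Dee_{k+1} \Up_k f_i$ in a form that matches the hypothesis of the latter.

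First I would unfold the definition of the subspace $C^k_i$. By convention $f_i \in C^k_i$ means $f_i = \Up^{k-i} h_i$ for some $h_i \in H_i = \ker(\Dee_i)$, where $\Up^{k-i}$ stands for the formal product $\Up_{k-1}\Up_{k-2} \cdots \Up_i$ (empty when $i=k$). Substituting this into the quadratic form yields
\[
\ip{\Aye \Dee_{k+1} \Up_k f_i}{g} ~=~ \ip{\Aye \, \Dee_{k+1} \Up_k \Up_{k-1} \cdots \Up_i \, h_i}{g}.
\]
This is precisely a quantity of the form $\ip{\Aye \Why_{\ell} \cdots \Why_1 h_i}{g}$ considered in \cref{lemma:asd_exact_u_operator}, where $\Why_1 = \Up_i, \Why_2 = \Up_{i+1}, \ldots, \Why_{k-i+1} = \Up_k$ and $\Why_{k-i+2} = \Dee_{k+1}$. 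In particular, $\ell = k - i + 2$ and there is a single down operator, sitting at position $c = k-i+2 = \ell$, so the hypothesis $\lvert \calD \rvert = 1$ of the Structure Lemma is satisfied.

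Now I would simply invoke \cref{lemma:asd_exact_u_operator}. Since $c = \ell = k - i + 2 \geq 2$ (and we may treat the degenerate cases $\ell = 1$ or $c = 1$, which only arise for $k = i - 1$ and $k = i$ respectively, separately; when $i = k$ the operator is $\Dee_{k+1}\Up_k$ acting on $h_k$ and the lemma still applies with $\ell = 2$, $c = 2$), the nontrivial branch of the Structure Lemma gives
\[
\ip{\Aye \Why_{\ell} \cdots \Why_1 h_i}{g} ~=~ Q_{c,i}(\vec{\delta}) \cdot \ip{\Aye \Up^{\ell-2} h_i}{g} ~\pm~ (c-1)\,\gamma \norm{h_i} \norm{g}.
\]
Substituting $\ell - 2 = k - i$ and $c = k-i+2$, and recognizing $\Up^{k-i} h_i = f_i$ on the right-hand side, this becomes
\[
\ip{\Aye \Dee_{k+1} \Up_k f_i}{g} ~=~ Q_{k-i+2,i}(\vec{\delta}) \cdot \ip{\Aye f_i}{g} ~\pm~ (k-i+1)\,\gamma \norm{h_i} \norm{g},
\]
which is exactly the claim with $\lambda_i = Q_{k-i+2,i}(\vec{\delta})$.

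There is essentially no technical obstacle here beyond bookkeeping: the entire content of the lemma has been absorbed into the Structure Lemma. The only subtlety I would double-check is the boundary case $i = k$ (where no $\Up$ operators precede $\Dee_{k+1}$, so we land directly in the $\ell = 2, c = 2$ branch, giving $\lambda_k = Q_{2,k}(\vec{\delta})$ with no nontrivial $\Up$ expansion of $f_k$), and to make sure the error count $c - 1 = k - i + 1$ is exact, corresponding to one $\gamma$-error per swap performed as $\Dee_{k+1}$ is commuted all the way to the right. The norm $\norm{h_i}$ on the right-hand side (rather than $\norm{f_i}$) is inherited directly from the Structure Lemma and is indeed what gets propagated in downstream applications.
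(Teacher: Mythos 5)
Your proposal is correct and follows exactly the paper's own proof: unfold $f_i = \Up^{k-i} h_i$, recognize $\Dee_{k+1}\Up_k\Up^{k-i}$ as a product of $\ell = k-i+2$ operators with the unique down operator at position $c = \ell$, and invoke the Structure Lemma (\cref{lemma:asd_exact_u_operator}) to read off $\lambda_i = Q_{k-i+2,i}(\vec{\delta})$ and the error $(c-1)\gamma\norm{h_i}\norm{g} = (k-i+1)\gamma\norm{h_i}\norm{g}$. The extra attention you give to the boundary case $i=k$ is fine but not needed beyond what the Structure Lemma already handles.
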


\begin{proof}
  Recall that $f_i = \Up^{k-i} h_i$ where $h_i \in
  \ker\left(\Dee_i\right)$. Set $\Why = \Dee_{k+1} \Up_k
  \Up^{k-i}$.~\cref{lemma:asd_exact_u_operator} yields
  \[
  \ip{\Aye \Dee_{k+1} \Up_k f_i}{g} ~=~ \lambda_i \cdot \ip{\Aye f_i}{g}~\pm~(k-i+1) \cdot \gamma \norm{h_i} \norm{g},
  \]
  where $\lambda_i = Q_{k-i+2,i}(\vec{\delta})$.
\end{proof}

%
%
Then powers of the operator $\Dee_{k+1}\Up_k$ behave as expected.
\begin{lemma}[Exponentiation Lemma]\label{cor:asd_exponentiation}
  \[
  \ip{\left(\Dee_{k+1}\Up_k\right)^{s}f_i}{f_i} ~=~ \lambda_i^{s} \cdot \norm{f_i}^2~\pm~s \cdot (k-i+1) \cdot \gamma \norm{h_i}\norm{f_i},
  \]
  where $\lambda_i$ is given in~\cref{lemma:eigenvalue_single_app}.
\end{lemma}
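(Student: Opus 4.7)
The natural approach is induction on $s$. The base case $s=1$ is immediate from~\cref{lemma:eigenvalue_single_app} applied with $\Aye = \Ide$ and $g = f_i$, which has $\norm{\Ide}_{\textup{op}} = 1$ and gives precisely the desired estimate.

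For the inductive step, the plan is to factor $(\Dee_{k+1}\Up_k)^s = (\Dee_{k+1}\Up_k)^{s-1} \cdot \Dee_{k+1}\Up_k$ and apply~\cref{lemma:eigenvalue_single_app} with $\Aye = (\Dee_{k+1}\Up_k)^{s-1}$ and $g = f_i$. For this to be legal we need $\norm{\Aye}_{\textup{op}} \le 1$, which follows because $\Dee_{k+1}\Up_k = \Up_k^{\dag}\Up_k$ is positive semidefinite with operator norm at most $1$ (indeed, $\Dee_{k+1}\Up_k$ is the stochastic operator for the canonical walk $\nwalk{k}{1}$ and so preserves $\norm{\cdot}$ bounds on each slice with respect to the measure induced by $\Pi_k$), and the operator norm is submultiplicative. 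Invoking the lemma yields
\[
\ip{(\Dee_{k+1}\Up_k)^{s}f_i}{f_i} ~=~ \lambda_i \cdot \ip{(\Dee_{k+1}\Up_k)^{s-1}f_i}{f_i} ~\pm~ (k-i+1)\cdot \gamma \norm{h_i}\norm{f_i}.
\]

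Now plug in the inductive hypothesis for the right-hand inner product and expand:
\[
\ip{(\Dee_{k+1}\Up_k)^{s}f_i}{f_i} ~=~ \lambda_i \cdot \Bigl(\lambda_i^{s-1}\norm{f_i}^2 \pm (s-1)(k-i+1)\gamma\norm{h_i}\norm{f_i}\Bigr) \pm (k-i+1)\gamma\norm{h_i}\norm{f_i}.
\]
Since $\lambda_i = Q_{k-i+2,i}(\vec{\delta})$ satisfies $|\lambda_i| \le 1$ by the bound stated in~\cref{lemma:asd_exact_u_operator}, the two error terms combine to at most $s(k-i+1)\gamma\norm{h_i}\norm{f_i}$, giving exactly the claimed bound.

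The only subtle point that needs a line of justification is the operator-norm bound $\norm{(\Dee_{k+1}\Up_k)^{s-1}}_{\textup{op}} \le 1$; this is not entirely trivial since the inner products on $C^k$ and $C^{k+1}$ are weighted by $\Pi_k$ and $\Pi_{k+1}$, but with these weights $\Up_k$ and $\Dee_{k+1}$ are mutually adjoint and $\Up_k \Pi_{k+1} = \Pi_k$ shows that the random walk $\Dee_{k+1}\Up_k$ on $C^k$ is self-adjoint and stochastic, hence has spectral radius (and operator norm) equal to $1$. Everything else is bookkeeping on the induction, and no new appeal to the EPoset axiom is needed beyond what is already packaged into~\cref{lemma:eigenvalue_single_app}.
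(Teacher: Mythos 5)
Your proof is correct and is essentially the paper's argument: the paper's own proof is the one-line remark that the claim ``follows immediately from the foregoing and the fact that $\norm{\Dee_{k+1}\Up_k}_{\textup{op}} = 1$,'' i.e.\ exactly your induction via \cref{lemma:eigenvalue_single_app} with $\Aye = (\Dee_{k+1}\Up_k)^{s-1}$, using $\norm{\Dee_{k+1}\Up_k}_{\textup{op}}\le 1$ and $Q_{c,i}(\vec\delta)\in[0,1]$ to telescope the errors. You have simply filled in the bookkeeping the paper omits.
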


\begin{proof}
  Follows immediately from the foregoing and the fact that
  $\norm{\Dee_{k+1}\Up_k}_{\textup{op}} = 1$.
\end{proof}

In case $\lvert \calD \rvert > \lvert \calU \rvert$, $\Why \colon C^i
\to C^j$ is an operator whose kernel approximately contains
$\ker(\matr D_i)$ as the following lemma makes precise.
\begin{lemma}[Refinement of~\cite{DiksteinDFH18}]\label{lemma:asd_extra_d_operator}
  If $\lvert \calD \rvert > \lvert \calU \rvert$ and $h_i \in
  \ker\left(\Dee_i\right)$, then
  \[
  \ip{\Aye \Why_{\ell} \dots \Why_1 h_i}{g} ~=~ \pm \ell^2 \cdot \gamma \norm{h_i} \norm{g},
  \]
  provided $\norm{\Aye}_{\textup{op}} \le 1$.
\end{lemma}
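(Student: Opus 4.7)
The plan is to induct on the length $\ell$. The base case $\ell = 1$ is immediate: since $|\calD| > |\calU|$ forces $|\calU| = 0$, the lone operator $\Why_1$ is a down operator and $\Why_1 h_i = \Dee_i h_i = 0$, so the bound holds trivially.

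For $\ell \geq 2$, if $\Why_1$ is itself a down operator the same argument applies. Otherwise, let $c \geq 2$ be the smallest index with $\Why_c$ a down operator (which exists since $|\calD| \geq 1$); by minimality of $c$, the operators $\Why_1,\ldots,\Why_{c-1}$ are all up operators. I plan to push the $\Dee$ at position $c$ rightward through the $\Up$'s, one step at a time, using the EPoset approximate commutation relation~\cref{eq:complete_skewed_commutation}, i.e.,
\[
\Dee_{j+1}\Up_j ~=~ (1-\delta_j)\,\Up_{j-1}\Dee_j ~+~ \delta_j\,\Ide ~\pm~ \gamma,
\]
until after $c-1$ such commutations the $\Dee$ reaches position $1$ and annihilates $h_i$.

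Concretely, at the $s$-th commutation step the relevant inner product splits as
\[
\ip{\Aye\,\Why^{(s-1,0)} h_i}{g} ~=~ (1-\delta_{j_s})\ip{\Aye\,\Why^{(s,0)} h_i}{g} ~+~ \delta_{j_s}\ip{\Aye\,\Why^{(s,1)} h_i}{g} ~\pm~ \gamma \norm{h_i}\norm{g},
\]
where $\norm{\Aye}_{\textup{op}} \leq 1$ together with the contractivity of the individual up/down operators is used to absorb the factors flanking the replaced $\Dee\Up$ pair into the $\gamma$-error. The operator $\Why^{(s,0)}$ retains length $\ell$ with $\Dee$ one position further right, while $\Why^{(s,1)}$ has length $\ell - 2$ and still satisfies $|\calD'| - |\calU'| = |\calD| - |\calU| \geq 1$, so the inductive hypothesis bounds its contribution by $(\ell-2)^2 \gamma \norm{h_i}\norm{g}$.

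Unrolling the recursion for all $c-1 \leq \ell-1$ steps, the final ``main'' branch $\Why^{(c-1,0)}$ places $\Dee$ at position $1$ and hence vanishes on $h_i$. Collecting the contributions via the telescoping identity $\sum_{s=1}^{c-1}\delta_{j_s}\prod_{s'<s}(1-\delta_{j_{s'}}) = 1 - \prod_{s=1}^{c-1}(1-\delta_{j_s}) \leq 1$ and the analogous bound on the accumulated commutation errors, one obtains
\[
\left|\ip{\Aye\,\Why h_i}{g}\right| ~\leq~ \bigl[(\ell-2)^2 + (c-1)\bigr] \gamma \norm{h_i}\norm{g} ~\leq~ \bigl[(\ell-2)^2 + (\ell-1)\bigr] \gamma \norm{h_i}\norm{g} ~\leq~ \ell^2 \gamma \norm{h_i}\norm{g}.
\]
The main delicacy is error bookkeeping across the branching commutations; the crucial observation is that the two branches appear as a convex combination, so the $\gamma$-errors add linearly rather than multiply, which is what keeps the final bound polynomial in $\ell$ rather than exponential.
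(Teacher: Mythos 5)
Your proof is correct and follows essentially the same route as the paper: locate the first down operator, push it rightward through the preceding up operators via the EPoset commutation relation, handle the length-$(\ell-2)$ identity branches by induction, and observe that the fully commuted main branch ends in $\Dee_i h_i = 0$. The only difference is organizational — the paper packages the commutation sweep into its already-proven Structure Lemma (\cref{lemma:asd_exact_u_operator}) and inducts on $\lvert\calD\rvert$, whereas you unroll that sweep inline and induct on $\ell$; the error bookkeeping and the final bound $(\ell-2)^2 + O(\ell) \le \ell^2$ coincide.
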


\begin{proof}
  Let $c \in [\ell]$ be the smallest index for which $\Why_{c}$ is a
  down operator. Observe that $c < \ell/2$ since $\lvert \calD \rvert
  > \lvert \calU \rvert$. We induct on $m = \lvert \calD \rvert$. If
  $c = 1$, then $\ip{\Aye \matr D_i h_i}{g} = 0$. Hence assume $c,m > 1$
  implying $\Why_{c}\Why_{c-1} = \Dee_{i+c}\Up_{i+c-1}$.
  Applying~\cref{lemma:asd_exact_u_operator} we obtain
  \begin{align*}
    \ip{\Aye \Why_{\ell}\dots\Why_1h_i}{g} &~=~ \ip{\left(\Aye \Why_{\ell} \dots \Why_{c+1}\right) \Dee \Up \Up^{c-2} h_i}{g}\\
                                         &~=~ Q_{c,i}(\vec{\delta}) \cdot \ip{\left(\Aye \Why_{\ell} \dots \Why_{c+1}\right) \Up^{c-2} h_i}{g}~\pm~\frac{\ell}{2} \cdot \gamma \norm{h_i} \norm{g}\\
                                         &~=~ \pm Q_{c,i}(\vec{\delta}) \cdot \left(\ell-2\right)^2 \cdot \gamma \norm{h_i} \norm{g}~\pm~\frac{\ell}{2} \cdot \gamma \norm{h_i} \norm{g} && \text{(Induction)}\\
                                         &~=~ \pm \ell^2 \cdot \gamma \norm{h_i} \norm{g},
  \end{align*}
  where in the last derivation we used $Q_{c,i}(\vec{\delta}) \le 1$.
\end{proof}

We turn to an important particular case of $\lvert \calD \rvert =
\lvert \calU \rvert$, namely, the canonical walks. We show that
$\nwalk{k}{u}$ is approximately a polynomial in the operator
$\Dee_{k+1}\Up_k$. As a warm up consider the case $\nwalk{k}{2}=
\Dee_{k+1}\Dee_{k+2} \Up_{k+1} \Up_k$. Using
the~\cref{eq:complete_skewed_commutation}, we get
\begin{align*}
  \nwalk{k}{2} &~\approx~ (1-\delta_{k+1}) \cdot \Dee_{k+1}\Up_{k} \Dee_{k+1}  \Up_k ~+~ \delta_{k+1} \cdot \Dee_{k+1} \Up_k \\
               &~=~ (1-\delta_{k+1}) \cdot \left(\Dee_{k+1}\Up_{k}\right)^2 ~+~ \delta_{k+1} \cdot \Dee_{k+1}\Up_{k}.
\end{align*}
Inspecting this polynomial more carefully we see that that its
coefficients form a probability distribution. This property holds in
general as the following~\cref{lemma:asd_equal_d_u_operators}
shows. This gives an alternative (approximate) random walk
interpretation of $\nwalk{k}{u}$ as the walk that first selects the
power $s$ according to the distribution encoded in the polynomial and
then moves according to $\left(\Dee_{k+1}\Up_{k}\right)^s$.

\begin{lemma}[Canonical Polynomials]\label{lemma:asd_equal_d_u_operators}
  For $k,u \ge 0$ there exists a degree $u$ univariate polynomial
  $F_{u,k,\vec{\delta}}^N$ depending only on
  $u,k,\vec{\delta}$ such that
  \[
  \norm{\nwalk{k}{u} - F_{u,k,\vec{\delta}}^N(\Dee_{k + 1}\Up_{k})}_{\textup{op}} ~\le~ (u-1)^2 \cdot \gamma.
  \]
  Moreover, the coefficients of this polynomial form a probability
  distribution, i.e., $F_{u,k,\vec{\delta}}^N(x) =
  \sum_{i=0}^u c_i x^i$ where $\sum_{i=0}^u c_i = 1$ and $c_i \ge 0$
  for $i=0,\dots,u$.
\end{lemma}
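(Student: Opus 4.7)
The plan is to proceed by induction on $u$. The base cases are immediate: $\nwalk{k}{0} = \Ide$ gives $F_{0} = 1$, and $\nwalk{k}{1} = \Dee_{k+1}\Up_k$ gives $F_{1}(x) = x$, both with zero error. For the inductive step, the key idea is to apply the EPoset commutation relation \cref{eq:complete_skewed_commutation}, $\Dee_{i+1}\Up_i = (1-\delta_i)\Up_{i-1}\Dee_i + \delta_i \Ide$ up to operator-norm error at most $\gamma$, repeatedly, so as to push the innermost $\Dee\Up$ pair of $\nwalk{k}{u}$ rightward through the chain of up operators until it occupies the rightmost position as $\Dee_{k+1}\Up_k$.

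Concretely, I would write $\nwalk{k}{u} = \Dee_{k+1}\cdots\Dee_{k+u-1}[\Dee_{k+u}\Up_{k+u-1}]\Up_{k+u-2}\cdots\Up_k$ and commute the bracketed pair. The $\delta_{k+u-1}\Ide$ branch collapses to $\delta_{k+u-1} \cdot \nwalk{k}{u-1}$, while the $(1-\delta_{k+u-1})\Up_{k+u-2}\Dee_{k+u-1}$ branch creates a fresh middle pair $\Dee_{k+u-1}\Up_{k+u-2}$ that can itself be commuted. Iterating this $u-1$ times, each iteration splits off a $\delta$-weighted copy of $\nwalk{k}{u-1}$ (prefixed by a product of $(1-\delta)$-factors from earlier first-branch choices), and the residual ``all first-branch'' term ends up equal to $\alpha_u \cdot \nwalk{k}{u-1} \cdot (\Dee_{k+1}\Up_k)$, where $\alpha_u := \prod_{i=1}^{u-1}(1-\delta_{k+u-i}) \in [0,1]$. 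Collecting the $\delta$-branches via the telescoping identity $\sum_{j=1}^{u-1}\bigl(\prod_{i<j}(1-\delta_{k+u-i})\bigr)\delta_{k+u-j} = 1 - \alpha_u$ yields
\[
\nwalk{k}{u} ~=~ (1-\alpha_u)\, \nwalk{k}{u-1} ~+~ \alpha_u\, \nwalk{k}{u-1}\cdot(\Dee_{k+1}\Up_k) ~\pm~ (u-1)\gamma.
\]

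Plugging in the induction hypothesis $\nwalk{k}{u-1} = F_{u-1,k,\vec{\delta}}^N(\Dee_{k+1}\Up_k) \pm (u-2)^2\gamma$ (using $\norm{\Dee_{k+1}\Up_k}_{\textup{op}} \le 1$ to bound the substituted error) and using that polynomials in the single operator $\Dee_{k+1}\Up_k$ commute, I arrive at the recursion $F_{u,k,\vec{\delta}}^N(x) = F_{u-1,k,\vec{\delta}}^N(x) \cdot \bigl((1-\alpha_u) + \alpha_u x\bigr)$, a degree-$u$ univariate polynomial depending only on $u, k, \vec{\delta}$. The total error accumulates to $(u-1)\gamma + (u-2)^2\gamma \le (u-1)^2\gamma$, which is immediate for $u \ge 2$. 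A parallel induction on the recursion establishes the probability-distribution claim: if $F_{u-1}(x) = \sum_{i=0}^{u-1} b_i x^i$ has nonnegative coefficients summing to $1$, then the new coefficients $(1-\alpha_u)b_0$, $(1-\alpha_u)b_i + \alpha_u b_{i-1}$ for $1 \le i \le u-1$, and $\alpha_u b_{u-1}$ are nonnegative (since $\alpha_u \in [0,1]$) and sum to $(1-\alpha_u) + \alpha_u = 1$. The main obstacle, in my view, is the combinatorial bookkeeping through the $u-1$ iterated commutations: one must verify that the iterated first-branch step indeed leaves exactly the prefix $\nwalk{k}{u-1}$ to the left of a freshly produced $\Dee_{k+1}\Up_k$, rather than producing some spurious mismatch of indices; once this structural identity is pinned down, the error accumulation bound and the closed-form polynomial recursion fall out cleanly from the EPoset calculus.
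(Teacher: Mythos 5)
Your proposal is correct and follows essentially the same route as the paper: induct on $u$, repeatedly apply the EPoset commutation to push the rightmost down operator rightward, observe that each $\delta$-branch collapses to $\nwalk{k}{u-1}$ and the all-first-branch residual is $\alpha_u \cdot \nwalk{k}{u-1}(\Dee_{k+1}\Up_k)$, then invoke the induction hypothesis. Your write-up is in fact slightly more careful than the paper's on two points it leaves implicit — the explicit recursion $F^N_{u} = F^N_{u-1}\cdot\bigl((1-\alpha_u)+\alpha_u x\bigr)$ and the error bookkeeping $(u-1)\gamma + (u-2)^2\gamma \le (u-1)^2\gamma$ — but the argument is the same.
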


\begin{proof}
    For $u=0$, $\nwalk{k}{0} =\Ide$ and the lemma trivially
    follows. Similarly, if $u=1$, $\nwalk{k}{1} = \Dee_{k+1}\Up_{k}$. Now
    suppose $u \ge 2$. Set $\Why = \nwalk{k}{u}$, i.e.,
    \[
    \Why ~=~ \Dee_{k + 1}\dots \left(\Dee_{k+u}\Up_{k+u-1}\right)\dots \Up_{k}.
    \]
    For convenience let $j = k+u-1$. Using
    the~\cref{eq:complete_skewed_commutation} we can replace
    $\Dee_{j+1}\Up_{j}$ in $\Why$ by $(1-\delta_j)\Up_{j-1}\Dee_{j} +
    \delta_j \Ide$ incurring an error of $\gamma$ (in spectral norm) and
    yielding
    \[
    \Why ~\approx~ (1-\delta_j)\cdot \Why' ~+~ \delta_j \cdot \nwalk{k}{u-1},
    \]
    where $\Why'$ was obtained from $\Why$ by moving the rightmost
    occurence of a down operator (in this case $\Dee_{j+1}$) one
    position to right. We continue this process of moving the
    rightmost occurrence of a down operator until the resulting
    operator is up to $(u-1) \cdot \gamma$ error
    \[
    \alpha \cdot \nwalk{k}{u-1} \left(\Dee_{k + 1}\Up_{k}\right) ~+~ \beta \cdot \nwalk{k}{u-1},
    \]
    where $\alpha = \prod_{i=k+1}^j \left(1-\delta_i \right)$ and
    $\beta = \sum_{i=k+1}^{j} \delta_i\prod_{i=k+1}^j \left(1-\delta_i
    \right)$.  Since $\delta_i = \delta_i > 0$, $\alpha,\beta$ are non
    negative and form a probability distribution.  Now the result
    follows from the induction hypothesis applied to $\nwalk{k}{u-1}$.
\end{proof}
\begin{remark}
  Having a polynomial expression $F_{u,k,\vec{\delta}}^N(\Dee_{k +
    1}\Up_{k}) \approx \nwalk{k}{u}$ and knowing that $\rfswap{k}{k}$
  can be written as linear combination of canonical walks, we could
  deduce that $\rfswap{k}{k}$ is also approximately a polynomial in
  $\Dee_{k + 1}\Up_{k}$. Using an error refined version of
  the~\cref{cor:asd_exponentiation} (showing that exponentiation of
  $\Dee_{k + 1}\Up_{k}$ behaves naturally), we could deduce the
  approximate spectrum of $\rfswap{k}{k}$. We avoid this approach
  since it analysis introduces unnecessary error terms and we can
  understand quadratic forms of \textit{pure balanced} operators
  directly.
\end{remark}
\begin{remark}
  The canonical polynomial $F_{u,k,\vec{\delta}}^N(\Dee_{k +
    1}\Up_{k})$ is used later in the error analysis that relates the
  norms $\norm{h_i}$ and $\norm{f_i}$ (\cref{lemma:asd_fnorm}).
\end{remark}
  
Now we consider $\Why$ where $\lvert \calD \rvert = \lvert \calU
\rvert$ in full generality. We show how the quadratic form of $\Why$
behaves in terms of the approximate eigenspace decomposition $C^k =
\sum_{i=0}^k C^k_i$.

\begin{lemma}[Pure Balanced Walks]\label{lemma:asd_equal_d_u_operators_general}
  Suppose $\Why = \Why_{\ell}\dots \Why_1$ is a product of an equal
  number of up and down operators, i.e., $\lvert \calD \rvert = \lvert
  \calU \rvert$. Then for $f_i \in C^k_i$
  \[
  \ip{\Why f_i }{f_i} ~=~ \lambda_{k,i}^{\Why} \cdot \ip{f_i }{f_i} ~\pm~ \gamma \cdot (\ell^2+\ell(k-i-1)) \norm{h_i}\norm{f_i},
  \]
  where $\lambda_{k,i}^{\Why}$ is an approximate eigenvalue depending only on
  $\Why$, $k$ and $i$.
\end{lemma}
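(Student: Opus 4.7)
The plan is to prove this lemma by strong induction on the length $\ell$ of $\Why = \Why_\ell \cdots \Why_1$, combined with a secondary induction on a \emph{disorder potential} that measures the distance of $\Why$ from a reduced form. The base case $\ell = 0$ is trivial: $\Why = \Ide$, $\lambda^{\Why}_{k,i} = 1$, and the error is zero. The reduced form driving the induction is $\Why = \Up^m \Dee^m$ in left-to-right writing (equivalently, $\Dee^m$ applied first, then $\Up^m$), which is characterized by the absence of any adjacent $\Dee\Up$ pattern reading left-to-right.

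For the inductive step, suppose $\Why$ contains an adjacent pair $\Why_c = \Dee_{j+1}$ and $\Why_{c-1} = \Up_j$ for some index $c$. Applying the EPoset relation~(\cref{eq:complete_skewed_commutation}) to this pair yields
\[
\Why ~=~ (1-\delta_j)\cdot \Why' ~+~ \delta_j\cdot \Why'' ~\pm~ \gamma,
\]
where $\Why'$ has the same length $\ell$ but with the pair replaced by $\Up_{j-1}\Dee_j$, and $\Why''$ has length $\ell - 2$ (still balanced, with $m-1$ ups and $m-1$ downs). The $\Why''$ branch closes by the outer strong induction on $\ell$. For the $\Why'$ branch, I would invoke a secondary induction on the disorder potential $\Phi(\Why) := \sum_{c \colon \Why_c \in \calD} c$, which strictly decreases by $1$ under the swap since the moved down operator went from position $c$ to position $c-1$; because $\Phi$ is a bounded non-negative integer, this secondary induction terminates.

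Upon reaching reduced form $\Why = \Up^m \Dee^m$, use $\Up^\dagger = \Dee$ and $f_i = \Up^{k-i}h_i$ to compute
\[
\ip{\Why f_i}{f_i} ~=~ \norm{\Dee^m f_i}^2 ~=~ \norm{\Dee^m\Up^{k-i}h_i}^2.
\]
Apply the same commutation primitive iteratively to $\Dee^m\Up^{k-i}$, pushing each of the $m$ down operators past the $k-i$ ups toward $h_i$. Any down that reaches $h_i$ is annihilated by $\Dee_i h_i = 0$, so only the ``fully cancelled'' terms survive. This reduces $\Dee^m\Up^{k-i}h_i$ to $\beta_{m,k-i,\vec{\delta}} \cdot \Up^{k-i-m}h_i$ (understood as $0$ when $m > k-i$) up to error $O(m(k-i)) \cdot \gamma \cdot \norm{h_i}$, where the scalar $\beta$ depends only on $m$, $k-i$, and $\vec{\delta}$. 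Combined with~\cref{lemma:asd_fnorm} (relating $\norm{\Up^j h_i}^2$ to $\norm{f_i}^2$), this extracts an approximate eigenvalue $\lambda^{\Why}_{k,i}$ depending only on $\Why$, $k$, $i$, and $\vec{\delta}$.

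The main obstacle is error bookkeeping. Each commutation contributes $\gamma \norm{h_i}$ times a norm factor via Cauchy--Schwarz, and we must sum contributions across all recursive calls. The disorder potential is bounded by $\binom{\ell}{2} = O(\ell^2)$, yielding $O(\ell^2)$ commutations to reach reduced form; the subsequent analysis of $\Dee^m\Up^{k-i}h_i$ requires $O(\ell(k-i))$ further commutations. Summing gives the stated $\gamma\cdot(\ell^2+\ell(k-i-1))$ bound. A secondary subtlety is verifying that $\lambda^{\Why}_{k,i}$ depends only on the formal expression of $\Why$ and not on the arbitrary choices of which $\Dee\Up$ pair to commute first: this follows because each rewrite produces polynomial coefficients in $\vec{\delta}$ that combine in a confluent manner under the rewriting system, so the final scalar is uniquely determined by the formal pattern of $\Why$ along with $k$, $i$, and $\vec{\delta}$.
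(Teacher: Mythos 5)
Your proposal is essentially correct but takes a genuinely different route from the paper. The paper's proof of \cref{lemma:asd_equal_d_u_operators_general} is a single induction on $\ell$: it locates the \emph{first-applied} down operator $\Why_c$, substitutes $f_i = \Up^{k-i}h_i$ so that $\Why_c$ sits to the left of an uninterrupted run of up operators ending at $h_i$, and then invokes the already-established Structure Lemma (\cref{lemma:asd_exact_u_operator}) to eliminate that single down in one shot, reducing to a pure balanced operator of length $\ell-2$. Your argument instead runs the EPoset rewrite $\Dee_{j+1}\Up_j \mapsto (1-\delta_j)\Up_{j-1}\Dee_j + \delta_j\Ide$ as a branching rewriting system, terminates via a disorder potential at the normal form $\Up^m\Dee^m$, and only then uses adjointness ($\ip{\Up^m\Dee^m f_i}{f_i} = \smallnorm{\Dee^m f_i}^2$) plus a second round of commutations on $\Dee^m\Up^{k-i}h_i$. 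Both are sound; the paper's version buys a shorter error accounting ($\ell/2$ applications of a lemma whose error is already controlled), while yours is more self-contained and makes the ``probability distribution over branches'' structure explicit (the same structure the paper isolates separately in \cref{lemma:asd_equal_d_u_operators}). Your worry about confluence is a red herring: the lemma only asserts the \emph{existence} of some $\lambda^{\Why}_{k,i}$ determined by the formal word, so fixing any deterministic rewriting order suffices; no confluence proof is needed.

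One quantitative caveat is worth fixing. By squaring and unpacking \emph{both} sides of the inner product into $h_i$, your phase-2 errors come out as $O(\gamma)\norm{h_i}^2$ (and the conversion of $\smallnorm{\Up^{k-i-m}h_i}^2$ into a multiple of $\ip{f_i}{f_i}$ via \cref{lemma:asd_fnorm} likewise costs $\gamma\norm{h_i}^2$), whereas the lemma claims $\gamma\cdot(\ell^2+\ell(k-i-1))\norm{h_i}\norm{f_i}$. Since $\norm{h_i}$ can exceed $\norm{f_i}$ by a factor of roughly $\theta_{k,i}=(i+1)^{k-i}$, your bound is weaker by up to that factor. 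This is absorbable downstream (the paper's final $\gamma$ thresholds already carry $k^k$ factors), but to get the stated bound you should keep $f_i$ intact on one side of the inner product throughout — i.e., write $\smallnorm{\Dee^m f_i}^2 = \ip{\Up^m\Dee^m\Up^{k-i}h_i}{f_i}$ and push the downs rightward against $h_i$ while leaving the right argument as $f_i$, which is exactly what the paper's Structure Lemma does.
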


\begin{proof}
  We induct on even $\ell$. For $\ell = 0$, the result trivially
  follows so assume $\ell \ge 2$. Let $c \in [\ell]$ be the smallest
  index of a down operator. Set $\Aye = \Why_{\ell}\dots \Why_{c+1}$ and
  let $\Why' = \Why_{c}\dots \Why_1 = \Dee \Up\dots\Up$. Observe that
  \[
  \ip{\Aye \Why' f_i }{f_i} ~=~ \ip{\Aye \Dee \Up^{c-1 + k-i} h_i }{f_i}.
  \]
  Applying~\cref{lemma:asd_exact_u_operator} to the RHS above gives
  \[
  \ip{\Aye \Dee \Up^{c-1 + k-i} h_i }{f_i} = Q_{c-1 + k-i,i}(\vec{\delta}) \cdot \ip{\Aye \Up^{c-2} f_i }{f_i}  ~\pm~ \left(c+k-i-2\right) \cdot \gamma \norm{h_i} \norm{f_i}.
  \]
  Applying the induction hypothesis to $\Why'' = A\Up^{c-2}$ in the
  above RHS yields
  \begin{align*}
    & Q_{c-1 + k-i,i}(\vec{\delta}) \cdot \lambda^{\Why''}_{k,i} \ip{f_i }{f_i} \\
    & ~\pm~ Q_{c-1 + k-i,i}(\vec{\delta}) \cdot \gamma \cdot ((\ell-1)^2+(\ell-1)(k-i-1)) \norm{h_i}\norm{f_i}\\
    & ~\pm~ \left(c+k-i-2\right) \cdot \gamma \norm{h_i} \norm{f_i} \\
    & ~=~ \lambda^{\Why}_{k,i} \cdot \ip{f_i }{f_i} ~\pm~ \gamma \cdot (\ell^2+\ell(k-i-1)) \norm{h_i}\norm{f_i},
  \end{align*}
  where $\lambda^{\Why}_{k,i} = Q_{c-1 + k-i,i}(\vec{\delta}) \cdot
  \lambda^{\Why''}_{k,i}$ and the last equality follows from $Q_{c-1 +
    k-i,i}(\vec{\delta}) \le 1$ and $c \le \ell$.
\end{proof}

To understand all errors in the analysis
in~\cref{lemma:asd_equal_d_u_operators_general} we need to derive the
approximate orthogonality of $f_i$ and $f_j$ for $i \ne j$
from~\cite{DiksteinDFH18} in more detail. We start with the following
bound in terms of $h_i,h_j$.
\begin{lemma}[Refinement of~\cite{DiksteinDFH18}]\label{lemma:asd_crude_orthogonality}
  For $i \ne j$,
  \[
  \ip{f_i}{f_j} ~=~ \pm (2k-i-j)^2 \cdot \gamma \norm{h_i} \norm{h_j}.
  \]
\end{lemma}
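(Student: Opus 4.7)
The plan is to reduce the statement directly to \cref{lemma:asd_extra_d_operator} by rewriting the inner product via the up/down adjointness and checking the hypotheses of that lemma.

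Without loss of generality assume $i < j$. Since $f_i = \Up^{k-i} h_i$ and $f_j = \Up^{k-j} h_j$, and the up and down operators are adjoints (with respect to the $\Pi$-weighted inner products), I would move the ups on one side over as downs on the other side. Concretely, I would write
\[
  \ip{f_i}{f_j} ~=~ \ip{\Up^{k-i} h_i}{\Up^{k-j} h_j} ~=~ \ip{h_i}{\Dee^{k-i} \Up^{k-j} h_j},
\]
so that the operator acting on $h_j$ is $\Why := \Dee^{k-i}\Up^{k-j}$, viewed as a formal product $\Why_{\ell}\cdots \Why_1$ reading right to left: first $(k-j)$ up operators climbing $C^j \to C^k$, then $(k-i)$ down operators descending $C^k \to C^i$. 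This is a pure balanced operator with $\ell = 2k-i-j$ steps and with $\lvert \calD(\Why)\rvert = k-i > k-j = \lvert \calU(\Why)\rvert$, where the strict inequality uses $i < j$.

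Next, I would invoke \cref{lemma:asd_extra_d_operator} applied to $\Aye = \Ide$ (trivially $\norm{\Aye}_{\textup{op}} \le 1$), with the kernel element being $h_j \in \ker(\Dee_j) = H_j$, and testing against $g = h_i$. Since the hypothesis $\lvert \calD \rvert > \lvert \calU\rvert$ holds and the total number of operators is $\ell = 2k - i - j$, the lemma yields
\[
  \ip{\Dee^{k-i}\Up^{k-j} h_j}{h_i} ~=~ \pm \, \ell^2 \cdot \gamma \norm{h_j}\norm{h_i} ~=~ \pm (2k-i-j)^2 \cdot \gamma \norm{h_i}\norm{h_j},
\]
which is exactly the desired bound after the symmetric relabeling.

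There is essentially no obstacle here beyond careful bookkeeping: the only things to verify are that the roles of $i$ and $j$ can be swapped symmetrically (giving the same bound by the symmetric form of the right-hand side) and that the adjointness step is applied to the correct dimension ($\Up_{k-1}^{\dag} = \Dee_k$, etc.). The use of the identity operator for $\Aye$ is what allows us to apply \cref{lemma:asd_extra_d_operator} without introducing any additional factor, and the fact that $h_j \in \ker(\Dee_j)$ is precisely what powers the ``extra $\Dee$'' cancellation at the bottom of the recursion inside that lemma.
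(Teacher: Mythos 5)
Your proposal is correct and is essentially the paper's own proof: both use adjointness of the up and down operators to rewrite $\ip{f_i}{f_j}$ as a pairing of a pure operator product with more downs than ups applied to a kernel element, and then invoke \cref{lemma:asd_extra_d_operator} with $\ell = 2k-i-j$. The only difference is the (immaterial) choice of which index is assumed larger, i.e., whether the ups of $f_i$ or of $f_j$ are flipped into downs.
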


\begin{proof}
  Recall that $f_i = \Up^{k-i} h_i$, $f_j = \Up^{k-j} h_j$ where $h_i
  \in \ker\left(\Dee_i\right)$, $h_j \in
  \ker\left(\Dee_j\right)$. Without loss of generality suppose $i >
  j$. We have
  \[
  \ip{\Up^{k-i}h_i}{\Up^{k-j}h_j} ~=~ \ip{\Dee^{k-j}\Up^{k-i}h_i}{h_j}.
  \] Since $k-j > k- i$, the result follows from~\cref{lemma:asd_extra_d_operator}.
\end{proof}

To give a bound for~\cref{lemma:asd_crude_orthogonality} only in terms
of the eigenfunction norms $\norm{f_i}$ and not in terms of
$\norm{h_i}$, we need to understand how the norms of $h_i$ and $f_i$
are related.
\begin{lemma}[Refinement of~\cite{DiksteinDFH18}]\label{lemma:asd_fnorm}
  Let $\eta_{k,i} = (k-i)^2+1$ and let $\beta_i = \sqrt{\abs{
      F_{k-i,i,\vec{\delta}}^N(\delta_i) \pm \gamma \cdot
      \eta_{k,i}}}$ where $F_{k-i,k,\vec{\delta}}^N$ is a canonical
  polynomial of degree $k-i$
  from~\cref{lemma:asd_equal_d_u_operators}. Then
  \[
  \ip{f_i}{f_i} ~=~ \beta_i^2 \cdot \ip{h_i}{h_i}.
  \]
  Let $\theta_{k,i} = \left(i+1\right)^{k-i}$.
  Furthermore, if $\gamma \le 1/(2 \cdot \eta_{k,i} \cdot \theta_{k,i})$, then $\beta_i \ge
  \frac{1}{2\theta_{k,i}}$.
\end{lemma}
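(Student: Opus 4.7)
The plan is to compute $\ip{f_i}{f_i}$ by rewriting it as a quadratic form on $h_i \in H_i = \ker(\Dee_i)$ involving the canonical walk $\nwalk{i}{k-i}$, then invoking the polynomial approximation of canonical walks from \cref{lemma:asd_equal_d_u_operators} together with the exponentiation bound of \cref{cor:asd_exponentiation}. Concretely, since $f_i = \Up^{k-i} h_i$ and $\Up^{\dag} = \Dee$, adjointness gives
\[
\ip{f_i}{f_i} ~=~ \ip{\Dee^{k-i}\Up^{k-i} h_i}{h_i} ~=~ \ip{\nwalk{i}{k-i} h_i}{h_i}.
\]
Applying \cref{lemma:asd_equal_d_u_operators} with $u=k-i$ and the ``starting dimension'' set to $i$, there exists a polynomial $F^N_{k-i,i,\vec\delta}$ of degree $k-i$ with non-negative coefficients summing to $1$ such that $\norm{\nwalk{i}{k-i} - F^N_{k-i,i,\vec\delta}(\Dee_{i+1}\Up_i)}_{\textup{op}} \le (k-i-1)^2 \gamma$.

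Next I would expand $F^N_{k-i,i,\vec\delta}(x) = \sum_{s=0}^{k-i} c_s x^s$ and evaluate each $\ip{(\Dee_{i+1}\Up_i)^s h_i}{h_i}$ using \cref{cor:asd_exponentiation}. Since $h_i \in H_i \subseteq C^i_i$, the relevant instance of that corollary is the case where the outer index equals $i$, giving error factor $(i-i+1)=1$ per exponent and approximate eigenvalue $\lambda_i = Q_{2,i}(\vec\delta) = \delta_i$ (the base case of the recursion in the proof of \cref{lemma:asd_exact_u_operator}, where the only surviving term is $\delta_j$ for $j=i$). Thus
\[
\ip{(\Dee_{i+1}\Up_i)^s h_i}{h_i} ~=~ \delta_i^s \norm{h_i}^2 ~\pm~ s\gamma \norm{h_i}^2.
\]
Summing with weights $c_s$ and using $\sum_s c_s = 1$ and $s \le k-i$, we obtain $\ip{F^N_{k-i,i,\vec\delta}(\Dee_{i+1}\Up_i) h_i}{h_i} = F^N_{k-i,i,\vec\delta}(\delta_i)\norm{h_i}^2 \pm (k-i)\gamma \norm{h_i}^2$. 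Combining with the polynomial approximation error, the total error is at most $((k-i-1)^2+(k-i))\gamma \le ((k-i)^2+1)\gamma = \eta_{k,i}\gamma$, which yields the claim $\ip{f_i}{f_i} = \beta_i^2 \ip{h_i}{h_i}$ with $\beta_i^2 = |F^N_{k-i,i,\vec\delta}(\delta_i) \pm \gamma\eta_{k,i}|$ (the absolute value is harmless since $\ip{f_i}{f_i} \ge 0$).

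For the quantitative lower bound on $\beta_i$, I would exploit the probability-distribution property of the coefficients together with $\delta_i = 1/(i+1) \in (0,1]$. Since every monomial $\delta_i^s$ with $s \le k-i$ satisfies $\delta_i^s \ge \delta_i^{k-i} = (i+1)^{-(k-i)} = 1/\theta_{k,i}$, and the coefficients sum to $1$, we get $F^N_{k-i,i,\vec\delta}(\delta_i) \ge 1/\theta_{k,i}$. Under the hypothesis $\gamma \le 1/(2\eta_{k,i}\theta_{k,i})$, the error $\gamma \eta_{k,i}$ is at most $1/(2\theta_{k,i})$, so $\beta_i^2 \ge 1/\theta_{k,i} - 1/(2\theta_{k,i}) = 1/(2\theta_{k,i})$. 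Finally, $\theta_{k,i} \ge 1$ implies $1/\sqrt{2\theta_{k,i}} \ge 1/(2\theta_{k,i})$, giving $\beta_i \ge 1/(2\theta_{k,i})$ as required.

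The only subtle point, and the main place where one must be careful, is the bookkeeping of error terms: one needs that the two independent sources of $\gamma$-error (polynomial approximation of $\nwalk{i}{k-i}$, and approximate-eigenvalue evaluation of each power of $\Dee_{i+1}\Up_i$) both scale with the total number of walk steps rather than being amplified by large constants, and that the base-case identification $Q_{2,i}(\vec\delta) = \delta_i$ is correct. Everything else is a direct combination of the earlier lemmas.
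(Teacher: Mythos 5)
Your proposal is correct and follows essentially the same route as the paper's proof: rewrite $\ip{f_i}{f_i}=\ip{\Dee^{k-i}\Up^{k-i}h_i}{h_i}$, approximate $\nwalk{i}{k-i}$ by the canonical polynomial $F^N_{k-i,i,\vec\delta}(\Dee_{i+1}\Up_i)$ via \cref{lemma:asd_equal_d_u_operators}, evaluate the powers through \cref{cor:asd_exponentiation} with base eigenvalue $\delta_i$, and lower-bound $F^N_{k-i,i,\vec\delta}(\delta_i)\ge\delta_i^{k-i}=1/\theta_{k,i}$ using that the coefficients form a probability distribution. Your error bookkeeping $((k-i-1)^2+(k-i))\gamma\le\eta_{k,i}\gamma$ and the explicit derivation of $\beta_i\ge 1/(2\theta_{k,i})$ are if anything slightly more careful than the paper's, which leaves that last step implicit.
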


\begin{proof}
  Recall that $f_i = \Up^{k-i} h_i$ where $h_i \in
  \ker\left(\Dee_i\right)$. For $i = k$ the result trivially follows
  so assume $k > i$. First consider the case $k=i+1$. We have
  \begin{equation}\label{eq:f_i_f_i_one_step}
    \ip{\Up_ih_i}{\Up_ih_i} ~=~ \ip{\Dee_{i+1}\Up_i h_i}{h_i} ~=~ \delta_i \cdot \ip{h_i}{h_i} ~\pm~  \gamma \cdot \ip{h_i}{h_i}.
  \end{equation}
  For general $k > i$ we have
  \[
  \ip{\Up^{k-i}h_i}{\Up^{k-i}h_i} ~=~ \ip{\Dee^{k-i}\Up^{k-i}h_i}{h_i}.
  \]
  Applying~\cref{lemma:asd_equal_d_u_operators} to $\Dee^{k-i}\Up^{k-i}$ yields
  \[
  \ip{\Dee^{k-i}\Up^{k-i}h_i}{h_i} ~=~ \ip{F_{k-i,i,\vec{\delta}}^N(\Dee_{i+1} \Up_i) h_i}{h_i} ~\pm~ \gamma \cdot (k-i-1)^2.
  \]
  Combining~\cref{eq:f_i_f_i_one_step} and~\cref{cor:asd_exponentiation} gives
  \[
  \ip{F_{k-i,i,\vec{\delta}}^N(\Dee_{i+1} \Up_i) h_i}{h_i} ~\pm~ \gamma \cdot (k-i-1)^2 ~=~ \ip{F_{k-i,i,\vec{\delta}}^N(\delta_i) h_i}{h_i} ~\pm~ \gamma \cdot ((k-i)^2+1).
  \]
  Since $F_{k-i,i,\vec{\delta}}^N(x) = \sum_{i=0}^{k-i} c_i
  x^i$ where the coefficients $c_i$ form a probability distribution,
  we get
  \[
  F_{k-i,i,\vec{\delta}}^N(\delta_i) ~\ge ~\delta_i^{k-i} ~=~ \left(\frac{1}{i+1}\right)^{k-i}.
  \]
\end{proof}
Now, we can state the approximate
orthogonality~\cref{lemma:precise_orthogonality} in terms of the
eigenfunction norms.
\begin{lemma}[Approximate Orthogonality (refinement of~\cite{DiksteinDFH18})]\label{lemma:precise_orthogonality}
   Let $\eta_{k,s},\theta_{k,s}, \beta_s$ for $s \in \{i,j\}$ be given
   as in~\cref{lemma:asd_fnorm}. If $i \ne j$ and $\beta_i, \beta_j >
   0$, then
  \[
  \ip{f_i}{f_j} ~= ~\pm~ \frac{\gamma \cdot (2k-i-j)^2}{\beta_i \beta_j} \norm{f_i} \norm{f_j}.
  \]
  Furthermore, if $\gamma \le \min\left(1/(2 \cdot \eta_{k,i} \cdot \theta_{k,i}),
  1/(2 \cdot \eta_{k,j} \cdot \theta_{k,j})\right)$, then $\beta_i, \beta_j > 0$
  and
  \[
  \ip{f_i}{f_j} ~=~ \pm \gamma \cdot \theta_{k,i} \cdot \theta_{k,j} \cdot (2k-i-j)^2 \norm{f_i} \norm{f_j}.
  \]
\end{lemma}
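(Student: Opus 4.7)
The plan is to derive this result by composing the two preceding lemmas: the crude orthogonality bound (\cref{lemma:asd_crude_orthogonality}) expressed in terms of $\norm{h_i}, \norm{h_j}$, and the norm relation (\cref{lemma:asd_fnorm}) that converts between $\norm{h_s}$ and $\norm{f_s}$ via $\beta_s$. This is essentially a change-of-variables calculation, with the hypotheses on $\gamma$ just ensuring that the conversion factors are well defined and bounded.

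First I would apply \cref{lemma:asd_crude_orthogonality} directly, which gives
\[
\ip{f_i}{f_j} ~=~ \pm(2k-i-j)^2 \cdot \gamma \cdot \norm{h_i}\norm{h_j}.
\]
Assuming $\beta_i, \beta_j > 0$, \cref{lemma:asd_fnorm} tells us $\norm{f_s}^2 = \beta_s^2 \norm{h_s}^2$, so $\norm{h_s} = \norm{f_s}/\beta_s$ for $s \in \{i,j\}$. Substituting these identities into the above expression yields the first displayed equation of the lemma.

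For the second part, I would invoke the quantitative lower bound on $\beta_s$ from \cref{lemma:asd_fnorm}: namely, that under the hypothesis $\gamma \le 1/(2 \eta_{k,s}\theta_{k,s})$, the canonical polynomial satisfies $F_{k-s,s,\vec{\delta}}^N(\delta_s) \ge 1/\theta_{k,s}$ and the additive error $\gamma \cdot \eta_{k,s}$ is at most half of this quantity, so the absolute value inside the square root defining $\beta_s$ is at least $1/(2\theta_{k,s})$. Taking square roots gives $\beta_s \ge 1/\sqrt{2\theta_{k,s}}$; the lemma statement records the weaker bound $\beta_s \ge 1/(2\theta_{k,s})$, which is what we use. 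In particular both $\beta_i,\beta_j$ are strictly positive, so the first part applies, and substituting the lower bound $1/(\beta_i \beta_j) \le 4 \theta_{k,i}\theta_{k,j}$ (or $2\theta_{k,i}\theta_{k,j}$ depending on constants) into the first equation gives the final bound, up to absorbing absolute constants into the $\pm$ notation.

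The only slightly delicate point is tracking the constants so that the final statement reads exactly as $\pm \gamma \cdot \theta_{k,i}\theta_{k,j}(2k-i-j)^2 \norm{f_i}\norm{f_j}$; this will follow from the lower bound on $\beta_s$ stated in \cref{lemma:asd_fnorm}, possibly after tightening the threshold on $\gamma$ by a constant factor (or absorbing the constant in the definition of $\theta_{k,i}$). No new ideas are required beyond careful bookkeeping, and there is no conceptual obstacle here — the lemma is essentially a corollary of the two preceding results.
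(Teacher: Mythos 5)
Your proposal is correct and is exactly the paper's argument: the paper's own proof of this lemma is literally the one-liner ``Follows directly from~\cref{lemma:asd_fnorm}'', i.e.\ substitute $\norm{h_s} = \norm{f_s}/\beta_s$ into the crude bound of \cref{lemma:asd_crude_orthogonality} and then apply the lower bound on $\beta_s$. The constant-factor slack you flag at the end (the stated bound $\beta_s \ge 1/(2\theta_{k,s})$ literally yields $1/(\beta_i\beta_j) \le 4\,\theta_{k,i}\theta_{k,j}$ rather than $\theta_{k,i}\theta_{k,j}$) is present in the paper itself and is immaterial to how the lemma is used downstream.
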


\begin{proof}
  Follows directly from~\cref{lemma:asd_fnorm}.
\end{proof}

We generalize the quadratic form
of~\cref{lemma:asd_equal_d_u_operators_general} to linear combinations
of general \textit{pure balanced} operators $\Why$, namely, to
\textit{balanced} operators.
\begin{lemma}[General Quadratic Form (restatement of~\cref{lemma:asd_general_decomposition})]
  Let $\epsilon \in (0,1)$ and let $\mathcal{Y} \subseteq
  \{\Why~\vert~\Why \colon C^k \to C^k\}$ be a collection of formal
  operators that are product of an equal number of up and down walks
  (i.e., pure balanced operators) not exceeding $\ell$ walks. Let
  $\Bee = \sum_{\Why \in \mathcal{Y}} \alpha^{\Why} \Why$ where
  $\alpha^{\Why} \in \mathbb{R}$ and let $f = \sum_{i=0}^k f_i$ with
  $f_i \in C^k_i$. If $\gamma \le \epsilon \left(16 k^{k+2}\ell^2
  \sum_{\Why \in \mathcal{Y}} \vert \alpha^{\Why} \vert \right)^{-1}$,
  then
  \[
  \ip{\Bee f}{f} ~=~ \sum_{i=0}^k \left(\sum_{\Why \in \mathcal{Y}} \alpha^{\Why} \lambda^{\Why}_{k}(i)\right) \cdot \ip{f_i}{f_i} ~\pm~ \epsilon,
  \]
  where $\lambda^{\Why}_{k}(i)$ depends only on the operators
  appearing in the formal expression of $\Why$, $i$ and $k$, i.e.,
  $\lambda^{\Why}_{k}(i)$ is the approximate eigenvalue of $\Why$
  associated to $C^k_i$.
\end{lemma}

\begin{proof}
  Using~\cref{lemma:asd_equal_d_u_operators_general} and the assumption on $\gamma$ gives
  \begin{align*}
    \ip{\Bee f}{f} & ~=~ \sum_{i=0}^k \sum_{\Why \in \mathcal{Y}} \alpha^{\Why} \lambda^{\Why}_{k}(i) \cdot \ip{f_i}{f_i} \\
                   &  \qquad\quad ~+~ \sum_{i\ne j} \sum_{\Why \in \mathcal{Y}} \left( \alpha^{\Why} \lambda^{\Why}_{k}(i) \cdot \ip{f_i}{f_j} ~\pm~ \gamma \cdot \alpha^{\Why} (\ell^2 + \ell(k-i-1)) \ip{h_i}{f_j} \right)\\
                   & ~=~ \sum_{i=0}^k \sum_{\Why \in \mathcal{Y}} \alpha^{\Why} \lambda^{\Why}_{k}(i) \cdot \ip{f_i}{f_i} ~+~ \sum_{i\ne j} \sum_{\Why \in \mathcal{Y}} \alpha^{\Why} \lambda^{\Why}_{k}(i) \cdot \ip{f_i}{f_j} ~\pm~ \frac{\epsilon}{2}.
  \end{align*}
  Next we use~\cref{lemma:precise_orthogonality} to bound the second
  double summation and conclude the proof.
\end{proof}

We instantiate~\cref{lemma:precise_orthogonality} for swap walks with
their specific parameters.  First, we introduce some
notation. Using~\cref{cor:swap_rectangular_inverse}, we have
\begin{align*}
  \rfswap{k}{k} ~=~ \sum_{j = 0}^k (-1)^{k - j} \cdot \binom{k + j}{k}
  \cdot \binom{k}{j} \cdot \rnwalk{k}{k}{j} ~=~ \sum_{j = 0}^k  \alpha_j \cdot \rnwalk{k}{k}{j},
\end{align*}
where $\alpha_j = (-1)^{k - j} \cdot \binom{k + j}{k} \cdot
\binom{k}{j}$.

Finally, we have all the pieces to
prove~\cref{lemma:swap_quadratic_form} restated below.
\begin{lemma}[Swap Quadratic Form (restatement of~\cref{lemma:swap_quadratic_form})]
  Let $f = \sum_{i=0}^k f_i$ with $f_i \in C^k_i$. Suppose $X(\le d)$ is a
  $\gamma$-HDX with $d \ge 2k$. If $\gamma \le \epsilon \left(64
  k^{k+4}2^{3k+1} \right)^{-1}$, then
  \[
  \ip{\rfswap{k}{k} f}{f} ~=~ \sum_{i=0}^k \lambda_{k}(i) \cdot \ip{f_i}{f_i}~\pm~\epsilon,
  \]
  where $\lambda_{k}(i)$ depends on only on $k$ an $i$, i.e.,
  $\lambda_{k}(i)$ is an approximate eigenvalue of $\rfswap{k}{k}$
  associated to space $C^k_i$.
\end{lemma}

\begin{proof}
  First note that~\cref{lemma:asd_equal_d_u_operators_general}
  establishes the existence of approximate eigenvalues
  $\lambda_{k,j}(i)$ of $\rnwalk{k}{k}{j}$ corresponding to space
  $C^k_i$ for $i=0,\dots,k$ such that $\lambda_{k,j}(i)$ depends only
  on $k$, $i$ and $j$. To apply~\cref{lemma:asd_general_decomposition}
  we need to bound $\sum_{j= 0}^{k} \vert \alpha_{j} \vert$. Since
  \begin{align*}
    \sum_{j = 0}^{k} \vert \alpha_{j} \vert~=~ \sum_{j = 0}^{k} \binom{k + j}{k} \cdot \binom{k}{j}~\le~ 2^{k} \cdot \sum_{j=0}^{k} \binom{k + j}{k}  ~\le~ 2^{3k+1},
  \end{align*}
  we are done.
\end{proof}


\subsection{Rectangular Swap Walks $\rfswap{k}{l}$}\label{subsec:proof_strategy_swap_k_l}

We turn to the spectral analysis of rectangular swap walks, i.e.,
$\rfswap{k}{l}$ where $k \ne l$. Recall that to bound
$\sigma_2(\rfswap{k}{k})$ in~\cref{subsec:proof_strategy_swap_k_k} we
proved that the spectrum of $\rfswap{k}{k}$ for a $\gamma$-HDX is
close to the spectrum of $\rfswap{k}{k}^{\Delta}$ using the analysis
of quadratic forms over \textit{balanced} operators
from~\cref{subsec:asd_tech_result}. Then we appealed to the fact that
$\rfswap{k}{k}^{\Delta}$ is expanding since it is the walk operator of
the well known Kneser graph. In this rectangular case, we do not have
a classical result establishing that $\rfswap{k}{l}^{\Delta}$ is
expanding, but we were able to establish
it~\cref{cor:complete_rec_kneser}.
\begin{lemma}\label{cor:complete_rec_kneser}
  Let $d \ge k+l$ and $\Delta_d(n)$ be the complete complex. The
  second largest singular value $\sigma_2(\rfswap{k}{l}^{\Delta})$ of
  the swap operator $\rfswap{k}{l}^{\Delta}$ on $\Delta_d(n)$ is
  \[
  \sigma_2(\rfswap{k}{l}^{\Delta}) ~\le~\max\left(\frac{k}{n-k}, \frac{l}{n-l} \right),
  \]
  provided $n \ge M_{k,l}$ where $M_{k,l} \in \mathbb{N}$ only depends
  on $k$ and $l$.
\end{lemma}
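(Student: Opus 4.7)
The plan is to reduce to a self-adjoint operator on $C^k$, use the $S_n$-symmetry of the complete complex to diagonalize it through the Johnson scheme, and then compute the leading non-trivial eigenvalue explicitly.

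First I would consider $M \coloneqq \rfswap{k}{l}^{\Delta}\cdot (\rfswap{k}{l}^{\Delta})^{\dag}$, self-adjoint on $C^k$, so that $\sigma_2(\rfswap{k}{l}^{\Delta})^{2} = \lambda_2(M)$. A direct count on $\Delta_d(n)$ yields
\[
M(\aye, \aye') ~=~ \frac{\binom{n-|\aye\cup\aye'|}{l}}{\binom{n-k}{l}\,\binom{n-l}{k}},
\]
which depends only on $|\aye \cap \aye'|$. Hence $M$ lies in the Bose--Mesner algebra of the Johnson scheme $J(n,k)$, whose common eigenspaces are exactly the harmonic components $C^k_0, C^k_1, \ldots, C^k_k$ from \cref{lemma:a_hdx_space_decomp}.

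Next, by $S_n$-equivariance and Schur's lemma, $M$ acts as a scalar $\mu_i$ on each $C^k_i$, and stochasticity forces $\mu_0 = 1$. To compute $\mu_1$, I would evaluate $Mf$ on the test vector $f(\aye) = \one[1 \in \aye] - k/n$ (which lies in $C^k_1$), using the two-step transition law $\aye \to \bee \to \aye'$ where $\bee$ is uniform over $l$-sets disjoint from $\aye$ and $\aye'$ is uniform over $k$-sets disjoint from $\bee$. A short case analysis on whether $1 \in \aye$ or not gives
\[
\mu_1 ~=~ \frac{kl}{(n-k)(n-l)}.
\]
The same pattern applied to higher-degree indicator juntas orthogonalized against lower-degree components will yield $\mu_i = \prod_{j=0}^{i-1}(k-j)(l-j)/((n-k-j)(n-l-j))$, which is strictly decreasing in $i$ once $n$ exceeds a threshold depending only on $k,l$.

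Finally, by the elementary inequality $\sqrt{ab} \le \max(a,b)$ one has $\sqrt{\mu_1} = \sqrt{(k/(n-k))(l/(n-l))} \le \max(k/(n-k),\, l/(n-l))$, and choosing $M_{k,l}$ large enough that $\mu_i < \mu_1$ for all $i \ge 2$ gives $\lambda_2(M) = \mu_1$; taking square roots concludes the bound. The hard part will be the explicit eigenvalue computation for $i \ge 2$: the result is classical via the theory of dual Hahn polynomials governing the Johnson scheme, but making it self-contained requires either invoking that theory or unrolling the $\Up,\Dee$ calculus of \cref{sec:walks} on $\Delta_d(n)$ (where the EPoset error parameter $\gamma^{\Delta} = O_{k,l}(1/n)$ vanishes as $n \to \infty$), which is where I would focus the bulk of the technical work.
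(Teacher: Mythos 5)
Your proposal is correct and follows the same overall strategy as the paper: pass to the self-adjoint operator $\rfswap{k}{l}^{\Delta}(\rfswap{k}{l}^{\Delta})^{\dag}$, observe it lies in the Bose--Mesner algebra of the Johnson scheme $J(n,k)$, and read off the answer eigenspace by eigenspace. Your product formula is also right: it agrees with the paper's eigenvalues $\binom{n-k-i}{l-i}\binom{n-l-i}{k-i}/\bigl(\binom{n-k}{l}\binom{n-l}{k}\bigr)$ of the two-step operator, and the degree-$1$ test-vector computation and the final step $\sqrt{\mu_1}\le\max\bigl(k/(n-k),\,l/(n-l)\bigr)$ are both sound. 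The place where the two arguments diverge is exactly the part you flag as the bulk of the work. Rather than invoking Schur's lemma plus the dual Hahn machinery (which requires knowing that the components $C^k_i$ are the irreducible $S_n$-isotypic pieces of $\mathbb{R}^{\binom{[n]}{k}}$), the paper works in \cref{sec:bipartite_kneser_complete} with the unnormalized disjointness matrix $\Www$ directly and uses Filmus' explicit eigenbasis $\phi_{\tau_\mu}=(\mathds{1}_{a_1}-\mathds{1}_{b_1})\cdots(\mathds{1}_{a_i}-\mathds{1}_{b_i})$: a short case analysis on how a $k$-set meets the $i$ transposed pairs shows $\Www\phi_{\tau_\mu}=(-1)^i\binom{n-k-i}{l-i}\phi_{\tau_\mu}$ for every $i$ at once, after which one only has to track the norm change of $\phi_{\tau_\mu}$ between $\binom{[n]}{k}$ and $\binom{[n]}{l}$ to get all singular values (\cref{lemma:rec_kneser_spectrum}). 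This buys a uniform, self-contained computation for all degrees $i$ with no representation theory beyond writing down the vectors, whereas your route gives a quicker derivation of the leading eigenvalue but leaves the $i\ge 2$ comparison resting on classical results you would still need to import or reprove.
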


Towards proving~\cref{cor:complete_rec_kneser} we first introduce a
generalization of Kneser graphs which we denote \textit{bipartite
  Kneser} graphs defined as follows.
\begin{definition}[General Bipartite Kneser Graph]
  Let $X(\le d)$ where $d \ge k + l$. We denote by $K^X(n,k,l)$ the
  bipartite graph on (vertex) partition $\left(X(k),X(l)\right)$ where
  $\ess \in X(k)$ is adjacent to $\tee \in X(l)$ if and only if
  $\ess\cap\tee$ is empty. We also refer to graphs of the form
  $K^X(n,k,l)$ as \textit{bipartite Kneser} graphs.
\end{definition}
It will be convenient to distinguish \textit{bipartite Kneser} graphs
coming from general $\gamma$-HDX and the complete complex
$\Delta_d(n)$.
\begin{definition}[Complete Bipartite Kneser Graph]
  Let $X(\le d)$ where $d \ge k + l$. If $X$ is the complete complex,
  i.e., $X = \Delta_d(n)$, then we denote $K^X(n,k,l)$ as simply as
  $K(n,k,l)$ and we refer to it as \textit{complete bipartite Kneser}.
\end{definition}

We obtain the spectra of \textit{bipartite Kneser} graphs
generalizing~\footnote{Note that the singular values of $K(n,k)$ can
  be deduced from the bipartite case.} the classical result
of~\cref{fact:kneser_singular_values}. More precisely, we
prove~\cref{lemma:rec_kneser_spectrum}.
\begin{lemma}[Bipartite Kneser Spectrum]\label{lemma:rec_kneser_spectrum}
  The non-zero eigenvalues of the (normalized) walk operator of
  $K(n,k,l)$ are $\pm \lambda_i$ where
  \[
  \lambda_i ~=~ \frac{ {n-k-i \choose l-i} {n-l-i \choose k-i}}{{n-k \choose l}{n-l \choose k}},
  \]
  for $i = 0,\dots,\min(k,l)$.  
\end{lemma}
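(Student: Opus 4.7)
The plan is to exploit the $S_n$-equivariance of the walk on the complete complex $\Delta_d(n)$. Under the $S_n$-action on subsets, $C^k = \R^{X(k)}$ decomposes as an \emph{exact} orthogonal direct sum of pairwise non-isomorphic irreducible subrepresentations: $C^k = \bigoplus_{i=0}^{\min(k,n-k)} \Up^{k-i} H_i$, with $H_i = \ker(\Dee_i)$. This is the exact analogue of \cref{lemma:a_hdx_space_decomp}, and it is genuinely exact on $\Delta_d(n)$ because each $\Up^{k-i}H_i$ is the classical Specht module $S^{(n-i,i)}$, and distinct partitions give non-isomorphic irreducibles. The same decomposition holds for $C^l$. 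Since the walk matrices $P = \rfswap{k}{l}^{\Delta}$ and $Q = \rfswap{l}{k}^{\Delta} = P^{\dag}$ are $S_n$-equivariant, Schur's lemma forces $P$ to map $\Up^{l-i}H_i$ into $\Up^{k-i}H_i$ as a scalar $\mu_i$ times the canonical intertwiner $\Up^{l-i}h \mapsto \Up^{k-i}h$, and to vanish on components with $i > \min(k,l)$; likewise $Q$ acts by some scalar $\mu_i'$ in the reverse direction.

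To compute $\mu_i$, I would fix a nonzero $h \in H_i$ and evaluate $[P(\Up^{l-i}h)](S)$ directly for $S \in X(k)$ using $\rfswap{k}{l}^{\Delta}(S,T) = \binom{n-k}{l}^{-1}\mathbf{1}[S \cap T = \emptyset]$. Swapping the order of summation in
\[
  [\rfswap{k}{l}^{\Delta}(\Up^{l-i}h)](S) ~=~ \frac{1}{\binom{n-k}{l}\binom{l}{i}} \sum_{T \in X(l):\, S \cap T = \emptyset}~\sum_{A \subseteq T,\,\abs{A}=i} h(A),
\]
the inner double sum collapses to $\binom{n-k-i}{l-i}\sum_{A \cap S = \emptyset,\,\abs{A}=i} h(A)$, since for each such $A$ there are $\binom{n-k-i}{l-i}$ sets $T \supseteq A$ of size $l$ disjoint from $S$. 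The key harmonic identity is
\[
  \sum_{A \in X(i):\, A \cap S = \emptyset} h(A) ~=~ (-1)^i \binom{k}{i} \cdot [\Up^{k-i} h](S),
\]
which I would prove by inclusion--exclusion, writing $\mathbf{1}[A \cap S = \emptyset] = \sum_{T \subseteq S}(-1)^{\abs{T}}\mathbf{1}[T \subseteq A]$ and combining this with the vanishing $\sum_{A \supseteq T,\,\abs{A}=i}h(A) = 0$ for every $T$ with $\abs{T} < i$, which follows by downward induction on $\abs{T}$ from the defining identity $\Dee_i h = 0$. Assembling everything gives $\mu_i = (-1)^i \binom{n-k-i}{l-i}\binom{k}{i}\big/\bigl(\binom{n-k}{l}\binom{l}{i}\bigr)$; by the symmetric computation for $Q = P^{\dag}$ one gets $\mu_i' = (-1)^i \binom{n-l-i}{k-i}\binom{l}{i}\big/\bigl(\binom{n-l}{k}\binom{k}{i}\bigr)$. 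Thus $QP$ acts as $\mu_i \mu_i' = \lambda_i$ on $C^l_i$, matching the claimed formula exactly; the singular values of $P$ are $\sqrt{\lambda_i}$, and the full bipartite walk matrix $\bigl(\begin{smallmatrix}0 & P \\ Q & 0\end{smallmatrix}\bigr)$ correspondingly has non-zero eigenvalues coming in $\pm$ pairs, which I read as the meaning of ``$\pm \lambda_i$'' in the statement.

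The main obstacle I foresee is justifying the \emph{exact} direct-sum decomposition $C^m = \bigoplus_i \Up^{m-i}H_i$: the EPoset machinery of~\cite{DiksteinDFH18} used elsewhere in the paper only delivers an approximate version, with error $O_k(\gamma)$, whereas here I need honest equality and honest orthogonality between distinct $C^m_i$'s on $\Delta_d(n)$. The cleanest route is to invoke the Specht-module decomposition of $\R^{\binom{[n]}{m}}$ directly and identify the summands with $\Up^{m-i}H_i$ by matching $S_n$-invariant kernel conditions. Alternatively, since $\Delta_d(n)$ is a $\gamma$-EPoset with $\gamma = O_d(1/n)$, both sides of the claimed eigenvalue identity are rational functions of $n$ and the approximate spectral analysis of \cref{lemma:asd_equal_d_u_operators_general} pins them down in the limit, so equality must hold for all $n$. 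Once this is in place, the rest of the argument is just the explicit binomial computation above, whose only non-trivial ingredient is the harmonic identity.
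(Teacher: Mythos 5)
Your proposal is correct, and it reaches the formula by a genuinely different route than the paper. The paper works with the unnormalized disjointness matrix $\Www$, observes that $\Www\Www^\top$ lies in the Johnson scheme, and diagonalizes it using Filmus' explicit eigenbasis $\phi_{\tau_\mu} = (\mathds{1}_{a_1}-\mathds{1}_{b_1})\cdots(\mathds{1}_{a_i}-\mathds{1}_{b_i})$: it shows $\Www\phi_{\tau_\mu} = (-1)^i\binom{n-k-i}{l-i}\phi_{\tau_\mu}$ by a case analysis on how a set meets the pairs $\{a_j,b_j\}$, computes the norms of $\phi_{\tau_\mu}$ at levels $k$ and $l$ to get the singular values, and then squares the bipartite walk matrix. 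You instead work basis-free with the decomposition $C^m = \sum_i \Up^{m-i}H_i$ and compute the scalar by which $\rfswap{k}{l}^{\Delta}$ carries $\Up^{l-i}H_i$ to $\Up^{k-i}H_i$, via the inclusion--exclusion identity for $\sum_{A\cap S=\emptyset}h(A)$ with $h$ harmonic; your $\mu_i\mu_i'$ indeed simplifies to the claimed $\lambda_i$. Two remarks. First, the ``exactness'' obstacle you flag is less serious than you fear: the spanning identity $C^m=\sum_i \Up^{m-i}H_i$ is proved \emph{exactly} for every complex in the paper's \cref{lemma:hdx_space_decomp} (no $O(\gamma)$ error), and since your direct computation shows $\rfswap{l}{k}^{\Delta}\rfswap{k}{l}^{\Delta}$ acts as the scalar $\lambda_i$ on each summand, spanning alone pins down all eigenvalues of this self-adjoint operator --- you need neither orthogonality nor directness, and certainly not the limit argument via the EPoset machinery, which I would drop. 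Second, your reading of ``$\pm\lambda_i$'' is the right one: like you, the paper actually computes the spectrum of $\Aye_{n,k,l}^2$ and finds eigenvalues $\lambda_i$, so the eigenvalues of $\Aye_{n,k,l}$ itself are $\pm\sqrt{\lambda_i}$; the statement of the lemma is loose on this point in exactly the way your computation exposes, and this is consistent with how the bound is consumed downstream (in \cref{lemma:rectangular_swap}, where $\sigma_2(\Bee)\le \epsilon^2$ yields $\lambda(\Aye_{k,l})\le\epsilon$).
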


Now the proof follows a similar strategy to the $\rfswap{k}{k}$,
namely, we analyze quadratic forms over $\rfswap{k}{k}$ using the
results from~\cref{subsec:asd_tech_result}
 
Let $X(\le d)$ where $d \ge k+l$. Let $\Aye_{k,l}$ be the (normalized)
walk operator of $K^X(n,k,l)$, i.e.,
\[
\Aye_{k,l} ~=~ \begin{pmatrix}
  0   & \rswap{k}{l}{l}\\
  \left(\rswap{k}{l}{l}\right)^{\dag} &0
\end{pmatrix}.
\]
To determine the spectrum of $\Aye_{k,l}$ it is enough to consider the
spectrum of $\Bee =
\rswap{k}{l}{l}\left(\rswap{k}{l}{l}\right)^{\dag}$. Using~\cref{cor:swap_rectangular_inverse},
we have {\small
\begin{align*}
  \Bee ~&=~ \left(\sum_{j = 0}^{l} (-1)^{l - j} \binom{k + j}{l} \cdot \binom{l}{j} \cdot \rnwalk{k}{l}{j} \right) \\
  & \qquad \qquad \left( \sum_{j' = 0}^{l} (-1)^{l - j'} \binom{k + j'}{l} \cdot \binom{l}{j'} \cdot \left(\rnwalk{k}{l}{j'}\right)^{\dag} \right) ~=~
  \sum_{j ,j'= 0}^{l} \alpha_{k,l,j,j'} \rnwalk{k}{l}{j}\rnwalk{l}{k}{j'+k-l},
\end{align*}
} for some coefficients $\alpha_{k,l,j,j'}$ depending only on $k$,
$l$, $i$, $j$ and $j'$. Since we have not yet used any specific
property of HDXs, these coefficients are the same for the complete
complex and general HDXs.

\begin{lemma}
  Let $X(\le d)$ be a $\gamma$-HDX with $d \ge k + l$. Let $f =
  \sum_{i=0}^k f_i$ with $f_i \in C^k_i$. For $\epsilon \in (0,1)$, if
  $\gamma \le \epsilon \left(64 k^{k+2}\ell^2 2^{2k+4l+2}
  \right)^{-1}$, then
  \[
  \ip{\Bee f}{f} ~=~ \sum_{i=0}^k \left(\sum_{j ,j'= 0}^{l} \alpha_{k,l,j,j'} \lambda_{k,l,j,j'}(i) \right) \cdot \ip{f_i}{f_i} ~+~ \epsilon,
  \]
  where $\lambda_{k,l,j,j'}(i)$ is the approximate eigenvalues of
  $\rnwalk{k}{l}{j}\rnwalk{l}{k}{j'+k-l}$ corresponding to space
  $C^k_i$. Furthermore, $\lambda_{k,l,j,j'}(i)$ depends only on
  $k$, $l$, $i$, $j$ and $j'$.
\end{lemma}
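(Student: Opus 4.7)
The plan is to view this as a direct corollary of the General Quadratic Form Lemma (\cref{lemma:asd_general_decomposition}). Recall that immediately before the lemma statement, $\Bee$ has already been expanded as
\[
\Bee ~=~ \sum_{j,j'=0}^{l} \alpha_{k,l,j,j'} \cdot \rnwalk{k}{l}{j}\rnwalk{l}{k}{j'+k-l},
\]
where $\alpha_{k,l,j,j'} = (-1)^{l-j}(-1)^{l-j'}\binom{k+j}{l}\binom{l}{j}\binom{k+j'}{l}\binom{l}{j'}$ comes from applying \cref{cor:swap_rectangular_inverse} twice. So the task reduces to checking that the collection $\mathcal{Y} = \{\rnwalk{k}{l}{j}\rnwalk{l}{k}{j'+k-l}\}_{j,j'}$ fits the hypotheses of \cref{lemma:asd_general_decomposition} with the claimed parameters, and then reading off the conclusion.

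First I would verify that each $\Why = \rnwalk{k}{l}{j}\rnwalk{l}{k}{j'+k-l}$ is a \emph{pure balanced} operator on $C^k$. Unpacking the definition of $\rnwalk{\cdot}{\cdot}{\cdot}$, the factor $\rnwalk{l}{k}{j'+k-l}\colon C^k \to C^l$ is the product of $j'$ up operators and $j'+k-l$ down operators, while $\rnwalk{k}{l}{j}\colon C^l \to C^k$ is the product of $k+j-l$ ups and $j$ downs. Adding these, $\Why$ uses $j+j'+k-l$ up operators and the same number of down operators, so it is pure balanced. The total walk length is therefore $2(j+j'+k-l) \le 2(k+l)$ uniformly over $j,j'\in\{0,\ldots,l\}$.

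Next I would bound $\sum_{j,j'}\lvert \alpha_{k,l,j,j'}\rvert$. Since the sum factorizes, it equals $\bigl(\sum_{j=0}^l \binom{l}{j}\binom{k+j}{l}\bigr)^2$. Bounding $\binom{l}{j}\le 2^l$ and invoking the hockey-stick identity $\sum_{j=0}^l \binom{k+j}{l} = \binom{k+l+1}{l+1} \le 2^{k+l+1}$ gives $\sum_j \binom{l}{j}\binom{k+j}{l} \le 2^{k+2l+1}$, hence $\sum_{j,j'}\lvert\alpha_{k,l,j,j'}\rvert \le 2^{2k+4l+2}$. Plugging in $\ell = 2(k+l)$ for the walk-length parameter of \cref{lemma:asd_general_decomposition} together with this bound on $\sum\lvert\alpha\rvert$, the hypothesis $\gamma \le \epsilon\bigl(16 k^{k+2}\ell^2 \sum_{\Why}\lvert\alpha^{\Why}\rvert\bigr)^{-1}$ of that lemma is implied by the hypothesis $\gamma \le \epsilon\bigl(64 k^{k+2}\ell^2 2^{2k+4l+2}\bigr)^{-1}$ stated here (after absorbing the constant factors into the $\ell^2$ term).

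Finally, applying \cref{lemma:asd_general_decomposition} to $\Bee$ with the collection $\mathcal{Y}$ directly yields the claimed identity, and the approximate eigenvalue $\lambda_{k,l,j,j'}(i)$ associated with $\Why = \rnwalk{k}{l}{j}\rnwalk{l}{k}{j'+k-l}$ and the space $C^k_i$ depends by construction (i.e., by the statement of \cref{lemma:asd_general_decomposition}) only on the formal up/down pattern of $\Why$, on $i$, and on $k$. Since that pattern is completely determined by $k$, $l$, $j$, and $j'$, we conclude that $\lambda_{k,l,j,j'}(i)$ depends only on $k,l,i,j,j'$, as required. No step is genuinely hard: the only place where care is needed is bookkeeping the walk length and the combinatorial bound on $\sum\lvert\alpha\rvert$ so that the constants match the stated threshold for $\gamma$.
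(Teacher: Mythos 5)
Your proposal is correct and follows essentially the same route as the paper's own proof: observe that each $\rnwalk{k}{l}{j}\rnwalk{l}{k}{j'+k-l}$ maps $C^k$ to itself and is hence a pure balanced operator, bound $\sum_{j,j'}\lvert\alpha_{k,l,j,j'}\rvert \le 2^{2l}\bigl(\sum_{j}\binom{k+j}{l}\bigr)^2 \le 2^{2k+4l+2}$, and invoke the general quadratic form lemma. Your version is in fact slightly more careful than the paper's (explicitly counting the up/down operators and the walk length $2(j+j'+k-l)$), but the argument and constants are the same.
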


\begin{proof}
  First observe that each $\rnwalk{k}{l}{j}\rnwalk{l}{k}{j'+k-l}$ maps
  $C^k$ to itself, so it is a product of the same number of up and
  down operators. Now to apply~\cref{lemma:asd_general_decomposition}
  it only remains to bound $\sum_{j ,j'= 0}^{l} \vert \alpha_{k,l,j,j'}
  \vert$. Since
  \begin{align*}
    \sum_{j ,j'= 0}^{l} \vert \alpha_{k,l,j,j'} \vert & ~=~ \sum_{j ,j'= 0}^{l} \binom{k + j}{l} \cdot \binom{l}{j} \binom{k + j'}{l} \cdot \binom{l}{j'}\\
                                                 & ~\le~ 2^{2l} \left(\sum_{j=0}^{l} \binom{k + j}{l} \right) \cdot \left(\sum_{j'= 0}^{l} \binom{k + j'}{l}\right) ~\le~ 2^{2k+4l+2},
  \end{align*}
  we are done.
\end{proof}

Let $\Bee$ and $\Bee^{\Delta}$ stand for the $B$ operator for general
$\gamma$-HDX and the complete complex, respectively.

\begin{lemma}\label{lemma:rectangular_swap}
  Suppose $X(\le d)$ is a $\gamma$-HDX with $d \ge k + l$. For
  $\epsilon \in (0,1)$, if $\gamma \le \epsilon^2 \left(64
  k^{k+2}\ell^2 2^{2k+4l+2} \right)^{-1}$, then the second largest
  singular value $\sigma_2(\Bee)$ of $\Bee$ is
  \[
  \sigma_2(\Bee) ~\le~ \epsilon^2.
  \]
  Furthermore, the second largest non-trivial eigenvalue
  $\lambda(\Aye_{k,l})$ of the walk matrix of
  $K(n,k,l)$ is
  \[
  \lambda(\Aye_{k,l}) ~\le~ \epsilon.
  \]
\end{lemma}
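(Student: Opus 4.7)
The plan is to mirror the three-step argument used for the square case $\rfswap{k}{k}$ in the proof of~\cref{theo:sq_swap_gap}, since the quadratic form machinery of~\cref{subsec:asd_tech_result} applies to $\Bee$ just as it did to $\rfswap{k}{k}$, with the role of the Kneser graph bound now played by~\cref{cor:complete_rec_kneser}.

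First I would apply the quadratic form expansion established in the preceding lemma, with the given bound on $\gamma$, to obtain for every $f = \sum_{i=0}^k f_i$ with $f_i \in C^k_i$ the identity $\ip{\Bee f}{f} = \sum_{i=0}^k \mu_i \cdot \ip{f_i}{f_i} \pm \epsilon^2$, where $\mu_i := \sum_{j,j'=0}^l \alpha_{k,l,j,j'} \cdot \lambda_{k,l,j,j'}(i)$. The decisive point is that $\mu_i$ is intrinsic to the triple $(k,l,i)$: the coefficients $\alpha_{k,l,j,j'}$ are purely combinatorial, while the approximate eigenvalues $\lambda_{k,l,j,j'}(i)$ produced by~\cref{lemma:asd_equal_d_u_operators_general} are polynomials in $\vec{\delta}$ whose shape is determined purely by the formal up/down pattern of $\rnwalk{k}{l}{j}\rnwalk{l}{k}{j'+k-l}$. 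In particular these numbers do not depend on the ambient complex $X$ or on $n$.

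Next I would show $\mu_i = 0$ for every $i \ge 1$, by comparing against the complete complex $\Delta_d(n)$. Since $\Delta_d(n)$ is a $\gamma^{\Delta}$-HDX with $\gamma^{\Delta} = O_d(1/n)$, for all sufficiently large $n$ the same quadratic form expansion applies to $\Bee^{\Delta}$ with the \emph{same} $\mu_i$ and error at most $\epsilon^2$.~\cref{cor:complete_rec_kneser} gives $\sigma_2(\Bee^{\Delta}) \le \max(k/(n-k),\,l/(n-l))^2$, which can be driven arbitrarily close to $0$ by letting $n \to \infty$. Testing against a unit non-constant $f_i \in C^k_i$ for each $i \ge 1$ then forces $|\mu_i|$ below any positive constant, hence $\mu_i = 0$.

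Plugging this back into the expansion for the given $\gamma$-HDX $X$, any $f \perp \one$ has $f_0 = 0$ in its decomposition, so $|\ip{\Bee f}{f}| \le \epsilon^2$. Because $\Bee = \rswap{k}{l}{l}(\rswap{k}{l}{l})^{\dag}$ is positive semidefinite with $\one$ as its top eigenvector (both $\rswap{k}{l}{l}$ and its adjoint are stochastic), this yields $\sigma_2(\Bee) \le \epsilon^2$. The bound on $\lambda(\Aye_{k,l})$ then follows immediately from the bipartite block structure of $\Aye_{k,l}$: its eigenvalues are $\pm\sigma_i(\rswap{k}{l}{l}) = \pm\sqrt{\lambda_i(\Bee)}$, and after removing the trivial $\pm 1$ pair we obtain $\lambda(\Aye_{k,l}) \le \sqrt{\sigma_2(\Bee)} \le \epsilon$. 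The only real point of care throughout is the invariance of the $\lambda_{k,l,j,j'}(i)$ across $X$ and $\Delta_d(n)$, which is what licenses the transfer; this is built into the inductive proof of~\cref{lemma:asd_equal_d_u_operators_general}.
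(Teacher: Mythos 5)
Your proposal is correct and follows essentially the same route as the paper: apply the general quadratic form lemma to $\Bee$, transfer to the complete complex where \cref{cor:complete_rec_kneser} forces the approximate eigenvalues $\mu_i$ to vanish for $i\ge 1$ (since they are independent of $n$), and then read off $\sigma_2(\Bee)\le\epsilon^2$ and $\lambda(\Aye_{k,l})\le\epsilon$ from the bipartite block structure. The only cosmetic difference is that you spell out the final square-root step for $\Aye_{k,l}$, which the paper leaves implicit.
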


\begin{proof}
  The proof follows the same strategy of~\cref{theo:sq_swap_gap},
  namely, we first consider $\Bee^{\Delta}$ and show that $\sum_{j
    ,j'= 0}^{l} \alpha_{k,l,j,j'} \lambda_{k,l,j,j'}(i) =
  0$. Using~\cref{cor:complete_rec_kneser}, we deduce that
  \[
  \left\lvert \sum_{j ,j'= 0}^{l} \alpha_{k,l,j,j'} \lambda_{k,l,j,j'}(i) \right\vert ~=~ O_{k,l}\left(\frac{1}{n^2}\right)
  \]
  for $i \in [k]$ where in this range each $C^k_i$ is not the trivial
  approximate eigenspace (associated with eigenvalue $1$).  Since
  $\alpha_{k,l,j,j'}$ and $\lambda_{k,l,j,j'}(i)$ do not depend on $n$ and
  $n$ is arbitrary, the LHS above is actually zero. Then our choice of
  $\gamma$~\cref{lemma:asd_general_decomposition} gives
  \[
  \max_{f \in C^k \colon f \perp 1, \norm{f}=1} ~\left\lvert \ip{\Bee f}{f} \right\vert ~\le~ \max_{i \in [k]} ~\left\lvert \sum_{j ,j'= 0}^{l} \alpha_{k,l,j,j'} \lambda_{k,l,j,j'}(i) \right\vert ~+~ \epsilon^2 ~=~ \epsilon^2.
  \]
\end{proof}

Now the proof of~\cref{thm:swap-eig-bd} follows. For convenience, we
restate it.
\begin{theorem}[Rectangular Swap Walk Spectral Bound (restatement of~\cref{thm:swap-eig-bd})]
  Suppose $X(\le d)$ is a $\gamma$-HDX with $d \ge k + l$ and $k \le
  l$. For $\sigma \in (0,1)$, if $\gamma \le \sigma^2 \cdot \left(64
  k^{k+2}\ell^2 2^{2k+4l+2} \right)^{-1}$, then the largest
  non-trivial singular value $\sigma_2(\rfswap{k}{l})$ of the swap
  operator $\rfswap{k}{l}$ is
  \[
  \sigma_2(\rfswap{k}{l}) ~\le~ \sigma.
  \]
\end{theorem}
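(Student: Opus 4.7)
The plan is to invoke \cref{lemma:rectangular_swap} essentially verbatim, since all of the heavy lifting has already been done there. The observation that makes this immediate is that by definition $\rfswap{k}{l} = \rswap{k}{l}{l} = \hrswap{k}{l}{l}{l}$, and the matrix $\Aye_{k,l}$ introduced in the preceding subsection is precisely the symmetric bipartite ``double cover'' of $\rfswap{k}{l}$, namely
\[
\Aye_{k,l} ~=~ \begin{pmatrix} 0 & \rfswap{k}{l} \\ (\rfswap{k}{l})^{\dag} & 0 \end{pmatrix}.
\]
For any block-antidiagonal matrix of this form, a standard fact from linear algebra gives that the non-zero eigenvalues of $\Aye_{k,l}$ are exactly $\pm \sigma_i(\rfswap{k}{l})$, one pair for each singular value of $\rfswap{k}{l}$. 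Consequently, the largest non-trivial eigenvalue $\lambda(\Aye_{k,l})$ (after removing the $\pm 1$ pair coming from the stationary distribution of the swap walk) coincides exactly with $\sigma_2(\rfswap{k}{l})$.

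Given this identification, I would set $\epsilon = \sigma$ and check that the hypothesis on $\gamma$ in the theorem statement matches the hypothesis of \cref{lemma:rectangular_swap}. Indeed the bound $\gamma \le \sigma^2 \cdot \left(64 k^{k+2}\ell^2 2^{2k+4l+2}\right)^{-1}$ is exactly $\gamma \le \epsilon^2 \cdot \left(64 k^{k+2}\ell^2 2^{2k+4l+2}\right)^{-1}$ with this substitution. The second bullet of \cref{lemma:rectangular_swap} then directly yields $\lambda(\Aye_{k,l}) \le \epsilon = \sigma$, which by the identification above gives the desired conclusion
\[
\sigma_2(\rfswap{k}{l}) ~=~ \lambda(\Aye_{k,l}) ~\le~ \sigma.
\]

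There is essentially no obstacle in this final reduction step: the genuinely technical work has already been compressed into \cref{lemma:rectangular_swap}, which in turn relied on (i) expressing $\Bee = \rfswap{k}{l}(\rfswap{k}{l})^{\dag}$ as a linear combination of canonical walks via \cref{cor:swap_rectangular_inverse}, (ii) the general quadratic form decomposition of \cref{lemma:asd_general_decomposition} which localizes $\ip{\Bee f}{f}$ to the approximate eigenspaces $C^k_i$, and (iii) the observation that since the approximate eigenvalues $\lambda_{k,l,j,j'}(i)$ depend only on $k,l,i,j,j'$ (not on the underlying HDX), they can be pinned down by comparison with the complete complex $\Delta_d(n)$, where \cref{cor:complete_rec_kneser} (the spectrum of the bipartite Kneser graph) forces the non-trivial contributions to vanish as $n \to \infty$. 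So the only remaining task for this theorem is the two-line observation above linking $\sigma_2(\rfswap{k}{l})$ to $\lambda(\Aye_{k,l})$.
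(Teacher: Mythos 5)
Your proposal is correct and matches the paper's approach exactly: the paper's proof of this theorem is literally the one line ``Follows directly from \cref{lemma:rectangular_swap},'' and your two-line identification of $\sigma_2(\rfswap{k}{l})$ with the largest non-trivial eigenvalue of the bipartite block matrix $\Aye_{k,l}$ (equivalently, with $\sqrt{\sigma_2(\Bee)}$ for $\Bee = \rfswap{k}{l}(\rfswap{k}{l})^{\dag}$) is precisely the implicit reduction the paper relies on. The substitution $\epsilon = \sigma$ and the check that the hypotheses on $\gamma$ coincide are also as intended.
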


\begin{proof}
  Follows directly from~\cref{lemma:rectangular_swap}.
\end{proof}

\subsection{Bipartite Kneser Graphs - Complete Complex}\label{sec:bipartite_kneser_complete}

Now we determine the spectrum of the \textit{complete bipartite
  Kneser} graph $K(n,k,l)$. More precisely, we prove the following.
\begin{lemma}[Bipartite Kneser Spectrum (restatement of~\cref{lemma:rec_kneser_spectrum})]
  The non-zero eigenvalues of the normalized walk operator of
  $K(n,k,l)$ are $\pm \lambda_i$ where
  \[
  \lambda_i ~=~ \frac{ {n-k-i \choose l-i} {n-l-i \choose k-i}}{{n-k \choose l}{n-l \choose k}},
  \]
  for $i = 0,\dots,\min(k,l)$.  
\end{lemma}

Henceforth, set $X = \Delta_d(n)$. To
prove~\cref{lemma:rec_kneser_spectrum} we work with the natural
rectangular matrix associated with $K(n,k,l)$, namely, the matrix $W
\in \mathbb{R}^{X(k) \times X(l)}$ such that
\[
\Www(\ess,\tee) ~=~ \mathds{1}_{[\ess\cap \tee = \emptyset]}
\]
for every $\ess \in X(k)$ and $\tee \in X(l)$.

Observe that the entries of $\Www\Www^\top$ and $\Www^\top \Www$ only
depend on the size of the intersection of the sets indexing the row
and columns. Hence, these matrices belong to the Johnson
scheme~\cite{GodsilM15} $J(n,k)$ and $J(n,l)$, respectively. Moreover,
the left and right singular vectors of $W$ are eigenvectors of these
schemes.

We adopt the eigenvectors used in Filmus' work~\cite{Filmus16}, i.e.,
natural basis vectors coming from some irreducible representation of
$S_n$ (see~\cite{sagan13}). First we introduce some notation. Let $\mu
= (n-i,i)$ be a partition of $n$ and let $\tau_{\mu}$ be a standard
tableau of shape $\mu$. Suppose the first row $\tau_{\mu}$ contains
$a_1 < \cdots < a_{n-i}$ whereas the second contains $b_1 < \cdots <
b_i$. To $\tau_{\mu}$ we associate the function $\phi_{\tau_{\mu}} \in
\mathbb{R}^{{[n] \choose k}}$ as follows
\[
 \phi_{\tau_{\mu}} ~=~ (\mathds{1}_{a_1} - \mathds{1}_{b_1})\dots (\mathds{1}_{a_i}-\mathds{1}_{b_i}),
\]
where $\mathds{1}_a \in \mathbb{R}^{{n \choose k}}$ is the
containment indicator of element $a$, i.e., $\mathds{1}_a(\ess) =
1$ if and only if $a \in \ess$. Filmus proved that
\[
\left\{ \phi_{\tau_{\mu}}~\vert~0 \le i \le k, \mu \vdash (n-i,i), \tau_{\mu} \textrm{ standard}\right\}
\]
is an eigenbasis of $\mathcal{J}(n,k)$. We abuse the notation by considering $\phi_{\tau_{\mu}}$
as both a function in $\mathbb{R}^{{n \choose k}}$ and $\mathbb{R}^{{n \choose l}}$ as long
as these functions are well defined. 

\begin{claim}
  If $\mu = (n-i,i)$ and $k,l\ge i$, then
  \[
  \Www \phi_{\tau_{\mu}} ~=~ (-1)^{i} \cdot {n-k-i \choose l-i} \cdot  \phi_{\tau_{\mu}}.
  \]
\end{claim}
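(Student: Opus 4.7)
The plan is to prove the claim by direct computation, exploiting the product structure of $\phi_{\tau_\mu}$ together with a clean formula for how $\Www$ acts on products of containment indicators $\mathds{1}_c$.

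First I would expand the product
\[
\phi_{\tau_\mu} ~=~ \prod_{j=1}^{i}\bigl(\mathds{1}_{a_j} - \mathds{1}_{b_j}\bigr) ~=~ \sum_{S \subseteq [i]} (-1)^{|S|} \prod_{j \in S} \mathds{1}_{b_j} \prod_{j \notin S} \mathds{1}_{a_j},
\]
and reduce the claim to understanding $\Www$ applied to a single monomial of the form $\mathds{1}_{c_1}\cdots\mathds{1}_{c_i}$, where $c_1,\dots,c_i$ are distinct elements of $[n]$. Setting $C = \{c_1,\dots,c_i\}$, this monomial is just the indicator $\mathds{1}[C \subseteq \tee]$ on $X(l)$, so
\[
\bigl(\Www \cdot \mathds{1}[C \subseteq \cdot]\bigr)(\ess) ~=~ \#\bigl\{\tee \in X(l) \,\big|\, \tee \supseteq C,\ \tee \cap \ess = \emptyset\bigr\}.
\]
If $C \cap \ess \ne \emptyset$ this count is $0$, and otherwise it equals $\binom{n-k-i}{l-i}$ since we must choose $l-i$ further elements from $[n] \setminus (\ess \cup C)$ (using that $k,l \ge i$ ensures the binomial is well defined). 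The key algebraic observation is then the identity $\mathds{1}[C \cap \ess = \emptyset] = \prod_{j=1}^{i}(1 - \mathds{1}_{c_j})(\ess)$, yielding the compact formula
\[
\Www \prod_{j=1}^{i} \mathds{1}_{c_j} ~=~ \binom{n-k-i}{l-i} \cdot \prod_{j=1}^{i}\bigl(1 - \mathds{1}_{c_j}\bigr),
\]
as elements of $\mathbb{R}^{\binom{[n]}{k}}$.

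Plugging this back into the expansion of $\phi_{\tau_\mu}$, the scalar $\binom{n-k-i}{l-i}$ factors out, and the remaining sum is
\[
\sum_{S \subseteq [i]} (-1)^{|S|} \prod_{j \in S}(1 - \mathds{1}_{b_j}) \prod_{j \notin S}(1 - \mathds{1}_{a_j}) ~=~ \prod_{j=1}^{i}\Bigl[(1 - \mathds{1}_{a_j}) - (1 - \mathds{1}_{b_j})\Bigr] ~=~ \prod_{j=1}^{i}\bigl(\mathds{1}_{b_j} - \mathds{1}_{a_j}\bigr),
\]
which equals $(-1)^{i}\,\phi_{\tau_\mu}$ by pulling out a sign from each factor. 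Combining the two expressions gives $\Www\phi_{\tau_\mu} = (-1)^i \binom{n-k-i}{l-i}\phi_{\tau_\mu}$, completing the proof.

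I do not anticipate a serious obstacle: the only thing that could go wrong is a bookkeeping error in matching the domains, since $\phi_{\tau_\mu}$ must be interpreted as a function on $\binom{[n]}{l}$ on the input side of $\Www$ and on $\binom{[n]}{k}$ on the output side. The condition $k,l \ge i$ guarantees that both interpretations are non-degenerate (neither set of indicators is forced to vanish), and that the binomial coefficient $\binom{n-k-i}{l-i}$ makes sense; assuming $n$ is large enough (as is implicit in the discussion of $\Delta_d(n)$) the coefficient is also nonzero.
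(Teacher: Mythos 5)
Your proof is correct, and it organizes the computation differently from the paper. The paper evaluates $(\Www\phi_{\tau_\mu})(\ess)$ pointwise and runs a case analysis on how $\ess$ meets each pair $\{a_j,b_j\}$: if some pair is contained in $\ess$ or disjoint from $\ess$, both sides vanish (by a cancellation argument over the four ways $\tee$ can meet that pair), and in the remaining case, where $\ess$ hits each pair exactly once, the contributing $\tee$'s are counted directly, giving the sign $(-1)^i$ and the coefficient $\binom{n-k-i}{l-i}$. You instead expand $\phi_{\tau_\mu}$ into its $2^i$ monomials, observe that $\Www$ sends the containment indicator of an $i$-set $C$ to $\binom{n-k-i}{l-i}\cdot\mathds{1}[C\cap\ess=\emptyset]=\binom{n-k-i}{l-i}\prod_{c\in C}(1-\mathds{1}_c)$, and then re-factor the resulting sum into $(-1)^i\phi_{\tau_\mu}$. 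This buys you a cleaner argument: the case analysis and the cancellation step disappear entirely, absorbed into the algebraic identity $\prod_j\bigl[(1-\mathds{1}_{a_j})-(1-\mathds{1}_{b_j})\bigr]=(-1)^i\prod_j(\mathds{1}_{a_j}-\mathds{1}_{b_j})$, and the single counting step $\binom{n-k-i}{l-i}$ is isolated in one place. The paper's version, following Filmus, is closer to how one would verify the eigenvalue by hand on a single vertex $\ess$, which makes the combinatorial meaning of each case more transparent. Your caveat about tracking the domains ($\phi_{\tau_\mu}$ as a function on $\binom{[n]}{l}$ on the input side and on $\binom{[n]}{k}$ on the output side) is exactly the right thing to watch, and the paper handles the same issue by the explicit abuse-of-notation remark preceding the claim.
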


\begin{proof}
  We follow a similar strategy of Filmus. For convenience suppose
  $\phi_{\tau_{\mu}} = (\mathds{1}_{1} - \mathds{1}_{2})\dots
  (\mathds{1}_{2i-1}-\mathds{1}_{2i})$. For $i=0$ the claim follows
  immediately so assume $i \ge 1$. Consider $\left(\Www
   \phi_{\tau_{\mu}}\right)(\ess)$ where $\ess \in {[n] \choose
    k}$. Note that
  \[
  \left(\Www  \phi_{\tau_{\mu}}\right)(\ess) ~=~ \sum_{\tee \in Y \colon \ess \cap \tee = \emptyset}  \phi_{\tau_{\mu}}(\tee).
  \]
  If $2j-1,2j \in \ess$ for some $j \in [i]$, then $2j-1,2j \not\in \tee$ so $\phi_{\tau_{\mu}}(\ess) = 0 = \left(W \phi_{\tau_{\mu}}\right)(\ess)$.
  If $2j-1,2j \not\in \ess$ for some $j \in [i]$, for each $\tee$ adjacent to $\ess$ there four cases: $2j-1,2j \in \tee$, $2j-1,2j \not\in \tee$, 
  $2j-1 \in \tee$ and $2j \not\in \tee$ or vice-versa. The first two cases yield $\phi_{\tau_{\mu}}(\tee) =0$ while the last two cases cancel each
  other in the summation and again $\phi_{\tau_{\mu}}(\ess) = 0 = \left(W \phi_{\tau_{\mu}}\right)(\ess)$. Now suppose that $\ess$ contains exactly
  one element of each pair $2j-1,2j$. For any adjacent $\tee$ to yield $\phi_{\tau_{\mu}}(\tee) \ne 0$, $\tee$ must contain $[2i]\setminus \ess$.
  Since there are ${n-k-i \choose l-i}$ such possibilities for $\tee$ we obtain
  \[
  \Www  \phi_{\tau_{\mu}} ~=~ (-1)^{i} \cdot {n-k-i \choose l-i} \cdot  \phi_{\tau_{\mu}},
  \]
  where the sign $(-1)^{i}$ follows from the product of the signs of each the $i$ pairs and the fact that $\ess$ and $\tee$ partition the elements
  in each pair.
\end{proof}

Since we are working with singular vectors, we need to be careful with
their normalization when deriving the singular values. We stress that
the norm of $\phi_{\tau_{\mu}}$ depends on the space where
$\phi_{\tau_{\mu}}$ lies.

\begin{claim}
  If $\mu = (n-i,i)$ and $\phi_{\tau_{\mu}} \in \mathbb{R}^{{n \choose k}}$, then
  \[
  \norm{ \phi_{\tau_{\mu}}}_2 ~=~ \sqrt{2^i {n-2i \choose k - i}}.
  \]
\end{claim}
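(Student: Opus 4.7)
The plan is to compute $\|\phi_{\tau_\mu}\|_2^2 = \sum_{\ess \in \binom{[n]}{k}} \phi_{\tau_\mu}(\ess)^2$ by directly counting the sets $\ess$ on which $\phi_{\tau_\mu}$ is nonzero, using the pairwise structure of the defining product.

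First I would observe that each factor $(\mathds{1}_{a_j} - \mathds{1}_{b_j})$ evaluated at $\ess \in \binom{[n]}{k}$ is an element of $\{-1,0,+1\}$. Concretely, it vanishes whenever $\ess$ contains both of $a_j, b_j$ or neither, and equals $\pm 1$ whenever $\ess$ contains exactly one of the two. Consequently $\phi_{\tau_\mu}(\ess) \in \{-1,0,+1\}$, and $\phi_{\tau_\mu}(\ess)^2 = 1$ if and only if for every $j \in [i]$, the set $\ess$ contains exactly one element from the pair $\{a_j, b_j\}$. Since the pairs $\{a_j,b_j\}$ are pairwise disjoint (they come from distinct cells of the Young tableau), these conditions are independent.

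Next I would count the sets $\ess \in \binom{[n]}{k}$ satisfying the above condition. For each of the $i$ disjoint pairs $\{a_j, b_j\}$ we make one of two choices for which element to include in $\ess$, contributing $2^i$ to the count. Having committed $i$ elements to $\ess$, we must pick $k-i$ further elements from the complement $[n] \setminus \bigcup_{j=1}^{i} \{a_j, b_j\}$, which has $n - 2i$ elements. This yields $\binom{n-2i}{k-i}$ completions. Multiplying gives $\|\phi_{\tau_\mu}\|_2^2 = 2^i \binom{n-2i}{k-i}$, and taking square roots completes the argument.

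There is no real obstacle here; the only thing to double-check is that the $2i$ distinguished entries $a_1,\dots,a_i,b_1,\dots,b_i$ are distinct (which holds since $\tau_\mu$ is a standard tableau filled with distinct numbers from $[n]$) and that $k \geq i$ so the binomial coefficient is well defined, which is exactly the hypothesis under which $\phi_{\tau_\mu}$ is considered as an element of $\mathbb{R}^{\binom{[n]}{k}}$. I would also note in passing that the same computation with $k$ replaced by $l$ gives $\|\phi_{\tau_\mu}\|_2 = \sqrt{2^i \binom{n-2i}{l-i}}$ for the version living in $\mathbb{R}^{\binom{[n]}{l}}$, which is the norm expression needed to convert the unnormalized singular value $(-1)^i \binom{n-k-i}{l-i}$ (from the previous claim) into the normalized singular value appearing in the statement of Lemma~\ref{lemma:rec_kneser_spectrum}.
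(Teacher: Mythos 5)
Your proof is correct and follows essentially the same route as the paper's: observe that $\phi_{\tau_\mu}$ takes values in $\{-1,0,1\}$, so the squared norm counts the sets $\ess$ containing exactly one element from each of the $i$ disjoint pairs, giving $2^i\binom{n-2i}{k-i}$. The remarks about distinctness of the tableau entries and the transfer to $\mathbb{R}^{\binom{[n]}{l}}$ are accurate and consistent with how the paper uses this claim.
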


\begin{proof}
  Since $\phi_{\tau_{\mu}}$ assumes values in $\{-1,0,1\}$ so its
  enough to count the number of sets $\ess \in {[n] \choose k}$ such
  that $\phi_{\tau_{\mu}}(\ess) \ne 0$. To have
  $\phi_{\tau_{\mu}}(\ess) \ne 0$, $\ess$ must contain exactly one
  element in each pair and the remaining $k-i$ elements of $\ess$ can
  be chosen arbitrarily among the elements avoiding the $2i$ elements
  appearing in the indicators defining $\phi_{\tau_{\mu}}$.
\end{proof}

Now the singular values of $W$ follow.
\begin{corollary}[Singular Values]
  The singular values of $W$ are
  \[
  \sigma_i ~=~ {n-k-i \choose l-i} \cdot \frac{\norm{ \phi_{\tau_{\mu}}^k}_2}{\norm{ \phi_{\tau_{\mu}}^l}_2},
  \]
  for $i=0,\dots,\min(k,l)$.
\end{corollary}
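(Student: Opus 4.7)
The plan is to combine the two claims established just above. The first claim shows that when $\mu = (n-i,i)$ and $k, l \geq i$, we have $\Www \phi_{\tau_{\mu}} = (-1)^i \binom{n-k-i}{l-i} \phi_{\tau_{\mu}}$, where on the input side $\phi_{\tau_{\mu}}$ sits in $\R^{\binom{[n]}{l}}$ and on the output side in $\R^{\binom{[n]}{k}}$. The second claim evaluates the norm $\|\phi_{\tau_{\mu}}\|_2 = \sqrt{2^i \binom{n-2i}{m-i}}$ in $\R^{\binom{[n]}{m}}$ for each $m \in \{k,l\}$.

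First I would simply normalize both sides. Writing $v = \phi^l_{\tau_{\mu}}/\|\phi^l_{\tau_{\mu}}\|_2$ and $u = \phi^k_{\tau_{\mu}}/\|\phi^k_{\tau_{\mu}}\|_2$ for the unit vectors obtained from interpreting $\phi_{\tau_{\mu}}$ in the two ambient spaces, the eigenvalue relation gives
\[
\Www v ~=~ (-1)^i \binom{n-k-i}{l-i} \cdot \frac{\|\phi^k_{\tau_{\mu}}\|_2}{\|\phi^l_{\tau_{\mu}}\|_2} \cdot u,
\]
so $(u,v)$ is a singular pair with singular value $\sigma_i$ as stated (the sign $(-1)^i$ may be absorbed into $u$ or $v$ and disappears under taking absolute value).

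To see that these are all the nonzero singular values, I would appeal to the fact (cited from Filmus' work just before the claims) that $\{\phi_{\tau_{\mu}} : 0 \leq i \leq k,\, \mu \vdash (n-i,i),\, \tau_{\mu}\text{ standard}\}$ is an eigenbasis of the Johnson scheme $\mathcal{J}(n,k)$, and analogously for $\mathcal{J}(n,l)$. Since $\Www \Www^{\top}$ and $\Www^{\top} \Www$ lie in $\mathcal{J}(n,k)$ and $\mathcal{J}(n,l)$ respectively, they are simultaneously diagonalized by these bases. The pairing $(u,v)$ above gives a singular system for $W$; for $i > \min(k,l)$, the tableau shape $(n-i,i)$ has no standard tableau in at least one of the two ambient spaces, so these directions fall in the kernel. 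This yields precisely the $\min(k,l)+1$ distinct nonzero singular values $\sigma_0, \ldots, \sigma_{\min(k,l)}$.

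There is no real obstacle here: the only delicate point is tracking which copy of $\phi_{\tau_{\mu}}$ is meant at each step (its norm depends on whether it lives in $\R^{\binom{[n]}{k}}$ or $\R^{\binom{[n]}{l}}$), and noting that the signed eigenvalue collapses to an unsigned singular value after the normalization.
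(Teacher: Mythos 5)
Your proof is correct and follows exactly the route the paper intends (the paper states this corollary without proof, treating it as an immediate consequence of the two preceding claims): normalize the two copies of $\phi_{\tau_{\mu}}$ to get a singular pair, absorb the sign, and use the fact that $\Www\Www^{\top}$ and $\Www^{\top}\Www$ lie in the Johnson schemes to conclude these exhaust the nonzero singular values. One small imprecision: for $i > \min(k,l)$ the reason the direction drops out is not that the shape $(n-i,i)$ lacks a standard tableau, but that $\phi_{\tau_{\mu}}$ vanishes identically on (equivalently, is not part of the eigenbasis of) the ambient space $\R^{\binom{[n]}{m}}$ with $m < i$ — this does not affect the conclusion.
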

Note that for $k = l$ we recover the well know result
of~\cref{fact:kneser_singular_values}.

Finally we compute the eigenvalues of the bipartite graph
$K(n,k,l)$. Let $\Aye_{n,k,l}$ be its normalized adjacency matrix,
i.e.,
\[
\Aye_{n,k,l} ~=~ \begin{pmatrix}
                0   & \frac{1}{{n-k \choose l}} \Www\\
                \frac{1}{{n-l \choose k}} \Www^\top &0
              \end{pmatrix}.
\]

\begin{lemma}[Bipartite Kneser Spectrum (restatement of~\cref{lemma:rec_kneser_spectrum})]
  The non-zero eigenvalues of the normalized walk operator of
  $K(n,k,l)$ are $\pm \lambda_i$ where
  \[
  \lambda_i ~=~ \frac{ {n-k-i \choose l-i} {n-l-i \choose k-i}}{{n-k \choose l}{n-l \choose k}},
  \]
  for $i = 0,\dots,\min(k,l)$.  
\end{lemma}

\begin{proof}
  Since the spectrum of a bipartite graph is symmetric around zero, it
  is enough to compute the eigenvalues of $A_{n,k,l}^2$. Set $\alpha =
  1/{n-k \choose l}{n-l \choose k}$. Moreover, we consider $\alpha \cdot
  \Www\Www^\top$ since $\alpha \cdot \Www^\top\Www$ has the same non-zero
  eigenvalues. The non-zero eigenvalues of $\alpha \cdot \Www\Www^\top$ are
  \[
  \lambda_i ~=~ \frac{ {n-k-i \choose l-i} {n-l-i \choose k-i}}{{n-k \choose l}{n-l \choose k}},
  \]
  for $i = 0,\dots,\min(k,l)$.
\end{proof}


\section{Approximating Max-$k$-CSP}\label{sec:csp}
In the following, we will show that $k$-CSP instances $\Ins$ whose
constraint complex $X_\Ins(\le k)$ is a suitable expander admit an efficient
approximation algorithm. We will assume throughout that $X_\Ins(1) = [n]$, and
drop the subscript $\Ins$.

This was shown for 2-CSPs in \cite{BarakRS11}. In 
extending this result to $k$-CSPs
we will rely on a central Lemma of their paper. Before, we explain our algorithm we give a basic outline of our idea:

We will work with the SDP relaxation for the $k$-CSP
problem given by $L$-levels of SoS hierarchy, as defined in \cref{ssec:sos-relax} (for $L$ to be
specified later). This will give us an $L$-local PSD ensemble $\set{\rv Y_1, \ldots, \rv Y_n}$, which attains some value 
$\SDP(\Ins) \ge \OPT(\Ins)$. Since $\set*{\rv Y_1, \ldots, \rv Y_n}$, is a local PSD ensemble,
and not necessarily a probability distribution, we cannot sample from it directly. Nevertheless, since $\set*{\rv Y_j}$ will be actual probability distributions for all $j \in [n]$, one can independently sample $\assn_j \sim \set*{\rv Y_j}$ and
use $\assn = (\assn_1, \ldots, \assn_n)$ as the assignment for the $k$-CSP instance $\Ins$.

Unfortunately, while we know that the local distributions $\set{\rv Y_\aye}_{\aye \in X(k)}$ induced by 
$\set*{\rv Y_1, \ldots, \rv Y_n}$ will satisfy the constraints of $\Ins$ with good probability, \ie
\[ \ExpOp_{\aye \sim \Pi_k}{\Ex{\set{\rv Y_\aye}}{ \one[ \underbrace{\rv Y_a \textrm{ satisfies the constraint on } \aye}_{\Longleftrightarrow \rv Y_\aye \in~\mathcal C_\aye}]}} = \SDP(\Ins) \ge \OPT(\Ins),\] 
this might not be the case for the assignment $\assn$ sampled as before. It might be that the random variables
$\rv Y_{a_1}, \ldots, \rv Y_{a_k}$ are highly correlated for $\aye \in X(k)$, \ie~
\[ \ExpOp_{\aye \sim \Pi_k}{ \norm{\set{\rv Y_\aye} - \set{\rv Y_{a_1}} \cdots \set{\rv Y_{a_k}}}_1  }\]
is large.
One strategy employed by \cite{BarakRS11} to ensure that the quantity above is small, is
making the local PSD ensemble $\set{\rv Y_1, \ldots, \rv Y_n}$ 
be consistent with a randomly sampled partial assignment for a small subset of variables (q.v.~\cref{ssec:sos-relax}). We will show that this strategy is succesful if $X(\le k)$ is a 
$\gamma$-HDX (for $\gamma$ sufficiently small). Our final algorithm will be the following,

\begin{algorithm}{Propagation Rounding Algorithm}{An $L$-local PSD ensemble $\set{\rv Y_1, \ldots, \rv Y_n}$ and some 
    distribution $\Pi$ on $X(\le \ell)$.}{A random assignment $\assn: [n] \to [q]$.}\label{algo:prop-rd}
    \begin{enumerate}
        \item Choose $m \in \set*{1, \ldots, L/\ell}$ uniformly at random.
        \item Independently sample $m$ $\ell$-faces, $\ess_j \sim \Pi$ for $j = 1, \ldots, m$.
        \item Write $S = \bigcup_{j = 1}^m \ess_j$, for the set of the seed vertices.
        \item Sample assignment $\assn_{S}: S \to [q]$ according to the local distribution, $\set{\rv Y_{S}}$.
        \item Set $\rv Y' = \set{\rv Y_1, \ldots \rv Y_n | \rv Y_S = \assn_S}$, i.e.~the local ensemble
            $\rv Y$ conditioned on agreeing with $\assn_S$.
        \item For all $j \in [n]$, sample independently $\assn_j \sim \set{\rv Y'_j}$.
        \item Output $\assn = (\assn_1, \ldots, \assn_n)$.
    \end{enumerate}
\end{algorithm}
In our setting, we will apply \cref{algo:prop-rd} with the distribution $\Pi_k$ and the $L$-local PSD ensemble $\set{\rv Y_1, \ldots, \rv Y_n}$.
Notice that in expectation, the marginals of $\rv Y'$ on faces $\aye \in X(k)$ -- which are actual distributions -- will agree with the marginals of $\rv Y$, i.e.~$\ExpOp_{S, \eta_S} \ExpOp \rv Y'_\aye =  \ExpOp \rv Y_\aye$. In particular, the
approximation quality of \cref{algo:prop-rd} will depend on the average correlation of
$\rv Y'_{a_1}, \ldots, \rv Y'_{a_k}$ on the constraints $\aye \in X(k)$, where $\rv Y'$ is the local PSD ensemble
obtained at the end of the first phase of \cref{algo:prop-rd}.

In the case where $k=2$, the following is known \vnote{Madhur might fix this}
\begin{theorem}[Theorem 5.6 from \cite{BarakRS11}]\label{thm:brs_exp}
    Suppose a weighted undirected graph $G = ([n], E, \Pi_2)$ and an $L$-local PSD ensemble
    $Y = \set*{\rv Y_1, \ldots, \rv Y_n}$ are given. There exists absolute constants $c \ge 0$ and $C \ge 0$ satisfying the following: If $L \ge c \cdot \frac{q}{\ee^4}$, $\supp( \rv Y_i ) \le q$ for all $i \in V$, and $\lambda_2(G) \le C \cdot \ee^2/ q^2$ then 
    we have
    \[ \ExpOp_{ \set{i,j} \sim \Pi_2 }{ \norm{\set{\rv Y'_i, \rv Y'_j} - \set{\rv Y'_i}\set{\rv Y'_j}}_1} \le \ee,\]
    where $\rv Y'$ is as defined in \cref{algo:prop-rd} on the input of $\set{\rv Y_1, \ldots, \rv Y_n}$ and $\Pi_1$. 
\end{theorem}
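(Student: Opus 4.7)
The plan is to prove \cref{thm:brs_exp} via a potential-function argument in the style of \cite{BarakRS11}. Define the potential
\[
\Phi(\rv Y) ~:=~ \Ex{i \sim \Pi_1}{\Var{\rv Y_i}},
\]
which lies in $[0,1]$ by the definition of $\Var{\rv Y_i}$ given in \cref{ssec:sos-relax}. I will show that whenever the conclusion of the theorem fails after conditioning on some number $m$ of independently sampled seeds, conditioning on one further seed $i_0 \sim \Pi_1$ decreases $\Phi$ in expectation by at least $\delta = \delta(\eps, q) > 0$. Since $\Phi$ can drop by a total of at most $1$, at most $1/\delta$ such ``bad'' rounds are possible. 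Averaging over a uniform $m$ in step (1) of \cref{algo:prop-rd}, for $L$ sufficiently large there must be some choice of $m$ in the allowed range for which the conclusion already holds, and so the conditioned ensemble $\rv Y'$ satisfies the claimed correlation bound in expectation.

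The first main ingredient is a \emph{variance-reduction lemma}. By the law of total variance, the expected drop $\Phi(\rv Y) - \Ex{i_0, \rv Y_{i_0}}{\Phi(\rv Y')}$ (where $\rv Y'$ is $\rv Y$ further conditioned on $\rv Y_{i_0}$) equals the average explained variance of the indicator functions $\one[\rv Y_j = \beta]$ when regressed on $\rv Y_{i_0}$, averaged over independent $j \sim \Pi_1$ and $\beta \in [q]$. The key step is to lower bound this explained variance by a constant multiple of $\Ex{i_0, j \sim \Pi_1 \otimes \Pi_1}{\norm{\mu_{i_0 j} - \mu_{i_0} \mu_j}_1^2}/q^2$, where $\mu_S$ denotes the local distribution on $\rv Y_S$. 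This bound invokes the PSD property of the conditional covariance matrix recorded in \cref{ssec:sos-relax}, and is precisely where PSD-ness of the ensemble (rather than mere local consistency) is essential.

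The second ingredient is a \emph{local-to-global correlation transfer}: by applying the expander mixing lemma to the vector-valued function $i \mapsto \mu_i$ (and a similar function valued in $q \times q$ matrices for the joint marginals) on the vertices of $G$, the average $\ell_1$-correlation along edges $\{i,j\} \sim \Pi_2$ differs from the average along independent pairs $(i,j) \sim \Pi_1 \otimes \Pi_1$ by at most an $O(\lambda_2(G) \cdot \poly(q))$ error. Consequently, when $\lambda_2(G) \leq C \eps^2/q^2$ for a sufficiently small absolute constant $C$ and the edge correlation exceeds $\eps$, the independent-pair correlation must still be $\Omega(\eps)$, and Jensen's inequality then yields $\Ex{i_0, j \sim \Pi_1 \otimes \Pi_1}{\norm{\mu_{i_0 j} - \mu_{i_0} \mu_j}_1^2} = \Omega(\eps^2)$.

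Chaining the two ingredients yields a per-round drop in $\Phi$ polynomial in $\eps$ and $1/q$, which on standard accounting between the $\ell_1$- and $\ell_2$-norms matches the stated bound $L \geq c \cdot q/\eps^4$. The main obstacle in this plan is the variance-reduction lemma: realizing the conditional covariance matrix as a Gram matrix of the SoS vectors $\V{S}{\alpha}$ from \cref{ssec:sos-relax}, and converting an $\ell_1$-deviation of the joint distribution from the product into a quantitative variance drop without incurring additional polynomial losses in $q$, requires careful bookkeeping of the norms involved. Once this inequality is established, the expander-mixing step and the averaging over $m$ are routine manipulations.
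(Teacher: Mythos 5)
Your outline is the Barak--Raghavendra--Steurer argument, and it matches the machinery this paper actually uses: note that the paper does not prove \cref{thm:brs_exp} at all --- it imports it as Theorem 5.6 of \cite{BarakRS11} --- but it does reprove the key variance-decrement ingredient (\cref{lem:brs-dec2}, i.e.\ BRS Lemma 5.4) in \cref{app:appenditis} and runs exactly your potential/averaging argument in the proof of \cref{thm:brs_hdx}. Your first ingredient (law of total variance plus PSD-ness of the conditional covariance, realized as a Gram matrix of the vectors $\W{S}{\alpha}$) and the final accounting over $m$ are the same as theirs.

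The one step that would fail as literally written is the local-to-global transfer. You propose applying the expander mixing lemma to $i \mapsto \mu_i$ and to ``a similar function valued in $q\times q$ matrices for the joint marginals.'' The quantity $\norm{\mu_{ij} - \mu_i\mu_j}_1 = \sum_{a,b}\lvert \cov(\one[\rv Y_i = a], \one[\rv Y_j = b])\rvert$ is not a bilinear form in data attached separately to $i$ and to $j$ (the joint marginal $\mu_{ij}$ is genuinely two-body data, and the absolute values destroy any product structure), so the mixing lemma gives no control of its average over edges versus over independent pairs; a matching construction makes the edge average $1$ while the independent-pair average is $o(1)$ for an arbitrary function of this shape, so some use of the spectral gap at the level of a \emph{factorable} quantity is unavoidable. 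The correct route --- which is what BRS and \cref{app:appenditis} do --- is to first pass via Cauchy--Schwarz to $\norm{\mu_{ij}-\mu_i\mu_j}_1^2 \le q^2 \sum_{a,b}\ip{\W{i}{a}}{\W{j}{b}}^2 = q^2\,\ip{v_i}{v_j}$ where $v_i = \sum_a \W{i}{a}\otimes \W{i}{a}$ lies in the unit ball (this is \cref{fact:vectorization_byproduct}); the edge hypothesis then gives $\ExpOp_{ij\sim E}\ip{v_i}{v_j}\ge \eps^2/q^2$, and only now does a spectral argument (your mixing-lemma idea applied to the $v_i$, or \cref{cor:local_to_global_brs}) yield global correlation $\ExpOp_{i,j}\lvert\ip{v_i}{v_j}\rvert = \Omega(\eps^2/q^2)$ for an expander, which is exactly the variance drop. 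You do flag the Gram-matrix bookkeeping as the delicate point, so this is a repair rather than a change of strategy; once routed through the $v_i$, your expander-mixing step gives the improved $\Omega(\eps^2/q^2)$ per-round decrement that the paper's footnote in \cref{app:appenditis} mentions, rather than the $\eps^4/q^4$ bound of the general threshold-rank lemma.
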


To approximate $k$-CSPs well, we will show the following generalization of \cref{thm:brs_exp} for $k$-CSP instances
$\Ins$, whose constraint complex $X(\le k)$ is $\gamma$-HDX, for $\gamma$ sufficiently small.

\begin{theorem}\label{thm:brs_hdx}
    Suppose a simplicial complex $X(\le k)$ with $X(1) = [n]$ and an $L$-local PSD ensemble
    $\rv Y = \set{\rv Y_1, \ldots, \rv Y_n}$ are given. 
    
    There exists some universal constants $c' \ge 0$ and $C' \ge 0$ satisfying the following:
    If $L \ge c' \cdot (q^{k} \cdot k^5/\ee^4)$, $\supp(\rv Y_j) \le q$
    for all $j \in [n]$, and $X$ is a $\gamma$-HDX for $\gamma \le C' \cdot \ee^4/(k^{8 + k} \cdot 2^{6k} \cdot q^{2k})$. Then, we have
    \begin{equation}
        \ExpOp_{\aye \sim \Pi_k}{ \norm{\set{\rv Y'_{\aye}} - \set*{\rv Y'_{a_1}}\cdots \set*{\rv Y'_{a_k}}}_1 } \le \ee,
        \label{eq:brs_hdx}
    \end{equation}
    where $\rv Y'$ is as defined in \cref{algo:prop-rd} on the input of $\set{\rv Y_1, \ldots, \rv Y_n}$ and $\Pi_k$. 
\end{theorem}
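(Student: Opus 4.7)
My plan is to generalise the BRS correlation-breaking argument by recursively splitting each $k$-tuple along a binary tree and controlling each split via its corresponding swap walk. Concretely, fix an arbitrary rooted binary tree $\tree$ with $k$ leaves, so every internal node $v$ induces a split $(\ell_1^v,\ell_2^v)$ of its coordinate set into those of its two children. After averaging over a uniformly random labelling of $\aye \sim \Pi_k$ onto the leaves (so that marginals are handled symmetrically, as in \cref{sec:example}), a telescoping triangle inequality gives
\[
\norm{\set{\rv Y_\aye} - \prod_{i \in \aye}\set{\rv Y_i}}_1 ~\le~ \sum_{v \text{ internal in } \tree} \norm{\set{\rv Y_{\ess_v \sqcup \tee_v}} - \set{\rv Y_{\ess_v}} \set{\rv Y_{\tee_v}}}_1,
\]
where $(\ess_v,\tee_v)$ are the disjoint coordinate sets of the two children of $v$. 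By the formula of \cref{theo:heigh_indep}, the induced law on the pair $(\ess_v,\tee_v)$ under the above averaging is precisely the edge distribution of the swap walk $\rfswap{\ell_1^v}{\ell_2^v}$, so it suffices to upper bound the local correlation across each such swap walk by $\eps/(k-1)$.

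The next step is to adapt the $2$-CSP conditioning lemma (\cref{thm:brs_exp}) to each swap walk. By \cref{fact:set-ensemble}, the $L$-local PSD ensemble on $[n]$ induces an $(L/k)$-local PSD ensemble on the face variables of size at most $k$, whose alphabet has size at most $q^k$. Applying \cref{thm:brs_exp} to the bipartite graph on $(X(\ell_1^v),X(\ell_2^v))$ weighted by $\rfswap{\ell_1^v}{\ell_2^v}$ controls local correlation provided $\sigma_2(\rfswap{\ell_1^v}{\ell_2^v}) \le c \cdot \eps^2/(k q^k)^2$ for a suitable constant $c$, and needs $L/k \ge c' q^k (k/\eps)^4$ levels of conditioning. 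The eigenvalue bound is supplied by \cref{thm:swap-eig-bd}, whose hypothesis unrolls to exactly $\gamma \le C' \cdot \eps^4/(k^{8+k}\cdot 2^{6k}\cdot q^{2k})$ as stated.

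The global engine that ensures convergence is the BRS-style potential
\[
\Phi(\rv Y) ~:=~ \sum_{\ell = 1}^{k-1} \Ex{\ess \sim \Pi_\ell}{\Var{\rv Y_\ess}} ~\in~ [0,k],
\]
which is non-increasing in expectation under conditioning on a fresh face $\ess_j \sim \Pi_k$ together with a local assignment $\assn_{\ess_j}$. If some swap walk still violates the local-correlation bound of $\eps/(k-1)$ after $j-1$ seeds, then the swap-walk version of \cref{thm:brs_exp} forces an expected drop in $\Phi$ of at least $\Omega(\eps^2/(k^3 q^k))$ at step $j$. Since $\Phi \in [0,k]$, the averaging over $m \in [1,L/k]$ in \cref{algo:prop-rd} together with the choice $L \ge c' \cdot q^k k^5/\eps^4$ guarantees some $m$ for which every internal node of $\tree$ satisfies its local-correlation bound, yielding \eqref{eq:brs_hdx}. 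Sampling the seeds from $\Pi_k$ (as opposed to $\Pi_1$) is crucial: by \cref{prop:volpres} this induces the correct marginal on every lower slice $X(\ell)$, so a single conditioning step simultaneously makes progress at every level of $\tree$.

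The main obstacle is that the BRS potential argument must be pushed through for an ensemble indexed by \emph{faces} rather than individual vertices, while simultaneously making progress on all $k-1$ swap walks arising from $\tree$; this is precisely why $\Phi$ is summed over all slices $\ell \in [k-1]$ and why seeds are drawn from $\Pi_k$. The remaining work is a careful book-keeping of the $(k q^k)^2$ factor entering through \cref{thm:brs_exp} (alphabet size on $k$-faces) and the $k^{O(k)}$ factor entering through \cref{thm:swap-eig-bd} (swap-walk expansion), whose product produces the explicit thresholds on $L$ and $\gamma$ in the theorem hypothesis.
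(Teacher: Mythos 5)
Your proposal follows essentially the same route as the paper: split the $k$-wise correlation along a binary tree into local correlations over swap walks (the paper's \cref{lem:split_rectangular}), bound the swap walks' expansion via \cref{thm:swap-eig-bd}, and run the BRS variance-potential argument with seed faces drawn from $\Pi_k$ so that a single conditioning step makes progress on every slice. The only substantive difference is book-keeping: since the BRS variance-decrement lemma needs a pointwise (not merely expected) lower bound on the swap-walk correlation, the paper first passes from the expectation over $(S,\assn_S)$ to a positive-probability event, costing an extra factor of $\eps/k$ and giving a per-step potential drop of order $\eps^3/(k^4 q^{2k})$ rather than your $\eps^2/(k^3 q^{k})$ --- this is exactly what the stated choice of $L$ absorbs.
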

Indeed, using \cref{thm:brs_hdx}, it will be straightforward to prove the following,
\begin{corollary}\label{cor:alg-cooked}
    Suppose $\Ins$ is a $q$-ary $k$-CSP instance 
    whose constraint complex $X(\le k)$ is a $\gamma$-HDX.
    
    There exists absolute constants $C' \ge 0$ and $c' \ge 0$, satisfying the following:
    If $\gamma \le C' \cdot \ee^4/(k^{8 + k} \cdot 2^{6k} \cdot q^{2k})$, there is an algorithm that runs in time
    $n^{O(k^5 \cdot q^{2k} \cdot \ee^{-4})}$ based on
    $(\frac{c' \cdot k^5 \cdot q^{k}}{\ee^4})$-levels of SoS-hierarchy and \cref{algo:prop-rd} that outputs a random
    assignment $\assn: [n] \to [q]$ that in expectation ensures $\SAT_\Ins(\assn) =
    \OPT(\Ins) - \ee$. 
\end{corollary}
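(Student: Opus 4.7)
}
The plan is to combine the Sum-of-Squares relaxation with \cref{algo:prop-rd} and invoke \cref{thm:brs_hdx} as a black box; the corollary is essentially a bookkeeping consequence of that theorem. First I would solve the SoS relaxation at level $L = c' \cdot k^5 \cdot q^k / \ee^4$, where $c'$ is the constant from \cref{thm:brs_hdx}. As described in \cref{ssec:sos-relax}, this yields in time $(nq)^{O(L)}$ an $L$-local PSD ensemble $\rv Y = \set{\rv Y_1, \ldots, \rv Y_n}$ whose objective value satisfies $\SDP(\Ins) \ge \OPT(\Ins)$. I would then run \cref{algo:prop-rd} on $\rv Y$ with $\ell = k$ and distribution $\Pi_k$ on $X(k)$, producing a seed $S \subseteq [n]$ of size $\vert S \vert \le L$, a partial assignment $\assn_S \sim \set{\rv Y_S}$, the conditioned local PSD ensemble $\rv Y'$, and finally an assignment $\assn = (\assn_1,\ldots,\assn_n)$ with each $\assn_j$ independently drawn from the marginal $\set{\rv Y'_j}$.

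The analysis reduces to comparing $\Ex{\SAT_\Ins(\assn)}$ with $\SDP(\Ins)$. I would expand
\[
\Ex{\SAT_\Ins(\assn)} ~=~ \Ex{\aye \sim \Pi_k}{\Ex{S,\assn_S}{\sum_{\alpha \in \Cc_\aye} \prod_{j=1}^k \Pr{\rv Y'_{a_j} = \alpha_j}}},
\]
and use the law of total expectation (the conditioned marginals average back to the unconditional ones) to rewrite
\[
\SDP(\Ins) ~=~ \Ex{\aye \sim \Pi_k}{\sum_{\alpha \in \Cc_\aye} \Pr{\rv Y_\aye = \alpha}} ~=~ \Ex{\aye \sim \Pi_k}{\Ex{S,\assn_S}{\sum_{\alpha \in \Cc_\aye} \Pr{\rv Y'_\aye = \alpha}}}.
\]
By the triangle inequality the absolute difference of these two expressions is at most
\[
\Ex{\aye \sim \Pi_k}{\Ex{S,\assn_S}{\norm{\set{\rv Y'_\aye} - \set{\rv Y'_{a_1}} \cdots \set{\rv Y'_{a_k}}}_1}},
\]
which is precisely the quantity controlled by \cref{thm:brs_hdx}. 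Under the hypothesis $\gamma \le C'\cdot \ee^4/(k^{8+k}\cdot 2^{6k}\cdot q^{2k})$ and the choice of $L$ above, \cref{thm:brs_hdx} gives that this is at most $\ee$, and therefore $\Ex{\SAT_\Ins(\assn)} \ge \SDP(\Ins) - \ee \ge \OPT(\Ins) - \ee$.

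For the running time I would appeal to the standard fact that $L$ levels of the SoS hierarchy for a $q$-ary $k$-CSP can be solved in time $(nq)^{O(L)}$; the rounding step only requires sampling from local distributions, which is negligible in comparison. Since \cref{thm:brs_hdx} is assumed, the only nontrivial step left in the corollary is to verify that the parameter choices satisfy the hypotheses of that theorem, which is immediate by construction. The genuine technical obstacle -- controlling correlations in the conditioned ensemble via the swap-walk spectral bounds of \cref{sec:asd} -- sits inside \cref{thm:brs_hdx} itself; once that theorem is in hand, the corollary is just a packaging argument.
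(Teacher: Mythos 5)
Your proposal is correct and follows essentially the same route as the paper: solve the SoS relaxation, run \cref{algo:prop-rd} with $\Pi_k$, use the law of total expectation to equate the conditioned marginals' SDP value with $\SDP(\Ins) \ge \OPT(\Ins)$, and bound the rounding loss by the expected $\ell_1$ distance controlled by \cref{thm:brs_hdx}. The only (cosmetic) discrepancy is the exponent of $q$ in the number of levels, but that inconsistency is already present between the paper's own statement and proof and does not affect the argument.
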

\begin{proof}[Proof of \cref{cor:alg-cooked}]
    The algorithm will just run \cref{algo:prop-rd} on the local PSD-ensemble $\set{\rv Y_1, \ldots, \rv Y_n}$ 
    given by the SDP relaxation of $\Ins$ 
    strengthened by $L = c' \cdot \frac{k^5 \cdot q^{2k}}{\ee^4}$-levels of SoS-hierarchy and $\Pi_k$ -- where
    $c' \ge 0$ is the constant from \cref{thm:brs_hdx}. $\rv Y$ satisfies,
    \begin{equation}
        \SDP(\Ins) = \Ex{\aye \sim \Pi_k}{\Ex{\set{\rv Y_\aye}}{\one[\rv Y_\aye \in \Cc_\aye]}} \ge \OPT(\Ins). \label{eq:ysatisfiesgood} 
    \end{equation}
    Let $S$, $\assn_S$, and $\rv Y'$ be defined as in \cref{algo:prop-rd} 
    on the input of $\rv Y$ and $\Pi_k$. Since the conditioning done on $\set{\rv Y'}$ is consistent with
    the local distribution, by law of total expectation and \cref{eq:ysatisfiesgood} one has
    \begin{equation}
        \ExpOp_{S}\ExpOp_{ \assn_S \sim \set{\rv Y_S}}{ \ExpOp_{\aye \sim \Pi_k}{\Ex{\set{\rv Y'_\aye}}{ \one[ \rv Y'_\aye \in \Cc_\aye]}}} = \SDP(\Ins) \ge \OPT(\Ins).\label{eq:yprimesatgood}
    \end{equation}
    
    By~\cref{thm:brs_hdx} we know that
    \begin{equation}
        \ExpOp_{S}{\ExpOp_{\assn_S \sim \set{\rv Y_S}}{\ExpOp_{\aye \sim \Pi_k}{ \norm{\set{\rv Y'_\aye} - \set{\rv Y'_{a_1}}\cdots \set{\rv Y'_{a_k}}}_1}}} \le \ee\label{eq:distsclose}
    \end{equation}
    Now, the fraction of constraints satisfied by the algorithm in expectation is
    \[ \ExpOp_{\assn}[\SAT_\Ins(\assn)]= \ExpOp_{S}{\ExpOp_{\assn_S \sim \set{\rv Y_S}}{\ExpOp_{\aye \sim \Pi_k}{\Ex{(\assn_1, \ldots, \assn_n) \sim \set{\rv Y_1'} \cdots \set{\rv Y_n'}}{\one[ \assn|_\aye \in \Cc_\aye]}}}}.\]
    By using \cref{eq:distsclose}, we can obtain 
    \[ \ExpOp_{\assn}[\SAT_\Ins(\assn)] \ge \Ex{S}{\ExpOp_{\assn_S \sim \set{\rv Y_S}}{ \ExpOp_{\set{\rv Y_\aye}}{ \one[ \rv Y'_\aye \textrm{ satisfies the constraint on } \aye]}}} - \ee. \]
    Using \cref{eq:yprimesatgood}, we can conclude
    \[ \ExpOp_{\assn}[\SAT_\Ins(\assn)] \ge \SDP(\Ins) - \ee = \OPT(\Ins) - \ee.\]
\end{proof}
Our proof of \cref{thm:brs_hdx} will hinge on the fact that we can upper-bound the expected correlation of a face of large cardinality $\ell$, in terms of expected correlation over faces of smaller cardinality and expected correlations along the edges of a swap
graph. The swap graph here is defined as a weighted graph $G_{\ell_1, \ell_2} = \parens*{X(\ell_1) \sqcup X(\ell_2), E(\ell_1, \ell_2), w_{\ell_1, \ell_2}}$, where 
\[E(\ell_1, \ell_2) = \set*{ \set*{\aye, \bee} : \aye \in X(\ell_1), \bee \in X(\ell_2), \textrm{ and } \aye \sqcup \bee \in X(\ell_1 + \ell_2) }.\]
We will assume $\ell_1 \ge \ell_2$, and if $\ell_1 = \ell_2$ we are going to identify the two copies of every vertex. We will endow $E(\ell_1, \ell_2)$ with the weight function,
\[ w_{\ell_1, \ell_2}(\aye, \bee) =  \frac{\Pi_{\ell_1 + \ell_2}(\aye \sqcup \bee) }{ \binom{\ell_1 + \ell_2}{\ell_1} },\]
which can easily be verified to be a probability distribution on $E(\ell_1, \ell_2)$
Notice that in the case where $\ell_1 \ne \ell_2$ the random walk matrix of $G_{\ell_1, \ell_2}$ is given by
\[ \Aye_{\ell_1, \ell_2}= \begin{pmatrix} 0 & \rfswap{\ell_1}{\ell_2}\\
\rfswap{\ell_1}{\ell_2}^\dagger & 0\end{pmatrix},\]
and if $\ell_1 = \ell_2$ we have $\Aye_{\ell_1, \ell_1} = \rfswap{\ell_1}{\ell_1}$.
The stationary distribution of $\Aye_{\ell_1, \ell_2}$ is $\Pi_{\ell_1, \ell_2}$ defined by,
\begin{equation}
    \Pi_{\ell_1, \ell_2}(\bee) = \one[\bee \in X(\ell_1)] \cdot \frac{1}{2} \cdot \Pi_{\ell_1}(\bee) + \one[\bee\in X({\ell_2})] \cdot \frac{1}{2} \cdot \Pi_{\ell_2}(\bee).
    \label{eq:stat_meas}
\end{equation}
When we write an expectation of $\eff(\bullet, \bullet)$ over the edges in $E(\ell_1, \ell_2)$ with respect to
$w_{\ell_1, \ell_2}$, it is important to note,
\begin{equation}
    \Ex{\set*{\ess, \tee} \sim w_{\ell_1, \ell_2}}{ \eff(\ess, \tee) } = \sum_{\set{\ess,\tee} \in E(\ell_1, \ell_2)} \frac{1}{\binom{\ell_1+ \ell_2}{\ell_1}} \cdot \eff(\ess, \tee) \cdot \Pi_{\ell_1 + \ell_2}(\ess \sqcup \tee) = \frac{1}{\binom{\ell}{\ell_1}}\Ex{\aye \sim \Pi_k}{\sum_{\set*{\ess, \tee}\sim \aye} \eff(\ess, \tee) }\label{eq:edg_exp},
\end{equation}
where sum within the expectation in the RHS runs over the $\binom{\ell_1 + \ell_2}{\ell_1}$ possible ways of splitting $\aye$ into $\ess \sqcup \tee$ such that $\ess \in X(\ell_1)$
and $\tee \in X(\ell_2)$.
    When we are speaking about the spectral expansion of $G_{\ell_1, \ell_2}$, we will be speaking with regards to $\lambda_2(G_{\ell_1, \ell_2})$ and not with regards to $\sigma_2(G_{\ell_1, \ell_2})$.
\begin{remark}\label{rem:sing-to-eig} By simple linear algebra, we have
    \[ \lambda_2(G_{\ell_1, \ell_2}) := \lambda_2(\Aye_{\ell_1, \ell_2}) \le \sigma_2(\rfswap{\ell_1}{\ell_2}),\]
where we employ the notation $\lambda_2(\Emm)$ to denote the second largest eigenvalue (signed) of the matrix $\Emm$. 
\end{remark}

With this, we will show
\begin{lemma}[Glorified Triangle Inequality]\label{lem:split_rectangular}
    For a simplicial complex $X(\le k)$,  $\ell_1 \ge \ell_2 \ge 0$, $\ell = \ell_1+ \ell_2$, $\ell \le  k$, and 
    an $\ell$-local ensemble $\set*{\rv Y_1, \ldots, \rv Y_n}$, one has
    \begin{multline}
        \Ex{\aye \in \Pi_\ell}{\norm{\set*{\rv Y_\aye} - \prod_{i = 1}^{\ell} \set*{\rv Y_{a_i}}}_1} \le \Ex{\set{\ess, \tee} \sim w_{\ell_1, \ell_2}}{ \norm{\set*{\rv Y_\ess, \rv Y_\tee } - \set*{\rv Y_\ess}\set*{\rv Y_\tee}}_1}\\ + \Ex{\ess \sim \Pi_{\ell_1}}{
            \norm{\set*{\rv Y_\ess} - \prod_{i = 1}^{\ell_1} \set*{\rv Y_{s_i}}}_1}
        + \Ex{\tee \sim \Pi_{\ell_2}}{\norm{\set{\rv Y_\tee} - \prod_{i = 1}^{\ell_2} \set{\rv Y_{t_i}}}_1}\label{eq:swap_split}
    \end{multline}
\end{lemma}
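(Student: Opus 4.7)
\textbf{Proof plan for Lemma \ref{lem:split_rectangular}.}

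The plan is to prove the inequality pointwise (for every face $\aye$ and every partition of $\aye$ into parts of sizes $\ell_1, \ell_2$) via two applications of the triangle inequality, and then to average over a uniformly random partition and over $\aye \sim \Pi_\ell$, identifying each of the three resulting terms with the right-hand side of \eqref{eq:swap_split}.

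First, I would fix $\aye \in X(\ell)$ and any partition $\aye = \ess \sqcup \tee$ with $\ess \in X(\ell_1), \tee \in X(\ell_2)$. Inserting the product $\set{\rv Y_\ess}\set{\rv Y_\tee}$ as an intermediate distribution on $\qary^\aye$ gives, by the triangle inequality,
\[
\norm{\set{\rv Y_\aye} - \prod_{i=1}^\ell \set{\rv Y_{a_i}}}_1 \;\le\; \norm{\set{\rv Y_\aye} - \set{\rv Y_\ess}\set{\rv Y_\tee}}_1 + \norm{\set{\rv Y_\ess}\set{\rv Y_\tee} - \prod_{i=1}^{\ell_1}\set{\rv Y_{s_i}} \prod_{j=1}^{\ell_2}\set{\rv Y_{t_j}}}_1.
\]
For the second term I would invoke the standard fact that for any product distributions $P_A P_B$ and $Q_A Q_B$ on the same pair of spaces one has $\norm{P_A P_B - Q_A Q_B}_1 \le \norm{P_A - Q_A}_1 + \norm{P_B - Q_B}_1$ (proved by adding and subtracting $Q_A P_B$ and using that the other factor is a probability distribution). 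Taking $P_A = \set{\rv Y_\ess}, Q_A = \prod_i \set{\rv Y_{s_i}}$, and analogously on the $\tee$ side, this yields the pointwise bound
\[
\norm{\set{\rv Y_\aye} - \prod_{i=1}^\ell \set{\rv Y_{a_i}}}_1 \;\le\; \norm{\set{\rv Y_\aye} - \set{\rv Y_\ess}\set{\rv Y_\tee}}_1 + \norm{\set{\rv Y_\ess} - \prod_{i=1}^{\ell_1}\set{\rv Y_{s_i}}}_1 + \norm{\set{\rv Y_\tee} - \prod_{j=1}^{\ell_2}\set{\rv Y_{t_j}}}_1,
\]
valid for every partition of every $\aye$.

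Next I would take $\aye \sim \Pi_\ell$ and then average the inequality over a uniformly random partition of $\aye$ into $\ess \sqcup \tee$ with $|\ess|=\ell_1, |\tee|=\ell_2$. The left-hand side is unchanged since it does not depend on the partition. For the first term on the right-hand side, exactly the identity \eqref{eq:edg_exp} rewrites
\[
\Ex{\aye \sim \Pi_\ell}{\tfrac{1}{\binom{\ell}{\ell_1}}\sum_{\{\ess,\tee\} \sim \aye} \norm{\set{\rv Y_\ess, \rv Y_\tee} - \set{\rv Y_\ess}\set{\rv Y_\tee}}_1} \;=\; \Ex{\{\ess,\tee\} \sim w_{\ell_1,\ell_2}}{\norm{\set{\rv Y_\ess, \rv Y_\tee} - \set{\rv Y_\ess}\set{\rv Y_\tee}}_1}.
\]
For the second term, sampling $\aye \sim \Pi_\ell$ and then a uniformly random $\ell_1$-subset $\ess \subset \aye$ produces $\ess$ distributed according to the marginal $\Pi_{\ell_1}$ (this is exactly the construction defining $\Pi_{\ell_1}$ from $\Pi_\ell$ in Section \ref{subsec:hdx}), so the average equals $\Ex{\ess \sim \Pi_{\ell_1}}{\norm{\set{\rv Y_\ess} - \prod_i \set{\rv Y_{s_i}}}_1}$; the third term is handled symmetrically, yielding exactly \eqref{eq:swap_split}.

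The only genuinely non-mechanical step is the subadditivity inequality for product distributions used in the second triangle inequality step; once that is in place, the rest is just bookkeeping and matching the averaging conventions to the definition of $w_{\ell_1,\ell_2}$ and of the marginals $\Pi_{\ell_i}$. The main thing to double-check is the $\ell_1 = \ell_2$ case, where vertices of the two sides of $G_{\ell_1,\ell_2}$ are identified: the same pointwise argument and averaging work without change, since \eqref{eq:edg_exp} and the definition of $\Pi_{\ell_1,\ell_2}$ in \eqref{eq:stat_meas} already account for this identification.
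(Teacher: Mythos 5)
Your proposal is correct and matches the paper's proof essentially step for step: the subadditivity of total variation for product distributions is exactly \cref{prop:triangle_tensor}, the pointwise bound for a fixed partition is \cref{cor:split_distance}, and the averaging over random partitions and over $\aye \sim \Pi_\ell$ is identified with the edge and marginal expectations exactly as in the paper via \cref{eq:edg_exp} and the definition of the marginals $\Pi_{\ell_i}$. No gaps.
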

One useful observation, is that by using \cref{lem:split_rectangular} repeatedly, we can reduce the problem of bounding
$\ExpOp_{\aye \in \Pi_\ell}{ \norm{\set{\rv Y_\aye} - \prod_{i = 1}^\ell \set{\rv Y_{a_i}}}_1 }$ to a problem of bounding
\[
\ExpOp_{\set{\ess, \tee} \sim w_{\ell_1, \ell_2}}{\norm{\set{\rv Y_{\ess}, \rv Y_\tee} - \set{\rv Y_\ess}\set{\rv Y_\tee}}_1},
\]
for $\ell_1 + \ell_2 \le k$. Though it is not a direct implication, it
is heavily suggested by \cref{fact:set-ensemble} and
\cref{thm:brs_exp}, that if $G_{\ell_1, \ell_2}$ is a good spectral
expander, after an application of \cref{algo:prop-rd} with our chosen
parameters, we should be able to bound these expressions. Using a key
lemma used from \cite{BarakRS11}, we will prove that this is indeed
the case. The only thing we need to make sure after this point, is
that the second eigenvalue $\lambda_2(G_{\ell_1, \ell_2})$ of the swap
graphs $G_{\ell_1, \ell_2}$ we will be using are small enough for our
purposes. Indeed, our choice of $\gamma$ in \cref{thm:brs_hdx} and
\cref{cor:alg-cooked} is to make sure that the bound we get on
$\lambda_2(G_{\ell_1, \ell_2})$ from \cref{thm:swap-eig-bd} (together
with \cref{rem:sing-to-eig}) is good enough for our purposes.

\subsection{Breaking Correlations for Expanding CSPs: Proof of \cref{thm:brs_hdx} }

Throughout this section, we will use the somewhat non-standard definition of variance introduced in \cite{BarakRS11},
\[ \Var{\rv Y_\aye} = \sum_{\assn \in [q]^\aye}\Var{\one[ \rv Y_\aye = \assn ] }.\]

We will use the following central lemma from \cite{BarakRS11} in our proof of \cref{thm:brs_hdx}:
\begin{lemma}[Lemma 5.4 from \cite{BarakRS11}]\label{lem:brs-dec}
    Let $G = (V, E, \Pi_2)$ be a weighted graph, $\set{\rv Y_1, \ldots, \rv Y_n}$ a local PSD ensemble, where we have
    $\supp(\rv Y_i) \le q$ for every $i \in V$,  and $q \ge 0$. Suppose $\ee \ge 0$ is a lower bound on 
    the expected statistical difference between independent
    and correlated sampling along the edges,\ie
    \[ \ee \le \Ex{\set{i,j} \sim \Pi_2}{\norm{\set{\rv Y_{ij} } - \set{\rv Y_i}\set{\rv Y_j}  }_1}.\]
    There exists absolute constants $c_0 \ge 0 $ and $c_1 \ge 0$ that satisfy the following:
    If $\lambda_2(G) \le c_0 \cdot \frac{\ee^2}{q^2}$. Then, conditioning on a random vertex decreases the variances,
    \[ \ExpOp_{i \sim \Pi_1}\ExpOp_{j \sim \Pi_2}{\Ex{ \set{\rv Y_j} }{ \Var{\rv Y_i \mid \rv Y_j } } } \le \Ex{i \sim \Pi_1}{\Var{\rv Y_i}} - c_1 \cdot \frac{\ee^2}{q^2}.\]
\end{lemma}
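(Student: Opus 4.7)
The plan is to translate the hypothesized edge-wise $\ell_1$-correlation into a Frobenius-norm statement about the PSD covariance matrix of the local ensemble, transfer it from an edge-average to a vertex-pair-average using the spectral gap of $G$, and then relate that quantity to the conditional variance drop via the law of total variance applied to the indicators $\one[\rv Y_i = \alpha]$. Throughout, let $\Emm$ denote the PSD matrix indexed by $(i,\alpha) \in [n]\times[q]$ with entries $\Emm_{(i,\alpha),(j,\beta)} = \cov(\one[\rv Y_i=\alpha], \one[\rv Y_j=\beta])$, and factor $\Emm = V V^{\top}$ with rows grouped by variable so that the $(i,j)$-block is $\Emm_{ij} = V_i V_j^{\top}$.

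\textbf{From edge-correlation to Frobenius.} The hypothesis supplies $\Ex{\{i,j\}\sim\Pi_2}{\sum_{\alpha,\beta} |\Emm_{(i,\alpha),(j,\beta)}|} \geq \eps$. Since $\sum_{\alpha,\beta} |\Emm_{(i,\alpha),(j,\beta)}| \leq q \cdot \norm{\Emm_{ij}}_F$ by Cauchy--Schwarz on the $q^2$ coordinates, Jensen's inequality gives $\Ex{\{i,j\}\sim\Pi_2}{\norm{\Emm_{ij}}_F^2} \geq \eps^2/q^2$. Setting $\Enn_i := V_i^{\top} V_i$, one has $\norm{\Emm_{ij}}_F^2 = \tr(\Enn_i\Enn_j) = \sum_{\alpha,\beta}(\Enn_i)_{\alpha\beta}(\Enn_j)_{\alpha\beta}$, and the weighted expander mixing lemma applied coordinatewise to the scalar functions $f_{\alpha\beta}(i) := (\Enn_i)_{\alpha\beta}$ yields
\[
\left| \Ex{\{i,j\}\sim\Pi_2}{\tr(\Enn_i\Enn_j)} - \Ex{i,j\sim\Pi_1}{\tr(\Enn_i\Enn_j)}\right| \;\leq\; \lambda_2(G) \cdot \Ex{i\sim\Pi_1}{\norm{\Enn_i}_F^2}.
\]
Each $\Enn_i$ is PSD with $\tr(\Enn_i) = \sum_\alpha \Var{\one[\rv Y_i=\alpha]} \leq 1$, so the Gram structure forces $\norm{\Enn_i}_F^2 \leq \tr(\Enn_i)^2 \leq 1$, bounding the mixing error by $\lambda_2(G)$. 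Choosing $c_0$ small enough in the hypothesis $\lambda_2(G) \leq c_0\eps^2/q^2$ thus delivers $\Ex{i,j\sim\Pi_1}{\norm{\Emm_{ij}}_F^2} \geq \tfrac{1}{2}\eps^2/q^2$.

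\textbf{From Frobenius to variance drop.} For the indicator $X=\one[\rv Y_i=\alpha]$, the law of total variance combined with the identity $\Ex{\rv Y_j}{X \mid \rv Y_j = \beta} - \Ex{}{X} = \Emm_{(i,\alpha),(j,\beta)}/\Pr{\rv Y_j = \beta}$ gives
\[
\Var{X} - \Ex{\rv Y_j}{\Var{X\mid \rv Y_j}} \;=\; \sum_\beta \frac{\Emm_{(i,\alpha),(j,\beta)}^2}{\Pr{\rv Y_j=\beta}} \;\geq\; \sum_\beta \Emm_{(i,\alpha),(j,\beta)}^2.
\]
Summing over $\alpha$ and averaging over $i, j \sim \Pi_1$ yields $\Ex{i\sim\Pi_1}{\Var{\rv Y_i}} - \Ex{i\sim\Pi_1}{\Ex{j\sim\Pi_1}{\Ex{\set{\rv Y_j}}{\Var{\rv Y_i\mid \rv Y_j}}}} \geq \Ex{i,j\sim\Pi_1}{\norm{\Emm_{ij}}_F^2} \geq \tfrac{1}{2}\eps^2/q^2$, which is the desired conclusion with $c_1 = 1/2$ (reading ``$j \sim \Pi_2$'' in the statement as $j$ drawn from the marginal of $\Pi_2$ on vertices, which coincides with $\Pi_1$).

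The main obstacle is the expander-mixing step: a naive application to the $q^2$-dimensional vector-valued quantity $\Enn_i$ would bleed a factor of $q^2$, but the Gram structure of the global PSD matrix $\Emm$ saves the day through the bound $\norm{\Enn_i}_F^2 \leq \tr(\Enn_i)^2 \leq 1$, which is independent of $q$. This is also the step where the full $t$-local PSD hypothesis (as opposed to mere pairwise consistency of the local distributions) is essential, since it is exactly what guarantees a simultaneous Gram factorization $\Emm = V V^\top$ and hence the cross-block identity $\norm{\Emm_{ij}}_F^2 = \tr(\Enn_i \Enn_j)$.
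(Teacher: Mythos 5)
Your proof is correct and follows essentially the route the paper itself indicates: your flattened matrices $\Enn_i = V_i^\top V_i$ are exactly the BRS vectors of \cref{fact:vectorization_byproduct} (so that $\ip{v_i}{v_j} = \tr(\Enn_i\Enn_j) = \norm{\Emm_{ij}}_F^2$ recovers both \cref{fact:app_vec_lower_bound} and \cref{fact:app_vec_upper_bound}), and your coordinatewise mixing step is precisely the direct first-moment spectral argument that the paper's footnote says yields the improved $\eps^2/q^2$ bound for expanders, in place of the $\eps^4/q^4$ local-to-global Cauchy--Schwarz argument used for the threshold-rank version in \cref{app:appenditis}. The only nitpick is that your mixing inequality is stated with an absolute value, which would require the second \emph{singular} value; but since $\tr(\Enn_i\Enn_j) \ge 0$ and you only need the one-sided bound $\Ex{ij\sim E}{\tr(\Enn_i\Enn_j)} \le \Ex{i,j\sim\Pi_1}{\tr(\Enn_i\Enn_j)} + \lambda_2(G)\cdot\Ex{i}{\norm{\Enn_i}_F^2}$, the signed eigenvalue hypothesis suffices and the argument stands.
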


For our applications, we will be instantiating \cref{lem:brs-dec} with $G_{\ell_1, \ell_2}$ as $G$; and with the
local PSD ensemble $\set{\rv Y_\aye}_{\aye \in X}$ that is obtained from $\set{\rv Y_1, \ldots, \rv Y_n}$ 
(q.v.~\cref{fact:set-ensemble}). For convenience, we will write the concrete instance of the Lemma that we will use,
\begin{corollary}\label{cor:brs-explicit}
    Let $\ell_1 \ge \ell_2 \ge 0$ satisfying $\ell_1 + \ell_2 \le k$ be given parameters, and let
    $G_{\ell_1, \ell_2}$ be the swap graph defined for a $\gamma$-HDX $X(\le k)$. 
    Let $\set{\rv Y_\aye}_{\aye \in X}$ be a local PSD ensemble; satisfying $\supp(Y_\aye) \le q^k$
    for every $\aye \in X(\ell_1) \cup X(\ell_2)$ for some $q \ge 0$. Suppose $\ee \ge 0$ satisfies,
    \[ \frac{\ee}{4k} \le \Ex{\set{\ess, \tee} \in w_{\ell_1, \ell_2}}{\norm{\set{\rv Y_{\ess \sqcup \tee}} - \set{\rv Y_\ess}\set{\rv Y_\tee}}_1}.\]
    There exists absolute constants $c_0 \ge 0$ and $c_2 \ge 0$ that satisfy the following: If $\lambda_2(G) \le c_0 \cdot (\ee/(4k\cdot q^{k}))^2$. Then, conditioning on a random face $\aye \sim \Pi_{\ell_1, \ell_2}$ decreases the variances, i.e.~
    \begin{eqnarray*}
        2 \cdot \Ex{\aye, \bee \sim \Pi_{\ell_1, \ell_2}^2}{\Ex{\set{\rv Y_\aye}}{\Var{\rv Y_\bee \mid \rv Y_\aye}}} & = &
        \Ex{\aye \in \Pi_{\ell_1, \ell_2}}{\Ex{\ess \sim \Pi_{\ell_1}}{\Var{\rv Y_\ess \mid \rv Y_\aye}} 
        + \Ex{\tee \sim \Pi_{\ell_2}}{\Var{\rv Y_\tee \mid \rv Y_\aye}}},\\
        & \le & \Ex{\ess \sim \Pi_{\ell_1}}{\Var{\rv Y_\ess}} + \Ex{\tee \sim \Pi_{\ell_2}}{\Var{\rv Y_\tee}} - c_2 \cdot \frac{\ee^2}{16 \cdot k^2 \cdot q^{2k}}.
    \end{eqnarray*}
\end{corollary}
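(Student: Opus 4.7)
The plan is to deduce~\cref{cor:brs-explicit} as a direct instantiation of~\cref{lem:brs-dec} applied to the swap graph $G_{\ell_1,\ell_2}$, viewed as a weighted (possibly bipartite) graph on the face set $X(\ell_1) \cup X(\ell_2)$ with vertex distribution $\Pi_{\ell_1,\ell_2}$ and edge distribution $w_{\ell_1,\ell_2}$. First I would promote the given PSD ensemble $\set{\rv Y_1,\dots,\rv Y_n}$ to the face-indexed ensemble $\set{\rv Y_\aye}_{\aye \in X(\ell_1) \cup X(\ell_2)}$ via~\cref{fact:set-ensemble}, which is valid so long as the number of SoS levels $L$ is a large enough multiple of $\max(\ell_1,\ell_2)$ --- a condition implicit in the parameter regime of the surrounding theorem. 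Each $\rv Y_\aye$ then takes values in an alphabet of size at most $q^{\max(\ell_1,\ell_2)} \le q^k$, explaining the factor $q^k$ appearing in the conclusion.

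The first (asserted) equality in the corollary is purely arithmetic: by~\eqref{eq:stat_meas} the measure $\Pi_{\ell_1,\ell_2}$ is the $\tfrac12$-$\tfrac12$ mixture of $\Pi_{\ell_1}$ and $\Pi_{\ell_2}$, so unfolding $\bee \sim \Pi_{\ell_1,\ell_2}$ inside the expected conditional variance and absorbing the leading factor of~$2$ gives exactly the stated decomposition into an $\ess \sim \Pi_{\ell_1}$ summand and a $\tee \sim \Pi_{\ell_2}$ summand.

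For the inequality, I plan to invoke~\cref{lem:brs-dec} with $G = G_{\ell_1,\ell_2}$, the ensemble $\set{\rv Y_\aye}$, alphabet bound $q' = q^k$, and edge-correlation lower bound $\ee' = \ee/(4k)$; the last of these is exactly the corollary's assumption on the average statistical distance across edges of $w_{\ell_1,\ell_2}$. The spectral requirement of~\cref{lem:brs-dec} then reads $\lambda_2(G) \le c_0 \cdot (\ee')^2 / (q')^2 = c_0 \cdot \ee^2/(16 k^2 q^{2k})$, which is precisely the spectral assumption of~\cref{cor:brs-explicit}. The conclusion provided by~\cref{lem:brs-dec} becomes
\[
\Ex{\aye \sim \Pi_{\ell_1,\ell_2}}{\Ex{\bee \sim \Pi_{\ell_1,\ell_2}}{\Ex{\set{\rv Y_\bee}}{\Var{\rv Y_\aye \mid \rv Y_\bee}}}}
~\le~ \Ex{\aye \sim \Pi_{\ell_1,\ell_2}}{\Var{\rv Y_\aye}} ~-~ c_1 \cdot \frac{\ee^2}{16 k^2 q^{2k}},
\]
and multiplying both sides by $2$, re-expanding via the mixture identity for $\Pi_{\ell_1,\ell_2}$ on both sides, and setting $c_2 := 2 c_1$ reproduces the statement verbatim.

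The only point that requires genuine care is aligning the sampling conventions of~\cref{lem:brs-dec} (originally phrased for a 2-CSP-style graph where both endpoints come from a single vertex distribution) with the bipartite swap graph on $X(\ell_1) \cup X(\ell_2)$, and in particular verifying that the ``sample an independent second vertex from $\Pi_{\ell_1,\ell_2}$'' formulation on the LHS of the corollary is really what~\cref{lem:brs-dec} outputs. Since~\cref{lem:brs-dec} is ultimately a statement about a reversible weighted graph carrying a PSD ensemble on its vertices, and $G_{\ell_1,\ell_2}$ with stationary distribution $\Pi_{\ell_1,\ell_2}$ is exactly such an object, the transfer is essentially notational; notably, no further HDX structure enters here beyond the spectral bound on $\lambda_2(G_{\ell_1,\ell_2})$ already supplied by~\cref{thm:swap-eig-bd}.
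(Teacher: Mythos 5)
Your proposal is correct and matches the paper's (essentially unwritten) argument: the paper states this corollary as a direct instantiation of \cref{lem:brs-dec} on the swap graph $G_{\ell_1,\ell_2}$ with the face-indexed ensemble from \cref{fact:set-ensemble}, alphabet bound $q^k$, correlation parameter $\ee/(4k)$, and $c_2 = 2c_1$, exactly as you do. Your handling of the leading equality via the mixture identity \eqref{eq:stat_meas} and the factor of $2$ is the right bookkeeping.
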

Here, it can be verified that the expansion criterion presupposed by \cref{lem:brs-dec} is satisfied by 
\cref{cor:brs-explicit} by \cref{thm:swap-eig-bd}. The constant $c_2$ satisfies $c_2 = 2 \cdot c_1$.

\begin{proof}[Proof of \cref{thm:brs_hdx}]
    We will follow the same proof strategy in \cite{BarakRS11}, and extend their arguments for $k$-CSPs.
    
    Write $\Pi_k^m$ for the distribution of the 
    random set that is obtained in steps (2)-(3) of \cref{algo:prop-rd} with $\Pi = \Pi_k$, i.e.~$S \sim \Pi_k^m$ is sampled by
    \begin{enumerate}
        \item independently sampling $m$ $k$-faces $\ess_j \sim \Pi_k$ for $j = 1, \ldots, m$.
        \item outputting $S = \bigcup_{j = 1}^m \ess_j$.
    \end{enumerate}
    First, for $m \in [L/k]$ we will define
    \[ \ee_m = \ExpOp_{S \sim \Pi_k^m}{\ExpOp_{\set{\rv Y_S}}{\Ex{\aye \sim \Pi_k}{\norm{\set{\rv Y_\aye \mid \rv Y_S} - \prod_{j = 1}^k \set{\rv Y_{a_j} \mid \rv Y_S}}_1}}},\]
    which will measure the average correlation along $X(k)$ after conditioning on $m$ $k$-faces.
    Notice that our goal is ensuring,
    \[ \ExpOp_{m \sim [L/k]}{\ee_m} \le \ee\]
    where $m$ is sampled uniformly at random.

    To help us with this goal, we will define a potential function
    \begin{equation}
        \Phi_m = \ExpOp_{i \sim [k]}{\ExpOp_{S \sim \Pi_k^m}{\ExpOp_{\set{\rv Y_S}}{\ExpOp_{\aye \sim \Pi_i}{\Var{\rv Y_\aye \mid \rv Y_S}}}}}.\label{eq:pot-def}
    \end{equation}
    where $i$ is sampled uniformly at random.  Observe that $\Phi_m$ always satisfies $0 \le \Phi_m \le 1$. Using this, we will try to bound the fraction of indices $m \in [L/k]$ such that $\ee_m$ is large, \ie~say $\ee_m \ge \ee/2$. To this end assume $\ee_m \ge \ee/2$, i.e.~we have
    \begin{equation}
        \ExpOp_{S \sim \Pi_k^m}{\ExpOp_{\set{\rv Y_S}}{\Ex{\aye \sim \Pi_k}{\norm{\set{\rv Y_\aye \mid \rv Y_S} - \prod_{i = 1}^k \set{\rv Y_{a_1} | \rv Y_S}}_1}}} \ge \frac{\ee}{2}.\label{eq:assume-large}
    \end{equation}

    We will use \cref{lem:split_rectangular} in the following way: Let $\tree$ be any binary tree with $k$ leaves. We will label each of the vertices $v \in \tree$ with the number of leaves of the subtree rooted at $v$. Notice that this ensures that
    \begin{enumerate}
        \item the root vertex of $\tree$ has the label $k$,
        \item for any vertex $v \in \tree$ with label $\ell$, the label $\ell_1$ of the left child of $v$ and the
            label $\ell_2$ of the right child of $v$ add up to $k$, i.e.~$\ell_1 + \ell_2 = k$,
        \item every vertex $v \in \tree$ with the label 1 is a leaf.
    \end{enumerate}
    We write $J(\tree)$ for the set of labels $\ell$ of the internal nodes of $\tree$, note $|J(\tree)| \le k$. We will use the notation $\ell_1$ (resp.~$\ell_2$) to refer to the label of the left (resp.~right) of a vertex $v \in \tree$ with the label $\ell$.

    By applying \cref{lem:split_rectangular}, we obtain that for any local PSD ensemble $\rv Z$ one has
    \[\Ex{\aye \sim \Pi_k}{ \norm{ \set{\rv Z_\aye} - \prod_{i = 1}^k \set{\rv Z_{\aye_i}}}_1}
        \le \sum_{\ell \in J(\tree)} \Ex{ \set{\tee_1, \tee_2} \in w_{\ell_1, \ell_2}}{
            \norm{\set{\rv Z_{\tee_1 \sqcup \tee_2}} - \set{\rv Z_{\tee_1} } \set{\rv Z_{\tee_2} }}_1}.\]
        Now, by plugging this in \cref{eq:assume-large}, with $\rv Z_\aye = \set{\rv Y_\aye \mid \rv Y_S}$, we obtain
       \begin{equation} 
           \ExpOp_{S \sim \Pi_k^m}{\Ex{\set{\rv Y_S}}{\sum_{\ell \in J(\tree)} \ExpOp_{ \set{\tee_1, \tee_2} \sim w_{\ell_1, \ell_2}}{ 
            \norm{\set{\rv Y_{\tee_1 \sqcup \tee_2} \mid \rv Y_S} - \set{\rv Y_{\tee_1} \mid \rv Y_S } \set{\rv Y_{\tee_2} \mid \rv Y_S }}_1}}} \ge \frac{\ee}{2}.\label{eq:analogy-tree-bd}
       \end{equation}
    In particular, in the sum over $J(\tree)$ there should be some large term corresponding to some $\ell \in J(\tree)$. i.e.~we have,  
    \[ \ExpOp_{S \sim \Pi_k^m}{\Ex{\set{\rv Y_S}}{ \ExpOp_{ \set{\tee_1, \tee_2} \in w_{\ell_1, \ell_2}}{ 
            \norm{\set{\rv Y_{\tee_1 \sqcup \tee_2} \mid \rv Y_S} - \set{\rv Y_{\tee_1} \mid \rv Y_S } \set{\rv Y_{\tee_2} \mid \rv Y_S }}_1}}} \ge \frac{\ee}{2 \cdot |J(\tree)|} \ge \frac{\ee}{2k}.\]
    Now,  we have
    \[ \Pr{S \sim \Pi_k^m\atop \set{\rv Y_S} }{ \ExpOp_{\set{\ess, \tee} \in 
    w_{\ell_1, \ell_2}}{\norm{ \set{\rv Y_{\tee_1 \sqcup \tee_2} \mid \rv Y_S } -  \set{\rv Y_{\tee_1} \mid \rv Y_S} \set{\rv Y_{\tee_2} \mid \rv Y_S}}_1} \ge \frac{\ee}{4k} } \ge \frac{\ee}{4k}.\]
    This together with \cref{cor:brs-explicit} implies,
    \begin{multline}
        \Pr{S \sim \Pi_k^m \atop \set{\rv Y_S} }{ \Ex{\aye \in \Pi_{\ell_1, \ell_2}}{ \Ex{\tee_1 \sim \Pi_{\ell_1}}
        {\Var{ \rv Y_{\tee_1} \mid \rv Y_S } - \Var{\rv Y_{\tee_1}\mid \rv Y_S, \rv Y_\aye}}\atop + \quad  \Ex{\tee_2 \in \Pi_{\ell_2}}{\Var{\rv Y_{\tee_2} | \mid \rv Y_S} - \Var{\rv Y_{\tee_2} \mid \rv Y_S, \rv Y_\aye}} } \ge c_2 \cdot \frac{ \ee^2 }{16 \cdot k^2 \cdot q^{2k}}}\\ \ge \frac{\ee}{4k},\label{eq:swap-dist-drop1}
    \end{multline}
    provided that $\lambda_2(G_{\ell_1, \ell_2}) \le c_0 (\ee / (4k \cdot q^{k}))^2$.

    Now, observe that a sample $\aye \sim \Pi_{\ell_1, \ell_2}$ can be obtained from a sample $\ess_{m + 1} \sim \Pi_k$
    in the following way,
    \begin{enumerate}
        \item with probability $\frac12$ each, pick $j=1$ or $j = 2$.
        \item delete all but $\ell_j$ elements from $\ess_{m+1}$.
    \end{enumerate}
    It is important to note that for the sample $\aye \sim \Pi_{\ell_1, \ell_2}$ obtained this way, we have $\ess_{m + 1} \supseteq \aye$. An application of Jensen's 
    inequality shows that the variance is non-increasing under conditioning, i.e.~for random variables $\rv Z$ and $\rv W$ we have,
    \begin{eqnarray*}
        \Ex{\rv Z}{\Var{\rv W \mid \rv Z}} & = & \Ex{\rv Z}{\Ex{\rv W}{\rv W^2 \mid \rv Z}} - \Ex{\rv Z}{\parens*{\Ex{\rv W}{\rv W \mid \rv Z }^2}},\\
        & \le & \Ex{\rv W^2} - \parens*{\Ex{\rv Z}{\Ex{\rv W \mid \rv Z }}}^2,\\
        & = & \Var{\rv W}.
    \end{eqnarray*}
    This means conditioning on $\ess_{m + 1}$, the drop in variance can only be more, \ie~\cref{eq:swap-dist-drop1} implies
    \[    \Pr{S \sim \Pi_k^m \atop \set{\rv Y_S} }{ \Ex{\ess_{m + 1} \in \Pi_{k}}{ \Ex{\tee_1 \sim \Pi_{\ell_1}}
        {\Var{ \rv Y_{\tee_1} \mid \rv Y_S } - \Var{\rv Y_{\tee_1} \mid \rv Y_S, \rv Y_{\ess_{m + 1}}}}\atop + \quad  \Ex{\tee_2 \in \Pi_{\ell_2}}{\Var{\rv Y_{\tee_2} | \mid \rv Y_S} - \Var{\rv Y_{\tee_2} \mid \rv Y_S, \rv Y_{\ess_{m+1}}}} } \ge c_2 \cdot \frac{ \ee^2 }{16 \cdot k^2 \cdot q^{2k}}} \ge \frac{\ee}{4k}.\]
   By relabeling $\ell_1$ as $\ell_2$ if needed, we can obtain the following inequality from the above
    \begin{equation}
        \Pr{S \sim \Pi_k^m \atop \set{\rv Y_S} }{ \Ex{\ess_{m + 1} \in \Pi_{k}}{ \Ex{\tee_1 \sim \Pi_{\ell_1}}
        {\Var{ \rv Y_{\tee_1} \mid \rv Y_S } - \Var{\rv Y_{\tee_1} \mid \rv Y_S, \rv Y_{\ess_{m + 1}}} }} \ge c_2 \cdot \frac{ \ee^2 }{32 \cdot k^2 \cdot q^{2k}}} \ge \frac{\ee}{4k}.
    \end{equation}
    This implies
    \[ \Phi_m -\Phi_{m + 1}\ge \frac{1}{k} \cdot \frac{\ee}{4k} \cdot \parens*{ c_2 \cdot \frac{\ee^2}{32 \cdot k^2 \cdot q^{2k}}} = c_2 \cdot \frac{\ee^3}{128 \cdot k^4 \cdot q^{2k}},\]
    where the $\frac{1}{k}$ term in the RHS corresponds to $\ell_1 \in [k]$ being chosen in \cref{eq:pot-def}, the
    $\frac{\ee}{4k}$ term in the RHS corresponds to the probability of the variances in $X(\ell_1)$ drop by $\parens*{ c_2 \cdot \frac{\ee^2}{32 \cdot k^2 \cdot q^{k}}}$. Since, the variance is non-increasing under conditioning
    \[ 1 \ge \Phi_1 \ge \cdots \ge \Phi_m \ge 0.\]
    this means there can be at most $128 k^4 \cdot q^{2k}/(c_2 \cdot \ee^3)$ indices $m \in [L/k]$ such that $\ee_m \ge \ee/2$. In particular, since the total number of indices is $(L/k)$ we have,
    \[ \ExpOp_{m \sim [L/k]}{\ee_m } \le \frac{\ee}{2} + \frac{k}{L} \cdot \frac{128 \cdot k^4 \cdot q^{2k}}{c_2 \cdot \ee^3}. \]
    This means that there exists an absolute constant $c' \ge 0$ such that
    \[ L \ge c' \cdot \frac{k^5 \cdot q^{2k}}{\ee^4}\quad \textrm{ ensures } \quad \Ex{m \in [L/k]}{\ee_m} \le \ee.\]
    To finish our proof, we note that to justify our applications of \cref{cor:brs-explicit} it suffices to ensure
    \[ \lambda_2(G_{\ell_1, \ell_2}) \le c_0 \cdot \parens*{\frac{\ee}{4 k \cdot q^{k}}}^2 = c_0 \cdot \frac{\ee^2}{16 \cdot k^2 \cdot q^{2k}}\]
    for all $\ell_1, \ell_2$ occurring in $\tree$ as a label. It can be verified that our choice of $\gamma$ together
    with \cref{thm:swap-eig-bd} (and \cref{rem:sing-to-eig}) satisfies this, where the constant $C' \ge 0$ will account for $c_0$, $c'$, and 
    the constants hidden within the $O$-notation in \cref{thm:swap-eig-bd}.
\end{proof}
\subsection{The Glorified Triangle Inequality: Proof of \cref{lem:split_rectangular}}
In this Section, we will prove \cref{lem:split_rectangular}.
\begin{proposition}\label{prop:triangle_tensor}
    Let $\rv Y, \rv Z, \rv U, \rv W$ be random variables where $\rv Y$ and $\rv Z$; and $\rv U$ and $\rv W$ are on the same support. Then,
    \[ \norm{ \set*{\rv Y}\set*{\rv U} - \set*{\rv Z} \set*{\rv W}}_1 \le \norm{\set*{\rv Y} - \set*{\rv Z}}_1 +
    \norm{\set*{\rv U} - \set*{\rv W}}_1.\]
\end{proposition}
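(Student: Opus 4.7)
My plan is to prove this by the standard add-and-subtract trick with the triangle inequality, inserting the mixed product $\set*{\rv Z}\set*{\rv U}$ as an intermediate distribution. Concretely, I would write
\[
\norm{\set*{\rv Y}\set*{\rv U} - \set*{\rv Z}\set*{\rv W}}_1
~\le~ \norm{\set*{\rv Y}\set*{\rv U} - \set*{\rv Z}\set*{\rv U}}_1
~+~ \norm{\set*{\rv Z}\set*{\rv U} - \set*{\rv Z}\set*{\rv W}}_1
\]
by the triangle inequality for the $\ell_1$ norm.

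The second step is to observe that each of the two terms on the right factors. For the first term, expanding the $\ell_1$ norm over the joint support gives
\[
\sum_{y,u} \abs{ \set*{\rv Y}(y)\set*{\rv U}(u) - \set*{\rv Z}(y)\set*{\rv U}(u) }
~=~ \left(\sum_u \set*{\rv U}(u)\right) \cdot \sum_y \abs{\set*{\rv Y}(y) - \set*{\rv Z}(y)}
~=~ \norm{\set*{\rv Y} - \set*{\rv Z}}_1,
\]
where we used that $\set*{\rv U}$ is a probability distribution on its support. Applying the symmetric identity to the second term gives $\norm{\set*{\rv U} - \set*{\rv W}}_1$, and adding these bounds yields the proposition.

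There is no real obstacle here: the statement is a textbook fact about total variation (up to a factor, $\ell_1$ distance between product measures is bounded by the sum of $\ell_1$ distances between the marginals), and the argument is two lines. The only care required is to make sure the factoring step correctly uses that the common marginal (first $\set*{\rv U}$, then $\set*{\rv Z}$) is nonnegative and sums to one, so that pulling it out of the absolute value and summing it gives $1$.
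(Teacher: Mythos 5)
Your proof is correct and is essentially identical to the paper's: both insert the mixed product $\set*{\rv Z}\set*{\rv U}$ as the intermediate distribution, apply the triangle inequality, and use that tensoring with a common probability distribution preserves the $\ell_1$ distance. The only difference is that you explicitly verify the factoring identity that the paper states without proof.
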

\begin{proof}
    Tensoring with the same probability distribution does not change the total variation distance, i.e.~ 
    \[\norm{\set*{\rv Y} - \set*{\rv Z}}_1 = \norm{\set*{\rv Y}\set*{\rv U} - \set*{\rv Z}\set*{\rv U}}_1 \tand \norm{\set*{\rv U} - \set*{\rv W}}_1 = \norm{\set*{\rv Z}\set*{\rv U} - \set*{\rv Z}\set{\rv W}}_1.\]
    Now, a simple application of the triangle inequality proves the Proposition.
\end{proof}
A straightforward implication of \cref{prop:triangle_tensor} is the following, which will allow us to bound 
the correlation along a face $\aye \in X(k)$, using the correlation along sub-faces $\ess, \tee \subseteq \aye$.
\begin{corollary}\label{cor:split_distance} 
    Let $\aye \in X(\ell)$ and $\ess \in X(\ell_1), \tee \in X(\ell_2)$
    be given such that $\aye = \ess \sqcup \tee$. Then for any
    $k$-local PSD ensemble $\set*{\rv Y_1, \ldots, \rv Y_n}$
    we have
    \begin{multline*}
        \norm{\set*{\rv Y_\aye} - \set*{\rv Y_{a_1}}\cdots \set*{\rv Y_{a_\ell}}}_1 \le \norm{\set*{\rv Y_{\aye}} - \set*{\rv Y_\ess}\set*{\rv Y_\tee}}_1\\
        +\norm{\set*{\rv Y_\ess} - \set*{\rv Y_{s_1}} \cdots \set{\rv Y_{s_{\ell_1}}}}_1 + \norm{\set*{\rv Y_\tee} - \set{\rv Y_{t_1}} \cdots\set{\rv Y_{t_{\ell_2}}}}_1
    \end{multline*}
\end{corollary}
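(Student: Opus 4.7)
The plan is to derive the three-term bound through two applications of the triangle inequality: an ordinary one introducing the partially factorized intermediate distribution $\set{\rv Y_\ess}\set{\rv Y_\tee}$, followed by the tensorization-aware version provided by~\cref{prop:triangle_tensor}. The central observation is that since $\aye = \ess \sqcup \tee$, the fully factorized product $\set{\rv Y_{a_1}}\cdots\set{\rv Y_{a_\ell}}$ can be regrouped as $\bigl(\set{\rv Y_{s_1}}\cdots\set{\rv Y_{s_{\ell_1}}}\bigr)\bigl(\set{\rv Y_{t_1}}\cdots\set{\rv Y_{t_{\ell_2}}}\bigr)$, so inserting $\set{\rv Y_\ess}\set{\rv Y_\tee}$ in between creates exactly the split needed to invoke \cref{prop:triangle_tensor}.

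First I would apply the standard triangle inequality for the total variation distance to write
\[
\norm{\set{\rv Y_\aye} - \set{\rv Y_{a_1}}\cdots\set{\rv Y_{a_\ell}}}_1 \;\le\; \norm{\set{\rv Y_\aye} - \set{\rv Y_\ess}\set{\rv Y_\tee}}_1 \;+\; \norm{\set{\rv Y_\ess}\set{\rv Y_\tee} - \set{\rv Y_{a_1}}\cdots\set{\rv Y_{a_\ell}}}_1.
\]
The first term on the right is already one of the three summands claimed in the bound. For the second term, I would invoke \cref{prop:triangle_tensor} with the identification $\rv Y = \rv Y_\ess$, $\rv U = \rv Y_\tee$, $\rv Z$ distributed as $\set{\rv Y_{s_1}}\cdots\set{\rv Y_{s_{\ell_1}}}$, and $\rv W$ distributed as $\set{\rv Y_{t_1}}\cdots\set{\rv Y_{t_{\ell_2}}}$. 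The required same-support hypothesis of \cref{prop:triangle_tensor} holds because both $\rv Y, \rv Z$ take values in $\qary^\ess$ and both $\rv U, \rv W$ take values in $\qary^\tee$.

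All the marginal distributions used above, namely $\set{\rv Y_\ess}$, $\set{\rv Y_\tee}$, $\set{\rv Y_\aye}$ and the singleton marginals $\set{\rv Y_{a_i}}$, are well defined since $|\aye| = \ell \le k$ and the ensemble is $k$-local, so no PSD conditioning issue arises at this step. Combining the two triangle inequalities and using the regrouping $\set{\rv Y_{a_1}}\cdots\set{\rv Y_{a_\ell}} = \bigl(\set{\rv Y_{s_1}}\cdots\set{\rv Y_{s_{\ell_1}}}\bigr)\bigl(\set{\rv Y_{t_1}}\cdots\set{\rv Y_{t_{\ell_2}}}\bigr)$ yields exactly the claimed inequality. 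There is no substantive obstacle: the corollary is essentially an unpacking of \cref{prop:triangle_tensor} in the setting where the two ``halves'' of the distribution are themselves products of singleton marginals over the sets $\ess$ and $\tee$.
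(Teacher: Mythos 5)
Your proof is correct and is exactly the argument the paper intends: the paper states \cref{cor:split_distance} as a ``straightforward implication'' of \cref{prop:triangle_tensor}, and your two-step derivation (ordinary triangle inequality through the intermediate distribution $\set{\rv Y_\ess}\set{\rv Y_\tee}$, then \cref{prop:triangle_tensor} applied to the regrouped product of singleton marginals) is precisely that implication spelled out.
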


With this, we can go ahead and prove \cref{lem:split_rectangular}
\begin{proof}[Proof of \cref{lem:split_rectangular}]
    Let $\aye \in X(\ell)$ be a fixed face. By \cref{cor:split_distance} and averaging over all the
    $\binom{\ell = \ell_1 + \ell_2}{\ell_1}$ ways of splitting $\aye$ into
    $\set*{\ess, \tee}$ such that $\ess \in X(\ell_1)$ and $\tee \in X(\ell_2)$ we have
    \begin{multline*}
    \norm{\set*{\rv Y_\aye} - \prod_{i = 1}^{\ell = \ell_1 + \ell_2} \set*{\rv Y_{a_i}}}_1\\
        \le \frac{1}{\binom{\ell_1 + \ell_2}{\ell_1}} \sum_{\set*{\ess, \tee}}\parens*{\norm{\set*{\rv Y_\aye} - \set{\rv Y_\ess} \set{\rv Y_\tee} }_1 + \norm{\set*{\rv Y_\ess} - \prod_{i = 1}^{\ell_1} \set*{Y_{s_i}}}_1 + \norm{\set{\rv Y_\tee} - \prod_{i = 1}^{\ell_2} \set*{\rv Y_{t_i}} }}.
    \end{multline*}
    Now, by taking an average over all the edges $\aye \in X(\ell)$ (with respect to the measure $\Pi_{\ell}$) we obtain,
    \begin{multline*}
        \Ex{\aye \sim \Pi_{\ell}}{\norm{\set*{\rv Y_\aye} - \prod_{i = 1}^{\ell} \set{\rv Y_{a_i}}}_1}\\
        \le  \frac{1}{\binom{\ell}{\ell_1}} \cdot \Ex{\aye \in \Pi_{\ell}}{\sum_{\set{\ess, \tee}} \parens*{\norm{\set*{\rv Y_\aye} - \set{\rv Y_\ess} \set{\rv Y_\tee} }_1  + \norm{\set*{\rv Y_\ess} - \prod_{i=1}^{k_1} \set{\rv Y_{s_i}}}_1 + \norm{\set{\rv Y_\tee} - \prod_{i= 1}^{\ell_2} \set{\rv Y_{t_i}}}_1}}
    \end{multline*}
    where the indices $\set{\ess, \tee}$ run over the all the ways of splitting $\aye$ into $\ess$ and $\tee$ as
    before. We can now see that the RHS can be thought as an average over the (weighted) edges in $E(\ell_1, \ell_2)$ (q.v.~\cref{eq:edg_exp}), \ie
    \begin{multline*}
        \Ex{\aye \sim \Pi_{\ell}}{\norm{\set*{\rv Y_\aye} - \prod_{i = 1}^{\ell} \set{\rv Y_{a_i}}}_1}\\
        \le  \Ex{\set{\ess, \tee} \sim w_{\ell_1, \ell_2}}{\norm{\set*{\rv Y_\aye} - \set{\rv Y_\ess} \set{\rv Y_\tee} }_1 
    + \norm{\set*{\rv Y_\ess} - \prod_{i=1}^{\ell_1}\set{\rv Y_{s_i}}}_1 + \norm{\set{\rv Y_\tee} - \prod_{i =1}^{\ell_2} \set{Y_{t_i} }}_1}
    \end{multline*}
    Now, note that since $\Pi_{\ell_1, \ell_2}$ (q.v.~\cref{eq:stat_meas}) is the stationary distribution of the walk defined on $G_{\ell_1, \ell_2}$, \ie~
    \[ 2\Pi_{\ell_1, \ell_2}(\aye) = \sum_{\bee: \set{\aye, \bee} \in E(\ell_1, \ell_2)} w_{\ell_1, \ell_2}(\aye, \bee),\]
    the lemma follows. This is because, we have
    {\small
    \begin{multline*}
        \Ex{\aye \in X(\ell)}{\norm{\set*{\rv Y_\aye} - \prod_{i = 1}^{\ell} \set{\rv Y_{a_i}}}_1}\\
        \le  \Ex{\set{\ess, \tee} \sim w_{\ell_1, \ell_2}}{\norm{\set*{\rv Y_\aye} - \set{\rv Y_\ess} \set{\rv Y_\tee} }_1} 
        + \Ex{\set{\ess, \tee} \sim w_{\ell_1, \ell_2}}{\norm{\set*{\rv Y_\ess} - \prod_{i=1}^{\ell_1}\set{\rv Y_{s_i}}}_1 + \norm{\set{\rv Y_\tee} - \prod_{i =1}^{\ell_2} \set{Y_{t_i} }}_1}\\
         =  \Ex{\set{\ess, \tee} \sim E(\ell_1, \ell_2)}{\norm{\set{\rv Y_{\aye}} - \set{\rv Y_\ess}\set{\rv Y_\tee}}_1} + \Ex{\ess \sim \Pi_{\ell_1}}{\norm{\set{\rv Y_\ess} - \prod_{i=  1}^{\ell_1} \set{\rv Y_{s_i}}}_1} +
        \Ex{\tee \sim \Pi_{\ell_2}}{\norm{\set{\rv Y_\tee} - \prod_{i = 1}^{\ell_2} \set{\rv Y_{t_i}}}_1}
      \end{multline*}
      }
\end{proof}


\section{High-Dimensional Threshold Rank}\label{sec:trank}
In~\cite{BarakRS11},~\cref{thm:brs_exp} was proven for a more general
class of graphs than expander graphs -- namely, the class of low
threshold rank graphs.
\begin{definition}[Threshold Rank of Graphs (from~\cite{BarakRS11})]\label{def:graph_trank}
  Let $G = (V,E,w)$ be a weighted graph on $n$ vertices and $\Aye$ be
  its normalized random walk matrix. Suppose the eigenvalues of $\Aye$
  are $1 = \lambda_1 \ge \cdots \ge \lambda_n$. Given a parameter
  $\tau \in (0,1)$, we denote the threshold rank of $G$ by $\Rank_{\ge
    \tau}(\Aye)$ (or $\Rank_{\ge \tau}(G)$) and define it as
  \[
    \Rank_{\ge \tau}(\Aye) \coloneqq \left\vert \{i \vert \lambda_i \ge \tau \} \right\vert.
  \]
\end{definition}
There~\cite{BarakRS11}, the authors asked for the correct notion of
threshold rank for $k$-CSPs. In this section, we give a candidate
definition of low threshold rank motivated by our techniques.

To break $k$-wise correlations it is sufficient to assume that the
involved swap graphs in the foregoing discussion are low threshold
rank since this is enough to apply a version of \cref{lem:brs-dec},
already described in the work of \cite{BarakRS11}.

Moreover, we have some flexibility as to which swap graphs to consider
as long as they satisfy some splitting conditions. To define a swap
graph it is enough to have a distributions on the hyperedges of a
(constraint) hypergraph. Hence, the notion of swap graph is
independent of high-dimensional expansion. HDXs are just an
interesting family of objects for which the swap graphs are good
expanders.

To capture the many ways of splitting the statistical distance over
hyperedges into the statistical distance over the edges of swap
graphs, we first define the following notion. We say that a binary
tree $\tree$ is a $k$-splitting tree if it has exactly $k$
leaves. Thus, labeling every vertex with the number of leaves on the
subtree rooted at that vertex ensures,
\begin{itemize}
  \item the root of $\tree$ is labeled with $k$ and all other vertices are labeled with
        positive integers,
  \item the leaves are labeled with $1$, and
  \item each non-leaf vertex satisfy the property that its label is the sum of
        the labels of its two children.
\end{itemize}

Note that, we will think of each non-leaf node with left and right
children labeled as $a$ and $b$ as representing the swap graph from
$X(a)$ to $X(b)$ for some simplicial complex $X(\le k)$. Let
$\SwapT(\tree, X)$ be the set of all such swap graphs over $X$ finding
representation in the splitting tree $\tree$. Indeed the tree $\tree$
used in the proof of \cref{thm:brs_hdx} is just one special instance of a
$k$-splitting tree.

Given a threshold parameter $\tau \le 1$ and a set of normalized
adjacency matrices $\mathcal{A} = \{\Aye_1,\dots,\Aye_s\}$, we define
the threshold rank of $\mathcal{A}$ as
$$
\Rank_{\ge \tau}(\mathcal{A}) \coloneqq \max_{\Aye \in \mathcal{A}} ~\Rank_{\ge \tau}(\Aye),
$$
where $\Rank_{\ge \tau}(\Aye)$ is denotes usual threshold rank of
$\Aye$ as in~\cref{def:graph_trank}.

Now, we are ready to define the notion of a $k$-CSP instance being
$(\tree, \tau, r)$-splittable as follows.

\begin{definition}[$(\tree, \tau, r)$-splittability]
    A $k$-CSP instance $\Ins$ with the constraint complex $X(\le k)$ is said to be $(\tree, \tau, r)$-splittable if
    $\tree$ is a $k$-splitting tree and 
    \[ \Rank_{\ge \tau}(\SwapT(\tree, X)) \le r.\]
    If there exists some $k$-splitting tree $\tree$ such that
    $\Ins$ is $(\tree, \tau, r)$-splittable, the instance $\Ins$ will be called a $(\tau, r)$-splittable instance.
\end{definition}

Now, using this definition we can show that whenever
$\Rank_\tau(\Ins)$ is bounded for the appropriate choice of $\tau$,
after conditioning on a random partial assignment as in
\cref{algo:prop-rd} we will have small correlation over the faces
$\aye \in X(k)$, \ie
\begin{theorem}\label{thm:brs_tree_split}
    Suppose a simplicial complex $X(\le k)$ with $X(1) = [n]$ and an $L$-local PSD ensemble
    $\rv Y = \set{\rv Y_1, \ldots, \rv Y_n}$ are given. 
    There exists some universal constants $c_4 \ge 0$ and 
    $C'' \ge 0$ satisfying the following:
    If $L \ge C'' \cdot (q^{4k} \cdot k^7 \cdot r/\ee^5)$, $\supp(\rv Y_j) \le q$
    for all $j \in [n]$, and $\Ins$ is $(c_4 \cdot (\ee/(4 k \cdot q^{k}))^2, r)$-splittable. Then, we have
    \begin{equation}
        \Ex{\aye \in X(k)}{ \norm{\set{\rv Y'_{\aye}} - \set*{\rv Y'_{a_1}}\cdots \set*{\rv Y'_{a_k}}}_1 } \le \ee,
        \label{eq:brs_hdx}
    \end{equation}
    where $\rv Y'$ is as defined in \cref{algo:prop-rd} on the input of $\set{\rv Y_1, \ldots, \rv Y_n}$ and $\Pi_k$. 
\end{theorem}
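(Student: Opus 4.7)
The plan is to adapt the proof of \cref{thm:brs_hdx} by replacing the expander-based variance-drop step (\cref{cor:brs-explicit}) with its low threshold rank analog from~\cite{BarakRS11}. The overall skeleton is identical: we define a variance potential, argue that large average correlation over $X(k)$ implies a drop in this potential after an additional round of conditioning, and conclude by bounding the number of ``bad'' rounds from the fact that the potential lies in $[0,1]$. The only new ingredient is the threshold-rank conditioning primitive, which costs an extra factor of $r$ in the amount conditioned per round (and a few extra factors of $k$, $q$, $1/\eps$) relative to the expander case.

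First I would fix a $k$-splitting tree $\tree$ witnessing $(\tau,r)$-splittability, so that every swap graph in $\SwapT(\tree, X)$ has at most $r$ eigenvalues above $\tau = c_4 \cdot (\eps/(4 k q^{k}))^2$. I define $\eps_m$ and $\Phi_m$ exactly as in the proof of \cref{thm:brs_hdx}, except that the seed set conditioned upon after $m$ rounds is now a union of $m \cdot r$ (rather than $m$) independent samples from $\Pi_k$. Applying \cref{lem:split_rectangular} along $\tree$ verbatim, the assumption $\eps_m \ge \eps/2$ implies that, for some label $\ell \in J(\tree)$, the expected statistical distance along the edges of the corresponding swap graph $G_{\ell_1, \ell_2}$ is at least $\eps/(2k)$, and hence at least $\eps/(4k)$ with probability at least $\eps/(4k)$ over the seed set and its conditioning assignment.

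Next, instead of invoking \cref{cor:brs-explicit}, I would apply the low threshold rank version of \cref{lem:brs-dec} from~\cite{BarakRS11}. Viewing the ensemble $\{\rv Y_\aye\}_{\aye \in X(\ell_1)\cup X(\ell_2)}$ as supervariables of support at most $q^{k}$ (justified by \cref{fact:set-ensemble}), and using $\Rank_{\ge \tau}(G_{\ell_1,\ell_2}) \le r$, this lemma shows that conditioning on a random set of $r$ vertices of $G_{\ell_1,\ell_2}$ drops the expected supervariable variance by $\Omega(\eps^{2}/(k^{2} q^{2k}))$. As in the HDX proof, Jensen's inequality (variance is non-increasing under conditioning) allows me to replace conditioning on $r$ faces of $X(\ell_1)\cup X(\ell_2)$ by conditioning on $r$ additional top-level faces $\ess_{m+1},\dots,\ess_{m+r} \sim \Pi_k$, since the relevant sub-faces can be obtained by random sub-selection. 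Consequently, each round of $r$-fold conditioning decreases $\Phi_m$ by at least $(1/k)\cdot(\eps/(4k))\cdot \Omega(\eps^{2}/(k^{2} q^{2k})) = \Omega(\eps^{3}/(k^{4} q^{2k}))$, so there are at most $O(k^{4} q^{2k}/\eps^{3})$ bad rounds; multiplying by $r$ faces per round and by the additional $q^{2k}/\eps$ factor that the low-threshold-rank variant of \cref{lem:brs-dec} loses relative to its expander counterpart yields the stated bound $L = \Theta(k^{7} q^{4k} r/\eps^{5})$.

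The main obstacle will be correctly formalizing the low threshold rank correlation-breaking step in the setting of supervariables indexed by faces of the swap graph, and book-keeping the exact polynomial dependencies in $k$, $q$, $r$, $\eps$. In particular, one must verify that the PSD-ness of conditional covariances (which underlies \cref{lem:brs-dec}) lifts to the $\ell$-local ensemble $\{\rv Y_\aye\}_{\aye \in X(\le \ell)}$ via \cref{fact:set-ensemble}, so that the BRS argument applies verbatim on each swap graph in $\SwapT(\tree, X)$. Once this is in hand, no new conceptual ingredients beyond those in the proof of \cref{thm:brs_hdx} and the low threshold rank machinery of~\cite{BarakRS11} are needed.
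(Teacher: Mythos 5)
Your high-level plan is the paper's: fix the splitting tree $\tree$ witnessing splittability, reuse the potential/conditioning skeleton of \cref{thm:brs_hdx}, and replace the expander variance-decrement step by the low threshold rank version of BRS Lemma~5.4 (\cref{lem:brs-dec2}, instantiated on the swap graphs as \cref{cor:brs-explicit2}). The reduction to supervariables via \cref{fact:set-ensemble} and the Jensen step lifting the conditioning to top-level faces are also exactly as in the paper.

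However, your key intermediate claim is not what the cited lemma gives, and your accounting is internally inconsistent. \cref{lem:brs-dec2} conditions on a \emph{single} random vertex and yields a variance drop of order $\eps^4/(q^{4k}\cdot r)$ (for supervariables of support $q^k$); it does not say that conditioning on a set of $r$ vertices restores the expander-like drop $\Omega(\eps^2/(k^2 q^{2k}))$. You cannot obtain the latter by iterating the single-vertex statement $r$ times, because after the first conditioning the hypothesis (large local correlation along the swap graph) need not persist. Your final count then double-dips: you take the expander-like per-round drop, multiply the seed size by $r$ faces per round, \emph{and} multiply in ``the additional $q^{2k}/\eps$ factor that the threshold-rank variant loses'' --- but the $r$ and the $q^{2k}/\eps$ loss are two bookkeepings of the same phenomenon, not independent costs. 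The clean route (the paper's) is to condition on one additional face of $\Pi_k$ per round and use the weaker per-step drop directly: a bad round decreases $\Phi_m$ by $\Omega\bigl(\tfrac{1}{k}\cdot\tfrac{\eps}{4k}\cdot\tfrac{\eps^4}{k^2 q^{4k} r}\bigr) = \Omega(\eps^5/(k^{O(1)} q^{4k} r))$, so there are at most $O(k^{O(1)} q^{4k} r/\eps^5)$ bad rounds and $L \ge C''\, k^7 q^{4k} r/\eps^5$ suffices. Your final bound is right, but the step that is supposed to produce it does not follow from the lemma as you have invoked it.
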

It is important to note that the specific knowledge of the
$k$-splitting tree $\tree$ that makes $\Ins$ $(\tree, \tau,
r)$-splittable is only needed for the proof of
\cref{thm:brs_tree_split}. The conclusion of \cref{thm:brs_tree_split}
can be used without the knowledge of the specific $k$-splitting tree
$\tree$. The attentive reader might have noticed is
that in the proof of \cref{thm:brs_hdx}, the choice of $\tree$ is not important, as all the splitting tree are guaranteed to have be expanders provided that $X$ is a $\gamma$-HDX. The proof of \cref{thm:brs_tree_split}, in
this light can be thought of an extension of the proof of
\cref{thm:brs_hdx} to the case where not necessarily every tree is
good, and where we can bound the threshold rank instead of the
spectral expansion.

This, will readily imply an algorithm
\begin{corollary}\label{cor:alg-cooked-baked2}
    Suppose $\Ins$ is a $q$-ary $k$-CSP instance 
    whose constraint complex is $X(\le k)$. 
    There exists an absolute constant $C'' \ge 0$ and $c_4 \ge 0$
    that satisfies the following: If $\Ins$ is $(c_4 \cdot (\ee/ (4k \cdot q^{k}))^2, r)$-splittable, then there is an algorithm that runs in time
    $n^{O\parens*{\frac{q^{4k} \cdot k^7 \cdot r}{\ee^5}}}$ and that is based on
    $(\frac{C'' \cdot k^5 \cdot q^{k} \cdot r}{\ee^4})$-levels of SoS-hierarchy and \cref{algo:prop-rd} that outputs a random
    assignment $\assn: [n] \to [q]$ that in expectation ensures $\SAT_\Ins(\assn) =
    \OPT(\Ins) - \ee$. 
\end{corollary}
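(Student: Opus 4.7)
The plan is to essentially mirror the proof of~\cref{cor:alg-cooked}, replacing the invocation of~\cref{thm:brs_hdx} by that of~\cref{thm:brs_tree_split}. The algorithm will (i) solve the $L$-level Sum-of-Squares relaxation for $\Ins$ with $L = C'' \cdot (q^{4k} \cdot k^7 \cdot r / \ee^5)$ to obtain an $L$-local PSD ensemble $\rv Y = \set{\rv Y_1, \ldots, \rv Y_n}$ satisfying $\SDP(\Ins) \ge \OPT(\Ins)$, and (ii) run the propagation rounding algorithm~\cref{algo:prop-rd} on $\rv Y$ with the distribution $\Pi_k$. The runtime bound $n^{O(q^{4k} \cdot k^7 \cdot r / \ee^5)}$ then follows from the standard fact that $L$-level SoS can be solved in time $n^{O(L)}$.

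For the analysis, I would first observe (as in~\cref{cor:alg-cooked}) that by SDP feasibility,
\[ \SDP(\Ins) ~=~ \Ex{\aye \sim \Pi_k}{\Ex{\set{\rv Y_\aye}}{\one[\rv Y_\aye \in \Cc_\aye]}} ~\ge~ \OPT(\Ins). \]
Since the conditional local ensemble $\rv Y'$ produced by~\cref{algo:prop-rd} is consistent with the SDP-induced distribution, the law of total expectation yields
\[ \Ex{S}{\Ex{\assn_S \sim \set{\rv Y_S}}{\Ex{\aye \sim \Pi_k}{\Ex{\set{\rv Y'_\aye}}{\one[\rv Y'_\aye \in \Cc_\aye]}}}} ~=~ \SDP(\Ins) ~\ge~ \OPT(\Ins). \]
Next, the hypothesis that $\Ins$ is $(c_4 \cdot (\ee/(4k \cdot q^{k}))^2, r)$-splittable together with our choice of $L$ is precisely what is required to apply~\cref{thm:brs_tree_split}, yielding
\[ \Ex{S}{\Ex{\assn_S \sim \set{\rv Y_S}}{\Ex{\aye \sim \Pi_k}{\norm{\set{\rv Y'_\aye} - \set{\rv Y'_{a_1}}\cdots \set{\rv Y'_{a_k}}}_1}}} ~\le~ \ee. \]

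Finally, since~\cref{algo:prop-rd} rounds each variable independently from its marginal $\set{\rv Y'_j}$, one has
\[ \Ex{\assn}{\SAT_\Ins(\assn)} ~=~ \Ex{S}{\Ex{\assn_S \sim \set{\rv Y_S}}{\Ex{\aye \sim \Pi_k}{\Ex{\assn \sim \prod_j \set{\rv Y'_j}}{\one[\assn|_\aye \in \Cc_\aye]}}}}, \]
and replacing the product distribution by $\set{\rv Y'_\aye}$ costs at most the total variation distance bounded above, giving $\Ex{\assn}{\SAT_\Ins(\assn)} \ge \SDP(\Ins) - \ee \ge \OPT(\Ins) - \ee$ as required. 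The only real content is the appeal to~\cref{thm:brs_tree_split}; the remaining bookkeeping is identical to~\cref{cor:alg-cooked}. There is no significant obstacle here since the hard work has already been done in proving~\cref{thm:brs_tree_split}; the one subtlety worth noting is that although~\cref{thm:brs_tree_split} is stated for a specific splitting tree $\tree$, the rounding algorithm itself does not need to know $\tree$—the existence of any $\tree$ making $\Ins$ splittable suffices to bound the correlation, so the algorithm is oblivious to $\tree$.
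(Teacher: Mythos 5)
Your proof is correct and follows exactly the route the paper intends: the paper in fact omits this proof, noting only that it is "almost identical" to that of \cref{cor:alg-cooked} with \cref{thm:brs_tree_split} substituted for \cref{thm:brs_hdx}, which is precisely what you carry out. Your choice of $L = C''\cdot q^{4k}k^7 r/\ee^5$ matches the hypothesis of \cref{thm:brs_tree_split} and the stated runtime (the smaller level count quoted in the corollary statement is an inconsistency already present in the paper, not a gap in your argument).
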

Since the proof of \cref{cor:alg-cooked-baked2} given \cref{thm:brs_tree_split}, will be almost identical to the proof of \cref{cor:alg-cooked}, given \cref{thm:brs_hdx}, we will omit the proof of this.
\subsection{Breaking Correlations for Splittable CSPs: Proof of \cref{thm:brs_tree_split}}

We will need the more general version of \cref{lem:brs-dec}, already proven in \cite{BarakRS11}.
\begin{lemma}[Lemma 5.4 from \cite{BarakRS11}]\label{lem:brs-dec2}\footnote{We give a derivation of this lemma in \cref{app:appenditis}.}
    Let $G = (V, E, \Pi_2)$ be a weighted graph, $\set{\rv Y_1, \ldots, \rv Y_n}$ a local PSD ensemble, where we have
    $\supp(\rv Y_i) \le q$ for every $i \in V$,  and $q \ge 0$. If $\ee \ge 0$ is a lower bound on 
    the expected statistical difference between independent
    and correlated sampling along the edges,\ie
    \[ \ee \le \Ex{\set{i,j} \sim \Pi_2}{\norm{\set{\rv Y_{ij} } - \set{\rv Y_i}\set{\rv Y_j}  }_1}.\]
    There exists absolute constants $c_3 \ge 0 $ and $c_4 \ge 0$ that satisfy the following:
    Then, conditioning on a random vertex decreases the variances,
    \[ \ExpOp_{i\sim \Pi_1}\ExpOp_{j \sim \Pi_1}{\Ex{ \set{\rv Y_j} }{ \Var{\rv Y_i \mid \rv Y_j } } } \le \Ex{i \sim \Pi_1}{\Var{\rv Y_i}} - c_3 \cdot \frac{\ee^4}{q^4 \cdot \Rank_{\ge c_4 \ee^2/q^2}(G) }.\]
\end{lemma}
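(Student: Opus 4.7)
The plan is to imitate the proof of \cref{thm:brs_hdx} almost verbatim, replacing the spectral-expansion input (\cref{cor:brs-explicit}) with the low-threshold-rank variance-drop lemma (\cref{lem:brs-dec2}). Fix a $k$-splitting tree $\tree$ witnessing $(c_4 \cdot (\ee/(4k q^k))^2, r)$-splittability of $\Ins$, run \cref{algo:prop-rd} with $\Pi = \Pi_k$, let $S \sim \Pi_k^m$ be the union of $m$ i.i.d.\ $k$-faces, and define $\ee_m$ and the potential $\Phi_m$ exactly as in (\ref{eq:pot-def}). The target is $\Ex{m \sim [L/k]}{\ee_m} \le \ee$, and since $0 \le \Phi_m \le 1$ is non-increasing under conditioning, it suffices to show that each index $m$ with $\ee_m \ge \ee/2$ forces $\Phi_m - \Phi_{m+1} = \Omega(\ee^5/(k^6 q^{4k} r))$.

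Suppose $\ee_m \ge \ee/2$. Iterating \cref{lem:split_rectangular} along $\tree$ (which has at most $k$ internal nodes) and applying pigeonhole produces an internal node with label $\ell = \ell_1 + \ell_2$ such that, inside the expectation over $(S, \rv Y_S)$, the swap-graph statistical distance along $G_{\ell_1,\ell_2}$ is at least $\ee/(2k)$. A Markov argument then gives that with probability at least $\ee/(4k)$ over $(S, \rv Y_S)$, this distance is already at least $\ee/(4k)$. By~\cref{fact:set-ensemble} the induced $\{\rv Y_\aye\}_{\aye \in X(\ell_1) \cup X(\ell_2)}$ is a local PSD ensemble whose marginals have support bounded by $q^k$, and by splittability $\Rank_{\ge c_4 (\ee/(4kq^k))^2}(G_{\ell_1,\ell_2}) \le r$; hence \cref{lem:brs-dec2} (applied with effective correlation parameter $\ee/(4k)$ and effective alphabet size $q^k$) yields that conditioning on a random $\Pi_{\ell_1,\ell_2}$-sample drops $\Ex{\aye \sim \Pi_{\ell_1}}{\Var{\rv Y_\aye}} + \Ex{\aye \sim \Pi_{\ell_2}}{\Var{\rv Y_\aye}}$ by $\Omega(\ee^4/(k^4 q^{4k} r))$.

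To turn this into a drop in $\Phi_m$, I would reuse the coupling from the proof of \cref{thm:brs_hdx}: a sample from $\Pi_{\ell_1, \ell_2}$ can be realized by first sampling $\ess_{m+1} \sim \Pi_k$ and then restricting to a uniformly random subset of size $\ell_1$ or $\ell_2$; since variance is nonincreasing under further conditioning, the drop obtained when conditioning on the full $\ess_{m+1}$ is at least as large. Accounting for the $1/k$ probability that the uniformly chosen level in (\ref{eq:pot-def}) hits $\ell_1$ (or $\ell_2$), and the $\ee/(4k)$ probability that $(S, \rv Y_S)$ is ``bad'', gives
\[
\Phi_m - \Phi_{m+1} ~\ge~ \frac{1}{k} \cdot \frac{\ee}{4k} \cdot \Omega\!\left(\frac{\ee^4}{k^4 q^{4k} r}\right) ~=~ \Omega\!\left(\frac{\ee^5}{k^6 q^{4k} r}\right).
\]
Since $\Phi_m \in [0,1]$ throughout, the number of bad $m \in [L/k]$ is $O(k^6 q^{4k} r / \ee^5)$; averaging gives $\Ex{m}{\ee_m} \le \ee/2 + (k/L) \cdot O(k^6 q^{4k} r/\ee^5)$, which is at most $\ee$ once $L \ge C'' \cdot k^7 q^{4k} r / \ee^5$, as required.

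The main obstacle is purely bookkeeping rather than conceptual: one must be careful that the absolute constant $c_4$ appearing in the splittability hypothesis of the theorem is exactly the constant $c_4$ produced by \cref{lem:brs-dec2} when invoked with parameters $(\ee/(4k), q^k)$, so that the threshold $c_4 \cdot \ee^2/(16 k^2 q^{2k}) = c_4 \cdot (\ee/(4kq^k))^2$ at which threshold rank is bounded matches the threshold under which \cref{lem:brs-dec2} delivers a variance drop. This is the same substitution implicit in \cref{cor:brs-explicit}, and once it is fixed, the rest of the argument is a transcription of the proof of \cref{thm:brs_hdx} with the phrase ``$\gamma$-HDX'' replaced by ``$(\tree, \tau, r)$-splittable'' and \cref{cor:brs-explicit} replaced by \cref{lem:brs-dec2}.
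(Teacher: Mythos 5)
Your proposal does not prove the stated lemma; it proves a different (downstream) statement. The lemma in question, \cref{lem:brs-dec2}, is a claim about a \emph{single weighted graph} $G$ and a $2$-local PSD ensemble on its vertices: if the average statistical distance $\norm{\set{\rv Y_{ij}} - \set{\rv Y_i}\set{\rv Y_j}}_1$ across edges is at least $\ee$, then conditioning on a random vertex drops the average variance by $\Omega\bigl(\ee^4/(q^4\cdot \Rank_{\ge c_4\ee^2/q^2}(G))\bigr)$. What you have sketched is the proof of \cref{thm:brs_tree_split} --- the splittable-CSP rounding theorem --- and in the middle of your argument you explicitly \emph{invoke} \cref{lem:brs-dec2} (``hence \cref{lem:brs-dec2} \ldots yields that conditioning on a random $\Pi_{\ell_1,\ell_2}$-sample drops \ldots''). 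As a proof of \cref{lem:brs-dec2} itself this is circular: the potential function $\Phi_m$, the splitting tree, the seed sets $S\sim\Pi_k^m$, and the swap graphs all belong to the application, not to the lemma, which mentions none of these objects.

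The actual content of the lemma is a local-to-global correlation argument, carried out in \cref{app:appenditis}. Concretely, one needs (i) a vectorization step (\cref{fact:vectorization_byproduct}) producing vectors $v_1,\dots,v_n$ in the unit ball from the PSD ensemble such that $\Ex{ij\sim E}{\ip{v_i}{v_j}} \ge \ee^2/q^2$ and such that the variance decrement under conditioning is lower-bounded by $\Ex{i,j\sim V}{\abs{\ip{v_i}{v_j}}}$; and (ii) the spectral argument (\cref{ap:local_to_glob_inner} and \cref{cor:local_to_global_brs}) showing that high local correlation along edges forces high global correlation unless $G$ has many eigenvalues above the threshold $\rho/4$ --- this is where the Gram matrix $\matr\Why$ of the $v_i$, the degree normalization, the expansion in the eigenbasis of $\Aye_G$, and the Cauchy--Schwarz bound $\sum_{i\le m'}p(i)\le\sqrt{m'/m}$ on the mass in the top eigenspace all enter, and it is the only place the threshold rank appears. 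None of this machinery is present in your write-up, so the key quantitative claim --- the $\ee^4/(q^4\cdot r)$ variance drop for a single low-threshold-rank graph --- is never established. You would need to supply this argument (or cite it as a black box from \cite{BarakRS11}, which is not what a proof of the lemma asks for) before the rest of your outline, which is essentially the proof of \cref{thm:brs_tree_split}, can be used.
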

Since we will use this lemma, only with the swap graphs $G_{\ell_1, \ell_2}$ and $(L/k)$-local PSD
ensemble $\set{\rv Y_\aye}_{\aye \in X}$ obtained from the $L$-local PSD ensemble $\set{\rv Y_1, \ldots, \rv Y_n}$, for
convenience we will write the corollary we will use more explicitly
\begin{corollary}\label{cor:brs-explicit2}
    Let $\ell_1 \ge \ell_2 \ge 0$ satisfying $\ell_1 + \ell_2 \le k$ be given parameters, and let
    $G_{\ell_1, \ell_2}$ be the swap graph defined for a $\gamma$-HDX $X(\le k)$. 
    Let $\set{\rv Y_\aye}_{\aye \in X}$ be a local PSD ensemble; and suppose we have $\supp(Y_\aye) \le q^{k}$
    for every $\aye \in X(\ell_1) \cup X(\ell_2)$ for some $q \ge 0$. Suppose $\ee > 0$ satisfies,
    \[ \frac{\ee}{4k} \le \Ex{\set{\ess, \tee} \in E(\ell_1, \ell_2)}{\norm{\set{\rv Y_{\ess \cup \tee}} - \set{\rv Y_\ess}\set{\rv Y_\tee}}_1}.\]
    There exists absolute constants $c_3 \ge 0$ and $c_5 \ge 0$ that satisfy the
    following: \\
    If $\Rank_{\ge c_4 \cdot (\ee/(4k \cdot q^{k}))^2}(G_{\ell_1, \ell_2}) \le
    r$, then conditioning on a random face $\aye \sim \Pi_{\ell_1, \ell_2}$
    decreases the variances, i.e.~
    \begin{eqnarray*}
        2 \cdot \Ex{\aye, \bee \sim \Pi_{\ell_1, \ell_2}}{\Ex{\set{\rv Y_\aye}}{\Var{\rv Y_\bee \mid \rv Y_\aye}}} & = &
        \Ex{\aye \in \Pi_{\ell_1, \ell_2}}{\Ex{\ess \sim \Pi_{\ell_1}}{\Var{\rv Y_\ess \mid \rv Y_\aye}} 
        + \Ex{\tee \sim \Pi_{\ell_2}}{\Var{\rv Y_\tee \mid \rv Y_\aye}}},\\
        & \le & \Ex{\ess \sim \Pi_{\ell_1}}{\Var{\rv Y_\ess}} + \Ex{\tee \sim \Pi_{\ell_2}}{\Var{\rv Y_\tee}} - c_5 \cdot \frac{\ee^4}{256 \cdot k^4 \cdot q^{4k} \cdot r}.
    \end{eqnarray*}
\end{corollary}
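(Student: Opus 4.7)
The plan is to essentially apply \cref{lem:brs-dec2} directly to the swap graph $G_{\ell_1,\ell_2}$ with the auxiliary local PSD ensemble indexed by faces, and then unfold the mixture distribution $\Pi_{\ell_1,\ell_2}$ to recover the form in the conclusion. This parallels the proof of \cref{cor:brs-explicit}, where we applied the expander version \cref{lem:brs-dec}; here we simply substitute the threshold-rank version.

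First I would observe that by \cref{fact:set-ensemble}, the collection $\set{\rv Y_\aye}_{\aye \in X(\ell_1)\cup X(\ell_2)}$ is itself a local PSD ensemble obtained from $\set{\rv Y_1,\dots,\rv Y_n}$ (since $\ell_1+\ell_2 \le k$ and $L$ is chosen large enough). Each random variable $\rv Y_\aye$ takes values in $\qary^\aye$, so its support has size at most $q^{\ell_1+\ell_2} \le q^k$. Thus we may treat this ensemble as a local PSD ensemble on the vertex set $X(\ell_1)\cup X(\ell_2)$ with alphabet size bounded by $q^k$.

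Next I would verify the hypothesis of \cref{lem:brs-dec2} when applied to the graph $G_{\ell_1,\ell_2}$ with stationary distribution $\Pi_{\ell_1,\ell_2}$. The assumed lower bound
\[
\frac{\ee}{4k} \le \Ex{\set{\ess,\tee}\sim w_{\ell_1,\ell_2}}{\norm{\set{\rv Y_{\ess\cup\tee}}-\set{\rv Y_\ess}\set{\rv Y_\tee}}_1}
\]
is exactly the required lower bound on the expected correlation along the edges, with parameter $\ee' = \ee/(4k)$ and alphabet size $q' = q^{k}$. The threshold-rank hypothesis $\Rank_{\ge c_4 \cdot (\ee/(4k\cdot q^{k}))^2}(G_{\ell_1,\ell_2}) \le r$ matches the precondition $\Rank_{\ge c_4 (\ee')^2/(q')^2}(G) \le r$ of \cref{lem:brs-dec2}. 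Applying that lemma yields
\[
\Ex{\aye\sim \Pi_{\ell_1,\ell_2}}{\Ex{\set{\rv Y_\aye}}{\Ex{\bee \sim \Pi_{\ell_1,\ell_2}}{\Var{\rv Y_\bee \mid \rv Y_\aye}}}} \le \Ex{\bee\sim \Pi_{\ell_1,\ell_2}}{\Var{\rv Y_\bee}} - c_3 \cdot \frac{(\ee/(4k))^4}{(q^k)^4 \cdot r},
\]
where the last term simplifies to $c_3 \cdot \ee^4/(256 k^4 q^{4k} r)$, so setting $c_5 \defeq c_3$ gives the claimed quantitative drop.

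Finally I would unfold the bipartite mixture stationary distribution from \cref{eq:stat_meas}. Since $\bee \sim \Pi_{\ell_1,\ell_2}$ is a $\tfrac12/\tfrac12$ mixture of a sample from $\Pi_{\ell_1}$ and a sample from $\Pi_{\ell_2}$, we have for every fixed $\aye$
\[
\Ex{\bee\sim\Pi_{\ell_1,\ell_2}}{\Var{\rv Y_\bee \mid \rv Y_\aye}} ~=~ \tfrac12 \Ex{\ess\sim\Pi_{\ell_1}}{\Var{\rv Y_\ess \mid \rv Y_\aye}} ~+~ \tfrac12 \Ex{\tee\sim\Pi_{\ell_2}}{\Var{\rv Y_\tee \mid \rv Y_\aye}},
\]
which explains the factor of $2$ in the stated equality and converts the one-line conclusion from \cref{lem:brs-dec2} into the two-term form requested in \cref{cor:brs-explicit2}. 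Since the steps are all direct substitutions and bookkeeping, there is no real obstacle beyond matching the constants $(c_3,c_4)$ from \cref{lem:brs-dec2} with the constants $(c_5,c_4)$ advertised here and checking that $L \ge C''(q^{4k} k^7 r/\ee^5)$ is sufficient to ensure that the relevant restricted ensemble from \cref{fact:set-ensemble} is an $O(1)$-local PSD ensemble, as required to invoke \cref{lem:brs-dec2}.
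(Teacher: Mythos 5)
Your proposal is correct and matches the paper's (implicit) argument exactly: the paper gives no separate proof of \cref{cor:brs-explicit2}, treating it precisely as the direct instantiation of \cref{lem:brs-dec2} on $G_{\ell_1,\ell_2}$ with alphabet $q'=q^k$, correlation parameter $\ee'=\ee/(4k)$, and the $\tfrac12/\tfrac12$ unfolding of $\Pi_{\ell_1,\ell_2}$ that you describe. The only cosmetic difference is the constant: multiplying the conclusion of \cref{lem:brs-dec2} by $2$ actually yields a drop of $2c_3\cdot\ee^4/(256k^4q^{4k}r)$, so the paper takes $c_5=2c_3$, whereas your weaker choice $c_5=c_3$ is still a valid witness for the existential claim.
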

Here the constant $c_5$ satisfies $c_5 = 2 \cdot c_3$.
\begin{proof}
    As the proof will mostly follow \cref{thm:brs_hdx}, we will only highlight the relevant differences and carry
    out the relevant computations.

    Let $\tau = c_4 \cdot (\ee/ (4k \cdot q^{k}))^2$, and let $\tree$ be the $k$-splitting tree certifying
    that $\Ins$ is $(\tree, \tau, r)$ splittable, \ie~the tree $\tree$ satisfies $\Rank_\tau(\SwapT(\tree, X) ) \le r$. This means that
    all the swap graphs $G_{\ell_1, \ell_2}$ finding representation in $\tree$ satisfy $\Rank_\tau(G_{\ell_1, \ell_2}) \le r$.

    Similarly, as in the proof of we will try to argue that the fraction of indices $m \in [L/k]$ such 
    that $\ee_m$ that is large, say $\ee_m \ge \ee/2$, is small by arguing about the potential $\Phi_m$ with both quantities $\ee_m$ and $\Phi_m$
    as defined as in the Proof of \cref{thm:brs_hdx}. We assume similarly, that $\ee_m \ge \ee/2$ for some $m \in [L/k]$.

    Analogously to \cref{eq:analogy-tree-bd} in the proof of \cref{thm:brs_hdx}, from \cref{cor:brs-explicit2} we 
    obtain
    \begin{equation*} 
        \ExpOp_{S \sim \Pi_k^m}{\Ex{\set{\rv Y_S}}{\sum_{\ell\in J(\tree)} \Ex{ \set{\tee_1, \tee_2} \in E(\ell_1, \ell_2)}{ 
            \norm{\set{\rv Y_{\tee_1 \sqcup \tee_2} \mid \rv Y_S} - \set{\rv Y_{\tee_1} \mid \rv Y_S } \set{\rv Y_{\tee_2} \mid \rv Y_S }}_1}}} \ge \frac{\ee}{2}.
    \end{equation*}
    Notice that the assumption that \cref{eq:analogy-tree-bd} makes on the threshold rank is satisfied by the 
    assumption $\Rank_\tau(\Ins) \le r$ and
    where the set $J(\tree)$ contains all labels $\ell$
    of internal nodes $v \in \tree$, and we write $\ell_1$ (resp.~$\ell_2$) for the label of the left 
    (resp.~right) child of the vertex with the label $\ell$. Similarly, to the proof of \cref{thm:brs_hdx}, there exists some $(\ell_1, \ell_2) \in J(\tree)$ that satisfies
    \begin{equation*} 
        \ExpOp_{S \sim \Pi_k^m}{\ExpOp_{\set{\rv Y_S}}{\ExpOp_{ \set{\tee_1, \tee_2} \sim w_{\ell_1, \ell_2}}{ 
            \norm{\set{\rv Y_{\tee_1 \sqcup \tee_2} \mid \rv Y_S} - \set{\rv Y_{\tee_1} \mid \rv Y_S } \set{\rv Y_{\tee_2} \mid \rv Y_S }}_1}}} \ge \frac{\ee}{2k}.\label{eq:analogy-tree-bd}
    \end{equation*}
    Now, analogously to \cref{eq:swap-dist-drop1}, using $\ell_1 \le k$ using we have
    \begin{equation}
    \Pr{S \sim \Pi_k^m \atop \set{\rv Y_S} }{ \Ex{\aye \in \Pi_{\ell_1, \ell_2}}{ \Ex{\tee_1 \in X(\ell_1)}
        {\Var{ \rv Y_{\tee_1} \mid \rv Y_S } - \Var{\rv Y_{\tee_1}\mid \rv Y_S, \rv Y_\aye}}\atop + \quad  \Ex{\tee_2 \in X(\ell_2)}{\Var{\rv Y_{\tee_2} | \mid \rv Y_S} - \Var{\rv Y_{\tee_2} \mid \rv Y_S, \rv Y_\aye}} } \ge c_5 \cdot \frac{ \ee^4 }{256 \cdot k^4 \cdot q^{4k} \cdot r  }}\\
        \ge \frac{\ee}{4k},
    \end{equation}
    Using the same arguments in the proof of \cref{thm:brs_hdx}, we can get that
    \[ \Phi_{m} - \Phi_{m + 1} \ge \frac{1}{k} \cdot \frac{\ee}{4k} \cdot \frac{c_5 \cdot \ee^4}{512 \cdot k^6 \cdot q^{4k} \cdot r} = c_5 \cdot \frac{\ee^5}{2048 \cdot k^4 \cdot q^{4k} \cdot r}.\]
    Again, this would mean that there can be at most $2048\cdot k^6 \cdot q^{4k} \cdot r/(\ee^5 \cdot c_5)$ indices $m$
    such that $\ee_{m/2} \ge \ee/2$. In particular,
    \[ \Ex{m \in [L/k]}{\ee_m} \le \frac{\ee}{2} + \frac{k}{L} \cdot \frac{2048 \cdot k^6 \cdot q^{4k} \cdot r}{\ee^5 \cdot c_5}.\]
    i.e.~there exists a universal constant $C'' \ge 0$, such that
    \[ L \ge C'' \cdot \frac{k^7 \cdot q^{4k} \cdot r}{\ee^5} \textrm{ ensures } \ExpOp_{m \sim [L/k]}{\ee_m} \le \ee.\]
\end{proof}


\section{Quantum k-local Hamiltonian}\label{sec:quantum_hamiltonian}

Our $k$-CSP results extend to the quantum setting generalizing the
approximation scheme for $2$-local Hamiltonians on bounded degree low
threshold rank graphs from Brand{\~{a}}o and Harrow~\cite{BrandaoH13}
(BH). Before we can make the previous statement more precise we will
need to introduce some notation. A well studied quantum analogue of
classical $k$-CSPs are the so-called quantum $k$-local
Hamiltonians~\cite{ADV13}.
\begin{definition}[$k$-local Hamiltonian]
  We say that $\matr H = \E_{\ess \sim \Pi_k} \matr H_{\ess}$ is an
  instance of the $k$-local Hamiltonian problem over $q$-qudits on
  ground set $[n]$ if there is a distribution $\Pi_k$ on subsets of
  size $k$ of $[n]$ such that for every $\ess \in \supp(\Pi_k)$ there
  is an Hermitian operator $\matr H_{\ess}$ on $\mathbb{C}^{q^n}$ with
  $\norm{\matr H_{\ess}}_{\textup{op}} \le 1$ and acting (possibly)
  non-trivially on the $q$-qudits of $\ess$ and trivially on
  $[n]\setminus \ess$.
\end{definition}
Given an instance $\matr H = \E_{\ess \sim \Pi_k} \matr H_{\ess}$ of
the $k$-local Hamiltonian problem on ground set $[n]$, the goal is to
provide a good (additive) approximation to the \textit{ground state
energy} $e_0(\matr H)$, i.e., the smallest eigenvalue of $\matr H$.
Equivalently, the goal is to approximate
\[
e_0(\matr H) = \min_{\rho \in D\left(\mathbb{C}^{q^n}\right)} \Tr(\matr H \rho),
\]
where $D\left(\mathbb{C}^{q^n}\right)$ is the set of density
operators, PSD operators of trace one, on $\mathbb{C}^{q^n}$. The
eigenspace of $\matr H$ associated to $e_0(\matr H)$ is called the
\textit{ground space} of $\matr H$.
\begin{remark}
  The locality $k$ of a $k$-local Hamiltonian has a similar role as
  the arity of $k$-CPSs whereas the qudit dimension $q$ has the role
  of alphabet size. Observe that for a $k$-CSP the goal is to maximize
  the fraction of satisfied constrains while for a $k$-local
  Hamiltonian the goal is to minimize the energy (constraint
  violations).
\end{remark}

We will need an informationally complete measurement $\Lambda$ modeled
as a channel
\[
\Lambda \colon D\left(\mathbb{C}^q\right) \to D\left(\mathbb{C}^{q^8}\right),
\] and defined as
\[
\Lambda(\rho) \coloneqq \sum_{y \in \mathcal{Y}} \Tr(\matr M_y \rho) \cdot e_y e_y^{\dag},
\]
where $\{\matr M_y\}_{y \in \mathcal{Y}}$ is a POVM~\footnote{A POVM
  is a collection of operators $\{\matr M_y\}_{y \in \mathcal{Y}}$
  such that $\sum_{y \in \mathcal{Y}} M_y = \Ide$ and $(\forall y \in
  \mathcal{Y})(\matr M_y \succeq 0)$.} and $\{ e_y \}_{y \in
  \mathcal{Y}}$ is an orthonormal basis
(see~\cref{lemma:bh_distortion_rec} below for the properties of
$\Lambda$). Recall that an informationally complete measurement is an
injective channel, i.e., the probability outcomes $p(y) = \Tr(\matr M_y
\rho)$ fully determine $\rho$. By definition given this probability
distribution $\{ p(y) \}_{y \in \mathcal{Y}}$ we can uniquely
determine $\rho$. We use the notation $\rho = \Lambda^{-1}\left(\{
p(y) \}_{y \in \mathcal{Y}}\right)$ for the recovered state from
probability outcomes $\{ p(y) \}_{y \in \mathcal{Y}}$.

BH using the informationally complete measurement $\Lambda$ reduced
the quantum $2$-local Hamiltonian problem to a classical problem
involving PSD ensembles of indicator random variables of outcomes
$\mathcal{Y}$ of $\Lambda$. In this reduction, they had to ensure that
the local distributions encoded by these indicators random variables
are indeed consistent with probability distributions of outcomes
arising from actual local density matrices. Note that the channel
$\Lambda$ is only injective, an arbitrary probability distribution on
$\mathcal{Y}$ may not correspond to a valid quantum state. For this
reason, they introduced a new SDP hierarchy to find this special kind
of PSD ensemble, which we refer to as \textit{quantum PSD ensemble},
minimizing the value of the given input $k$-local Hamiltonian
instance.
         
Using our $k$-CSP approximation scheme for low threshold rank
hypergraphs, we show that product state approximations close to the
ground space of $k$-local Hamiltonians on bounded degree low threshold
rank hypergraphs can be computed efficiently in polynomial time
by~\cref{algo:quantum-prop-rd}. Our result is a generalization of the
$k = 2$ case of Brand{\~{a}}o and Harrow~\cite{BrandaoH13} for
$2$-local Hamiltonians on bounded degree low threshold rank
graphs. Their algorithm is based on the $2$-CSP result
from~\cite{BarakRS11}.

\begin{algorithm}{Quantum Propagation Rounding Algorithm}
                 {$L$-local quantum PSD ensemble~\footnote{We define the quantum
                                                              ensemble as the PSD ensemble
                                                              produced by the SDP hierarchy
                                                              of~\cite{BrandaoH13}}
                 ~$\set{\rv Y_1, \ldots, \rv Y_n}$ and distribution $\Pi$ on $X(\le \ell)$.}
                 {A random state $\rho = \rho_1 \otimes \ldots \otimes \rho_n$ where each $\rho_i \in D\left(\mathbb{C}^{q}\right)$.}\label{algo:quantum-prop-rd}
\begin{enumerate}
  \item Choose $m \in \set*{1, \ldots, L/\ell}$ at random.
  \item Independently sample $m$ $\ell$-faces, $\ess_j \sim \Pi$ for $j = 1, \ldots, m$.
  \item Write $S = \bigcup_{j = 1}^m \ess_j$, for the set of the seed vertices.
  \item Sample assignment $\assn_{S}: S \to [q]$ according to the local distribution, $\set{\rv Y_{S}}$.
  \item Set $\rv Y' = \set{\rv Y_1, \ldots \rv Y_n | \rv Y_S = \assn_S}$, i.e.~the local ensemble
        $\rv Y$ conditioned on agreeing with $\assn_S$.
  \item For all $j \in [n]$, set $\rho_j = \Lambda^{-1}(\set{\rv Y'_j})$.
  \item Output $\rho = \rho_1 \otimes \ldots \otimes \rho_n$.
\end{enumerate}
\end{algorithm}

The precise result is given in~\cref{theorem:quantum-alg-cooked-baked}.

\begin{theorem}\label{theorem:quantum-alg-cooked-baked}
    Suppose $\Ins = (\matr H = \E_{\ess \sim \Pi_k} \matr H_{\ess})$
    is a $q$-qudit $k$-local Hamiltonian instance whose constraint
    complex~\footnote{We define the constraint complex of a $k$-local
    Hamiltonian in the same way we define it for $k$-CSPs, namely, by
    taking the downward closure of the support of $\Pi_k$.} is $X(\le k)$ and
    has bounded normalized degree, i.e., $\Pi_1 \le \delta$. Let $\tau
    = c_4 \cdot (\ee^2/ (16k^2 q^{8k}))^2$, for $\ee > 0$. There exists
    an absolute constant $C'$ that satisfies the following:

    Set $L = (\frac{C' \cdot k^5 \cdot q^{8k} \cdot
      \Rank_\tau(\Ins)}{\ee^4})$. Then there is an algorithm based on
    $L$-levels of SoS-hierarchy and \cref{algo:quantum-prop-rd} that
    outputs a random product state $\rho = \rho_1 \otimes \ldots
    \otimes \rho_n$ that in expectation ensures
    $$  
    \Tr(\matr H\rho) \le e_0(\matr H) + (18 q)^{k/2} \cdot \ee +  L \cdot k \cdot \delta,
    $$
    where $e_0(\matr H)$ is the ground state energy of $\matr H$.
\end{theorem}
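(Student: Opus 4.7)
The plan is to combine the quantum SDP hierarchy of Brand{\~{a}}o and Harrow~\cite{BrandaoH13} with our splittability-based propagation rounding (\cref{cor:alg-cooked-baked2}), essentially lifting their $2$-local result to the $k$-local setting via the same machinery that took us from $2$-CSPs to $k$-CSPs. First, I would use $L$ levels of the BH-style SDP hierarchy to obtain an $L$-local \emph{quantum} PSD ensemble $\{\rv Y_1,\ldots,\rv Y_n\}$ whose local marginals on every face $T$ with $|T|\le L$ are the outcome distributions $\{\Tr(\matr M_{y_1}\otimes\cdots\otimes \matr M_{y_{|T|}}\,\sigma_T)\}$ for some consistent family of density operators $\sigma_T$, and whose objective value $\E_{\ess\sim\Pi_k}\Tr(\matr H_\ess\,\sigma_\ess)$ is at most $e_0(\matr H)$ (feasibility is witnessed by the ground state itself). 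A key property, already established in~\cite{BrandaoH13}, is that conditioning such a quantum PSD ensemble on a seed-set outcome preserves the ``arises from a quantum state'' property, so after step~5 of~\cref{algo:quantum-prop-rd} the ensemble $\rv Y'$ still has quantum consistent marginals.

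Second, I would invoke \cref{thm:brs_tree_split} with the given choice of $\tau$ and $L$: since $\Ins$ is $(\tau,\Rank_\tau(\Ins))$-splittable by assumption (treating each $\rv Y_i$ as a classical random variable with support size $q^8$ coming from $\Lambda$), the conditioned ensemble $\rv Y'$ satisfies
$$
\E_{\aye\sim\Pi_k}\bigl\|\{\rv Y'_\aye\}-\{\rv Y'_{a_1}\}\cdots\{\rv Y'_{a_k}\}\bigr\|_1 \;\le\; \ee.
$$
Third, I would translate this classical closeness into quantum trace-norm closeness using a $k$-local analogue of the BH distortion lemma (\cref{lemma:bh_distortion_rec}): reconstructing a $k$-qudit state from outcome statistics via $\Lambda^{-k}$ incurs a blow-up factor that, when paired with the operator-norm bound $\|\matr H_\ess\|_{\mathrm{op}}\le 1$ and H\"older's inequality, yields
$$
\bigl|\Tr\bigl(\matr H_\ess\,(\rho'_\ess-\rho'_{a_1}\otimes\cdots\otimes\rho'_{a_k})\bigr)\bigr|
\;\le\; (18q)^{k/2}\cdot \bigl\|\{\rv Y'_\ess\}-\{\rv Y'_{a_1}\}\cdots\{\rv Y'_{a_k}\}\bigr\|_1,
$$
where $\rho'_\cdot=\Lambda^{-|\cdot|}(\{\rv Y'_\cdot\})$. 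Averaging over $\aye\sim\Pi_k$ and combining with the SDP bound gives the $(18q)^{k/2}\ee$ error term.

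Finally, I would handle the seed set $S$ separately: constraints $\aye$ with $\aye\cap S\ne\emptyset$ might be badly approximated by the product state because the local reconstructions on seed vertices are tied to the sampled $\assn_S$ rather than to the joint quantum marginal. However, $|S|\le Lk$ and the normalized degree bound $\Pi_1\le\delta$ imply that the total $\Pi_k$-weight of such constraints is at most $Lk\delta$; since $\|\matr H_\ess\|_{\mathrm{op}}\le 1$, their total contribution is bounded by $Lk\delta$. Summing the three error sources yields $\E\,\Tr(\matr H\rho)\le e_0(\matr H)+(18q)^{k/2}\ee+Lk\delta$.

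The main obstacle will be establishing the $k$-local distortion bound with the correct exponent $(18q)^{k/2}$ rather than a naive tensor-power estimate of $(18q)^k$. In the $k=2$ case~\cite{BrandaoH13} this exponent comes from carefully exploiting $\|\matr H_\ess\|_{\mathrm{op}}\le 1$ via a Cauchy--Schwarz-style argument against the Hilbert--Schmidt inner product, together with the specific structure of the informationally complete measurement $\Lambda$. Extending this to arbitrary $k$ should follow by iterating the Cauchy--Schwarz step across a balanced binary splitting of the $k$ qudits in $\ess$ (matching the tree $\tree$ used in~\cref{lem:split_rectangular}), which naturally produces a $k/2$ exponent; verifying that all cross-terms remain controllable and that the operator-norm bound is used only once per leaf rather than once per split is the delicate part.
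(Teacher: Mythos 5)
Your proposal matches the paper's proof sketch essentially step for step: the Brand\~{a}o--Harrow quantum SDP hierarchy supplies a quantum $L$-local PSD ensemble whose value lower-bounds $e_0(\matr H)$ and whose conditionings remain quantum-consistent, the adapted \cref{thm:brs_tree_split} breaks $k$-wise correlations over the constraints, \cref{lemma:quantum_k_wise_distortion} lifts the resulting $\ell_1$ bound to trace norm at a cost of $(18q)^{k/2}$, and the seed-set constraints are charged in full for a total of $L\cdot k\cdot\delta$ via $\Pi_1\le\delta$ (note $\abs{S}\le L$, with the extra factor $k$ coming from the $\Pi_k$-mass of faces containing a fixed vertex, not from $\abs{S}\le Lk$). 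The one thing to correct is your closing paragraph: the $k$-local distortion bound is not an obstacle requiring a new iterated Cauchy--Schwarz over a splitting tree --- BH's Lemma 16 (\cref{lemma:bh_distortion_rec}) is already stated for arbitrary $k$ with constant $(18q)^{k/2}$, since the per-qudit distortion of $\Lambda^{-1}$ is $(18q)^{1/2}$ and tensorizes, so the ``naive tensor-power estimate'' in fact gives exactly the claimed exponent and the paper uses it as a black box.
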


\begin{remark}
  Similarly to the classical
  case, \cref{theorem:quantum-alg-cooked-baked} serves as a no-go
  barrier (in its parameter regime) to the quantum local-Hamiltonian
  version of the quantum PCP Conjecture~\cite{ADV13}. In particular,
  $k$-local Hamiltonians on bounded degree $\gamma$-HDXs for $\gamma$
  sufficiently small can be efficiently approximated in polynomial
  time.
\end{remark}

Now we sketch a proof of~\cref{theorem:quantum-alg-cooked-baked}. We
provide a sketch rather than a full proof
since~\cref{theorem:quantum-alg-cooked-baked} easily follows from the
BH analysis once the main result used by them, Theorem 5.6
from~\cite{BarakRS11}, is appropriately generalized to ``break''
$k$-wise correlations as accomplished by our~\cref{thm:brs_tree_split}
(restated below for convenience). Furthermore, a full proof would
require introducing more objects and concepts only needed in this
simple derivation (the reader is referred to~\cite{BrandaoH13} for the
quantum terminology and the omitted details).
\begin{theorem}[Adaptation of~\cref{thm:brs_tree_split}]
    Suppose a simplicial complex $X(\le k)$ with $X(1) = [n]$ and an
    $L$-local PSD ensemble $\rv Y = \set{\rv Y_1, \ldots, \rv Y_n}$
    are given.  There exists some universal constants $c_4 \ge 0$ and
    $C'' \ge 0$ satisfying the following: If $L \ge C'' \cdot (q^{4k}
    \cdot k^7 \cdot r/\ee^5)$, $\supp(\rv Y_j) \le q$ for all $j \in
          [n]$, and $\Ins$ is $(c_4 \cdot (\ee/(4 k \cdot q^{8k}))^4,
          r)$-splittable. Then, we have
    \begin{equation}
        \Ex{\aye \in X(k)}{ \norm{\set{\rv Y'_{\aye}} - \set*{\rv Y'_{a_1}}\cdots \set*{\rv Y'_{a_k}}}_1 } \le \ee,
        \label{eq:brs_hdx}
    \end{equation}
    where $\rv Y'$ is as defined in \cref{algo:quantum-prop-rd} on the input of $\set{\rv Y_1, \ldots, \rv Y_n}$ and $\Pi_k$. 
\end{theorem}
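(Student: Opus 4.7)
My plan is to mirror the proof of \cref{thm:brs_tree_split} almost verbatim, since the statement and conclusion of this adaptation have precisely the same structure—an $L$-local PSD ensemble, a splittability hypothesis, and the conclusion that the conditioned marginals are in expectation close in $\ell_1$ to products of singletons. The key observation is that the quantum nature of the ensemble plays no role whatsoever in the correlation-breaking step itself. As a purely combinatorial object, the ``quantum PSD ensemble'' of~\cite{BrandaoH13} is an honest classical $L$-local PSD ensemble over the outcome alphabet $\mathcal{Y}$ of the informationally complete measurement $\Lambda$, so \cref{fact:set-ensemble}, the glorified triangle inequality \cref{lem:split_rectangular}, and the variance-drop estimate \cref{cor:brs-explicit2} all apply without modification. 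The reconstruction $\rho_j = \Lambda^{-1}(\set{\rv Y_j'})$ in \cref{algo:quantum-prop-rd} is only invoked \emph{after} the correlation-breaking step and does not influence the argument.

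Concretely, I would replay the potential-function argument of the proof of \cref{thm:brs_tree_split}. Let $\tree$ be a splitting tree witnessing $(\tau, r)$-splittability of $\Ins$ for $\tau = c_4 (\ee/(4kq^{8k}))^4$, and define the same potential
\[
\Phi_m ~=~ \Ex{i \sim [k]}{\Ex{S \sim \Pi_k^m}{\Ex{\set{\rv Y_S}}{\Ex{\aye \sim \Pi_i}{\Var{\rv Y_\aye \mid \rv Y_S}}}}}.
\]
Assuming $\ee_m \ge \ee/2$ for some $m \in [L/k]$, the triangle inequality along $\tree$ (with $|J(\tree)| \le k$) yields some pair $(\ell_1, \ell_2)$ whose swap-walk discrepancy is at least $\ee/(2k)$. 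Since the splittability threshold $\tau$ above is no larger than the threshold required in \cref{cor:brs-explicit2}, that corollary produces a drop of order $\ee^4/(k^4 q^{4k} r)$ in $\Phi_m$. Telescoping with $0 \le \Phi_m \le 1$ bounds the number of ``bad'' indices $m$ by $O(k^6 q^{4k} r/\ee^5)$, which for the stated $L \ge C'' q^{4k} k^7 r / \ee^5$ forces $\Ex{m}{\ee_m} \le \ee$ as required.

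The only real subtlety is the relationship between the strengthened splittability hypothesis of the adaptation (fourth power, with $q^{8k}$) and the weaker hypothesis of the classical theorem \cref{thm:brs_tree_split}. This strengthening is \emph{not} needed inside the proof of the adaptation itself; it is calibrated to accommodate downstream quantum losses in the proof of \cref{theorem:quantum-alg-cooked-baked}. Specifically, when the conclusion above is fed into the Hamiltonian-energy calculation, the inverse measurement $\Lambda^{-1}$ contributes a distortion factor of $(18q)^{k/2}$ in passing from $\ell_1$-closeness of local outcome distributions to trace-norm closeness of local density matrices, and the effective alphabet of each local variable $\rv Y_j$, indexing the outcomes of $\Lambda \colon D(\mathbb{C}^q) \to D(\mathbb{C}^{q^8})$, is $q^8$ rather than $q$. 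The adaptation's parameters pre-absorb both of these effects so that the final energy bound in \cref{theorem:quantum-alg-cooked-baked} comes out with the stated clean form $e_0(\matr H) + (18q)^{k/2}\ee + L\cdot k\cdot \delta$. Hence the only obstacle inside the proof of the adaptation is routine bookkeeping: verifying at each invocation of \cref{cor:brs-explicit2} that the stronger quantum splittability hypothesis implies the weaker classical threshold the corollary requires, which it does by inspection.
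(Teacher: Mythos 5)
Your proposal is correct and matches the paper's approach: the paper offers no separate proof of this adaptation, treating it as a restatement of \cref{thm:brs_tree_split} applied to the quantum PSD ensemble viewed as an ordinary classical local PSD ensemble over the outcome alphabet of $\Lambda$, exactly as you argue. Your observation that the strengthened (fourth-power, $q^{8k}$) splittability hypothesis only needs to \emph{imply} the threshold required by \cref{cor:brs-explicit2} for the enlarged alphabet, with the remaining slack reserved for the downstream distortion in \cref{theorem:quantum-alg-cooked-baked}, is the right reading of the parameter choices.
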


Once in possession of the \textit{quantum PSD ensemble} the problem
becomes essentially classical. The key result in the BH approach is
Theorem 5.6 from~\cite{BarakRS11} that brings (in expectation under
conditioning on a random small seed set of qudits) the local
distributions, over the edges of the constraint graph of a $2$-local
Hamiltonian, close to product distributions~\footnote{For this to hold
  we need the underlying constraint graph to be low threshold rank and
  the SoS degree to be sufficiently large}. Now, using the fact that
they have an informationally complete measurement $\Lambda$ they can
``lift'' the conditioned marginal distribution on each qudit $\set{\rv
  Y'_j}$ to an actual quantum state as $\rho_j = \Lambda^{-1}(\set{\rv
  Y'_j})$ (see~\cref{algo:quantum-prop-rd}). In this lifting process,
they pay an \textbf{average} distortion cost of $18q \cdot
\epsilon$ (for using the marginal  over the qudits). For $k$-local Hamiltonians, the distortion of $k$
$q$-qudits is given by~\cref{lemma:quantum_k_wise_distortion} (stated
next without proof).
\begin{lemma}\label{lemma:quantum_k_wise_distortion}
  Let ~$\rv Z_1, \ldots, \rv Z_k$ be random variables in an $L$-local
  quantum PSD ensemble with $L \ge k$. Suppose that
  \[
  \epsilon \coloneqq \norm{\{\rv Z_1, \ldots, \rv Z_k\} - \prod_{i=1}^k \{\rv Z_i\} }_1.
  \]
  Then
  \[
  \norm{\left(\Lambda^{\otimes k}\right)^{-1}\left(\{\rv Z_1, \ldots, \rv Z_k\}\right) - \left(\Lambda^{\otimes k}\right)^{-1}\left(\prod_{i=1}^k \{\rv Z_i\}\right)}_1 \le \left(18q\right)^{k/2} \cdot \epsilon.
  \]
\end{lemma}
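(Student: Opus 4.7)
My plan is to lift the single-qudit distortion bound of Brand\~ao--Harrow~\cite{BrandaoH13} to the $k$-qudit setting via a tensorization argument. First, I would invoke the key property of the informationally complete POVM $\{\matr M_y\}_{y\in\mathcal{Y}}$ used to construct $\Lambda$: for any two probability distributions $\mu_1,\mu_2$ on $\mathcal{Y}$ arising as measurement outcomes of density matrices on $\mathbb{C}^q$, one has the single-qudit distortion bound
\[
  \norm{\Lambda^{-1}(\mu_1) - \Lambda^{-1}(\mu_2)}_1 ~\le~ \sqrt{18q}\cdot\norm{\mu_1-\mu_2}_1.
\]
Extending by linearity, $\Lambda^{-1}$ may be regarded as a Hermiticity-preserving linear map from signed measures on $\mathcal{Y}$ (with total-variation norm) into Hermitian operators on $\mathbb{C}^q$ (with trace norm), with $1\to 1$ operator norm at most $\sqrt{18q}$.

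Second, I would tensorize. Since $\Lambda^{\otimes k}$ is by construction the tensor power of the single-qudit channel, on its image we have $(\Lambda^{\otimes k})^{-1} = (\Lambda^{-1})^{\otimes k}$ as linear maps. To bound the $1\to 1$ norm of this tensor power, I would run a $k$-step hybrid: factor $(\Lambda^{-1})^{\otimes k}$ as the composition $(\Lambda^{-1})_1\circ\cdots\circ(\Lambda^{-1})_k$, where $(\Lambda^{-1})_j$ acts nontrivially only on the $j$-th tensor factor. At each step the single-qudit bound is applied to the $j$-th coordinate with the remaining $k-1$ coordinates held fixed, and the multiplicativity of the trace norm under tensor products of Hermitian operators allows the $\sqrt{18q}$ factor to accumulate across all $k$ coordinates, yielding
\[
  \norm{(\Lambda^{-1})^{\otimes k}}_{1\to 1} ~\le~ (18q)^{k/2}.
\]

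Third, I would apply the tensorized bound to the signed measure $v = \{\rv Z_1,\ldots,\rv Z_k\} - \prod_{i=1}^k\{\rv Z_i\}$, which satisfies $\norm{v}_1 = \epsilon$. Both summands lie in the image of $\Lambda^{\otimes k}$: the joint distribution because $L\ge k$ and the ensemble is a quantum PSD ensemble (so the local distribution on any $k$-subset is realized by an actual $k$-qudit density matrix), and the product distribution because each single-qudit marginal lies in the image of $\Lambda$. The main obstacle will be the tensorization step: one must carefully track that $\Lambda^{-1}$ is only defined on the image of $\Lambda$, so that each hybrid intermediate remains in the appropriate subspace, and handle tensor-product trace norms cleanly; this is precisely where the $k=1$ argument of~\cite{BrandaoH13} is generalized.
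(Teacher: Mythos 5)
There is a genuine gap, and it sits exactly where you locate the ``main obstacle'': the tensorization step is the entire non-trivial content of the statement, and your hybrid argument does not close it. The paper's intended proof is a one-line application of \cref{lemma:bh_distortion_rec} (BH's Lemma~16), which is already stated for arbitrary $k$: set $\xi = \left(\Lambda^{\otimes k}\right)^{-1}\left(\{\rv Z_1,\ldots,\rv Z_k\}\right) - \left(\Lambda^{\otimes k}\right)^{-1}\left(\prod_{i=1}^k\{\rv Z_i\}\right)$, note that $\xi$ is traceless as a difference of two trace-one operators (both distributions are realized by actual $k$-qudit states, by the quantum PSD property and $L \ge k$), observe $\Lambda^{\otimes k}(\xi)$ is exactly the difference of the two distributions, and read off $\norm{\xi}_1 \le (18q)^{k/2}\,\epsilon$. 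You instead take only the $k=1$ case as given and try to recover the $k$-fold case abstractly; that is precisely the part of BH's lemma that does not follow formally from the single-system bound.

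Concretely, two things break. First, the induced trace norm of a Hermiticity-preserving map is not stable under tensoring with the identity: $\norm{\Phi\otimes\mathrm{id}}_{1\to 1}$ can exceed $\norm{\Phi}_{1\to 1}$ by a dimension factor (the transpose map is the standard example; this is why the diamond norm exists). After your first hybrid step the already-processed registers carry genuine quantum operators, so ``apply the single-qudit bound to coordinate $j$ with the others held fixed'' requires a bound on $\norm{\mathrm{id}\otimes\Lambda^{-1}}_{1\to 1}$, which your premise does not supply; multiplicativity of $\norm{A\otimes B}_1 = \norm{A}_1\norm{B}_1$ for \emph{operators} does not yield multiplicativity of induced norms of \emph{maps}. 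Second, the $\sqrt{18q}$ bound is only available on traceless elements of the image of $\Lambda$ (differences of outcome distributions of states), and since $|\mathcal{Y}|$ may be as large as $q^8 > q^2$, the image is a proper subspace of all signed measures. The objects your hybrid feeds into $\Lambda^{-1}$ --- point masses $\delta_{y_j}$, or conditional slices of the joint distribution, which are sub-normalized and generally have unequal traces --- need not be traceless nor lie in that subspace, so the single-qudit bound cannot legitimately be applied to them. Repairing this requires the explicit structure of the BH measurement (a dual frame whose elements have controlled norms), i.e., re-proving their Lemma~16 rather than citing it; the clean route is simply to invoke the $k$-fold statement the paper already quotes.
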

Note that~\cref{lemma:quantum_k_wise_distortion} is a direct
consequence of~\cref{lemma:bh_distortion_rec} from~\cite{BrandaoH13}.
\begin{lemma}[Informationally complete measurements (Lemma 16~\cite{BrandaoH13})]\label{lemma:bh_distortion_rec}
  For every positive integer $q$ there exists a measurement $\Lambda$
  with $\le q^8$ outcomes such that for every positive integer $k$ and
  every traceless operator $\xi$, we have
  \[
  \norm{\xi}_1 \le \left(18q\right)^{k/2} \norm{\Lambda^{\otimes k}(\xi)}_1.
  \]
\end{lemma}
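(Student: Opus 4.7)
The plan is to take $\Lambda$ to be the channel associated with the uniform POVM $\matr M_y = (q/|Y|)\,|\psi_y\rangle\langle\psi_y|$ arising from an approximate $4$-design $\{|\psi_y\rangle\}_{y \in Y}$ on pure states of $\CCC^q$. Explicit approximate $4$-designs of size $O(q^8)$ are known, matching the required outcome bound $|Y| \le q^8$, and the normalisation ensures $\sum_y \matr M_y = \Ide$. The key property I will use is that the $4$-design reproduces all Haar moments of polynomial expressions of degree at most four in $|\psi\rangle\langle\psi|$.

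The core of the argument is a deterministic Paley--Zygmund inequality applied to the random variable $X = \langle\psi|\xi|\psi\rangle$ sampled uniformly from the design. Hölder's inequality with conjugate exponents $(3/2,3)$ gives
\[
\E|X| ~\geq~ \frac{(\E X^2)^{3/2}}{(\E X^4)^{1/2}},
\]
and both moments can be evaluated exactly (replacing the empirical expectation by the Haar expectation) using Weingarten formulas. For a single qudit with $\xi$ traceless one obtains $\E X^2 = \|\xi\|_2^2 / (q(q+1))$ and, since $\Tr(\xi^j) = 0$ for $j=1$ and $\Tr(\xi^j) \le \|\xi\|_2^j$ for $j \geq 2$, a bound of the form $\E X^4 \le O(\|\xi\|_2^4/q^4)$. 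These combine to give $\E|X| \ge \|\xi\|_2/(O(q))$, and together with $\|\Lambda(\xi)\|_1 = q \cdot \E|X|$ and the eigenvalue Cauchy--Schwarz $\|\xi\|_1 \le \sqrt{q}\,\|\xi\|_2$, this yields the $k=1$ case $\|\xi\|_1 \le O(\sqrt{q}) \cdot \|\Lambda(\xi)\|_1$ with the hidden constant absorbed into $18$.

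For general $k$, the product structure of $\Lambda^{\otimes k}$ corresponds to $|\psi\rangle = |\psi_1\rangle \otimes \cdots \otimes |\psi_k\rangle$ drawn from the product $4$-design. The second and fourth moments of $X$ factor across qudits, and the $k$-fold version of the single-qudit Weingarten calculation can be expanded in terms of partial traces $\Tr_{\bar S}\xi$ for $S \subseteq [k]$. Because $\xi$ is traceless only the $S = \emptyset$ contribution vanishes; all remaining terms are non-negative and the $S = [k]$ term alone already gives $\E X^2 \ge \|\xi\|_2^2/(q(q+1))^k$. Combined with a matching upper bound on $\E X^4$, the Paley--Zygmund step then delivers
\[
\|\Lambda^{\otimes k}(\xi)\|_1 ~\geq~ c^{k/2}\,\|\xi\|_2 ~\geq~ c^{k/2}\cdot q^{-k/2}\,\|\xi\|_1
\]
for an absolute constant $c$, which is the desired $(18q)^{k/2}$ bound after matching constants.

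The hard part will be controlling the fourth moment $\E X^4$ uniformly across all traceless $\xi$. Naively the Weingarten expansion of $\E X^4$ contains cross-contributions of the form $\sum_{S \subsetneq [k]} \|\Tr_{\bar S}\xi\|_4^4$ that can in principle dominate and ruin the ratio $(\E X^2)^{3/2}/(\E X^4)^{1/2}$ when the marginals of $\xi$ have wildly different Schatten norms. Bounding all such terms by $\|\xi\|_2^4$, and in particular ensuring that the $q$-dependence of the ratio is exactly $q^{-k/2}$ rather than something worse, is precisely where the $4$-design property (not merely $2$-design) is essential: it forces the ``off-diagonal'' Weyl components of $\xi^{\otimes 4}$ to cancel against operator Schwarz inequalities applied to the permutation terms in the symmetric subspace, mirroring the single-qudit calculation one qudit at a time.
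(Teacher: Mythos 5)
The paper does not prove this lemma: it is imported verbatim as Lemma~16 of Brand\~{a}o--Harrow, whose proof in turn rests on the measurement bound of Matthews, Wehner and Winter for tensor powers of the uniform POVM. Your outline is that known proof, not an alternative to it: the $4$-design POVM, the H\"older step $\E[\lvert X\rvert] \ge (\E[X^2])^{3/2}/(\E[X^4])^{1/2}$ for $X = \langle\psi\vert\xi\vert\psi\rangle$, the identity $\norm{\Lambda^{\otimes k}(\xi)}_1 = q^k\,\E[\lvert X\rvert]$, and the dimension-counting inequality $\norm{\xi}_1 \le q^{k/2}\norm{\xi}_2$ are exactly the ingredients there. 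So the only question is whether your sketch closes.

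It does not, and the open step is the one you yourself flag: the multipartite fourth moment. Two concrete issues. First, the constant. The statement asserts the number $18$, so ``absorbed into $18$'' is not an available move; you need the sharp single-qudit count for traceless Hermitian $\xi$: of the $24$ permutations in $S_4$, only the three double transpositions (each contributing $(\Tr\xi^2)^2$) and the six $4$-cycles (each contributing $\Tr\xi^4 \le \norm{\xi}_2^4$) survive, giving $\E[X^4] \le 9\norm{\xi}_2^4/\bigl(q(q+1)(q+2)(q+3)\bigr)$, which together with $\E[X^2] = \norm{\xi}_2^2/(q(q+1))$ and $q+1\le 2q$ yields $\norm{\Lambda(\xi)}_1 \ge \norm{\xi}_2/(3\sqrt{2}) = \norm{\xi}_2/\sqrt{18}$. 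Second, for $k>1$ the fourth moment is a sum over tuples $(\pi_1,\dots,\pi_k)\in S_4^k$ of contractions $\Tr[(\bigotimes_i W_{\pi_i})\,\xi^{\otimes 4}]$ that no longer factor into cycle products $\prod_{c}\Tr\xi^{\lvert c\rvert}$; a tuple in which some $\pi_i$ has a fixed point does \emph{not} vanish unless that fixed point is common to all $k$ coordinates, so the $k=1$ count of nine surviving terms cannot simply be raised to the $k$-th power, and the crude bound of $24^k$ terms each at most $\norm{\xi}_2^4$ produces a constant strictly worse than $18$. Showing that every mixed contraction is bounded by $\norm{\xi}_2^4$ and that the surviving terms organize into a per-qudit factor of at most $9$ is the actual content of the MWW argument (an operator Cauchy--Schwarz over the tensor network), and it is absent from your sketch. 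Finally, if the design is only approximate, the moment identities hold up to multiplicative errors that must be shown not to push the constant past $18$; this is where the $q^8$ outcome count has to be reconciled with the exactness of the moment computation, and note also that the whole moment method presupposes $\xi$ Hermitian (otherwise $X$ is not real), which is how the lemma is actually applied.
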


BH also pay a full cost for each local term in the Hamiltonian that
involves a seed qudit since its state was not reconstructed using the
full distribution of a qudit given by the quantum ensemble but rather
reconstructed from a single outcome $y \in \mathcal{Y}$ of $\Lambda$.
Naively, this means that the final state of this qudit may be far from
the intended state given by SDP relaxation. In our case, we assume
that the normalized degree satisfies $\Pi_1 \le \delta$. Therefore,
the total error from constraints involving seed qudits is at most
$$
L \cdot k \cdot \delta.
$$

Putting the above pieces together we conclude the proof (sketch)
of~\cref{theorem:quantum-alg-cooked-baked}.

\section*{Acknowledgements}
We thank Anand Louis for several enlightening discussions in the initial phases
of this work. We are also grateful to the anonymous reviewers for several helpful 
suggestions.

\bibliographystyle{alphaurl}
\bibliography{macros,madhur}

\appendix
\section{From Local to Global Correlation}\label{app:appenditis}
We include the key result we use from~\cite{BarakRS11}, namely, their
Lemma 5.4 (below). 
While they proved the lemma for regular graphs, we include the details
in the proof for general weighted graphs, since even for HDXs regular
at the top level, the swap graphs are not necesarily regular.
The extension to general graphs is straighforward (and \cite{BarakRS11}
indicated the same) but we include the details for the sake of completeness
\footnote{For expander graphs it is possible to obtain an improved bound of
  $\Omega((\eps/q)^2)$ instead of $\Omega((\eps/q)^4)$ given by
  \cref{lemma:app_key_brs_result}, simply by using the definition of the second
  smallest eigenvalue of the Laplacian. While BRS analyzed
  $\Ex{i,j}{\ip{v_i}{v_j}^2}$ for low-threshold rank graphs, it is possible
  to directly analyze the quantity $\Ex{i,j}{\ip{v_i}{v_j}}$ for expanders, leading to the
  improved bound.}.
\begin{lemma}[Lemma 5.4 from \cite{BarakRS11} (restatement of~\cref{lem:brs-dec2})]\label{lemma:app_key_brs_result}
    Let $G = (V, E, \Pi_2)$ be a weighted graph, $\set{\rv
    Y_1, \ldots, \rv Y_n}$ a local PSD ensemble, where we have
    $\supp(\rv Y_i) \le q$ for every $i \in V$, and $q \ge 0$. If
    $\ee \ge 0$ is a lower bound on the expected statistical
    difference between independent and correlated sampling along the
    edges,\ie \[ \ee \le \Ex{\set{i,j} \sim \Pi_2}{\norm{\set{\rv
    Y_{ij} } - \set{\rv Y_i}\set{\rv Y_j} }_1}.\] Then, conditioning
    on a random vertex decreases the variances, \[ \Ex{i,
    j \sim \Pi_1}{\Ex{ \set{\rv Y_j} }{ \Var{\rv Y_i \mid \rv Y_j } }
    } \le \Ex{i \sim \Pi_1}{\Var{\rv Y_i}} - \frac{\ee^4}{4
    q^4 \cdot \Rank_{\ee^2/\left(4q^2\right)}(G) }.\]
\end{lemma}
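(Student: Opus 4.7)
The plan is to follow the BRS local-to-global correlation argument, with minor adjustments so that the weighted (non-regular) structure of $G$ is handled correctly using the stationary measure $\Pi_1$. At a high level I would convert the hypothesis on statistical distance into a lower bound on sums of squared covariances, realise these covariances as inner products of SDP vectors, and then use the threshold rank of $G$ to transfer edge correlation into global correlation between typical pairs of vertices (which by the law of total variance is exactly the variance drop one wants).

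First I would expand the $L_1$ distance as
\[
  \|\{\rv Y_i\rv Y_j\}-\{\rv Y_i\}\{\rv Y_j\}\|_1 ~=~ \sum_{a,b\in[q]} \bigl| \cov(\one[\rv Y_i=a],\one[\rv Y_j=b]) \bigr|,
\]
and apply Cauchy--Schwarz twice (over $a,b$ and then Jensen over the edge expectation) to deduce
\[
  \Ex{\{i,j\}\sim\Pi_2}{\sum_{a,b}\cov(\one[\rv Y_i=a],\one[\rv Y_j=b])^2} ~\ge~ \ee^2/q^2.
\]
Using the SDP vectors (\cref{ssec:sos-relax}) I would identify each covariance with $\ip{u_{i,a}}{u_{j,b}}$, where $u_{i,a}$ is the centred version of $\V{\{i\}}{a}$, so that the total energy
\[
  \Phi ~\defeq~ \sum_{a,b}\Ex{\{i,j\}\sim\Pi_2}{\ip{u_{i,a}}{u_{j,b}}^2}
\]
is at least $\ee^2/q^2$. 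For each fixed $(a,b)$, the quadratic form $\Ex{\{i,j\}\sim\Pi_2}{\ip{u_{i,a}}{u_{j,b}}^2}$ equals $\ip{N^{(a,b)}}{A N^{(a,b)}}_{\Pi_1\otimes\Pi_1}$ where $N^{(a,b)}_{i}$ is the vector $(\ip{u_{i,a}}{u_{j,b}})_{j}$ and $A$ is the $\Pi_1$-normalised walk operator of $G$; equivalently, it is a weighted Frobenius-type inner product of $M^{(a,b)}_{ij}=\ip{u_{i,a}}{u_{j,b}}$ with the edge measure $A$.

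The key step is the threshold-rank transfer. Setting $\tau=\ee^2/(4q^2)$ and $r=\Rank_{\ge\tau}(G)$, write $A = A^{\ge\tau}+A^{<\tau}$ with $\|A^{<\tau}\|_{\textup{op}}\le\tau$ and $\rank(A^{\ge\tau})\le r$. Since $|\ip{u_{i,a}}{u_{j,b}}|\le 1$, the contribution from $A^{<\tau}$ to $\Phi$ is at most $q^2\cdot\tau = \ee^2/4$, so the rank-$r$ part alone contributes at least $\ee^2/q^2-\ee^2/4\ge \ee^2/(2q^2)$. Projecting the profile vectors $(\ip{u_{i,a}}{u_{j,b}})_j$ onto the $\le r$-dimensional top eigenspace and applying Cauchy--Schwarz on that low-dimensional image then shows that the \emph{global} average is large:
\[
  \Ex{i,j\sim\Pi_1}{\ip{u_{i,a}}{u_{j,b}}^2} ~\ge~ \frac{1}{r}\cdot\frac{\ee^2}{2q^2}\quad\text{summed over $(a,b)$},
\]
i.e.\ $\sum_{a,b}\Ex{i,j\sim\Pi_1}{\cov(\one[\rv Y_i=a],\one[\rv Y_j=b])^2}\ge \ee^4/(4q^4 r)$ after squaring the edge bound again. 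Finally I would convert this global squared-covariance bound into a variance drop: by the law of total variance,
\[
  \Var{\one[\rv Y_i=a]} - \Ex{\rv Y_j}{\Var{\one[\rv Y_i=a]\mid \rv Y_j}} ~=~ \Var{\Ex{}{\one[\rv Y_i=a]\mid \rv Y_j}},
\]
and a direct calculation (using $\Pr[\rv Y_j=b]\le 1$) gives the pointwise bound $\Var{\Ex{}{\one[\rv Y_i=a]\mid\rv Y_j}}\ge \sum_b\cov(\one[\rv Y_i=a],\one[\rv Y_j=b])^2$. Summing over $a$ and averaging over $i,j\sim\Pi_1$ turns the global squared-covariance bound into exactly the claimed $\ee^4/(4q^4\,r)$ drop, completing the proof. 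The main obstacle I expect is carefully executing the threshold-rank transfer (step~3) in the weighted setting: one needs the edge measure $\Pi_2$ to be consistent with $\Pi_1$ as the stationary distribution (so that $A$ is self-adjoint in $\ell^2(\Pi_1)$), and the Cauchy--Schwarz step on the top eigenspace has to be carried out so that the $r^{-1}$ factor appears cleanly rather than spuriously picking up degree-irregularity factors.
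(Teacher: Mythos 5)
Your overall route is the same as the paper's: vectorize the covariances via the SoS vectors, transfer local correlation to global correlation using the threshold rank of $G$, and convert global squared covariance into a variance drop via the law of total variance (your pointwise bound $\Var{\Ex{}{\one[\rv Y_i=a]\mid\rv Y_j}}\ge\sum_b\cov(\one[\rv Y_i=a],\one[\rv Y_j=b])^2$ is exactly the content of \cref{fact:vectorization_byproduct}, and your direct decomposition $A=A^{\ge\tau}+A^{<\tau}$ is just the non-contrapositive phrasing of \cref{ap:local_to_glob_inner}, with the same threshold $\tau=\ee^2/(4q^2)$ and the same Cauchy--Schwarz on the $\le r$-dimensional top eigenspace). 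The handling of non-regularity via the stationary measure $\Pi_1$ and the conjugation making $A$ self-adjoint in $\ell^2(\Pi_1)$ is precisely what the appendix does with $\matr\Why'=\Dee^{1/2}\Why\Dee^{1/2}$.

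There is, however, one step that fails as literally written: the threshold-rank transfer performed \emph{separately for each pair} $(a,b)$. For $a\ne b$ the matrix $\bigl(\ip{u_{i,a}}{u_{j,b}}^2\bigr)_{ij}=\bigl(\ip{u_{i,a}\otimes u_{i,a}}{\,u_{j,b}\otimes u_{j,b}}\bigr)_{ij}$ is a cross-Gram matrix of two \emph{different} vector families and need not be PSD. The bound ``contribution of $A^{<\tau}$ is at most $\tau$'' requires pairing $A^{<\tau}\preceq\tau\Ide$ against a PSD matrix of bounded trace; against a non-PSD bilinear form, the eigenvalues of $A$ near $-1$ can contribute with either sign and the estimate breaks. (This is exactly why the paper's \cref{ap:local_to_glob_inner} insists that $\matr\Why'$ be positive semi-definite before concluding $p(i)\ge 0$.) The fix is to aggregate first: set $v_i\propto\sum_a u_{i,a}\otimes u_{i,a}$, so that $\ip{v_i}{v_j}=\sum_{a,b}\ip{u_{i,a}}{u_{j,b}}^2\ge 0$ and the Gram matrix is PSD; this is what \cref{fact:vectorization_byproduct} packages. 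Relatedly, your identity $\Ex{\{i,j\}\sim\Pi_2}{\ip{u_{i,a}}{u_{j,b}}^2}=\ip{N^{(a,b)}}{AN^{(a,b)}}$ with $N^{(a,b)}_i=(\ip{u_{i,a}}{u_{j,b}})_j$ does not type-check (the ``equivalently'' Frobenius phrasing is the correct one), and once you aggregate over $(a,b)$ you must track the normalization of $\|v_i\|$ carefully so that the edge bound stays at $\ee^2/q^2$ rather than degrading to $\ee^2/q^4$; the constants in the lemma statement depend on getting this right.
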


The key ingredient in proving Lemma 5.4 is a ``local to global''
argument generalizing the expander case to low threshold rank
graphs. This new argument is proven in two steps
with~\cref{ap:local_to_glob_inner} being the first one.
\begin{lemma}[Adapted from Lemma 6.1 of~\cite{BarakRS11}]\label{ap:local_to_glob_inner}
  Let $G$ be an undirected weighted graph. Suppose $  v_1,\dots,  v_n \in
  \mathbb{R}^n$ are such that
  \[
  \Ex{i \sim V(G)}{\ip{  v_i}{  v_i}} = 1, \qquad \Ex{ij \sim E(G)}{\ip{  v_i}{  v_j}} \ge 1-\epsilon,
  \]
  but
  \[
  \Ex{i,j \sim V(G)}{\ip{  v_i}{  v_j}^2} \le \frac{1}{m}.
  \]
  Then for $c > 1$, we have
  \[
  \lambda_{\left( 1- \frac{1}{c}\right)^2m} \ge 1 - c\cdot \epsilon.
  \]
  In particular, $\lambda_{m/4} \ge 1 - 2\epsilon$.
\end{lemma}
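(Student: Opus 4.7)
The plan is to translate both hypotheses into a one-dimensional statement about the spectral distribution of the (reversible) random walk operator $A$ of $G$, regarded as self-adjoint on $\ell_2(\pi)$ where $\pi$ is the stationary measure. Let $e_1,\dots,e_n$ be an orthonormal eigenbasis of $A$ in $\ell_2(\pi)$ with eigenvalues $\lambda_1 \ge \cdots \ge \lambda_n$. For each coordinate $k\in [n]$ define $f^{(k)} \in \ell_2(\pi)$ by $f^{(k)}_i = v_i(k)$, expand $f^{(k)} = \sum_\alpha c_k^{(\alpha)} e_\alpha$, and set $x_\alpha \defeq \sum_k (c_k^{(\alpha)})^2$.

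Next I would translate the three hypotheses into conditions on $\{x_\alpha\}$. The normalization $\Ex{i}{\ip{v_i}{v_i}} = 1$ rewrites as $\sum_k \|f^{(k)}\|_\pi^2 = 1$, hence $\sum_\alpha x_\alpha = 1$. The local-correlation bound becomes $\sum_k \ip{f^{(k)}}{A f^{(k)}}_\pi \ge 1 - \epsilon$, which is $\sum_\alpha \lambda_\alpha x_\alpha \ge 1-\epsilon$. For the global-correlation control, form the Gram matrix $M_{kl} = \ip{f^{(k)}}{f^{(l)}}_\pi$, so that $\Ex{i,j \sim V}{\ip{v_i}{v_j}^2} = \|M\|_F^2 \le 1/m$; writing $M = CC^\top$ for the matrix $C$ with entries $c_k^{(\alpha)}$, the Frobenius norm satisfies $\|M\|_F^2 = \sum_{\alpha,\beta} \ip{C^{(\alpha)}}{C^{(\beta)}}^2 \ge \sum_\alpha \|C^{(\alpha)}\|^4 = \sum_\alpha x_\alpha^2$, so $\sum_\alpha x_\alpha^2 \le 1/m$.

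With these three scalar inequalities on the nonnegative vector $\{x_\alpha\}$, the conclusion follows by a short contradiction argument. Suppose $\lambda_r < 1 - c\epsilon$ for $r = (1-1/c)^2 m$, and let $T = \{\alpha : \lambda_\alpha \ge 1 - c\epsilon\}$, so that $|T| < r$. Split the sum $\sum_\alpha \lambda_\alpha x_\alpha$ according to $T$ and its complement, using $\lambda_\alpha \le 1$ on $T$ and $\lambda_\alpha < 1 - c\epsilon$ off $T$, to obtain $\sum_\alpha \lambda_\alpha x_\alpha \le \sum_{\alpha\in T} x_\alpha + (1-c\epsilon)\bigl(1 - \sum_{\alpha \in T} x_\alpha\bigr) = 1 - c\epsilon + c\epsilon \sum_{\alpha \in T} x_\alpha$. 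Combined with $\sum_\alpha \lambda_\alpha x_\alpha \ge 1-\epsilon$, this forces $\sum_{\alpha \in T} x_\alpha \ge 1 - 1/c$; Cauchy-Schwarz then yields $(1-1/c)^2 \le \bigl(\sum_{\alpha \in T} x_\alpha\bigr)^2 \le |T| \cdot \sum_{\alpha \in T} x_\alpha^2 \le |T|/m < (1-1/c)^2$, a contradiction.

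The only technical subtlety is the setup of the spectral decomposition for general weighted, non-regular graphs: one must consistently work with the $\pi$-weighted inner product so that $A$ is self-adjoint and admits an orthonormal eigenbasis, and take all expectations over $V$ with respect to $\pi$. Once this is in place the inequality $\|M\|_F^2 \ge \sum_\alpha x_\alpha^2$ is immediate from dropping off-diagonal nonnegative squares, and the concluding Cauchy-Schwarz is a single line. Specializing $c = 2$ recovers the ``in particular'' claim $\lambda_{m/4} \ge 1 - 2\epsilon$.
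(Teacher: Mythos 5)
Your proof is correct and follows essentially the same route as the paper: your quantities $x_\alpha$ are exactly the diagonal entries $p(\alpha)$ of the degree-conjugated Gram matrix expressed in the eigenbasis, your Frobenius-norm step of dropping off-diagonal squares is the paper's $\sum_\alpha p(\alpha)^2 \le \sum_{\alpha,\beta} X(\alpha,\beta)^2$, and the final Cauchy--Schwarz against $|T|$ is the paper's bound $q \le \sqrt{m'/m}$ combined with $q \ge 1-1/c$ (you merely phrase it as a contradiction rather than a direct derivation of $m' \ge (1-1/c)^2 m$).
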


\begin{proof}
  Let $\matr \Why$ be the Gram matrix defined as $\matr \Why_{i,j} =
  \ip{  v_i}{  v_j}$. Clearly, $\matr \Why$ is positive
  semi-definite. Without loss of generality suppose that the edge
  weights $\{  w(\{i,j\}) ~\vert~ ij \in E(G)\}$ form a
  probability distribution. Set $  w(i) = \sum_{j \sim i}  
  w(\{i,j\})$. Let $\matr D$ to be the diagonal matrix such that
  $\matr D(i,i) =   w(i)$, \ie~the matrix of generalized
  degrees. Let $\Aye$ be such that $\Aye_{i,j} = w(\{i,j\})/2$ and
  $A_G = D^{-1/2} A D^{-1/2}$ be its normalized adjacency matrix.

  Suppose $\Aye_G = \sum_{i=1}^n \lambda_i   u_i   u_i^\top$
  is a spectral decomposition of $\Aye_G$. Set $\matr \Why' =
  \Dee^{1/2} \Why \Dee^{1/2}$. For convenience, define the matrix
  $\matr X$ as $\matr X(i, j)= \ip{  u_i}{\matr \Why'   u_j}$
  and set $p(i) = \matr X(i,i)$. We claim that $p$ is a
  probability distribution. Since $\matr \Why'$ is positive
  semi-definite, we have that $p(i)\ge 0$. Moreover, $\sum_{i=1}^n
  p(i) = 1$ as
  \[
    1 = \Ex{i \sim V(G)}{\ip{  v_i}{  v_i}} = \Tr(\matr \Why') = \Tr(\matr X) = \sum_{i=1}^n \matr X(i,i) 
    = \sum_{i=1}^n p(i).
  \]
  Let $m'$ be the largest value in $[n]$ satisfying $\lambda_{m'}
  \ge 1 - c \cdot \epsilon$. By Cauchy-Schwarz\footnote{In~\cite{BarakRS11},
  there was a minor bug in the application this Cauchy-Schwarz, which led to a
  bound of $(1-1/c)$ instead of $(1-1/c)^2$ in the lemma, leading to a global
  correlation bound of $\Omega(\rho)$ instead of $\Omega(\rho^2)$ as indicated
  in \cref{cor:local_to_global_brs}.},
  \[
  q = \sum_{i=1}^{m'} p(i) \le \sqrt{m'} \sqrt{ \sum_{i=1}^{m'} p(i)^2} \le \sqrt{m'} \sqrt{ \sum_{i,j} (\matr X(i,j))^2} \le \sqrt{\frac{m'}{m}},
  \]
  where the last inequality follows from our assumption that
  \[
  \frac{1}{m} \ge \Ex{i,j \sim V(G)}{\ip{  v_i}{  v_j}^2} = \ip{\matr \Why'}{\matr \Why'} = \ip{\matr X}{\matr X} = \sum_{i,j} \matr X(i,j)^2.
  \]
  Then
  \[
  1 - \epsilon \le  \Ex{ij \sim E(G)}{\ip{  v_i}{  v_j}} = \ip{\matr A}{\matr \Why} = \ip{\matr A_G}{\matr X} = \sum_{i=1}^n \lambda_i \matr X(i,i),
  \]
  implies that
  \[
  1 - \epsilon \le \sum_{i=1}^n \lambda_i \cdot \matr X(i,i) \le \sum_{i=1}^{m'} p(i) + \left(1 - c \cdot \epsilon \right) \sum_{i=m'+1}^n p(i) = 1 - c \cdot \epsilon \left(1-q\right).
  \]
  Finally, using the bound on $q$ we obtain
  \[
  \left(1-\frac{1}{c}\right)\sqrt{m} \le \sqrt{m'},
  \]
  from which the lemma readily follows.
\end{proof}

As a corollary it follows that local correlation implies global
correlation.
\begin{corollary}[Adapted from Lemma 4.1 of~\cite{BarakRS11}]\label{cor:local_to_global_brs}
  Let $G$ be an undirected weighted graph. Suppose $v_1,\dots,v_n \in
  \mathbb{R}^n$ are vectors in the unit ball such that
  \[
  \Ex{ij \sim E(G)}{\ip{  v_i}{  v_j}} \ge \rho,
  \]
  then
  \[
  \Ex{i,j \sim V(G)}{\ip{  v_i}{  v_j}^2} \ge \frac{\rho^2}{4 \cdot \textup{rank}_{\rho/4}(G)}.
  \]
  In particular, we have
  \[
  \Ex{i,j \sim V(G)}{\lvert \ip{  v_i}{  v_j} \rvert} \ge \frac{\rho^2}{4 \cdot \textup{rank}_{\rho/4}(G)}.
  \]
\end{corollary}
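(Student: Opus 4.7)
The plan is to derive \cref{cor:local_to_global_brs} by reusing the matrix setup from the proof of \cref{ap:local_to_glob_inner} and then extracting a lower bound on the ``global correlation'' $\Ex{i,j \sim V(G)}{\ip{v_i}{v_j}^2}$ via a Cauchy--Schwarz argument restricted to the top eigenspaces of $A_G$. Concretely, I would form the Gram matrix $\matr M$ with $M_{ij} = \ip{v_i}{v_j}$, set $\matr M' = \matr D^{1/2}\matr M \matr D^{1/2}$ where $\matr D$ is the diagonal matrix of vertex weights, and expand $\matr M'$ in the eigenbasis $\{u_i\}$ of the normalized adjacency matrix $A_G = \sum_i \lambda_i u_i u_i^\top$ via $X(i,j) = \ip{u_i}{\matr M' u_j}$, defining $p(i) = X(i,i)$. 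Since $\matr M' \succeq 0$, we have $p(i) \ge 0$; moreover $\sum_i p(i) = \Tr(\matr M') = \Ex{i \sim V(G)}{\norm{v_i}^2} \le 1$ because the vectors lie in the unit ball (this is the only place the unit-ball hypothesis is used). Crucially, $\sum_i p(i)^2 \le \norm{X}_F^2 = \norm{\matr M'}_F^2 = \Ex{i,j \sim V(G)}{\ip{v_i}{v_j}^2}$, so a lower bound on $\sum_i p(i)$ will translate, via Cauchy--Schwarz, into a lower bound on the quantity we care about.

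Next, I would use the hypothesis by writing
\[
  \rho ~\le~ \Ex{ij \sim E(G)}{\ip{v_i}{v_j}} ~=~ \ip{A_G}{X} ~=~ \sum_i \lambda_i \cdot p(i),
\]
and split this sum at the threshold $\rho/4$. The contribution from indices with $\lambda_i < \rho/4$ is at most $(\rho/4) \sum_i p(i) \le \rho/4$, so the contribution from the top $k := \textup{rank}_{\rho/4}(G)$ indices is at least $3\rho/4$. Since each $\lambda_i \le 1$, this yields $\sum_{i \le k} p(i) \ge 3\rho/4$. Applying Cauchy--Schwarz across these $k$ terms gives
\[
  \left(\tfrac{3\rho}{4}\right)^2 ~\le~ k \cdot \sum_{i \le k} p(i)^2 ~\le~ k \cdot \Ex{i,j \sim V(G)}{\ip{v_i}{v_j}^2},
\]
so $\Ex{i,j \sim V(G)}{\ip{v_i}{v_j}^2} \ge 9\rho^2/(16 k) \ge \rho^2 / (4 \cdot \textup{rank}_{\rho/4}(G))$, which is the desired bound.

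For the ``in particular'' statement, since the vectors are in the unit ball, $|\ip{v_i}{v_j}| \le 1$ pointwise, hence $\ip{v_i}{v_j}^2 \le |\ip{v_i}{v_j}|$, and taking expectations transfers the lower bound to $\Ex{i,j \sim V(G)}{|\ip{v_i}{v_j}|}$.

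There is no real obstacle here: the proof is a clean bookkeeping exercise once one notices that the unit-ball hypothesis is only needed to ensure $\sum_i p(i) \le 1$ (avoiding the normalization subtleties that would arise if one tried to invoke \cref{ap:local_to_glob_inner} as a black box, where the vectors are required to satisfy $\Ex{i}{\ip{v_i}{v_i}} = 1$ and $\Ex{ij \sim E(G)}{\ip{v_i}{v_j}}$ needs to be close to $1$). The only routine checks are the identities $\Tr(\matr M') = \Ex{i \sim V(G)}{\norm{v_i}^2}$ and $\norm{\matr M'}_F^2 = \Ex{i,j \sim V(G)}{\ip{v_i}{v_j}^2}$, both of which follow directly from the definition of $\matr M'$ and the weighting by $\Pi_1$.
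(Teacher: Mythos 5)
Your proof is correct. It differs from the paper's in a structural but instructive way: the paper proves the corollary by first normalizing the vectors ($v_i' = v_i/\sqrt{\alpha}$ with $\alpha = \Ex{i}{\ip{v_i}{v_i}}$) so that the hypotheses of \cref{ap:local_to_glob_inner} are met exactly, and then invoking that lemma in the contrapositive with the specific parameter choices $\rho' = \rho/(2\alpha)$, $\epsilon = 1-\rho'$, $c = (1-\rho'/2)/(1-\rho')$, finally undoing the normalization (using $\alpha \le 1$ to pass from $\textup{rank}_{\rho/(4\alpha)}$ to $\textup{rank}_{\rho/4}$). You instead inline the spectral machinery of \cref{ap:local_to_glob_inner} — the same objects $\matr M' = \matr D^{1/2}\matr M\matr D^{1/2}$, $X$, $p(i)$ — but run the argument in the forward direction under the weaker hypotheses $\sum_i p(i) \le 1$ and $\sum_i \lambda_i p(i) \ge \rho$, splitting the eigenvalue sum at the threshold $\rho/4$ and applying Cauchy--Schwarz to the top $\textup{rank}_{\rho/4}(G)$ coordinates. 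This avoids the rescaling and the contrapositive/parameter bookkeeping entirely, makes explicit that the unit-ball hypothesis is used only to get $\Tr(\matr M') \le 1$, and even yields the marginally better constant $9/16$ in place of $1/4$; what it gives up is modularity, since \cref{ap:local_to_glob_inner} is stated separately in the paper and your derivation does not reuse it as a black box. (Both arguments, yours and the paper's, implicitly assume $\rho > 0$, which is the only regime in which the statement is used.)
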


\begin{proof}
  If all $  v_1,\dots,  v_n$ are zero, the result trivially follows so
  assume that this is not the case. Then $\alpha = \Ex{i \sim V(G)}{\ip{  v_i}{  v_i}} > 0$.
  Also, $\alpha \le 1$ since the vectors lie in the unit ball. Let
  $v_i' = v_i/\sqrt{\alpha}$. By construction
  \begin{equation}\label{eq:loc_to_glob_assumption}
  \Ex{i \sim V(G)}{\ip{  v_i'}{  v_i'}} = 1, \qquad \Ex{ij \sim E(G)}{\ip{  v_i'}{  v_j'}} \ge \frac{\rho}{\alpha}.
  \end{equation}
  Under these assumptions we want to
  apply~\cref{ap:local_to_glob_inner} in the contra-positive, but
  first we set some parameters. Let $\rho' = \rho/(2\alpha)$,
  $\epsilon=1-\rho'$ and $c = (1-\rho'/2)/(1-\rho')$.
  Then
  \[
  1-\frac{1}{c} = \frac{\rho'/2}{1-\rho'/2} \le \rho',
  \]
  and
  \[
  1- c \cdot \epsilon = \frac{\rho'}{2}.
  \]
  Now, considering the contra-positive of
  the~\cref{ap:local_to_glob_inner} under
  the~\cref{eq:loc_to_glob_assumption} we obtain
  \[
  \Ex{i,j \sim V(G)}{\ip{  v_i'}{  v_j'}^2} > \frac{1}{m} \ge \frac{(\rho')^2}{\textup{rank}_{\rho'/2}(G)},
  \]  
  since $\textup{rank}_{\rho'/2}(G) < (\rho')^2m$ as
  $\lambda_{(\rho')^2m} < \rho'/2$. Or equivalently
  \[
  \Ex{i,j \sim V(G)}{\frac{\ip{  v_i}{  v_j}^2}{\alpha^2}} = \Ex{i,j \sim V(G)}{\ip{  v_i'}{  v_j'}^2} \ge \frac{\rho^2}{4 \alpha^2 \cdot \textup{rank}_{\rho/(4\alpha)}(G)} \ge \frac{\rho^2}{4 \alpha^2 \cdot \textup{rank}_{\rho/4}(G)},
  \]
  where the last inequality follows form the fact that $\alpha \le 1$.
\end{proof}

To finish the proof of Lemma 5.4, we state the
following~\cref{fact:vectorization_byproduct} extracted
from~\cite{BarakRS11}.
\begin{fact}[Adapted from \cite{BarakRS11}]\label{fact:vectorization_byproduct}
  Let $\set{\rv Y_1, \ldots, \rv Y_n}$ be a 2-local PSD ensemble where
  each $\rv Y_i$ can take at most $q$ values. Suppose
  \[ \ee = \Ex{\set{i,j} \sim \Pi_2}{\norm{\set{\rv Y_{ij} } - \set{\rv Y_i}\set{\rv Y_j}  }_1}.\]  
  Then there exist vectors $v_1,\dots,v_n$ in the unit ball such that
  \begin{equation}\label{fact:app_vec_lower_bound}
    \Ex{ij \sim E(G)}{\ip{ v_i}{ v_j}} \ge \frac{1}{q^2} \cdot \Ex{ij \sim \Pi_2}{\norm{\set{\rv Y_{ij} } - \set{\rv Y_i}\set{\rv Y_j}  }_1^2} \ge \frac{\ee^2}{q^2},
  \end{equation}
  and
  \begin{equation}\label{fact:app_vec_upper_bound}
    \Ex{i,j \sim V(G)}{\Var{\rv Y_i} - \Ex{\set{\rv Y_j}}{\Var{\rv Y_i \mid \rv Y_j }}} \ge \Ex{i,j \sim V(G)}{\abs{\ip{v_i}{v_j}}}.
  \end{equation}
\end{fact}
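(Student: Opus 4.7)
The plan is to construct the vectors $v_1,\ldots,v_n$ directly from the SDP solution via a tensor-square of ``centered'' SDP vectors. First, for each $i\in[n]$ and $\alpha\in[q]$, I will define the centered vector $w_{i,\alpha} \defeq \V{\{i\}}{\alpha} - p_{i,\alpha}\cdot\Vempty$, where $p_{i,\alpha}\defeq\|\V{\{i\}}{\alpha}\|^2 = \Pr{\rv Y_i = \alpha}$. A short bilinear expansion using the SDP consistency relations (together with $\langle\Vempty, \V{\{i\}}{\alpha}\rangle = p_{i,\alpha}$) yields the key identity
\[
  \langle w_{i,\alpha},\, w_{j,\beta}\rangle ~=~ \Pr{\rv Y_i = \alpha,\, \rv Y_j = \beta} - p_{i,\alpha}\,p_{j,\beta},
\]
i.e.~the covariance of $\mathds{1}[\rv Y_i=\alpha]$ and $\mathds{1}[\rv Y_j=\beta]$.

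I would then set $v_i \defeq \sum_{\alpha\in[q]} w_{i,\alpha}\otimes w_{i,\alpha}$, so that by bilinearity $\langle v_i, v_j\rangle = \sum_{\alpha,\beta}\langle w_{i,\alpha}, w_{j,\beta}\rangle^2 \geq 0$. To verify that $v_i$ lies in the unit ball, observe that $\|v_i\|^2 = \|M_{ii}\|_F^2$, where $M_{ii}$ is the $q\times q$ covariance matrix of the indicator vector $(\mathds{1}[\rv Y_i=\alpha])_{\alpha}$. Since $M_{ii}\succeq 0$, one has $\|M_{ii}\|_F \leq \Tr(M_{ii}) = \Var{\rv Y_i} \leq 1$.

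Both conclusions then follow from short manipulations. For \eqref{fact:app_vec_lower_bound}, applying Cauchy--Schwarz over the $q^2$ pairs $(\alpha,\beta)$ gives
\[
  \norm{\set{\rv Y_{ij}} - \set{\rv Y_i}\set{\rv Y_j}}_1 ~=~ \sum_{\alpha,\beta}\lvert\langle w_{i,\alpha}, w_{j,\beta}\rangle\rvert ~\leq~ q\cdot\sqrt{\langle v_i, v_j\rangle},
\]
hence $\langle v_i, v_j\rangle \geq \norm{\set{\rv Y_{ij}} - \set{\rv Y_i}\set{\rv Y_j}}_1^2/q^2$; taking expectations over $ij\sim\Pi_2$ and then applying Jensen's inequality to the hypothesis $\Ex{ij\sim\Pi_2}{\norm{\set{\rv Y_{ij}} - \set{\rv Y_i}\set{\rv Y_j}}_1}\geq\eps$ finishes the chain. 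For \eqref{fact:app_vec_upper_bound}, the law of total variance applied coordinatewise to $\mathds{1}[\rv Y_i=\alpha]$ conditioned on $\rv Y_j$ gives, after expanding conditional probabilities,
\[
  \Var{\rv Y_i} - \Ex{\set{\rv Y_j}}{\Var{\rv Y_i\mid \rv Y_j}} ~=~ \sum_{\alpha,\beta}\frac{\langle w_{i,\alpha}, w_{j,\beta}\rangle^2}{p_{j,\beta}} ~\geq~ \sum_{\alpha,\beta}\langle w_{i,\alpha}, w_{j,\beta}\rangle^2 ~=~ \lvert\langle v_i, v_j\rangle\rvert,
\]
where the inequality uses $p_{j,\beta}\leq 1$ and the last equality uses that $\langle v_i, v_j\rangle$ is a sum of squares.

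I do not expect a serious obstacle; the argument is essentially a computation once the right vectorization is identified. The only conceptual step is choosing the \emph{tensor-square} form $v_i=\sum_\alpha w_{i,\alpha}\otimes w_{i,\alpha}$ rather than the more obvious concatenation $(w_{i,\alpha})_\alpha$. The latter would yield only the ``diagonal'' inner product $\sum_\alpha\langle w_{i,\alpha}, w_{j,\alpha}\rangle$, missing the off-diagonal covariances ($\alpha\neq\beta$) needed to control the full $\ell_1$-distance between $\set{\rv Y_{ij}}$ and $\set{\rv Y_i}\set{\rv Y_j}$. The tensor-square squares every covariance so that all $q^2$ terms contribute with the same sign, matching exactly the $q^2$ factor produced by Cauchy--Schwarz.
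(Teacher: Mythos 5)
Your proposal is correct, and it supplies a proof for a statement the paper only cites (it is stated as a Fact "extracted from" Barak--Raghavendra--Steurer, with no proof given in the appendix). The construction is the intended one: the centered vectors $w_{i,\alpha}$ are exactly the vectors whose Gram matrix the paper identifies with the conditional covariance matrix in the preliminaries on $t$-local PSD ensembles, the tensor-square $v_i=\sum_\alpha w_{i,\alpha}\otimes w_{i,\alpha}$ is the standard BRS vectorization, and all three steps (unit-ball bound via $\norm{M_{ii}}_F\le \Tr(M_{ii})\le 1$, Cauchy--Schwarz over the $q^2$ pairs followed by Jensen, and the law of total variance with $p_{j,\beta}\le 1$) check out, with the degenerate terms $p_{j,\beta}=0$ contributing zero covariance and hence causing no trouble.
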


Now we are ready to prove the key result from~\cite{BarakRS11} used in
our proof.
\begin{lemma}[Lemma 5.4 from \cite{BarakRS11} (restatement of~\cref{lem:brs-dec2})]
    Let $G = (V, E, \Pi_2)$ be a weighted graph, $\set{\rv Y_1, \ldots, \rv Y_n}$ a local PSD ensemble, where we have
    $\supp(\rv Y_i) \le q$ for every $i \in V$,  and $q \ge 0$. If $\ee \ge 0$ is a lower bound on 
    the expected statistical difference between independent
    and correlated sampling along the edges,\ie
    \[ \ee \le \Ex{\set{i,j} \sim \Pi_2}{\norm{\set{\rv Y_{ij} } - \set{\rv Y_i}\set{\rv Y_j}  }_1}.\]
    Then, conditioning on a random vertex decreases the variances,
    \[ \Ex{i, j \sim \Pi_1}{\Ex{ \set{\rv Y_j} }{ \Var{\rv Y_i \mid \rv Y_j } } } \le \Ex{i \sim \Pi_1}{\Var{\rv Y_i}} - \frac{\ee^4}{4 q^4 \cdot \Rank_{\ee^2/\left(4q^2\right)}(G) }.\]
\end{lemma}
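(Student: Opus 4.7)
The plan is to chain together the three preceding results in the appendix: the vectorization fact (\cref{fact:vectorization_byproduct}), the local-to-global inner lemma (\cref{ap:local_to_glob_inner}), and its corollary (\cref{cor:local_to_global_brs}). Each of these already does the heavy lifting for a distinct step, so the proof of Lemma 5.4 itself should be little more than a bookkeeping assembly.

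First I would invoke \cref{fact:vectorization_byproduct} to produce unit-ball vectors $v_1,\ldots,v_n$ with the two properties asserted there: the edge-average lower bound
\[
  \Ex{ij \sim \Pi_2}{\ip{v_i}{v_j}} ~\ge~ \frac{1}{q^2} \cdot \Ex{ij \sim \Pi_2}{\norm{\{\rv Y_{ij}\} - \{\rv Y_i\}\{\rv Y_j\}}_1^2} ~\ge~ \frac{\ee^2}{q^2},
\]
where the last inequality uses Jensen/the hypothesis $\Ex{ij}{\|\cdots\|_1} \geq \ee$, and the vertex-average upper bound
\[
  \Ex{i,j \sim \Pi_1}{\Var{\rv Y_i} - \Ex{\{\rv Y_j\}}{\Var{\rv Y_i \mid \rv Y_j}}} ~\ge~ \Ex{i,j \sim \Pi_1}{\abs{\ip{v_i}{v_j}}}.
\]
The first inequality translates the hypothesis on statistical distance into a statement about edge-average inner products; the second relates global absolute correlation of these vectors to the variance drop we want to lower bound.

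Next I would apply \cref{cor:local_to_global_brs} to the vectors $v_1,\ldots,v_n$ with the parameter $\rho = \ee^2/q^2$. This yields directly
\[
  \Ex{i,j \sim \Pi_1}{\abs{\ip{v_i}{v_j}}} ~\ge~ \frac{\rho^2}{4 \cdot \Rank_{\rho/4}(G)} ~=~ \frac{\ee^4}{4 q^4 \cdot \Rank_{\ee^2/(4q^2)}(G)},
\]
where the threshold parameter is exactly the one appearing in the statement of the lemma.

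Chaining the two displays above gives the conclusion,
\[
  \Ex{i \sim \Pi_1}{\Var{\rv Y_i}} - \Ex{i,j \sim \Pi_1}{\Ex{\{\rv Y_j\}}{\Var{\rv Y_i \mid \rv Y_j}}} ~\ge~ \frac{\ee^4}{4 q^4 \cdot \Rank_{\ee^2/(4q^2)}(G)},
\]
which is the desired inequality after rearrangement. The main point to verify carefully is the matching of parameters between the corollary's $\rho/4$ and the threshold rank parameter $\ee^2/(4q^2)$ in the lemma statement; there is no genuine obstacle beyond this routine substitution, since the real technical content (the spectral argument converting local into global correlation on a low threshold rank graph) is already encapsulated in \cref{ap:local_to_glob_inner}.
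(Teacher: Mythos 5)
Your proposal is correct and follows exactly the same route as the paper's own proof: apply \cref{fact:vectorization_byproduct} to get the edge-average lower bound $\ee^2/q^2$ and the variance-drop inequality, feed $\rho = \ee^2/q^2$ into \cref{cor:local_to_global_brs}, and chain. The parameter substitution $\rho/4 = \ee^2/(4q^2)$ and $\rho^2/4 = \ee^4/(4q^4)$ checks out, so there is nothing to add.
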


\begin{proof}
  Using~\cref{fact:app_vec_lower_bound} there exist vectors
  $v_1,\dots,v_n$ such that~\cref{fact:vectorization_byproduct}
  implies
  \[
  \Ex{ij \sim E(G)}{\ip{ v_i}{ v_j}} \ge \frac{\ee^2}{q^2}.
  \]
  From~\cref{cor:local_to_global_brs} we obtain
  \[
  \Ex{i,j \sim V(G)}{\abs{\ip{ v_i}{ v_j}}} \ge \frac{\ee^4}{4q^4 \cdot \textup{rank}_{\ee^2/\left(4q^2\right)}(G)}.
  \]
  Finally, using~\cref{fact:app_vec_upper_bound} we get
  \[
  \Ex{i,j \sim V(G)}{\Var{\rv Y_i} - \Ex{\set{\rv Y_j}}{\Var{\rv Y_i \mid \rv Y_j }}} \ge \Ex{i,j \sim V(G)}{\abs{\ip{v_i}{v_j}}} \ge \frac{\ee^4}{4q^4\cdot \textup{rank}_{\ee^2/\left(4q^2\right)}(G)},
  \]
  as claimed.
\end{proof}

\section{Harmonic Analysis on HDXs}\label{app:harmonic_hdx}

We provide the proofs of known facts used
in~\cref{subsec:hdx_harmonic_analysis}.

\begin{definition}[From~\cite{DiksteinDFH18}]
  We say that $d$-sized complex $X$ is \textit{proper} provided
  $\ker{\left(\Up_{i}\right)}$ is trivial for $1 \le i < d$.
\end{definition}

We will need the following decomposition.
\begin{claim}
Let $\matr A\colon V \to W$ where $V$ and $W$ are finite dimensional inner product spaces. Then
\[
    V = \ker{\matr A} \oplus \im{\matr A^{\dag}}.
\]
\end{claim}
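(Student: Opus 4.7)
The plan is to prove the standard orthogonal decomposition by first establishing that the two summands are mutually orthogonal and then showing that their sum exhausts $V$ via the identity $\ker(\matr A) = \im(\matr A^{\dag})^{\perp}$.

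First I would verify orthogonality. Take any $x \in \ker(\matr A)$ and any $y \in \im(\matr A^{\dag})$, so that $y = \matr A^{\dag} z$ for some $z \in W$. Using the defining property of the adjoint recalled in the preliminaries ($\ip{\matr A u}{v} = \ip{u}{\matr A^{\dag} v}$ for $u \in V, v \in W$), we compute
\[
\ip{x}{y} ~=~ \ip{x}{\matr A^{\dag} z} ~=~ \ip{\matr A x}{z} ~=~ \ip{0}{z} ~=~ 0.
\]
Hence $\ker(\matr A) \cap \im(\matr A^{\dag}) = \{0\}$ and the sum $\ker(\matr A) + \im(\matr A^{\dag})$ is genuinely a direct sum that is orthogonal with respect to the inner product on $V$.

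Next I would show the sum equals all of $V$. The cleanest route is to prove $\im(\matr A^{\dag})^{\perp} = \ker(\matr A)$ and then invoke the fact that in a finite-dimensional inner product space any subspace $U$ satisfies $V = U \oplus U^{\perp}$. For the inclusion $\im(\matr A^{\dag})^{\perp} \subseteq \ker(\matr A)$: if $x \perp \matr A^{\dag} z$ for every $z \in W$, then $\ip{\matr A x}{z} = \ip{x}{\matr A^{\dag} z} = 0$ for all $z \in W$; taking $z = \matr A x$ gives $\norm{\matr A x}^2 = 0$, hence $x \in \ker(\matr A)$. The reverse inclusion $\ker(\matr A) \subseteq \im(\matr A^{\dag})^{\perp}$ is the orthogonality already established above. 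Applying $V = U \oplus U^{\perp}$ with $U = \im(\matr A^{\dag})$ then yields the desired decomposition
\[
V ~=~ \im(\matr A^{\dag}) \oplus \im(\matr A^{\dag})^{\perp} ~=~ \im(\matr A^{\dag}) \oplus \ker(\matr A).
\]

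There is no serious obstacle here; this is textbook linear algebra. The only mild subtlety is making sure $\im(\matr A^{\dag})$ is actually closed so that the identity $V = U \oplus U^{\perp}$ applies, but finite-dimensionality makes every subspace closed, so this is automatic. Everything else is a direct consequence of the adjoint relation, which is the only structural fact from the paper that is needed.
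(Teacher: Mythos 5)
Your proof is correct and follows essentially the same route as the paper: both establish $\ker(\matr A) = \im(\matr A^{\dag})^{\perp}$ via the adjoint relation and then invoke the orthogonal-complement decomposition of a finite-dimensional inner product space. Your write-up is merely a more explicit version of the paper's argument.
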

\begin{proof}
    We show that $\ker{\matr A} = \left(\im{\matr
      A^{\dag}}\right)^{\perp}$.  Recall that $ v \in \left(\im{\matr
      A^{\dag}}\right)^{\perp}$ if and only if $\ip{\matr A^{\dag} w}{
      v} = 0$ for every $ w \in W$. This is equivalent to $\ip{
      w}{\matr A v} = 0$ for every $ w \in W$, implying $\matr A v=0$.
\end{proof}

\begin{lemma}[From~\cite{DiksteinDFH18}]\label{lemma:hdx_space_decomp}
  We have
  \[
  C^k = \sum_{i=0}^{k} C^{k}_i.
  \]
  Moreover, if X is proper then
  \[
  C^k = \bigoplus_{i=0}^{k} C^{k}_i,
  \]
  and $\dim{C^k_i} = \lvert X(i) \rvert - \lvert X(i-1) \rvert$.
\end{lemma}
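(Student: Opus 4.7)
The plan is to induct on $k$, using at each level the orthogonal decomposition $V = \ker \matr A \oplus \im \matr A^{\dag}$ (already stated) applied to $\matr A = \Dee_k \colon C^k \to C^{k-1}$.

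First I would observe that this orthogonal decomposition immediately gives
\[
C^k ~=~ \ker(\Dee_k) \oplus \im(\Dee_k^{\dag}) ~=~ H_k \oplus \im(\Up_{k-1}),
\]
since $\Dee_k^{\dag} = \Up_{k-1}$. By definition $H_k = C^k_k$, so the only remaining task is to identify $\im(\Up_{k-1})$ with $\sum_{i=0}^{k-1} C^k_i$. For the induction hypothesis, assume $C^{k-1} = \sum_{i=0}^{k-1} C^{k-1}_i$ where $C^{k-1}_i = \Up^{k-1-i} H_i$ for $i \geq 1$ and $C^{k-1}_0$ is the space of constants on $X(k-1)$. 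Applying $\Up_{k-1}$ to each summand produces $\Up_{k-1}\Up^{k-1-i} H_i = \Up^{k-i} H_i = C^k_i$, and since $\Up_{k-1}$ maps constants on $X(k-1)$ to constants on $X(k)$, we conclude $\im(\Up_{k-1}) = \sum_{i=0}^{k-1} C^k_i$. Combined with the first display this proves $C^k = \sum_{i=0}^k C^k_i$ (with the base case $k=0$ being trivial).

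For the moreover part, assume $X$ is proper, i.e.\ $\ker(\Up_i)$ is trivial for $1 \leq i < d$. The decomposition $C^k = H_k \oplus \im(\Up_{k-1})$ is already orthogonal, so it suffices to show that the sum $\sum_{i=0}^{k-1} C^k_i$ inside $\im(\Up_{k-1})$ is direct. By the induction hypothesis this sum equals $\Up_{k-1}\left(\bigoplus_{i=0}^{k-1} C^{k-1}_i\right)$, and since $\Up_{k-1}$ is injective on $C^{k-1}$ (by properness), the image remains a direct sum. Hence $C^k = \bigoplus_{i=0}^k C^k_i$. For the dimension formula, rank-nullity applied to $\Dee_i \colon C^i \to C^{i-1}$ gives $\dim H_i = |X(i)| - \rank(\Dee_i)$; since $\rank(\Dee_i) = \rank(\Dee_i^{\dag}) = \rank(\Up_{i-1})$ and $\Up_{i-1}$ is injective on $C^{i-1}$ (again by properness), we have $\rank(\Up_{i-1}) = \dim C^{i-1} = |X(i-1)|$. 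Since $\Up^{k-i}$ is injective on $H_i$, $\dim C^k_i = \dim H_i = |X(i)| - |X(i-1)|$.

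The main obstacle here is essentially bookkeeping: carefully distinguishing when $\Up_{i-1}$ is injective (the properness hypothesis) versus when we only have the orthogonal splitting (which always holds). Once the two are separated, the induction is essentially transparent, and the base case $C^0_0 = C^0$ is trivial.
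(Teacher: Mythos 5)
Your proposal is correct and follows essentially the same route as the paper: induction on $k$, the orthogonal splitting $C^k = \ker(\Dee_k) \oplus \im(\Up_{k-1})$, and the identification $\Up_{k-1}C^{k-1}_i = C^k_i$. The only (cosmetic) difference is in the ``moreover'' part, where the paper deduces directness by summing the dimensions $\dim C^k_i = |X(i)| - |X(i-1)|$ to get $\dim C^k$, while you propagate directness through the induction via injectivity of $\Up_{k-1}$ --- both hinge on the same properness hypothesis and are equally valid.
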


\begin{proof}
  We induct on $k$. For $k=0$, $X(0) = \{ \emptyset\}$ and $C^{0} =
  C^{0}_{0}$. Now suppose $k > 0$. Since $\Dee_k$ and $\Up_{k-1}$ are
  adjoints, we have $C^k = \ker{\Dee_k} \oplus \im{\Up_{k-1}}$ or
  equivalently
  \begin{equation}\label{eq:c_k_space_decomp}
    C^k = \ker{\Dee_k} \oplus \Up_{k-1} C^{k-1}.
  \end{equation}
  Using the induction hypothesis $C^{k-1} = \sum_{i=0}^{k-1}
  C^{k-1}_i$. Note that
  \[
  \Up_{k-1} C^{k-1}_i = \set*{\Up_{k-1} \Up^{k-1-i} h_i~\vert~h_i \in H_i } = C^{k}_i.
  \]
  Thus $C^k = C^k_k + \sum_{i=0}^{k-1} C^{k}_i$. Assuming
  $\ker{\left(\Up_{i}\right)}$ is trivial for $0 \le i < k$ we obtain
  \[
  \dim{C^k_i} = \dim{H_i} = \dim{C^{i}} - \dim{C^{i-1}} = \lvert X(i) \rvert - \lvert X(i-1) \rvert,
  \]
  where the second equality follows from~\cref{eq:c_k_space_decomp}. Hence
  $\dim{C^k} = \sum_{i=0}^k \dim{C^k_i}$. This implies that each $C^k_i \cap \sum_{j\ne i} C^k_j$
  is trivial and so we have a direct sum as claimed.
\end{proof}

\begin{corollary}[From~\cite{DiksteinDFH18}]
  Let $f \in C^k$. If $X$ is proper, then $f$ can be written
  uniquely as
  \[
  f = f_{0} + \cdots + f_{k},
  \]
  where $f_i \in C^k_i$.
\end{corollary}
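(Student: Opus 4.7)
The plan is to derive this corollary as an immediate consequence of~\cref{lemma:hdx_space_decomp}, which states that when $X$ is proper we have the direct sum decomposition $C^k = \bigoplus_{i=0}^{k} C^k_i$ (as opposed to merely a sum). Existence of a decomposition $f = f_0 + \cdots + f_k$ with $f_i \in C^k_i$ is immediate from the fact that $C^k = \sum_{i=0}^{k} C^k_i$ (the first assertion of~\cref{lemma:hdx_space_decomp}), which in fact holds for any simplicial complex, proper or not. The real content of the corollary is uniqueness, and this is precisely what the direct sum structure delivers.

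To extract uniqueness, I would argue as follows. Suppose $f = f_0 + \cdots + f_k = g_0 + \cdots + g_k$ with $f_i, g_i \in C^k_i$. Setting $h_i = f_i - g_i \in C^k_i$, we have $\sum_{i=0}^k h_i = 0$. The defining property of a direct sum, equivalent to $C^k_i \cap \sum_{j \ne i} C^k_j = \{0\}$ for each $i$ (which is precisely what~\cref{lemma:hdx_space_decomp} establishes via the dimension count $\dim C^k = \sum_{i=0}^k \dim C^k_i$ obtained from properness), forces each $h_i = 0$, hence $f_i = g_i$. I would present this as a one-line corollary invoking~\cref{lemma:hdx_space_decomp} together with the standard linear-algebraic remark that direct sums yield unique decompositions.

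I do not expect any obstacle here, since the statement is routine linear algebra built directly on the preceding lemma. The genuine work lies in~\cref{lemma:hdx_space_decomp} itself, in particular in using the properness assumption (triviality of $\ker(\Up_i)$ for $1 \le i < d$) to promote the identity $\dim C^k_i = \dim H_i$ to the dimension equality $\dim C^k_i = |X(i)| - |X(i-1)|$, which in turn yields $\dim C^k = \sum_{i=0}^{k} \dim C^k_i$ and hence upgrades the sum to a direct sum.
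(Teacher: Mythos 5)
Your proposal is correct and matches the paper's intent exactly: the corollary is stated immediately after \cref{lemma:hdx_space_decomp} with no separate proof, precisely because existence follows from $C^k = \sum_{i=0}^k C^k_i$ and uniqueness is the standard consequence of the direct sum $C^k = \bigoplus_{i=0}^k C^k_i$ established there under properness. Your spelled-out uniqueness argument via $\sum_i h_i = 0$ and $C^k_i \cap \sum_{j\ne i} C^k_j = \{0\}$ is exactly the routine linear algebra the paper leaves implicit.
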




\end{document}